\documentclass[11pt]{article}
\RequirePackage{fullpage}
\usepackage{graphicx} 
\usepackage{multirow}

\title{Switching Classes: Characterization and Computation}

\author{Dhanyamol Antony\thanks{Department of Computer Science and Automation, Indian Institute of Science, Bengaluru, India. {\tt dhanyamola@iisc.ac.in.}
}
  \and
  Yixin Cao\thanks{Department of Computing, Hong Kong Polytechnic University, Hong Kong, China. {\tt yixin.cao@polyu.edu.hk}. 
  }
  \and Sagartanu Pal\thanks{Department of Computer Science and Engineering, Indian Institute of Technology Dharwad, India. {\tt 183061001@iitdh.ac.in}.
  }
  \and R.B.\ Sandeep\thanks{Department of Computer Science and Engineering, Indian Institute of Technology Dharwad, India. {\tt sandeeprb@iitdh.ac.in}.
  }
}
\date{}
\usepackage{tikz,amsmath, bm, amsthm}
\usepackage{color}

\usetikzlibrary{decorations.pathreplacing}
\usetikzlibrary{positioning, calc,chains,fit,shapes}
\usepackage{enumerate}
\usepackage{booktabs,array}
\usepackage{cleveref}
\usepackage{tikz}
\usetikzlibrary{quotes}
\usepackage{graphics}
\usepackage[T1]{fontenc}
\usepackage{microtype}
\usepackage{comment}
\usepackage{thm-restate}

\usepackage{subcaption}
\usepackage{url}
\usepackage[utf8]{inputenc}
\usepackage[T1]{fontenc}
\usepackage[normalem]{ulem}

\newtheorem{theorem}{Theorem}[section]
\newtheorem{lemma}[theorem]{Lemma}
\newtheorem{observation}[theorem]{Observation}
\newtheorem{corollary}[theorem]{Corollary}
\newtheorem{proposition}[theorem]{Proposition}

\newtheorem{construction}{Construction}

\newcommand{\pname}[1]{\textnormal{\textsc{#1}}}

\setlength{\fboxrule}{0pt}

\setlength\extrarowheight{15pt}

\newcounter{rowcntra}[table]

\AtBeginEnvironment{tabular}{\setcounter{rowcntra}{0}}

\newcommand{\NPC}{NP-Complete}
\newcommand{\TRUE}{TRUE}

\newcommand{\YES}{yes}
\newcommand{\NO}{no}

\newcommand{\SWTF}[1]{\pname{Switching-to-\ensuremath{\mathcal{F}(#1)}}}
\newcommand{\LW}[1]{\ensuremath{\mathcal{L}(#1)}}
\newcommand{\UP}[1]{\ensuremath{\mathcal{U}(#1)}}
\newcommand{\SW}[1]{\ensuremath{\mathcal{S}(#1)}}

\newcommand{\TSAT}{\pname{3-SAT}}

\newcommand{\FiSATM}{\pname{Monotone NAE 5-SAT}}
\newcommand{\KSATM}{\pname{Monotone NAE $k$-SAT}}
\newcommand{\TSATM}{\pname{Monotone NAE 3-SAT}}
\newcommand{\KaSATM}{\pname{Monotone NAE $(k-1)$-SAT}}

 \begin{document}

\maketitle

%%%%%%%%%%%%%%%%%%%%%%%%%%%%%%%%%%%%%%%%%%%%
%%%%%%%%%%%%%%%%%%%%%%%%%%%%%%%%%%%%%%%%%%%%

%TODO mandatory: add short abstract of the document

\begin{abstract}
In a graph, the switching operation reverses adjacencies between a subset of vertices and the others. 
For a hereditary graph class $\mathcal{G}$, we are concerned with the maximum subclass and the minimum superclass of $\mathcal{G}$ that are closed under switching.
We characterize the maximum subclass for many important classes $\mathcal{G}$, and prove that it is finite when $\mathcal{G}$ is minor-closed and omits at least one graph.
For several graph classes, we develop polynomial-time algorithms to recognize the minimum superclass. We also show that the recognition of the superclass is NP-complete for $H$-free graphs when $H$ is a sufficiently long path or cycle, and it cannot be solved in subexponential time assuming the Exponential Time Hypothesis.

\end{abstract}

\section{Introduction}
\label{sec:introduction}

In a graph $G$, the operation of \emph{switching} a subset $A$ of vertices is to reverse the adjacencies between $A$ and $V(G)\setminus A$.  Two vertices $x\in A$ and $y\in V(G)\setminus A$ are adjacent in the resulting graph if and only if they are not adjacent in $G$.
The switching operation, introduced by van Lint and Seidel \cite{lint-66-switching} (see more at \cite{seidel1974graphs,seidel1976asurvey, 2seidel1991two}),
is related to many other graph operations, most notably variations of graph complementation.
The \emph{complement} of a graph $G$ is a graph defined on the same vertex set of $G$, where a pair of distinct vertices are adjacent if and only if they are not adjacent in $G$.  The \emph{subgraph complementation} on a vertex set $A$ is to replace the subgraph induced by $A$ with its complement, while keeping the other part, including connections between $A$ and the outside, unchanged~\cite{DBLP:journals/algorithmica/AntonyGPSSS22}.
Switching $A$ is equivalent to taking the complement of the graph itself and the subgraphs induced by $A$ and $V(G)\setminus A$.
Indeed, the widely used \emph{bipartite complementation} operation of a bipartite graph is nothing but switching one part of the bipartition.
A special switching operation where $A$ consists of a single vertex is also well studied.
It is a nice exercise to show that switching $A$ is equivalent to switching the vertices in $A$ one by one.
This is somewhat related to the local complementation operation \cite{DBLP:journals/dm/EhrenfeuchtHR04}.

Two graphs are \emph{switching equivalent} if one can be obtained from the other by switching.
Colbourn and Corneil~\cite{DBLP:journals/dam/ColbournC80} proved that deciding whether two graphs are switching equivalent is polynomial-time equivalent to the graph isomorphism problem.
Another interesting topic is to focus on graphs from a hereditary graph class $\mathcal{G}$---a class is \emph{hereditary} if it is closed under taking induced subgraphs.
There are two natural questions in this direction.  Given a graph $G$, \begin{itemize}
\item whether $G$ can be switched to a graph in $\mathcal{G}$? and
\item whether all switching equivalent graphs of $G$ are in $\mathcal{G}$?
\end{itemize}
We use the \emph{upper $\mathcal{G}$ switching class} and the \emph{lower $\mathcal{G}$ switching class}, respectively, to denote the set of positive instances of these two problems.
\footnote{We are using switching classes in a different way from previous work.  In this work, classes mean graph classes, while in the literature, they mean equivalent classes.}
Since switching the empty set does not change the graph, the answer of the first question is yes for every graph in $\mathcal{G}$, while the answer of the second question can only be yes for a graph in $\mathcal{G}$.
Thus, the class $\mathcal{G}$ is sandwiched in between these two switching classes.
Note that the three classes collapse into one when $\mathcal{G}$ is closed under switching, e.g., complete bipartite graphs.

Both switching classes are also hereditary.
For the upper switching class, if a graph $G$ can be switched to a graph $H$ in $\mathcal{G}$, then any induced subgraph of $G$ can be switched to an induced subgraph of $H$, which is in $\mathcal{G}$ because $\mathcal{G}$ is hereditary.
For the lower switching class, recall that a hereditary graph class $\mathcal{G}$ can be characterized by a (not necessarily finite) set $\mathcal{F}$ of forbidden induced subgraphs.  A graph is in $\mathcal{G}$ if and only if it does not contain any forbidden induced subgraph.
If $G$ contains any induced subgraph that is switching equivalent to a graph in $\mathcal{F}$, then $G$ cannot be in the lower $\mathcal{G}$ switching class.
Thus, the forbidden induced subgraphs of the lower $\mathcal{G}$ switching class are precisely all the graphs that are switching equivalent to some graphs in $\mathcal{F}$.

Even when $\mathcal{G}$ has an infinite set of forbidden induced subgraphs, the lower $\mathcal{G}$ switching class may have very simple structures.
The list of forbidden induced subgraphs obtained as above is usually not minimal.
For example, Hertz~\cite{DBLP:journals/dam/Hertz99} showed that the lower perfect switching class has only four forbidden induced subgraphs, all switching equivalent to the five-cycle.
In the same spirit as Hertz~\cite{DBLP:journals/dam/Hertz99}, we characterize the lower $\mathcal{G}$ switching classes of a number of important graph classes. 

\begin{theorem}\label{thm:lower-class-1}
  The lower $\mathcal{G}$ switching class is characterized by a finite number of forbidden induced subgraphs when $\mathcal{G}$ is one of the following graph classes: weakly chordal, comparability, co-comparability, permutation, distance-hereditary, Meyniel, bipartite, chordal bipartite, complete multipartite, complete bipartite, chordal, strongly chordal, interval, proper interval, Ptolemaic, and block.
\end{theorem}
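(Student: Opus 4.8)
The plan is to turn the problem into a bound on the \emph{sizes} of the minimal forbidden induced subgraphs. Write $\mathcal{F}$ for the (possibly infinite) set of forbidden induced subgraphs of $\mathcal{G}$. As observed just before the statement, the forbidden induced subgraphs of the lower $\mathcal{G}$ switching class are exactly the graphs switching-equivalent (written $\sim$) to some member of $\mathcal{F}$; this set is therefore closed under switching, hence is a union of switching-equivalence classes, each of which is finite (a graph on $n$ vertices has at most $2^{n-1}$ switchings). Consequently the lower class admits a finite characterization if and only if there is a bound $N$ so that every \emph{minimal} forbidden induced subgraph has at most $N$ vertices. I will use two elementary facts. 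First, switching commutes with taking induced subgraphs: if $H=\mathrm{switch}_A(F)$ then $H[S]=\mathrm{switch}_{A\cap S}(F[S])$, so $F\sim H$ implies $F[S]\sim H[S]$ for every $S$. Second, complementation commutes with switching, $\overline{\mathrm{switch}_A(G)}=\mathrm{switch}_A(\overline{G})$, whence $G\sim H$ iff $\overline{G}\sim\overline{H}$.

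From the first fact, minimality is a property of an entire switching class: $H$ is a minimal forbidden induced subgraph of the lower class iff $H\sim F$ for some $F\in\mathcal{F}$ such that no proper induced subgraph of $F$ is switching-equivalent to a member of $\mathcal{F}$. Hence it suffices to prove that for all but finitely many $F\in\mathcal{F}$ there is a proper $S\subsetneq V(F)$ with $F[S]\sim F'$ for some $F'\in\mathcal{F}$. I will do this by a \emph{reducer} argument: for each infinite family occurring in $\mathcal{F}$ I exhibit a fixed small graph $D$ that is switching-equivalent to a member of $\mathcal{F}$ and that occurs as an induced subgraph of every sufficiently large member of the family. The reducers are uniform across the list, keyed to the shortest forbidden hole. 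If it is $C_3$ (bipartite, chordal bipartite) use $D=K_2+K_1\sim C_3$; if it is $C_4$ (chordal, strongly chordal, interval, proper interval, Ptolemaic, block) use $D=4K_1\sim C_4$, since switching two of its four vertices yields $K_{2,2}=C_4$; if it is $C_5$ (weakly chordal, distance-hereditary, comparability, permutation, Meyniel) use $D=P_4+K_1\sim C_5$, obtained by switching two non-adjacent vertices of $C_5$. Each such $D$ embeds in every long hole, so only holes of bounded length can seed a minimal obstruction. By the second fact, the same reducers complemented dispose of long antiholes ($\overline{4K_1}=K_4$, $\overline{P_4+K_1}=\mathrm{gem}$), and they settle the complement-closed items at one stroke, e.g.\ co-comparability via comparability and the antihole part of weakly chordal via its hole part.

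It then remains to treat, class by class, the finite part of $\mathcal{F}$ (which contributes only finitely many switching classes, hence is harmless) together with the non-hole, non-antihole infinite families. When $\mathcal{F}$ is already finite (complete bipartite, complete multipartite) there is nothing to do. The remaining work concentrates on strongly chordal (suns), interval and proper interval (the Lekkerkerker--Boland asteroidal families, plus the fixed claw for proper interval), and comparability and permutation (Gallai's infinite collection). For suns the degree-two tips form an arbitrarily large independent set, so every large sun already contains $4K_1\sim C_4$ and reduces; the same independent-set reducer handles the long members of the Lekkerkerker--Boland families, whose pendant vertices again yield a large independent set, leaving only finitely many small asteroidal obstructions and the fixed claw.

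The main obstacle is comparability (and, by complementation, co-comparability and their intersection permutation), whose forbidden set is Gallai's infinite collection of families rather than a single parametrized one. Here the plan is to pass through Gallai's families and show that each infinite family contains, for all large parameters, one of our bounded reducers---typically $P_4+K_1\sim C_5$, or, after complementation, the independent-set reducer $4K_1$---so that only finitely many members of each family are switching-minimal. Verifying this uniformly over the whole list, and checking that no large member escapes every reducer, is the delicate step; once it is carried out, combining the hole/antihole reducers with the family-specific ones bounds the size of every minimal forbidden induced subgraph and yields the claimed finite characterization for all sixteen classes.
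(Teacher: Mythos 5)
Your general mechanism is sound, and for most of the list it is essentially the paper's own argument in different clothing: the paper's Observation~\ref{obs:lower-h} is precisely your ``reducer'' step ($P_4+K_1$ and the gem, both in $\mathcal{S}(C_5)$, sit inside every hole and antihole of length at least seven, and the bull inside every building on at least six vertices), and your treatment of the hole-dominated classes (weakly chordal, distance-hereditary, Meyniel, bipartite, chordal bipartite, the chordal family, block) goes through. The genuine gap is exactly where you flag it yourself: for comparability you only \emph{promise} to check Gallai's infinite catalogue of families (``once it is carried out\dots''), and you never do; since co-comparability and permutation are derived from this case, three of the sixteen classes remain unproven. Worse, your proposed fallback reducer fails there: every graph on four vertices is a comparability graph, so $4K_1\sim C_4\sim \mathrm{claw}$ and $K_4\sim 2K_2$ are switching-equivalent to \emph{no} forbidden induced subgraph of comparability graphs and cannot reduce anything in that case --- a reducer must be switching-equivalent to a member of the forbidden set of the class at hand, which for comparability leaves only the five-vertex graphs in $\mathcal{S}(C_5)$ and the switching class of $\overline{C_6}$, and whether these hit every large member of every Gallai family is precisely the unverified step.

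The idea you are missing is the paper's sandwich lemma (Lemma~\ref{lem:lower-subclass}): if $\mathcal{G}\subseteq\mathcal{G}''\subseteq\mathcal{G}'$ and $\LW{\mathcal{G}'}=\LW{\mathcal{G}}$, then $\LW{\mathcal{G}''}=\LW{\mathcal{G}}$. This decouples the finiteness argument from the exact forbidden list, which your minimal-obstruction formulation is tied to. For comparability the paper needs only the two summary facts of Proposition~\ref{comparability switching classes: prop1} --- (i) comparability graphs are $\{C_5,\overline{C_6}\}$-free, and (ii) every $\{\mathrm{bull},\mathrm{gem},\overline{C_6},C_{2k+1}\ (k\geq 2)\}$-free graph is a comparability graph --- to sandwich comparability between two classes whose lower switching classes coincide with $\LW{\mathcal{F}(\{C_5,\overline{C_6}\})}$, with no family-by-family verification; co-comparability then follows by complementation and permutation by intersection. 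To repair your proof, either import this lemma or carry out the Gallai check in full. A secondary remark: for strongly chordal, interval and proper interval graphs your direct reducers on suns and Lekkerkerker--Boland families would work (large suns and large asteroidal obstructions do contain $4K_1$, via tips and alternate path vertices rather than ``pendant vertices''), but the paper avoids touching those families altogether by proving that the lower $\{C_4,C_5,C_6\}$-free switching class consists of proper interval graphs of a special shape (Lemma~\ref{lem:c4 c5 c6 free}), which squeezes all five chordal subclasses at one stroke --- a structural shortcut your route does not provide, though for bare finiteness your version suffices once stated carefully.
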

Indeed, since the forbidden induced subgraphs of 
threshold graphs are $2K_2, C_4$, and $P_4$~\cite{chvatal1973set}, by the arguments given above, the forbidden subgraphs of 
the lower threshold switching class are all graphs on four vertices (every graph on four vertices is switching equivalent to a graph in $\{2K_2, C_4, P_4\}$). This class, consisting of only graphs of order at most three, is finite.  
Also finite are lower switching classes of minor-closed graph classes that are nontrivial\footnote{We thank an anonymous reviewer for the bound in Theorem~\ref{thm:lower-class-2}, which improves the bound in a previous version of this manuscript.} (there exists at least one graph not in this class).

\begin{comment}
\begin{theorem}\label{thm:lower-class-2}
  Let $\mathcal{G}$ be a nontrivial minor-closed graph class, and let $p$ be the smallest order of a forbidden minor of $\mathcal{G}$.
  The order of every graph in the lower $\mathcal{G}$ switching class is smaller than the Ramsey number $R(p, p(p+1)/2)$.
\end{theorem}
\end{comment}

\begin{restatable}{theorem}{lowerclasstwo}
\label{thm:lower-class-2}
  Let $\mathcal{G}$ be a nontrivial minor-closed graph class, and let $p$ be the smallest order of a forbidden minor of $\mathcal{G}$. Then 
  $|V(G)| = O(p\sqrt{p})$, for graphs $G$ in lower 
  $\mathcal{G}$ switching class.
\end{restatable}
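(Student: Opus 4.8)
The plan is to exhibit a single, well-chosen switching of $G$ that is simultaneously very dense and $K_p$-minor-free, and then argue that these two properties cannot coexist once $|V(G)|$ is large. Write $G^A$ for the graph obtained from $G$ by switching a set $A\subseteq V(G)$. First I would isolate the only way the hypothesis enters: since $\mathcal{G}$ is nontrivial and minor-closed it has a nonempty (finite) set of forbidden minors, and $p$ is by definition the order of a smallest one, call it $M$. Because $G$ lies in the lower $\mathcal{G}$ switching class, every graph switching-equivalent to $G$ belongs to $\mathcal{G}$ and therefore excludes $M$ as a minor. As $M$ has exactly $p$ vertices it is a spanning subgraph of $K_p$, so a $K_p$ minor would yield an $M$ minor by deleting the superfluous edges; hence \emph{no} switching of $G$ contains $K_p$ as a minor.

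Next I would produce a dense switching by a first-moment argument. For a uniformly random $A\subseteq V(G)$ and any fixed pair $\{u,v\}$, the probability that exactly one of $u,v$ lies in $A$ is $1/2$, and $uv$ is an edge of $G^A$ exactly when either the pair is not split and $uv\in E(G)$, or the pair is split and $uv\notin E(G)$; in either case the conditional probability is $1/2$, so $\Pr[uv\in E(G^A)]=1/2$ regardless of whether $uv$ was originally an edge. Consequently the expected number of edges of $G^A$ equals $\tfrac12\binom{n}{2}$, where $n:=|V(G)|$, and some choice $A^\star$ attains a graph $G':=G^{A^\star}$ with at least $\tfrac12\binom{n}{2}$ edges, that is, average degree at least $(n-1)/2$.

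Finally I would combine the two facts. The graph $G'$ has no $K_p$ minor, so by the extremal theory of clique minors (Kostochka and Thomason) it has average degree $O(p\sqrt{\log p})$, since every $K_p$-minor-free graph does. Comparing this with the lower bound $(n-1)/2$ on the average degree of $G'$ forces $n=O(p\sqrt{\log p})$, and because $\sqrt{\log p}=O(\sqrt p)$ we obtain $|V(G)|=O(p\sqrt p)$, as claimed.

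I expect the substantive step to be the third: the reduction to a $K_p$ minor is bookkeeping, and the dense switching is a one-line computation, whereas the quantitative clash between edge density $\Theta(n)$ per vertex and $K_p$-minor-freeness is precisely where the polynomial-in-$p$ bound is extracted and where the cited extremal estimate does the real work. Two robustness remarks I would record: the argument never uses the full forbidden-minor list, only the smallest order $p$; and if one prefers not to invoke the sharp Kostochka--Thomason constant, any bound of the form ``$K_p$-minor-free graphs have average degree $O(p\,\mathrm{polylog}\,p)$'' already suffices to reach the stated $O(p\sqrt p)$ conclusion.
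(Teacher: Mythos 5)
Your proposal is correct and takes essentially the same approach as the paper: exhibit a switching-equivalent graph with $\Omega(n^2)$ edges and invoke the Kostochka--Thomason extremal bound for $K_p$ minors, the only cosmetic difference being that the paper derandomizes your first-moment step by fixing $|A|=\lfloor n/2\rfloor$ and noting that each crossing pair is an edge in exactly one of $G$ and $S(G,A)$, so one of the two graphs has $\Omega(n^2)$ edges. Your explicit reduction from the smallest forbidden minor $M$ of order $p$ to excluding $K_p$, and your remark that the sharp $O(p\sqrt{\log p})$ form of the extremal theorem would even improve the bound, are both sound (the paper uses the weaker $c\cdot n\cdot p\sqrt{p}$ edge threshold and leaves the $K_p$ reduction implicit).
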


Theorems~\ref{thm:lower-class-1}
and~\ref{thm:lower-class-2} immediately imply polynomial-time and constant-time algorithms, respectively, for recognizing these lower switching classes, i.e., deciding whether a graph is in the class.
We remark that there are classes $\mathcal{G}$ such that the lower $\mathcal{G}$ switching class has an infinite number of forbidden induced subgraphs.

The upper $\mathcal{G}$ switching classes turn out to be more complicated.  
These classes are nontrivial
%It is nontrivial 
even for the class of $H$-free graphs for a fixed graph $H$.
Although $\mathcal{G}$ has only one forbidden induced subgraph, the number of forbidden induced subgraphs of the upper $\mathcal{G}$ switching class is usually infinite. Based on our current knowledge, exceptions do exist but are rare \cite{DBLP:journals/jgt/JelinkovaK14}.
Even so, for many graph classes $\mathcal{G}$, polynomial-time algorithms for recognizing the upper $\mathcal{G}$ switching class exist, e.g., bipartite graphs~\cite{DBLP:journals/fuin/HageHW03}.
Our understanding of this problem is very limited, even for classes defined by forbidding a single graph $H$.
For all graphs $H$ on at most three vertices, polynomial-time algorithms are known for recognizing the upper $H$-free switching class~\cite{ DBLP:journals/fuin/HageHW03,hayward1996recognizingp3,kratochvil1992computational}.
Of a graph $H$ on four vertices, the four-path \cite{DBLP:journals/dam/Hertz99} and the claw \cite{DBLP:journals/jgt/JelinkovaK14} have been settled.  We present a polynomial-time algorithm for paw-free graphs.
If two graphs $H_{1}$ and $H_{2}$ are complements to each other, then the recognition of the upper $H_{1}$-free switching class is polynomially equivalent to that of the upper $H_{2}$-free switching class.
Thus, the remaining cases on four vertices are the diamond, the cycle, and the complete graph.
We made attempt to them by solving the class of forbidding the four-cycle and its complement, which is known as pseudo-split graphs.

\begin{theorem}\label{thm:upper-class-algorithm}
  The upper $\mathcal{G}$ switching class can be recognized in polynomial time when $\mathcal{G}$ is one of the following graph classes: paw-free graphs, pseudo-split graphs, split graphs,
$\{K_{1,p}, \overline{K_{1,q}}\}$-free graphs, and bipartite chain graphs.
\end{theorem}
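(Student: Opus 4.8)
The plan is to treat each of the five classes separately, since they are defined by different (finite) sets of forbidden induced subgraphs, but to drive every case with the same engine: reformulate ``$G$ lies in the upper $\mathcal{G}$ switching class'' as ``there is a set $A\subseteq V(G)$ with $G\oplus A\in\mathcal{G}$'', and then replace the $2^{|V(G)|}$ search over $A$ by a polynomial-size search followed by a polynomially solvable constraint problem. Throughout I would use the normalization that switching $A$ and switching $V(G)\setminus A$ yield the same graph, so that I may fix the membership of one reference vertex; and I would repeatedly use that once the ``role'' of a vertex in the target graph is fixed, switching turns each target adjacency or non-adjacency requirement into a condition on whether the two endpoints \emph{agree} or \emph{disagree} on membership in $A$.

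For \emph{split} graphs I would argue structurally. Suppose $G\oplus A$ is split with clique $C'$ and independent set $I'$, and set $P_1=A\cap C'$, $P_3=C'\setminus A$, $P_2=A\cap I'$, $P_4=I'\setminus A$. Because switching preserves adjacency inside $A$ and inside $V(G)\setminus A$ but reverses it across the cut, $C'$ is a clique in $G\oplus A$ exactly when $P_1,P_3$ are cliques of $G$ that are anti-complete in $G$, and $I'$ is independent exactly when $P_2,P_4$ are independent sets of $G$ that are complete in $G$; there is no constraint coupling the clique side to the independent side. Hence $G$ is switchable to a split graph if and only if $V(G)$ can be partitioned into a set inducing a disjoint union of at most two cliques and a set inducing a complete bipartite graph. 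I would then show this is a list-$M$-partition problem with a fixed $4\times 4$ pattern (equivalently, two independent $2$-colourings coupled only by which side each vertex takes), so that it falls into the polynomial regime and can be encoded as a $2$-SAT instance; verifying that the relevant $M$ lands in the tractable rather than the NP-hard regime is part of this step. For \emph{pseudo-split} graphs I would use their structure as a split graph possibly augmented by a five-vertex set inducing a $C_5$ that is complete to the clique and anti-complete to the independent set: guess these five vertices ($O(n^5)$ choices) together with their membership in $A$ (a constant number), which pins down part of $A$ through the forced $C_5$-adjacencies, and then reduce what remains to the split case.

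For the remaining classes I would exploit their own structure theorems. For \emph{paw-free} graphs I would lean on Olariu's characterization that every connected paw-free graph is triangle-free or complete multipartite, so that $G\oplus A$ is paw-free precisely when each of its components has one of these two shapes; the algorithm would guess the coarse shape and use that both complete-multipartite and triangle-free targets translate switching requirements into agree/disagree constraints solvable by $2$-colouring or $2$-SAT. For $\{K_{1,p},\overline{K_{1,q}}\}$-free graphs, forbidding $K_{1,p}$ bounds the independence number inside every neighbourhood while forbidding $\overline{K_{1,q}}=K_q\cup K_1$ bounds cliques outside closed neighbourhoods; I would guess a bounded ``core'' witnessing these bounds and propagate. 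For \emph{bipartite chain} graphs---the $2K_2$-free bipartite graphs whose one-side neighbourhoods are linearly ordered by inclusion---I would introduce for each vertex a variable for its membership in $A$ and a variable for its side of the bipartition, encode bipartiteness-after-switching together with the neighbourhood-nesting condition as clauses, and solve the resulting $2$-SAT.

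The main obstacle I anticipate is exactly the collapse from $2^{|V(G)|}$ switching sets to polynomially many relevant ones while simultaneously enforcing a \emph{global} target condition. For split and pseudo-split the clique/independent-set skeleton makes this collapse clean, but for paw-free the component structure of $G\oplus A$ can differ wildly from that of $G$---switching can merge or split components and flip a triangle-free component into a complete-multipartite one---so the delicate part will be proving that only polynomially many ``shapes'' must be tried and that each shape's feasibility is a $2$-SAT or $2$-colouring question rather than a genuinely harder partition problem. Confirming, for each class, that the constraint instance produced lands in a polynomial-time-solvable family, and not in an NP-hard corner of list-partition, is the step I expect to require the most care.
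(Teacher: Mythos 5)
Your reformulations are in the right spirit, and two pieces do align with the paper (for pseudo-split graphs, guessing the five-vertex set inducing a member of $\mathcal{S}(C_5)$ together with its trace on $A$ and reducing to the split case is exactly the paper's algorithm), but there is a genuine gap that recurs in three of the five cases: the repeated claim that the residual constraints are ``agree/disagree'' conditions solvable by 2-colouring or 2-SAT. Since the switched adjacency of a pair $u,v$ is $e_{uv}\oplus a_u\oplus a_v$ (where $a$ is the indicator of $A$), forbidding a $k$-vertex induced subgraph in $S(G,A)$ yields constraints of arity $k$, not $2$. Concretely, in the paw-free case a triangle-free target imposes, for every triangle $uvw$ of $G$, the constraint ``not $a_u=a_v=a_w$'' and, for every triple inducing exactly one edge $uv$, ``not $(a_u=a_v\neq a_w)$'' --- a monotone-NAE-type system on triples. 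Monotone NAE satisfiability is NP-hard in general; indeed it is precisely the source problem for this paper's own hardness reductions in Section~5. That switching-to-triangle-free is nevertheless polynomial is a nontrivial structural theorem of Hayward, which the paper invokes as a black box alongside the Kratochv\'{\i}l et al.\ algorithm for $P_3$-free targets; your proposal cannot substitute 2-SAT for it. The same arity problem breaks your bipartite chain encoding: ``bipartite after switching'' couples $a_u,a_v$ with the side variables $s_u,s_v$ in 4-ary constraints, and the nesting condition (equivalently $2K_2$-freeness) is 4-ary as well. The paper sidesteps all of this with the observation that $\mathcal{S}(\overline{C_4})=\{\overline{C_4},K_3+K_1,K_4\}$, so a graph is in the upper bipartite chain class iff it is $\{\overline{C_4},K_3+K_1,K_4\}$-free and in the upper bipartite class, the latter decided by one call to the known algorithm of Hage et al.

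For split graphs your structural reformulation (partition $V(G)$ into a set inducing the complement of a complete bipartite graph and a set inducing a complete bipartite graph) is correct, but the tractability of the resulting $4\times 4$ $M$-partition is exactly the point you leave unverified, and the natural encoding is again 4-ary (``if $s_u=s_v$ then $a_u\oplus a_v$ is forced''), not 2-SAT; so as written the split case is not proved. The paper's engine is different and elementary: guess $u\in A\cap K$ and $v\in A\cap I$, note $N(u)\setminus N[v]\subseteq A$ and $(N(v)\setminus N[u])\cap A=\emptyset$, observe that $G[N(u)\cap N(v)]$ and $G-N[u,v]$ must be split, and exploit that a split graph has at most $n$ split partitions, enumerable in linear time, to list all candidate traces of $A$. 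This enumeration device is also the missing mechanism in your $\{K_{1,p},\overline{K_{1,q}}\}$-free sketch, where ``guess a bounded core and propagate'' specifies no algorithm: the paper fixes a single vertex $u$ (WLOG in $A$), shows both $G[N[u]]$ and $G-N[u]$ must be $(q-1,p-1)$-split, and enumerates their polynomially many $(q-1,p-1)$-split partitions via a cited result of Kolay--Panolan and Antony et al. Finally, for paw-free graphs, even granting the black-box subroutines, the disconnected-target case requires the paper's detailed analysis (guessing a triangle or vertices in distinct components, and bounding by two the number of co-components of $G[N(u_1)\cap N(u_2)]$ and of $G-N[u_1,u_2]$ that can lie in the triangle-free component); your acknowledgment that this is ``delicate'' correctly identifies the hole but does not fill it.
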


In Theorem~\ref{thm:upper-class-algorithm}, we want to highlight the algorithms for pseudo-split graphs and for split graphs.
We actually show a stronger result.  Any input graph $G$ has only a polynomial number of ways to be switched to a graph in these two classes, and we can enumerate them in polynomial time.  Thus, the algorithms can apply to hereditary subclasses of pseudo-split graphs, provided that these subclasses themselves can be recognized in polynomial time.  This is only possible when the lower switching classes of them are finite.  It is unknown whether the other direction also holds true.

Jel{\'{\i}}nkov{\'{a}} and Kratochv{\'{\i}}l~\cite{DBLP:journals/jgt/JelinkovaK14} 
%tried to find 
found
graphs $H$ such that the upper $H$-free switching class is hard to recognize.  The smallest graph they found is on nine vertices.
More specifically, they showed that, for all $k \ge 3$, there is a graph of order $3 k$ with this property.
The graph is obtained from a three-vertex path by substituting one degree-one vertex with an independent set of $k$ vertices, and each of the other two vertices with a clique of $k$ vertices.
We show that the recognition of the upper $H$-free switching class is already hard when $H$ is a cycle on seven vertices or a path on ten vertices.
Our proofs can be adapted to longer ones.

\begin{theorem}\label{thm:hardness}
  Deciding whether a graph is switching equivalent to a $P_{10}$-free graph or a $C_{7}$-free graph is \NPC{}, and it cannot be solved in subexponential time (on $|V(G)|$) assuming the Exponential Time Hypothesis.
\end{theorem}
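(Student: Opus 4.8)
The plan is to first dispose of the easy direction: the problem is in NP, since one can guess the switching set $A$, perform the switching in polynomial time, and test $P_{10}$-freeness (resp.\ $C_7$-freeness) by brute force over all $10$-subsets (resp.\ $7$-subsets) of vertices. For hardness I would reduce from \TSATM{} for the $C_7$ case and from \FSATM{} for the $P_{10}$ case, both of which are \NPC{} and, by the standard linear-size reductions from $3$-\pname{SAT}, admit no $2^{o(n+m)}$-time algorithm under the Exponential Time Hypothesis, where $n$ and $m$ are the numbers of variables and clauses. The key reformulation is that switching a set $A$ amounts to a $2$-labeling $\ell\colon V(G)\to\{0,1\}$ with $A=\ell^{-1}(1)$, and a pair $uv$ is an edge of the switched graph exactly when membership $uv\in E(G)$ differs from the predicate $[\ell(u)\neq\ell(v)]$. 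Thus only the relative labeling matters, and switching $A$ and $V(G)\setminus A$ coincide; this global flip symmetry mirrors the symmetry of \pname{NAE} satisfaction under negating all variables, which is why monotone \pname{NAE} \pname{SAT} is the natural source.

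Next I would describe the gadgets. I fix an \emph{anchor} gadget, a rigid structure that remains $H$-free under every switching while pinning down the relative labels of a set of auxiliary vertices; every truth value is then read off as the relative label of a \emph{variable vertex} against the anchor, with $x_i$ true iff its variable vertex lies on the side opposite to the anchor. If a variable occurs in several clauses I use one copy per occurrence and link the copies by \emph{equality gadgets} that create a forbidden induced subgraph whenever two copies receive different relative labels, forcing consistency. The heart is the \emph{clause gadget}: for a clause I attach its variable vertices to a small set of anchored auxiliary vertices so that, when the variable vertices are monochromatic---all equal, which is exactly when the clause is violated---the gadget's $7$ (resp.\ $10$) vertices induce a clean $C_7$ (resp.\ $P_{10}$), whereas any non-monochromatic labeling flips an edge that either introduces a chord destroying the induced cycle/path or deletes a required edge. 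Choosing the base adjacencies and the fixed auxiliary labels so that the induced subgraph is precisely $C_7$ (resp.\ $P_{10}$) in the all-equal case, and never $H$ otherwise, is the central design task.

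For correctness I would argue both directions. If the formula is \pname{NAE}-satisfiable, switch to the labeling induced by a satisfying assignment: the anchor and equality gadgets are clean by construction, no clause gadget is monochromatic, and a separate check shows no forbidden copy spans several gadgets, so the switched graph is $H$-free. Conversely, given any switching with $G^A$ being $H$-free, the equality gadgets force each variable's copies to share a relative label, yielding a well-defined assignment, and the $H$-freeness of every clause gadget forces each clause to be non-monochromatic, i.e.\ \pname{NAE}-satisfied. To obtain the \pname{ETH} bound I would keep all gadgets of constant size and invoke the bounded-occurrence version of monotone \pname{NAE} \pname{SAT}, so that $|V(G_\phi)|=O(n+m)$; a $2^{o(|V(G)|)}$ algorithm would then yield a $2^{o(n+m)}$ algorithm for monotone \pname{NAE} \pname{SAT}, contradicting \pname{ETH}. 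The same gadgets, enlarged, handle longer cycles $C_{2k+1}$ and longer paths, which is why the construction adapts to $H$ being a sufficiently long cycle or path.

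The step I expect to be the main obstacle is the verification of \emph{soundness against spurious copies}: because switching acts globally, a single switching set simultaneously re-wires every gadget, so I must show that for every switching the only possible induced $C_7$'s (resp.\ $P_{10}$'s) are the intended per-clause ones, with none created accidentally inside a satisfied clause gadget, inside the anchor, or across the boundaries between gadgets. Controlling these cross-gadget interactions---typically by joining distinct gadgets through the anchor via uniform cliques or independent sets so that no long induced path or odd hole can traverse two gadgets---and then discharging the resulting case analysis is where the real work lies; the gadget arithmetic that makes the all-equal configuration induce exactly $C_7$ or $P_{10}$ is delicate but routine once the connection pattern is fixed.
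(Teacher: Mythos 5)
Your high-level template is the same as the paper's: reduce from monotone \pname{NAE SAT} (whose invariance under negating all variables matches the $A \leftrightarrow V(G)\setminus A$ symmetry of switching), build clause gadgets that induce the forbidden $C_7$ or $P_{10}$ exactly when the clause's variable vertices are monochromatic relative to $A$, and keep everything linear-size for the ETH bound. But the proposal stops exactly where the proof begins. You explicitly defer ``the central design task'' (the gadget) and ``where the real work lies'' (soundness against spurious copies), and these two items \emph{are} the theorem: in the paper, Construction~\ref{cons p10-free} (an independent set $I_i$ of size $5$ plus five disjoint $P_9$'s per clause, all clause gadgets pairwise completely joined) together with Lemmas~\ref{lem:p10:forward:c}--\ref{lem:p10:forward:cili}, and Construction~\ref{con:c7} (eight levels of $P_6$'s per clause) together with Lemmas~\ref{lem:c7:bij-atmost1} and~\ref{lem:c7:2-8}, occupy the whole of Section~\ref{sec:hardness}; without them there is no proof, only a plausible plan.

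Beyond incompleteness, two architectural choices in your plan are themselves problematic. First, the ``anchor'' as stated is internally inconsistent: a structure that ``remains $H$-free under every switching'' imposes no constraint, so it cannot pin labels; what you would actually need is a gadget where every \emph{wrong} relative labeling creates an induced $H$ without any \emph{right} labeling doing so, and establishing that such a rigid anchor (and likewise your equality gadgets) exists for $H = C_7$ or $P_{10}$ is precisely the unproven design problem. The paper sidesteps both devices: each variable gets a \emph{single} vertex (so no per-occurrence copies and no equality gadgets), the assignment is read off as $A\cap L$, and the backward direction (Lemmas~\ref{lem:p10:backward} and~\ref{lem:c7:backward}) handles \emph{arbitrary} $A\subseteq V(G_\Phi)$ by a cascading argument through the gadget layers --- note your sketch of the converse quietly assumes the gadget interiors behave, which is again the missing analysis. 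Second, for $P_{10}$ you reduce from \FSATM{}, but the natural arity is five, not four: the paper uses \FiSATM{} because the five clause-variable vertices realize the maximum independent set of the induced $P_{10}$; with four variables per clause you would need yet another unspecified embedding. So the approach is reasonable in outline and close in spirit to the paper, but as submitted it has a genuine gap: the constructions and the case analyses that constitute the entire difficulty are absent, and the scaffolding you propose in their place (anchor plus equality gadgets) is not known to be realizable for these target classes.
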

Since the problem admits a trivial $2^{|V(G)|}\cdot |V(G)|^{O(1)}$-time algorithm, by enumerating all subsets of $V(G)$, our bound in Theorem~\ref{thm:hardness} is asymptotically tight.
We conjecture that it is \NPC{} to decide whether a graph can be switched to an $H$-free graph when $H$ is a cycle or path of length six.

Theorem~\ref{thm:lower-class-1} and \ref{thm:lower-class-2} are proved in Section~\ref{sec:lower}, Theorem~\ref{thm:upper-class-algorithm} is proved in Section~\ref{sec:upper}, and Theorem~\ref{thm:hardness} is proved in Section~\ref{sec:hardness}.

\subsection*{Other related work} Jel{\'{\i}}nkov\'{a} et al.~\cite{DBLP:journals/dmtcs/JelinkovaSHK11} studied the parameterized complexity of the recognition problem of the upper switching classes.
Let us remark that there is also study on the upper switching classes for non-hereditary graph classes. For example, 
we can decide in polynomial time whether a graph can be switching equivalent to a Hamiltonian graph~\cite{DBLP:conf/tagt/EhrenfeuchtHHR98} or to an Eulerian graph~\cite{DBLP:journals/fuin/HageHW03}, but it is \NPC{} to decide whether a graph can be switching equivalent to a regular graph \cite{kratochvil2003complexity}.
Cameron~\cite{cameron1977cohomological} and Cheng and Wells Jr.~\cite{DBLP:journals/jct/ChengW86} generalized the switching operation to directed graphs.
Foucaud et al.~\cite{DBLP:journals/algorithmica/FoucaudHLMP22} studied switching operations in a different setting.

Seidel~\cite{seidel1976asurvey} showed that the size of a maximum set of switching inequivalent graphs on $n$ vertices is equivalent to the number of two-graphs of size $n$. 
This is further shown to be the same as the number Eulerian graphs on $n$ vertices~\cite{mallows1975two} and graphs on $2 n$ vertices admitting certain coloring~\cite{DBLP:journals/siamdm/OhYY13}.
Bodlaender and Hage~\cite{DBLP:journals/tcs/BodlaenderH12} showed that the switching operation does not change the cliquewidth of a graph too much, though it may change the treewidth significantly.  
The switching equivalence between graphs in certain classes can be decided in polynomial time.  For example, acyclic graphs because two forests are switching equivalent if and only if they are isomorphic~\cite{DBLP:journals/ejc/HageH98}.
In a complementary study, Hage and Harju~\cite{DBLP:journals/siamdm/HageH04} characterized graphs that cannot be switched to any forest.  They are either a small graph on at most nine vertices, or switching equivalent to a cycle.

From a graph $G$ on $n$ vertices, we can obtain $n$ graphs by switching each vertex, called the \emph{switching deck} of~$G$.
The \emph{switching reconstruction conjecture} of
Stanley~\cite{DBLP:journals/jct/Stanley85} asserts that for any $n > 4$, if two graphs on $n$ vertices have the same switching deck, they must be isomorphic.
The conjecture remains widely open, and we know that it holds on triangle-free graphs~\cite{DBLP:journals/jct/EllinghamR92}.
A similar question in digraph is also studied~\cite{DBLP:journals/jgt/BondyM11}.  

\section{Preliminaries}

All the graphs discussed in this paper are finite and simple.  The vertex set and edge set of a graph $G$ are denoted by, respectively, $V(G)$ and $E(G)$.  Let $n = |V(G)|$ and $m = |E(G)|$.
For a subset $U\subseteq V(G)$, we denote by $G[U]$ the subgraph of $G$ induced by $U$, and by $G - U$ the subgraph $G[V(G)\setminus U]$, which is shortened to $G - v$ when $U = \{v\}$.
The \emph{neighborhood} of a vertex $v$, denoted by $N_{G}(v)$, comprises vertices adjacent to $v$, i.e., $N_{G}(v) = \{ u \mid uv \in E(G) \}$, and the \emph{closed neighborhood} of $v$ is $N_{G}[v] = N_{G}(v) \cup \{ v \}$.
The \emph{closed neighborhood} and the \emph{neighborhood} of a set $X\subseteq V(G)$ of vertices are defined as $N_{G}[X] = \bigcup_{v \in X} N_{G}[v]$ and $N_{G}(X) =  N_{G}[X] \setminus X$, respectively.
We may drop the subscript if the graph is clear from the context.
We write $N(u, v)$ and $N[u, v]$ instead of $N(\{u, v\})$ and $N[\{u, v\}]$; i.e., we drop the braces when writing the neighborhood of a vertex set. Two vertex sets $X$ and $Y$ are \emph{complete} (resp., \emph{nonadjacent}) \emph{to} each other if all (resp., no) edges between $X$ and $Y$ are present.

A \emph{clique} is a set of pairwise adjacent vertices, and an \emph{independent set} is a set of vertices that are pairwise nonadjacent. 
For $\ell \ge 3$, we use $C_\ell$ to denote a cycle on $\ell$ vertices. An induced $C_\ell$, for any $\ell\geq 4$, in a graph is called an \emph{$\ell$-hole}. A hole of odd number of vertices is called an \emph{odd hole} and a hole of even number of vertices is called an \emph{even hole}.
A path on $\ell$ vertices is denoted by $P_\ell$, and
a complete graph on $\ell$ vertices is denoted by $K_\ell$.
A star graph on $\ell+1$ vertices is denoted by $K_{1,\ell}$. 

The \emph{disjoint union} of two graphs $G_1$ and $G_2$ is denoted by $G_1+G_2$. 
The \emph{complement graph} $\overline{G}$ of a graph $G$ is defined on the same vertex set $V(G)$, where a pair of distinct vertices $u$ and $v$ is adjacent in $\overline{G}$ if and only if $u v \not\in E(G)$.
By $\overline{\mathcal{G}}$, we denote the set of complements of graphs in $\mathcal{G}$.
By $\mathcal{G}^c$, we denote the set of graphs not in $\mathcal{G}$.
The switching of a vertex subset $A$ of a graph $G$ is denoted by $S(G, A)$.  It has the same vertex set as $G$ and its edge set is
\[
E(G[A])\cup  E(G - A)\cup \{u v\mid u\in A, v\in V(G)\setminus A, uv\not\in E(G)\}.
\]
The following observations are immediately from the definition. 
The \emph{symmetric difference} of two sets is defined as $A \Delta B = (A\setminus B) \cup (B\setminus A)$.
\begin{proposition}[folklore]\label{prop:basic properties}
  Let $G$ be a graph, and $A, B \subseteq V(G)$. 
  \begin{itemize}
  \item $S(G, A) = S(G, (V(G)\setminus A))$.
  \item $S(S(G, A), A) = G$.
  \item $S(S(G, A), B) = S(S(G, B), A) = S(G, A\Delta B)$.
  \item  $\overline{S(G,A)}= S(\overline{G},A)$.
  \end{itemize}
\end{proposition}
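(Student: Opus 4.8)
The plan is to reduce all four identities to a single adjacency formula in arithmetic modulo $2$ and then read each of them off by a short computation. For a vertex subset $A\subseteq V(G)$ write $\chi_A\colon V(G)\to\{0,1\}$ for its indicator function, and for a pair of distinct vertices $u,v$ let $e_G(u,v)\in\{0,1\}$ record whether $uv\in E(G)$. Unpacking the definition of $S(G,A)$ by cases---the pair $\{u,v\}$ lies inside $A$, inside $V(G)\setminus A$, or straddles the two parts---shows that the edge $uv$ is flipped exactly when precisely one of $u,v$ belongs to $A$. Interpreting $+$ as addition modulo $2$, this is captured uniformly by the master formula
\begin{equation}\label{eq:master}
  e_{S(G,A)}(u,v) \;=\; e_G(u,v) + \chi_A(u) + \chi_A(v).
\end{equation}
Establishing \eqref{eq:master} is the only place a case analysis is needed, and it is immediate from the edge set of $S(G,A)$ displayed just before the statement.

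With \eqref{eq:master} in hand, each item follows using associativity and commutativity of $+$ together with $x+x=0$. For the first item, observe that $\chi_{V(G)\setminus A}(x)=1+\chi_A(x)$, so $\chi_{V(G)\setminus A}(u)+\chi_{V(G)\setminus A}(v)=\chi_A(u)+\chi_A(v)$ because the two constant terms cancel; hence switching $A$ and switching $V(G)\setminus A$ give the same graph. For the second item, applying \eqref{eq:master} twice with the same set $A$ adds $\chi_A(u)+\chi_A(v)$ a second time, which cancels, returning $e_G(u,v)$ and thus recovering $G$.

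For the third item, the key observation is the indicator identity $\chi_A(x)+\chi_B(x)=\chi_{A\Delta B}(x)$, which holds because $x\in A\Delta B$ iff exactly one of the memberships $x\in A$, $x\in B$ holds. Applying \eqref{eq:master} first with $A$ and then with $B$ yields $e_G(u,v)+\chi_A(u)+\chi_A(v)+\chi_B(u)+\chi_B(v)$; regrouping and using the identity rewrites this as $e_G(u,v)+\chi_{A\Delta B}(u)+\chi_{A\Delta B}(v)=e_{S(G,A\Delta B)}(u,v)$. Since the resulting expression is symmetric in $A$ and $B$ and $A\Delta B=B\Delta A$, the two orders of switching coincide and both equal $S(G,A\Delta B)$. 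Finally, for the fourth item, $e_{\overline{G}}(u,v)=1+e_G(u,v)$, so adding $1$ to both sides of \eqref{eq:master} gives $e_{\overline{S(G,A)}}(u,v)=e_{\overline{G}}(u,v)+\chi_A(u)+\chi_A(v)=e_{S(\overline{G},A)}(u,v)$.

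The only real obstacle is bookkeeping: without \eqref{eq:master}, each identity would require separately tracking which pairs straddle the switched set or sets, and the third item in particular would force a four-way split on the memberships of $u$ and $v$. Encoding adjacency as a sum modulo $2$ collapses all of this into the cancellation of repeated terms, so once \eqref{eq:master} is proved there is nothing left but arithmetic.
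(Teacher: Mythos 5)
Your proposal is correct in every step: the master formula $e_{S(G,A)}(u,v)=e_G(u,v)+\chi_A(u)+\chi_A(v)$ over $\mathbb{F}_2$ (for distinct $u,v$, as you are careful to stipulate) is exactly the definition of switching, and each of the four identities then follows by the cancellations you perform --- the constant terms in item one, the doubled indicator in item two, the identity $\chi_A+\chi_B=\chi_{A\Delta B}$ in item three, and adding $1$ throughout in item four. The paper itself supplies no proof at all: the proposition is labelled folklore and dismissed with ``immediately from the definition,'' which implicitly invites a four-way case split on the memberships of $u$ and $v$ for each item. Your route is the standard algebraic packaging of that check (switching as adding a coboundary, or equivalently the cut vector of $A$, to the adjacency function mod $2$), and what it buys is uniformity: the one case analysis needed to establish the master formula is done once, after which all four items, including the otherwise fiddly commutativity statement $S(S(G,A),B)=S(S(G,B),A)=S(G,A\Delta B)$, reduce to commutativity, associativity, and $x+x=0$ in $\mathbb{F}_2$. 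The only cost is the small overhead of notation, which is well spent here.
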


Two graphs $G$ and $G'$ are called \emph{switching equivalent} if $S(G,A)=G'$ for some $A\subseteq V(G)$.
By Proposition~\ref{prop:basic properties}, switching is an equivalence relation.
For example, the eleven graphs of order 4 can be partitioned into the following three sets
\[
 \{C_4, \overline{K_{3} + K_{1}}, 4K_1\},
 \{2K_2, {K_{3} + K_{1}}, K_4\},
 \{P_4, {K_{2} + 2 K_{1}}, \overline{K_{2} + 2 K_{1}}, {P_{3} + K_{1}}, \overline{P_{3} + K_{1}}\}.
\]
Note that $\overline{K_{3} + K_{1}}$ is the claw, $\overline{P_{3} + K_{1}}$ is the paw, and $\overline{K_{2} + 2 K_{1}}$ is the diamond; see Figure~\ref{fig:small-graphs} and \ref{fig:s-c4-c5}a.
For a graph $G$, we use $\SW{G}$ to denote the set of non-isomorphic graphs that can be obtained from $G$ by switching. 
Figure~\ref{fig:s-c4-c5} illustrates $\SW{C_4}$
and $\SW{C_5}$.
For a set $\mathcal{G}$ of graphs, by $\SW{\mathcal{G}}$ we denote the union of
$\SW{G}$ for $G\in \mathcal{G}$.

A graph $G$ is called a \emph{split graph}
if the vertex set of $G$ can be partitioned in such a way that one is a 
clique and the other is an independent set. 
\emph{Split partitions} of a split graph refer to such (clique,
independent set) partitions.
%\emph{Split partitions} of a split graph
%refer to this (clique, independent set) partition.
An \emph{edgeless graph} is a graph without any edges.
A graph is \emph{complete bipartite} it its vertices can be partitioned into two sets $X, Y$ such that there is 
an edge between $x$ and $y$, for every $x\in X$ and for every $y\in Y$. It is denoted by $K_{|X|,|Y|}$.

Two graphs $G$ and $G'$ are \emph{isomorphic}, if there is a bijective function $f:V(G)\longrightarrow V(G')$
such that $uv$ is an edge in $G$ if and only if 
$f(u)f(v)$ is an edge in $G'$. 
For two graphs $G$ and $H$, we say that $G$ is 
$H$-free if there is no induced subgraph of $G$ isomorphic to $H$. In general, for two sets $\mathcal{G}$ and $\mathcal{H}$ of graphs, we say that
$\mathcal{G}$ is $\mathcal{H}$-free if $G$ is $H$-free
for every $G\in \mathcal{G}$ and for every $H\in \mathcal{H}$. By $\mathcal{F}(\mathcal{H})$, we denote the class of $\mathcal{H}$-free graphs. Note that $\mathcal{F}(\mathcal{H}\cup\mathcal{H}') = \mathcal{F}(\mathcal{H})\cap \mathcal{F}(\mathcal{H'})$.

For a graph property $\mathcal{G}$, the \emph{lower $\mathcal{G}$ switching class}, denoted by \LW{\mathcal{G}}, consists of all graphs $G$ with $\SW{G}\subseteq \mathcal{G}$.  
Note that every graph in \LW{\mathcal{G}}\ is also in $\mathcal{G}$.  Thus, \LW{\mathcal{G}}\ is the maximal subset $\mathcal{G}'$ of $\mathcal{G}$ such that $\SW{\mathcal{G}'} =\mathcal{G}'$.
The \emph{upper $\mathcal{G}$ switching class}, denoted by \UP{\mathcal{G}}, consists of all graphs $G$ with $\SW{G}\cap \mathcal{G}\neq \emptyset$. Clearly, every graph in $\mathcal{G}$ is in \UP{\mathcal{G}}. Therefore, the \UP{\mathcal{G}}\ 
is the minimal superset $\mathcal{G}'$ of $\mathcal{G}$
such that $\SW{\mathcal{G}'}=\mathcal{G}'$.
We note that $\UP{\mathcal{G}} = \SW{\mathcal{G}}$.
The following propositions are immediate from the definitions and Proposition~\ref{prop:basic properties}. 
\begin{proposition}
  For a hereditary graph class $\mathcal{G}$,
  both \LW{\mathcal{G}} and \UP{\mathcal{G}} are hereditary.
\end{proposition}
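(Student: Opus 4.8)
The plan is to reduce both statements to a single structural fact: switching commutes with taking induced subgraphs. Concretely, I would first establish that for any graph $G$, any vertex subset $U \subseteq V(G)$, and any switching set $A \subseteq V(G)$,
\[
S(G, A)[U] = S(G[U], A \cap U).
\]
To verify this it suffices to check, for an arbitrary pair $u, v \in U$, that the two graphs agree on whether $uv$ is an edge. Because $u, v \in U$, membership of $u$ (resp.\ of $v$) in $A$ coincides with membership in $A \cap U$; hence exactly one of $u, v$ lies in $A$ if and only if exactly one of them lies in $A \cap U$. Switching flips the adjacency of $uv$ precisely in this situation and leaves it unchanged otherwise, so $uv$ has the same status on both sides. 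This identity is the only technical ingredient, and it is a direct consequence of the definition of $S(G, A)$; I expect no genuine obstacle here, beyond bookkeeping the four cases of membership of $u$ and $v$ in $A$.

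For the upper class, I would argue as follows. Suppose $G \in \UP{\mathcal{G}}$, witnessed by a set $A$ with $S(G, A) \in \mathcal{G}$, and let $H = G[U]$ be an arbitrary induced subgraph. Applying the identity with the switching set $A \cap U$ gives $S(H, A \cap U) = S(G, A)[U]$, which is an induced subgraph of $S(G, A) \in \mathcal{G}$. Since $\mathcal{G}$ is hereditary, $S(H, A \cap U) \in \mathcal{G}$, so $\SW{H} \cap \mathcal{G} \neq \emptyset$ and $H \in \UP{\mathcal{G}}$.

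For the lower class, suppose $G \in \LW{\mathcal{G}}$, so $S(G, B) \in \mathcal{G}$ for every $B \subseteq V(G)$, and again set $H = G[U]$. To show $H \in \LW{\mathcal{G}}$ I must verify $S(H, C) \in \mathcal{G}$ for every $C \subseteq U$. Fixing such a $C$ and applying the identity with $A = C$, so that $A \cap U = C$, yields $S(H, C) = S(G, C)[U]$. The graph $S(G, C)$ lies in $\mathcal{G}$ by hypothesis, and its induced subgraph $S(G, C)[U]$ therefore lies in $\mathcal{G}$ by heredity. Hence every switch of $H$ belongs to $\mathcal{G}$, that is $\SW{H} \subseteq \mathcal{G}$, which is exactly $H \in \LW{\mathcal{G}}$. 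Both parts thus follow uniformly from the commutation identity, with the heredity of $\mathcal{G}$ invoked once in each case.
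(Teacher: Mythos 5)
Your proof is correct and takes essentially the same route as the paper: the commutation identity $S(G,A)[U]=S(G[U],A\cap U)$ that you isolate is precisely the fact the paper treats as immediate from the definitions (it is sketched in the introduction for the upper class, and reappears explicitly in the proof of Proposition~\ref{switching class general: ob1}). Your handling of the lower class directly via the same identity is a harmless repackaging of the paper's forbidden-induced-subgraph phrasing rather than a genuinely different argument.
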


\begin{proposition}\label{pro:switching classes}
  Let $\mathcal{G}$ and $\mathcal{G}'$
  be graph classes. Then the following hold true.
  \begin{enumerate}
      \item\label{pro:switching classes:item:complement} $\overline{\LW{\mathcal{G}}}$ = \LW{\overline{\mathcal{G}}}, $\overline{\UP{\mathcal{G}}}$ = \UP{\overline{\mathcal{G}}}.
      \item\label{pro:switching classes:item:c} $(\LW{\mathcal{G}})^c = \UP{\mathcal{G}^c}$.
      \item\label{pro:switching classes:item:subset}   If $\mathcal{G}'\subseteq \mathcal{G}$,
      then \LW{\mathcal{G}'} $\subseteq$ \LW{\mathcal{G}} and \UP{\mathcal{G}'} $\subseteq$ \UP{\mathcal{G}}.
      \item\label{pro:switching classes:item:intersection} \LW{\mathcal{G}} $\cap$ \LW{\mathcal{G}'} = \LW{\mathcal{G}\cap\mathcal{G}'}.
      \item\label{pro:switching classes:item:union} \UP{\mathcal{G}} $\cup$ \UP{\mathcal{G}'} = \UP{\mathcal{G}\cup\mathcal{G}'}.
  \end{enumerate}
\end{proposition}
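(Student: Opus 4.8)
The plan is to treat all five parts uniformly by unwinding the two membership conditions that define the switching classes, namely $G\in\LW{\mathcal{G}}$ iff $\SW{G}\subseteq\mathcal{G}$ and $G\in\UP{\mathcal{G}}$ iff $\SW{G}\cap\mathcal{G}\neq\emptyset$. The only ingredient that is not pure set logic is a single auxiliary identity relating switching to complementation, which I would establish first and then reuse for part~1.

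First I would record the identity $\SW{\overline{G}}=\overline{\SW{G}}$. This is immediate from the last item of Proposition~\ref{prop:basic properties}, $\overline{S(G,A)}=S(\overline{G},A)$: as $A$ ranges over all subsets of $V(G)$, the graphs $S(\overline{G},A)$ are exactly the complements of the graphs $S(G,A)$, and since isomorphism is preserved under complementation, the resulting sets of non-isomorphic graphs coincide. With this in hand, part~1 becomes a direct chain of equivalences: $G\in\LW{\mathcal{G}}$ iff $\SW{G}\subseteq\mathcal{G}$ iff $\overline{\SW{G}}\subseteq\overline{\mathcal{G}}$ iff $\SW{\overline{G}}\subseteq\overline{\mathcal{G}}$ iff $\overline{G}\in\LW{\overline{\mathcal{G}}}$; the same computation with the condition $\subseteq\mathcal{G}$ replaced by $\cap\,\mathcal{G}\neq\emptyset$ handles $\UP{\cdot}$.

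The remaining four parts are then bookkeeping on the quantified membership conditions. For part~2, $G\notin\LW{\mathcal{G}}$ means $\SW{G}\not\subseteq\mathcal{G}$, i.e.\ some switch of $G$ lies outside $\mathcal{G}$, which is exactly $\SW{G}\cap\mathcal{G}^c\neq\emptyset$, i.e.\ $G\in\UP{\mathcal{G}^c}$. For part~3, if $\mathcal{G}'\subseteq\mathcal{G}$ then $\SW{G}\subseteq\mathcal{G}'$ forces $\SW{G}\subseteq\mathcal{G}$, and $\SW{G}\cap\mathcal{G}'\neq\emptyset$ forces $\SW{G}\cap\mathcal{G}\neq\emptyset$. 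Part~4 is the distribution of the universal condition over intersection ($\SW{G}\subseteq\mathcal{G}$ and $\SW{G}\subseteq\mathcal{G}'$ iff $\SW{G}\subseteq\mathcal{G}\cap\mathcal{G}'$), and part~5 is the dual distribution of the existential condition over union ($\SW{G}\cap\mathcal{G}\neq\emptyset$ or $\SW{G}\cap\mathcal{G}'\neq\emptyset$ iff $\SW{G}\cap(\mathcal{G}\cup\mathcal{G}')\neq\emptyset$).

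I do not expect a genuine obstacle here; the statement is essentially a definitional exercise. The one place to stay careful is keeping the two quantifier shapes straight---$\LW{\cdot}$ carries a ``for all switches'' condition while $\UP{\cdot}$ carries an ``exists a switch'' condition---so that complementation in part~2 correctly swaps $\subseteq$ with $\cap\,(\cdot)\neq\emptyset$ and swaps the roles of $\LW{\cdot}$ and $\UP{\cdot}$, and so that $\LW{\cdot}$ pairs with intersection (part~4) whereas $\UP{\cdot}$ pairs with union (part~5). I would also phrase the complement identity in part~1 at the level of \emph{sets of non-isomorphic graphs}, since $\SW{\cdot}$ is defined only up to isomorphism.
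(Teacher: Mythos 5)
Your proposal is correct and coincides with the paper's own treatment: the paper gives no written proof, stating that the proposition is ``immediate from the definitions and Proposition~\ref{prop:basic properties},'' and your argument is exactly that intended unwinding, with the identity $\SW{\overline{G}}=\overline{\SW{G}}$ drawn from $\overline{S(G,A)}=S(\overline{G},A)$ and the remaining parts reduced to quantifier bookkeeping. Your care about the universal/existential shapes of $\LW{\cdot}$ versus $\UP{\cdot}$ and about $\SW{\cdot}$ being defined up to isomorphism is exactly the right level of scrutiny, and nothing is missing.
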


\begin{proposition}
    \label{pro:lower-h-free}
    For a set $\mathcal{H}$ of graphs, 
    \LW{\mathcal{F}(\mathcal{H})} = $\mathcal{F}(\UP{\mathcal{H})}$.
\end{proposition}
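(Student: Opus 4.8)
The plan is to prove the identity by double inclusion, with both directions resting on a single elementary observation: switching a vertex set commutes with taking induced subgraphs. Concretely, for any $A, U \subseteq V(G)$ I would first record that
\[
S(G, A)[U] = S(G[U], A\cap U),
\]
which holds because switching flips the adjacency of a pair $\{u,v\}$ precisely when exactly one of $u,v$ lies in $A$, and for $u,v \in U$ this happens if and only if exactly one of them lies in $A\cap U$. Since the statement uses $\UP{\mathcal{H}} = \SW{\mathcal{H}}$, which is the set of all graphs switching equivalent to some member of $\mathcal{H}$, it suffices to show $\LW{\mathcal{F}(\mathcal{H})} = \mathcal{F}(\SW{\mathcal{H}})$.

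For the inclusion $\LW{\mathcal{F}(\mathcal{H})} \subseteq \mathcal{F}(\SW{\mathcal{H}})$, I would argue by contraposition. Suppose $G$ is not $\SW{\mathcal{H}}$-free, so $G$ has an induced subgraph $G[U]$ isomorphic to some member of $\SW{\mathcal{H}}$; by definition there are $A \subseteq U$ and $H \in \mathcal{H}$ with $S(G[U], A) \cong H$. Switching $G$ on this same set $A$ and restricting to $U$, the commutation identity gives $S(G, A)[U] = S(G[U], A) \cong H$, so $S(G,A)$ contains an induced copy of $H$ and hence is not $\mathcal{H}$-free. Since $S(G,A) \in \SW{G}$, this witnesses $\SW{G} \not\subseteq \mathcal{F}(\mathcal{H})$, i.e., $G \notin \LW{\mathcal{F}(\mathcal{H})}$.

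For the reverse inclusion I would again use contraposition. If $G \notin \LW{\mathcal{F}(\mathcal{H})}$, then some $G' = S(G, A) \in \SW{G}$ fails to be $\mathcal{H}$-free, so $G'[U] \cong H$ for some $U$ and some $H \in \mathcal{H}$. Using the involution property $S(S(G,A),A)=G$ of Proposition~\ref{prop:basic properties} together with the commutation identity, $G[U] = S(G', A)[U] = S(G'[U], A\cap U)$, so $G[U]$ is switching equivalent to $G'[U] \cong H \in \mathcal{H}$. Hence $G[U] \in \SW{\mathcal{H}}$ (up to isomorphism), showing $G$ is not $\SW{\mathcal{H}}$-free, i.e., $G \notin \mathcal{F}(\SW{\mathcal{H}})$.

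The two steps are essentially symmetric, and the only genuine content is the commutation identity. I expect no real obstacle; the one thing to handle carefully is the bookkeeping of isomorphisms, since switching equivalence is defined up to relabeling of vertices while the identity above lives on a fixed vertex set. This must be stated cleanly so that "$G[U]$ is switching equivalent to $H$" is correctly read as membership of $G[U]$ in $\SW{\mathcal{H}}$ up to isomorphism, avoiding any circularity.
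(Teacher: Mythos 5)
Your proposal is correct and takes essentially the same route as the paper: the paper treats this proposition as immediate from the definitions, and the argument it sketches (in the introduction and, explicitly, in the proof of Proposition~\ref{switching class general: ob1}) rests on exactly your commutation identity $S(G,A)[U]=S(G[U],A\cap U)$ together with the involution $S(S(G,A),A)=G$ from Proposition~\ref{prop:basic properties}. Your closing remark about tracking isomorphisms carefully is also the right caution, since $\SW{\mathcal{H}}$ is defined only up to isomorphism while the identity lives on a fixed vertex set.
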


\begin{figure}[h]
\tikzstyle{empty vertex}  = [{circle, draw, fill = white, inner sep=1.pt}]
  \centering \small
\begin{subfigure}[b]{0.15\linewidth}
    \centering
    \begin{tikzpicture}[every node/.style={empty vertex},scale=.25]
      \node (s) at (0,4) {};
      \node (a1) at (-2,0) {};
      \node (b1) at (2,0) {};
      \node (c) at (0,2) {};
      \draw[] (c) -- (s);
      \draw[] (a1) -- (c) -- (b1) -- (a1);
    \end{tikzpicture}
    \caption{paw}\label{fig:paw}
  \end{subfigure}
  \,
  \begin{subfigure}[b]{0.15\linewidth}
    \centering
    \begin{tikzpicture}[every node/.style={empty vertex},scale=.25]
      \node (a1) at (-2, 0) {};
      \node (v) at (0, -2.5) {};
      \node (b1) at (2, 0) {};
      \node (c) at (0, 2.5) {};
      \draw (b1) -- (c) -- (a1) -- (v) -- (b1) -- (a1);
    \end{tikzpicture}
    \caption{diamond}\label{fig:diamond}
  \end{subfigure}
  \,
\begin{subfigure}[b]{0.15\linewidth}
    \centering
    \begin{tikzpicture}[every node/.style={empty vertex},scale=.4]
      \def\n{4}
      \def\radius{1.5}      
      \node (v0) at ({90 + 180 / \n}:\radius) {};
      \foreach \i in {1,..., \n} {
        \pgfmathsetmacro{\angle}{90 - (\i - .5) * (360 / \n)}
        \node (v\i) at (\angle:\radius) {};
        \draw let \n1 = {int(\i - 1)} in (v\n1) -- (v\i);
      }
      \node[empty vertex] (t) at (0, \radius * 1.5) {};
      \draw (v1) -- (t) -- (v4);
    \end{tikzpicture}
    \caption{house}\label{fig:house}
  \end{subfigure}
  \,
  \begin{subfigure}[b]{0.15\linewidth}
    \centering
    \begin{tikzpicture}[scale=.5]
      \foreach[count =\j] \i in {1, 2, 3} 
        \draw ({120*\i-30}:2) -- ({120*\i-30}:1) -- ({120*\i+90}:1);
        \foreach[count =\j] \i in {1, 2, 3} {
          \node[empty vertex] (u\i) at ({120*\i-30}:2) {};
          \node[empty vertex] (v\i) at ({120*\i-30}:1) {};
      }
    \end{tikzpicture}
    \caption{net}\label{fig:net}    
  \end{subfigure}
  \,
  \begin{subfigure}[b]{0.15\linewidth}
    \centering
    \begin{tikzpicture}[scale=.5]
      \foreach[count =\j] \i in {1, 2, 3} 
        \draw ({120*\i-90}:1) -- ({120*\i+30}:1) -- ({120*\i-30}:2) -- ({120*\i-90}:1);
        \foreach[count =\j] \i in {1, 2, 3} {
          \node[empty vertex] (u\i) at ({120*\i-30}:2) {};
        \node[empty vertex] (v\i) at ({120*\i-90}:1) {};
      }
    \end{tikzpicture}
    \caption{sun}\label{fig:sun}    
  \end{subfigure}
  \,
  \begin{subfigure}[b]{0.16\linewidth}
    \centering
    \begin{tikzpicture}[yscale=.67, every node/.style = empty vertex]
      \foreach \x in {0, 1} {
        \draw (\x, 0) -- (\x, 2);
      }
      \foreach \y in {0, 1, 2}
      \draw (0, \y) node {} --  (1, \y) node {};
    \end{tikzpicture}
    \caption{domino}\label{fig:domino}    
  \end{subfigure}
  \caption{small graphs.}
  \label{fig:small-graphs}
\end{figure}

\begin{figure}[h]
\tikzstyle{filled vertex}  = [{circle,draw=blue,fill=black!50,inner sep=1pt}]  \tikzstyle{empty vertex}  = [{circle, draw, fill = white, inner sep=1.pt}]
  \centering \small
  \begin{subfigure}[b]{0.38\linewidth}
    \centering
    \begin{tikzpicture}[scale=.6]
      \def\n{4}
      \def\radius{1.5}      
      \coordinate (v0) at ({90 + 180 / \n}:\radius) {};
      \foreach \i in {1,..., \n} {
        \pgfmathsetmacro{\angle}{90 - (\i - .5) * (360 / \n)}
        \coordinate (v\i) at (\angle:\radius) {};
        \draw let \n1 = {int(\i - 1)} in (v\n1) -- (v\i);
      }
      \foreach \i in {1,..., \n}
      \node[empty vertex] at (v\i) {};
    \end{tikzpicture}
    \;
    \begin{tikzpicture}[scale=.6]
      \def\n{4}
      \def\radius{1.5}      
      \node[filled vertex] (v4) at ({90 + 180 / \n}:\radius) {};
      \foreach \i in {1,..., 3} {
        \pgfmathsetmacro{\angle}{90 - (\i - .5) * (360 / \n)}
        \node[empty vertex] (v\i) at (\angle:\radius) {};
      }
      \foreach \i in {1, 3, 4} 
      \draw (v\i) -- (v2);
    \end{tikzpicture}
    \;
    \begin{tikzpicture}[scale=.6]
      \def\n{4}
      \def\radius{1.5}      
      \foreach \i in {1, 3}
      \node[filled vertex] at (v\i) {};
      \foreach \i in {2, 4}
      \node[empty vertex] at (v\i) {};
    \end{tikzpicture}
    \caption{$\mathcal{S}(C_{4}) = \{C_4, \mathrm{claw}, 4K_1\}$}
  \end{subfigure}
  \quad
  \begin{subfigure}[b]{0.56\linewidth}
    \centering
    \begin{tikzpicture}[scale=.5]
      \def\n{5}
      \def\radius{1.5}      
      \foreach \i in {0, 1,..., \n} {
        \pgfmathsetmacro{\angle}{90 - (\i) * (360 / \n)}
        \node[empty vertex] (v\i) at (\angle:\radius) {};
      }
      \foreach \i in {1,..., \n} {
        \pgfmathsetmacro{\p}{int(\i - 1)}
        \draw (v\i) -- (v\p);
      }
    \end{tikzpicture}
    \;
    \begin{tikzpicture}[scale=.5]
      \def\n{5}
      \def\radius{1.5}      
      \foreach \i in {0} {
        \pgfmathsetmacro{\angle}{90 - (\i) * (360 / \n)}
        \node[filled  vertex] (v\i) at (\angle:\radius) {};
      }
      \foreach \i in {1,..., 4} {
        \pgfmathsetmacro{\angle}{90 - (\i) * (360 / \n)}
        \node[empty vertex] (v\i) at (\angle:\radius) {};
      }
      \foreach \i in {2,..., 4} {
        \pgfmathsetmacro{\p}{int(\i - 1)}
        \draw (v\i) -- (v\p);
      }
      \draw (v2) -- (v0) -- (v3);
    \end{tikzpicture}
    \;
    \begin{tikzpicture}[scale=.5]
      \def\n{5}
      \def\radius{1.5}      
      \foreach \i in {1, 2, 3} {
        \pgfmathsetmacro{\angle}{90 - (\i) * (360 / \n)}
        \node[filled vertex] (v\i) at (\angle:\radius) {};
      }
      \foreach \i in {4,..., 6} {
        \pgfmathsetmacro{\angle}{90 - (\i) * (360 / \n)}
        \node[empty vertex] (v\i) at (\angle:\radius) {};
      }
      \foreach \i in {2,..., 6} {
        \pgfmathsetmacro{\p}{int(\i - 1)}
        \draw (v\i) -- (v\p);
      }
      \draw (v2) -- (v5) -- (v3);
    \end{tikzpicture}
    \;
    \begin{tikzpicture}[scale=.5]
      \def\n{5}
      \def\radius{1.5}      
      \foreach \i in {1, 4} {
        \pgfmathsetmacro{\angle}{90 - (\i) * (360 / \n)}
        \node[filled vertex] (v\i) at (\angle:\radius) {};
      }
      \foreach \i in {0, 2, 3, 5} {
        \pgfmathsetmacro{\angle}{90 - (\i) * (360 / \n)}
        \node[empty vertex] (v\i) at (\angle:\radius) {};
      }
      \foreach \i in {2,..., 4} {
        \pgfmathsetmacro{\p}{int(\i - 1)}
        \draw (v\i) -- (v\p);
      }
    \end{tikzpicture}
    \caption{$\mathcal{S}(C_{5}) = \{C_5, \mathrm{bull}, \mathrm{gem}, P_4+K_1\}$}
  \end{subfigure}
  \caption{Switching equivalent graphs of $C_{4}$ and $C_{5}$.  Switching the solid nodes (or the rest) results in the first graph in the list.}
\label{fig:s-c4-c5}
\end{figure}

 \section{Lower switching classes}
\label{sec:lower}

Every (odd) hole of length at least seven contains an induced $P_{4}+K_{1}$, and its complement contains an induced gem.  Both $P_{4}+K_{1}$ and the gem are in $\mathcal{S}(C_{5})$; see Figure~\ref{fig:s-c4-c5}b.  Thus, all the forbidden induced subgraphs of perfect graphs, namely, odd holes and their complements, boil down to $\mathcal{S}(C_{5})$, and the lower perfect switching class is equivalent to the lower $C_5$-free switching class~\cite{DBLP:journals/dam/Hertz99}.  In the same spirit, we characterized the lower $\mathcal{G}$ switching classes of a number of important graph classes listed in Figure~\ref{fig:main-subclasses}. The results are listed in Table~\ref{tab:lower}.

Since all these lower switching classes have finite characterizations, they can be recognized in polynomial time.
For the class of chordal graphs and several of its subclasses, we show a stronger structural characterization of their lower switching classes.  They have to be proper interval graphs with a very special structure.

We crucially use Lemma~\ref{lem:lower-subclass} which states that
if lower $\mathcal{G}'$ switching class is equivalent to lower $\mathcal{G}$ switching class, for any $\mathcal{G}\subseteq \mathcal{G}'$, then it is also equivalent to lower $\mathcal{G}''$ switching class, where $\mathcal{G}''$ is sandwiched between $\mathcal{G}'$ and $\mathcal{G}$. The proof is direct from Proposition~\ref{pro:switching classes}(\ref{pro:switching classes:item:subset}).

\begin{lemma}
    \label{lem:lower-subclass}
    Let $\mathcal{G}, \mathcal{G}'$ be classes of graphs
    such that $\mathcal{G}\subseteq \mathcal{G}'$.
    If \LW{\mathcal{G}'} $=$ \LW{\mathcal{G}}, then 
    \LW{\mathcal{G}''} $=$ \LW{\mathcal{G}}, for every graph class
    $\mathcal{G}''$ such that $\mathcal{G}\subseteq\mathcal{G}''\subseteq\mathcal{G}'$.
    In particular, the following is true.
    Let $\mathcal{H}, \mathcal{H}'$ be sets of graphs such that $\mathcal{H}'\subseteq\mathcal{H}$. 
    If \LW{\mathcal{F}(\mathcal{H})} $=$ \LW{\mathcal{F}(\mathcal{H}')}, then \LW{\mathcal{F}(\mathcal{H}'')} $=$ \LW{\mathcal{F}(\mathcal{H}')}, for every set $\mathcal{H}''$
    such that $\mathcal{H}'\subseteq\mathcal{H}''\subseteq\mathcal{H}$.
\end{lemma}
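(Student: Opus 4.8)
The plan is to establish the general statement first, by a short monotonicity sandwich, and then obtain the $\mathcal{F}(\cdot)$-version as an immediate specialization. The only ingredient I need is the monotonicity of the lower-switching-class operator recorded in Proposition~\ref{pro:switching classes}(\ref{pro:switching classes:item:subset}): enlarging a graph class can only enlarge its lower switching class.

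For the general part, suppose $\mathcal{G}\subseteq\mathcal{G}''\subseteq\mathcal{G}'$ with $\LW{\mathcal{G}'}=\LW{\mathcal{G}}$. Applying Proposition~\ref{pro:switching classes}(\ref{pro:switching classes:item:subset}) to the inclusion $\mathcal{G}\subseteq\mathcal{G}''$ yields $\LW{\mathcal{G}}\subseteq\LW{\mathcal{G}''}$, and applying it to $\mathcal{G}''\subseteq\mathcal{G}'$ yields $\LW{\mathcal{G}''}\subseteq\LW{\mathcal{G}'}$. Chaining these with the hypothesis gives $\LW{\mathcal{G}}\subseteq\LW{\mathcal{G}''}\subseteq\LW{\mathcal{G}'}=\LW{\mathcal{G}}$, so every inclusion in the chain is forced to be an equality; in particular $\LW{\mathcal{G}''}=\LW{\mathcal{G}}$, as required.

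For the specialization I would use that the operator $\mathcal{F}(\cdot)$ is inclusion-reversing: forbidding more graphs produces a smaller class. Hence from $\mathcal{H}'\subseteq\mathcal{H}''\subseteq\mathcal{H}$ one gets $\mathcal{F}(\mathcal{H})\subseteq\mathcal{F}(\mathcal{H}'')\subseteq\mathcal{F}(\mathcal{H}')$. Setting $\mathcal{G}:=\mathcal{F}(\mathcal{H})$, $\mathcal{G}'':=\mathcal{F}(\mathcal{H}'')$, and $\mathcal{G}':=\mathcal{F}(\mathcal{H}')$, the hypothesis $\LW{\mathcal{F}(\mathcal{H})}=\LW{\mathcal{F}(\mathcal{H}')}$ reads $\LW{\mathcal{G}}=\LW{\mathcal{G}'}$, and the general part delivers $\LW{\mathcal{G}''}=\LW{\mathcal{G}}$, that is, $\LW{\mathcal{F}(\mathcal{H}'')}=\LW{\mathcal{F}(\mathcal{H})}=\LW{\mathcal{F}(\mathcal{H}')}$.

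I do not expect any real obstacle: the argument is essentially a two-line squeeze. The single point deserving a moment's care is the orientation of the inclusions in the second part, since $\mathcal{F}(\cdot)$ is inclusion-reversing while $\LW{\cdot}$ is inclusion-preserving; once the three classes $\mathcal{F}(\mathcal{H})\subseteq\mathcal{F}(\mathcal{H}'')\subseteq\mathcal{F}(\mathcal{H}')$ are lined up in the correct order, the general statement applies verbatim.
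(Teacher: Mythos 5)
Your proof is correct and follows exactly the route the paper takes: the paper declares the lemma ``direct from Proposition~\ref{pro:switching classes}(\ref{pro:switching classes:item:subset})'', and your sandwich $\LW{\mathcal{G}}\subseteq\LW{\mathcal{G}''}\subseteq\LW{\mathcal{G}'}=\LW{\mathcal{G}}$ is precisely that argument spelled out. Your specialization via the inclusion-reversing behaviour of $\mathcal{F}(\cdot)$, with the three classes ordered as $\mathcal{F}(\mathcal{H})\subseteq\mathcal{F}(\mathcal{H}'')\subseteq\mathcal{F}(\mathcal{H}')$, is likewise the intended instantiation.
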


\begin{figure}[h]
  \tikzstyle{class} = [shape=rectangle, rounded corners, draw, align=center, top color=white, bottom color=blue!20]
  \tikzstyle{others} = [shape=rectangle, minimum height = 6mm, rounded corners, draw, fill=white, thin, align=center]
  \centering\small
  \begin{tikzpicture}[xscale=.8, yscale=.5]
    \node[others] (wchordal) at (8.0, 15) {weakly chordal};
    \node[others] (meyniel) at (12, 15) {Meyniel};
    \node[others] (chordal) at (8.5, 12.0) {chordal};
    \node[others] (dh) at (5, 12) {distance-hereditary};
    \node[others] (schordal) at (10.0, 9.0) {strongly chordal};
    \node[others] (ptolemaic) at (5.0, 9.0) {Ptolemaic};
    \node[others] (block) at (6, 6) {block};
    \node[others] (pig) at (8.0, 4) {proper interval};
    \node[others] (interval) at (8.5, 6.0) {interval};
    \node[others] (coco) at (18, 9.0) {co-comparability};
    \node[others] (bc) at (12.5, 12.0) {chordal bipartite};
    \node[others] (perm) at (14.5, 6.0) {permutation};
    \node[others] (comp) at (18.0, 15) {comparability};
    \node[others] (bp) at (15.5, 13.5) {bipartite};
    \node[others] (cbp) at (12.0, 4.0) {complete bipartite};
    \draw[thick, bend right] (interval) -- (schordal) -- (chordal) (block) -- (schordal);
    \draw[thick] (perm) -- (coco) -- (interval) -- (pig);
    \draw[thick] (chordal) -- (wchordal) -- (dh);
    \draw[thick] (wchordal) -- (perm) (wchordal) -- (bc);
    \draw[thick] (dh) -- (ptolemaic) -- (block) (chordal) -- (ptolemaic);
    \draw[thick] (dh) -- (meyniel) -- (chordal) (meyniel) -- (bc);
    \draw[thick] (bc) -- (bp) -- (comp); 
    \draw[thick] (perm) -- (comp); 
    %\draw[thick] (cbp) -- (bp); 
    \draw[thick] (bp) -- (meyniel); 
    \draw[thick] (dh) -- (cbp); 
    \draw[thick] (cbp) -- (bc); 
    \draw[thick] (cbp) -- (perm); 
    %\node[others] (coco) at (15.5, 8) {co-comparability};
  \end{tikzpicture}
  \caption{\small The Hasse diagram of graph classes studied in Section~\ref{sec:lower}.}
  \label{fig:main-subclasses}
\end{figure}
 
\subsection{Some simple characterizations}

\setlength\extrarowheight{2pt}
\begin{table}[]
    \centering
\begin{tabular}{ p{4.5cm}p{5.0cm}p{2.0cm} }
%\hline
\noalign{\hrule height 1.5pt}
%\multicolumn{3}{ |c| }{Team sheet} \\
%\hline
$\mathcal{G}$ & $\LW{\mathcal{G}}$ & By \\ 
\noalign{\hrule height 1.5pt}
weakly chordal, permutation & $\LW{\mathcal{F}(\{C_5, C_6, \overline{C_6}\})}$ & Lemma~\ref{weakly chordal classes: lm1},\ref{permutation switching class: lm1} \\ %\multirow{5}{1.9cm}{Corollary~\ref{cor:lower-many}} \\ %\cline{1-2}
distance-hereditary & $\LW{\mathcal{F}(\{\mathrm{domino},\mathrm{house}, C_5, C_6\})}$ & Lemma~\ref{distance-hereditary classes: lm1}\\ %\cline{1-2} 
comparability & $\LW{\mathcal{F}(\{C_5, \overline{C_6}\})}$ & Lemma~\ref{comparability switching classes: lm1} \\ %\cline{1-2}
co-comparability & $\LW{\mathcal{F}(\{C_5, {C_6}\})}$ & Lemma~\ref{permutation switching class: lm1} \\ %\cline{1-2}
%permutation & $\{C_5, {C_6}, \overline{C_6}\}$-free & \\ \cline{1-2}
Meyniel graphs & $\LW{\mathcal{F}(\{C_5, \mathrm{house}\})}$ & Lemma~\ref{lem:lower-meyniel} \\ %\hline
complete bipartite, chordal bipartite, bipartite & complete bipartite & Lemma~\ref{lem:lower-bipartite-complete} \\ %\hline
chordal, strongly chordal, interval, proper interval, Ptolemaic & $\mathcal{C}_0$ & Corollary~\ref{chordal switching classes: lm1} \\ %\hline
block & $(+)$, $(+,0,+)$, $(1,1,1)$, and $(1,0,1,0,1)$ & Lemma~\ref{block switching class: lm1} \\ %\hline
\noalign{\hrule height 1.5pt}
%complete bipartite & \multirow{2}{*}{complete bipartite} & \multirow{2}{*}{Lemma~\ref{lem:lower-bipartite-complete}} \\ \cline{1-1}
%bipartite & & \\ \hline
%\multirow{4}{*}{Defenders} & LB & Lucas Radebe \\
% & DC & Michael Duburry \\
% & DC & Dominic Matteo \\
% & RB & Didier Domi \\ \hline
%\multirow{3}{*}{Midfielders} & MC & David Batty \\
% & MC & Eirik Bakke \\
% & MC & Jody Morris \\ \hline
%Forward & FW & Jamie McMaster \\ \hline
%\multirow{2}{*}{Strikers} & ST & Alan Smith \\
% & ST & Mark Viduka \\
%\hline
\end{tabular}
    \caption{Lower switching classes of various graph classes}
    \label{tab:lower}
\end{table}

To see a simple application of Lemma~\ref{lem:lower-subclass}, let $\mathcal{G}$ be the class of complete bipartite graphs and $\mathcal{G}'$ be the class of bipartite graphs. Since $K_3$ and $K_2+K_1$ are switching equivalents, and bipartite graphs are $K_3$-free, we obtain that lower bipartite switching class is $\{K_3,K_2+K_1\}$-free. Recall that $\{K_3,K_2+K_1\}$-free graphs are exactly the class of complete bipartite graphs. Further, switching a complete bipartite graph results in a complete bipartite graph. 
Therefore, lower $\mathcal{G}''$ switching class is equivalent to the class of complete bipartite graphs, 
where $\mathcal{G}''$ is a subclass of bipartite graphs and a superclass of complete bipartite graphs, such as bipartite graphs, complete bipartite graphs, and chordal bipartite graphs\footnote{Chordal bipartite graphs are bipartite graphs in which every cycle of length at least six has a chord.}.

\begin{lemma}
    \label{lem:lower-bipartite-complete}
    Let $\mathcal{G}$ be any subclass of bipartite graphs and any superclass of complete bipartite graphs. Then \LW{\mathcal{G}} is the class of complete bipartite graphs.  
\end{lemma}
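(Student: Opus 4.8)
The plan is to instantiate Lemma~\ref{lem:lower-subclass} with the two extremes of the sandwich, namely $\mathcal{G}_1 = $ the class of complete bipartite graphs and $\mathcal{G}_2 = $ the class of bipartite graphs. Since $\mathcal{G}_1 \subseteq \mathcal{G} \subseteq \mathcal{G}_2$ by hypothesis, it suffices to prove the two endpoint identities $\LW{\mathcal{G}_1} = \LW{\mathcal{G}_2} = \mathcal{G}_1$; Lemma~\ref{lem:lower-subclass} then propagates $\LW{\mathcal{G}} = \LW{\mathcal{G}_1} = \mathcal{G}_1$ to every sandwiched class $\mathcal{G}$ at once, which is exactly the statement.

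First I would show that $\mathcal{G}_1$ is closed under switching. Writing a complete bipartite graph with parts $X$ and $Y$, and switching an arbitrary set $A = A_X \cup A_Y$ with $A_X \subseteq X$ and $A_Y \subseteq Y$, a short case analysis on the four blocks $A_X$, $A_Y$, $X \setminus A_X$, $Y \setminus A_Y$ shows that the result is again complete bipartite, now with parts $A_X \cup (Y \setminus A_Y)$ and $A_Y \cup (X \setminus A_X)$. Hence $\SW{\mathcal{G}_1} = \mathcal{G}_1$, and a class closed under switching coincides with its lower switching class, giving $\LW{\mathcal{G}_1} = \mathcal{G}_1$. The inclusion $\mathcal{G}_1 \subseteq \LW{\mathcal{G}_2}$ is then immediate from Proposition~\ref{pro:switching classes}(\ref{pro:switching classes:item:subset}), since $\mathcal{G}_1 \subseteq \mathcal{G}_2$.

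For the reverse inclusion $\LW{\mathcal{G}_2} \subseteq \mathcal{G}_1$ I would exploit that $\LW{\mathcal{G}_2}$ is hereditary and that bipartite graphs are triangle-free. The two obstructions in play are $K_3$ and $K_2 + K_1$, and the key observation is that they are switching equivalent: switching the isolated vertex of $K_2 + K_1$ yields $K_3$. Consequently $K_3 \notin \mathcal{G}_2$ and $K_2 + K_1$ has a non-bipartite switching, so neither lies in $\LW{\mathcal{G}_2}$; heredity then forces every member of $\LW{\mathcal{G}_2}$ to be $\{K_3, K_2 + K_1\}$-free. Finally I would invoke the structural fact that a $\{K_3, K_2 + K_1\}$-free graph is complete bipartite: fixing an edge $uv$, triangle-freeness rules out a common neighbour and makes each side independent, while $(K_2+K_1)$-freeness forces every remaining vertex to attach to $u$ or $v$ and forces completeness across the resulting bipartition. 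This yields $\LW{\mathcal{G}_2} = \mathcal{G}_1 = \LW{\mathcal{G}_1}$, and Lemma~\ref{lem:lower-subclass} closes the argument.

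The step I expect to require the most care is the structural characterization "$\{K_3, K_2 + K_1\}$-free $\Rightarrow$ complete bipartite," as it is the only genuinely combinatorial part; everything else is formal manipulation of the switching-class operators via Proposition~\ref{pro:switching classes} and Lemma~\ref{lem:lower-subclass}. I would also keep an eye on the degenerate cases (edgeless graphs and single vertices), which fall under complete bipartite with an empty part and should be handled by the same case check.
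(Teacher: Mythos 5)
Your proposal is correct and follows essentially the same route as the paper: the obstructions $K_3$ and $K_2+K_1$ being switching equivalent, the characterization of $\{K_3, K_2+K_1\}$-free graphs as exactly the complete bipartite graphs, closure of complete bipartite graphs under switching, and the sandwich argument via Lemma~\ref{lem:lower-subclass}. You merely spell out in more detail the steps the paper states as recalled facts (the explicit bipartition $A_X \cup (Y\setminus A_Y)$, $A_Y \cup (X\setminus A_X)$ after switching, and the proof of the structural characterization), so there is no substantive difference.
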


For the rest of this subsection, let $\mathcal{H}$ be the set of all graphs 
having an induced subgraph isomorphic to at least one graph in \SW{C_5}. Let $\mathcal{H}'$ be $\{C_5\}$. A \emph{building} is obtained from a hole by adding an edge connecting two vertices of distance two; e.g., the house, see Figure~\ref{fig:small-graphs}. An \emph{odd building} is a building with odd number of vertices.

Lemma~\ref{lem:lower-subclass} implies
Corollary~\ref{cor:lower-sandwich}.

\begin{corollary}
    \label{cor:lower-sandwich}
    \LW{\mathcal{F}(\mathcal{H})} = \LW{\mathcal{F}(\mathcal{H}'')} = \LW{\mathcal{F}(\mathcal{H}')} = \LW{\mathcal{F}(C_5)}, for every
    $\mathcal{H}''$ such that $\mathcal{H}'\subseteq\mathcal{H}''\subseteq\mathcal{H}$.
\end{corollary}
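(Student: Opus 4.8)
The plan is to reduce the entire chain of equalities to a single endpoint comparison and then invoke Lemma~\ref{lem:lower-subclass}. Since $\mathcal{H}' = \{C_5\}$, we have $\mathcal{F}(\mathcal{H}') = \mathcal{F}(C_5)$ by definition, so the last equality $\LW{\mathcal{F}(\mathcal{H}')} = \LW{\mathcal{F}(C_5)}$ is immediate. Because $\mathcal{H}' \subseteq \mathcal{H}$ (the five-cycle belongs to $\SW{C_5}$ and hence to $\mathcal{H}$), the second part of Lemma~\ref{lem:lower-subclass} will deliver $\LW{\mathcal{F}(\mathcal{H}'')} = \LW{\mathcal{F}(\mathcal{H}')}$ for every sandwiched $\mathcal{H}''$, once its hypothesis $\LW{\mathcal{F}(\mathcal{H})} = \LW{\mathcal{F}(\mathcal{H}')}$ is verified. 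Thus the whole content of the corollary is the single identity $\LW{\mathcal{F}(\mathcal{H})} = \LW{\mathcal{F}(C_5)}$.

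To prove this identity, first I would observe that forbidding the entire family $\mathcal{H}$ is the same as forbidding only its minimal generators $\SW{C_5}$, i.e.\ $\mathcal{F}(\mathcal{H}) = \mathcal{F}(\SW{C_5})$. The inclusion $\mathcal{F}(\mathcal{H}) \subseteq \mathcal{F}(\SW{C_5})$ holds because $\SW{C_5} \subseteq \mathcal{H}$ (each graph in $\SW{C_5}$ contains itself as an induced subgraph and so lies in $\mathcal{H}$), whence every $\mathcal{H}$-free graph is in particular $\SW{C_5}$-free. For the reverse inclusion, note that $\mathcal{H}$ is upward closed under the induced-subgraph relation: if $H$ has an induced subgraph in $\SW{C_5}$ and $H$ is an induced subgraph of $H'$, then $H'$ has that same induced subgraph and hence $H' \in \mathcal{H}$. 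Consequently, if $G$ is $\SW{C_5}$-free, then no induced subgraph of $G$ can belong to $\mathcal{H}$, since any such subgraph would force an induced member of $\SW{C_5}$ inside $G$; thus $G$ is $\mathcal{H}$-free. This establishes the set equality.

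Next I would pass to lower switching classes. By Proposition~\ref{pro:lower-h-free}, $\LW{\mathcal{F}(C_5)} = \mathcal{F}(\SW{C_5})$, so the identity just proved reads $\mathcal{F}(\mathcal{H}) = \LW{\mathcal{F}(C_5)}$. The right-hand side is a lower switching class and is therefore closed under switching. For any switching-closed class $\mathcal{Y}$ one has $\LW{\mathcal{Y}} = \mathcal{Y}$: the inclusion $\LW{\mathcal{Y}} \subseteq \mathcal{Y}$ is automatic, and conversely every $G \in \mathcal{Y}$ satisfies $\SW{G} \subseteq \mathcal{Y}$, so $G \in \LW{\mathcal{Y}}$. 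Applying this idempotence with $\mathcal{Y} = \LW{\mathcal{F}(C_5)} = \mathcal{F}(\mathcal{H})$ gives $\LW{\mathcal{F}(\mathcal{H})} = \LW{\LW{\mathcal{F}(C_5)}} = \LW{\mathcal{F}(C_5)}$, which is precisely the hypothesis required by Lemma~\ref{lem:lower-subclass}.

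The argument is essentially bookkeeping once the right observation is in place; the only genuinely substantive step is recognizing that $\mathcal{F}(\mathcal{H})$ collapses to $\mathcal{F}(\SW{C_5})$, namely that forbidding the upward-closed family $\mathcal{H}$ adds nothing beyond forbidding its minimal generators $\SW{C_5}$. I expect this collapse, together with the idempotence of the lower-switching-class operator on switching-closed classes, to be where care is needed; the remainder is a direct appeal to Proposition~\ref{pro:lower-h-free} and Lemma~\ref{lem:lower-subclass}.
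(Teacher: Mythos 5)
Your proof is correct and is essentially the paper's own argument: the paper derives the corollary in one line from Lemma~\ref{lem:lower-subclass}, tacitly using that $\LW{\mathcal{F}(\mathcal{H})} = \LW{\mathcal{F}(C_5)}$, which follows exactly as you show from the collapse $\mathcal{F}(\mathcal{H}) = \mathcal{F}(\SW{C_5})$, Proposition~\ref{pro:lower-h-free}, and the switching-closedness of lower switching classes. You have simply made explicit the verification of the lemma's hypothesis that the paper leaves implicit.
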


\begin{observation}
    \label{obs:lower-h}
    $\mathcal{H}$ contains $C_5$, holes of length at least seven, complements of holes of length at least seven, and buildings of at least six vertices.
\end{observation}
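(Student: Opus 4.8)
The plan is to establish the four assertions separately, in each case exhibiting an induced member of $\SW{C_5} = \{C_5, \mathrm{bull}, \mathrm{gem}, P_4+K_1\}$ (see Figure~\ref{fig:s-c4-c5}b). The first assertion is immediate: $C_5$ itself lies in $\SW{C_5}$, so $C_5 \in \mathcal{H}$ trivially.

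For a hole $C_\ell = v_1 v_2 \cdots v_\ell v_1$ with $\ell \ge 7$, I would point to the five vertices $\{v_1, v_2, v_3, v_4, v_6\}$. The vertices $v_1, v_2, v_3, v_4$ induce a $P_4$, while $v_6$, whose only cycle-neighbors are $v_5$ and $v_7$, is nonadjacent to each of $v_1, v_2, v_3, v_4$, precisely because $\ell \ge 7$ guarantees $v_5, v_7 \notin \{v_1, v_2, v_3, v_4\}$. Hence this set induces $P_4 + K_1$, and $C_\ell \in \mathcal{H}$. The complement assertion then follows at once: the gem is exactly $\overline{P_4 + K_1}$, and since complementation preserves induced-subgraph containment (if $H = G[U]$ then $\overline{H} = \overline{G}[U]$), every $\overline{C_\ell}$ with $\ell \ge 7$ contains an induced gem.

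The only case that needs genuine work is the buildings, and it is the main obstacle. By the rotational symmetry of the cycle, a building on $\ell \ge 6$ vertices may be taken to be $C_\ell = v_1 v_2 \cdots v_\ell v_1$ together with the single short chord $v_1 v_3$, so that $v_1 v_2 v_3$ is a triangle. I would then select $\{v_1, v_2, v_3, v_4, v_\ell\}$: within this set the triangle $v_1 v_2 v_3$ carries the pendant $v_4$ (attached only to $v_3$) and the pendant $v_\ell$ (attached only to $v_1$), which is the bull. The step requiring care is confirming that no spurious edges arise among the five chosen vertices. Using $\ell \ge 6$, one checks that $v_4 \ne v_\ell$ and that each of the pairs $v_1 v_4$, $v_3 v_\ell$, $v_4 v_\ell$, $v_2 v_4$, $v_2 v_\ell$ spans a cycle-distance of at least two and so is a nonedge, while the building supplies no chord other than $v_1 v_3$. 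With these verifications the induced subgraph is exactly the bull, whence every building on at least six vertices lies in $\mathcal{H}$, completing the observation. (I note that the bound ``at least six'' is sharp here: the house, the unique $5$-vertex building, contains none of the four graphs, as a quick degree-sequence comparison shows.)
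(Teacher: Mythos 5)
Your proof is correct and takes essentially the same route as the paper's: in each case you exhibit an induced member of $\mathcal{S}(C_5)$ --- $P_4+K_1$ in $C_\ell$ for $\ell\ge 7$, the gem (as $\overline{P_4+K_1}$) in $\overline{C_\ell}$, and the bull in any building on at least six vertices --- merely making explicit the vertex choices and non-edge checks that the paper leaves implicit. Your closing sharpness remark about the house is also accurate and consistent with the paper, which indeed must handle the house separately (as in Lemma~\ref{lem:lower-meyniel}).
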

\begin{proof}
    The set $\mathcal{H}$ contains $C_5$ by definition. It 
    contains $C_\ell$, for every $\ell\geq 7$, as $C_\ell$ has 
    an induced $P_4+K_1\in \SW{C_5}$. It also contains $\overline{C_\ell}$, for $\ell\geq 7$, as $\overline{C_\ell}$
    contains a gem which is in \SW{C_5}. It contains a building of at least six vertices, as such a graph contains a bull ($\in \SW{C_5}$, see Figure~\ref{fig:s-c4-c5}b) as an induced subgraph. 
\end{proof}

A graph is \emph{weakly chordal} if it does not contain any induced cycle of length at least five or its complement.
Let $\mathcal{H}''$ contain $C_5$, all holes of at least seven vertices, and all complements of holes of at least seven vertices.
Let $\mathcal{J}=\{C_6, \overline{C_6}\}$. Note that the set of forbidden induced subgraphs of weakly chordal
graphs is $\mathcal{H}''\cup \mathcal{J}$. 
By Observation~\ref{obs:lower-h}, $\mathcal{H}'\subseteq\mathcal{H}''\subseteq\mathcal{H}$.
By Proposition~\ref{pro:switching classes}(\ref{pro:switching classes:item:intersection}) and Corollary~\ref{cor:lower-sandwich}, we obtain that \LW{\mathcal{F}(\mathcal{H}''\cup \mathcal{J})} = \LW{\mathcal{F}(\mathcal{H}'')\cap\mathcal{F}(\mathcal{J})} = \LW{\mathcal{F}(\mathcal{H}'')} $\cap$ \LW{\mathcal{F}(\mathcal{J})} = \LW{\mathcal{F}(\mathcal{H'})} $\cap$ \LW{\mathcal{F}(\mathcal{J})} = \LW{\mathcal{F}(\mathcal{H'}\cup \mathcal{J})}. Thus we obtain Lemma~\ref{weakly chordal classes: lm1}. 

\begin{lemma}
  \label{weakly chordal classes: lm1}
  The lower weakly chordal switching class is equivalent to \LW{\mathcal{F}(\{C_5, C_6, \overline{C_6}\})}.
\end{lemma}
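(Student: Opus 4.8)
The plan is to apply Lemma~\ref{lem:lower-subclass} (in its $\mathcal{F}(\cdot)$ form) to collapse the infinite family of forbidden induced subgraphs of weakly chordal graphs down to the finite set $\{C_5, C_6, \overline{C_6}\}$. Recall that a graph is weakly chordal exactly when it is $(\mathcal{H}''\cup\mathcal{J})$-free, where $\mathcal{H}''$ consists of $C_5$ together with all holes and all complements of holes on at least seven vertices, and $\mathcal{J}=\{C_6,\overline{C_6}\}$. The infinite part lives entirely in $\mathcal{H}''$, so the whole task reduces to showing that forbidding $\mathcal{H}''$ and forbidding its finite ``core'' $\mathcal{H}'=\{C_5\}$ yield the same lower switching class.

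First I would establish the sandwich $\mathcal{H}'\subseteq\mathcal{H}''\subseteq\mathcal{H}$, where $\mathcal{H}$ is the class of all graphs containing an induced member of $\SW{C_5}$. The lower inclusion is trivial since $C_5\in\mathcal{H}''$. The upper inclusion is exactly the content of Observation~\ref{obs:lower-h}: every hole on at least seven vertices contains an induced $P_4+K_1\in\SW{C_5}$, and every complement of such a hole contains an induced gem $\in\SW{C_5}$, so both lie in $\mathcal{H}$. Once the sandwich is in place, Corollary~\ref{cor:lower-sandwich} gives $\LW{\mathcal{F}(\mathcal{H}'')}=\LW{\mathcal{F}(\mathcal{H}')}=\LW{\mathcal{F}(C_5)}$, which is the key collapse.

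Next I would split off the finite part $\mathcal{J}$ and recombine. Using $\mathcal{F}(\mathcal{A}\cup\mathcal{B})=\mathcal{F}(\mathcal{A})\cap\mathcal{F}(\mathcal{B})$ together with Proposition~\ref{pro:switching classes}(\ref{pro:switching classes:item:intersection}) (which lets $\LW{\cdot}$ commute with intersection), the computation runs as the chain already displayed in the excerpt:
\[
\LW{\mathcal{F}(\mathcal{H}''\cup\mathcal{J})}
= \LW{\mathcal{F}(\mathcal{H}'')}\cap\LW{\mathcal{F}(\mathcal{J})}
= \LW{\mathcal{F}(\mathcal{H}')}\cap\LW{\mathcal{F}(\mathcal{J})}
= \LW{\mathcal{F}(\mathcal{H}'\cup\mathcal{J})}.
\]
The middle equality is precisely the collapse from the previous step, and the outer equalities are the distributivity of $\LW{\cdot}$ over intersection. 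Since $\mathcal{H}''\cup\mathcal{J}$ is the forbidden set for weakly chordal graphs and $\mathcal{H}'\cup\mathcal{J}=\{C_5,C_6,\overline{C_6}\}$, this is exactly the claimed identity, completing the proof.

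The only real obstacle is the sandwich inclusion $\mathcal{H}''\subseteq\mathcal{H}$, i.e.\ verifying that every long hole and every long co-hole already contains an induced switching-mate of $C_5$; but this is supplied cleanly by Observation~\ref{obs:lower-h}, so the argument is essentially a bookkeeping composition of results already established. I would take care that the two halves $\mathcal{H}''$ and $\mathcal{J}$ are genuinely complementary pieces of the weakly chordal forbidden set (so that their union recovers it exactly, with no overlap issues affecting the intersection identity), since the entire argument hinges on $\mathcal{F}$ turning unions into intersections and $\LW{\cdot}$ respecting those intersections.
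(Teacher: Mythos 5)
Your proposal is correct and follows essentially the same route as the paper: the same decomposition of the weakly chordal forbidden set into $\mathcal{H}''\cup\mathcal{J}$ with $\mathcal{J}=\{C_6,\overline{C_6}\}$, the same sandwich $\mathcal{H}'\subseteq\mathcal{H}''\subseteq\mathcal{H}$ via Observation~\ref{obs:lower-h}, the collapse through Corollary~\ref{cor:lower-sandwich}, and the intersection identity of Proposition~\ref{pro:switching classes}(\ref{pro:switching classes:item:intersection}). Nothing is missing, and your extra caution about the union/intersection bookkeeping is harmless (overlap between the pieces would not affect the identity $\mathcal{F}(\mathcal{A}\cup\mathcal{B})=\mathcal{F}(\mathcal{A})\cap\mathcal{F}(\mathcal{B})$ in any case).
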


In any induced subgraph of a \emph{distance-hereditary graph} $G$, two vertices in the same component have the same distance as in $G$.
The forbidden induced subgraphs of distance-hereditary graphs are domino (Figure \ref{fig:domino}), gem, house ($\overline{P_5}$), and holes of length at least five. 
We set $\mathcal{H}''$ to be the set of all holes of length five or at least seven, and $\mathcal{J}=\{C_6, \mathrm{domino}, \mathrm{gem}, \mathrm{house}\}$. 
By Observation~\ref{obs:lower-h}, $\mathcal{H}'\subseteq\mathcal{H}''\subseteq\mathcal{H}$.
Then by Proposition~\ref{pro:switching classes}(\ref{pro:switching classes:item:intersection}) and Corollary~\ref{cor:lower-sandwich}, we obtain that \LW{\mathcal{F}(\mathcal{H}''\cup \mathcal{J})} $=$ \LW{\mathcal{F}(\mathcal{H}'')\cap \mathcal{F}(\mathcal{J})} $=$ \LW{\mathcal{F}(\mathcal{H}'')} $\cap$ \LW{\mathcal{F}(\mathcal{J})} $=$ \LW{\mathcal{F}(\mathcal{H}')} $\cap$ \LW{\mathcal{F}(\mathcal{J})} $=$ \LW{\mathcal{F}(\mathcal{H}'\cup\mathcal{J})} $=$ \LW{\mathcal{F}(\{C_5, C_6, \mathrm{domino}, \mathrm{gem}, \mathrm{house}\})}. Then by the fact that gem is in $\SW{C_5}$, we obtain Lemma~\ref{distance-hereditary classes: lm1}.

\begin{lemma}
  \label{distance-hereditary classes: lm1}
  The lower distance-hereditary switching class is equivalent to the \LW{\mathcal{F}(\{domino,house, C_5, C_6\})}.
\end{lemma}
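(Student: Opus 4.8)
The plan is to follow the same strategy used for Lemma~\ref{weakly chordal classes: lm1}: peel off the infinite part of the forbidden family of distance-hereditary graphs using Corollary~\ref{cor:lower-sandwich}, and keep only a finite remainder. First I would record the forbidden induced subgraphs of distance-hereditary graphs, namely domino, gem, house, and every hole of length at least five. The goal is to show that, at the level of lower switching classes, this family collapses to $\{C_5, C_6, \mathrm{domino}, \mathrm{house}\}$.

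To this end, set $\mathcal{H}''$ to be the set of all holes of length five or at least seven, and put $\mathcal{J}=\{C_6, \mathrm{domino}, \mathrm{gem}, \mathrm{house}\}$, so that the forbidden family of distance-hereditary graphs is precisely $\mathcal{H}''\cup\mathcal{J}$. The key verification is the sandwich $\mathcal{H}'\subseteq\mathcal{H}''\subseteq\mathcal{H}$, where $\mathcal{H}'=\{C_5\}$ and $\mathcal{H}$ is as in Corollary~\ref{cor:lower-sandwich}. The inclusion $\mathcal{H}'\subseteq\mathcal{H}''$ is immediate, and $\mathcal{H}''\subseteq\mathcal{H}$ follows from Observation~\ref{obs:lower-h}, which places $C_5$ and every hole of length at least seven in $\mathcal{H}$. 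Note that $C_6$ is deliberately excluded from $\mathcal{H}''$, since it is not known to belong to $\mathcal{H}$; it is instead carried in the finite part $\mathcal{J}$.

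Now I would apply Proposition~\ref{pro:switching classes}(\ref{pro:switching classes:item:intersection}) to split $\LW{\mathcal{F}(\mathcal{H}''\cup\mathcal{J})}=\LW{\mathcal{F}(\mathcal{H}'')}\cap\LW{\mathcal{F}(\mathcal{J})}$, and then use Corollary~\ref{cor:lower-sandwich} to replace $\LW{\mathcal{F}(\mathcal{H}'')}$ by $\LW{\mathcal{F}(\mathcal{H}')}$. Re-folding the intersection with the same proposition gives
\[
\LW{\mathcal{F}(\mathcal{H}''\cup\mathcal{J})} = \LW{\mathcal{F}(\{C_5, C_6, \mathrm{domino}, \mathrm{gem}, \mathrm{house}\})}.
\]
Finally I would drop the gem: since the gem lies in $\SW{C_5}$, we have $\SW{\mathrm{gem}}=\SW{C_5}$, so adjoining the gem to $\{C_5, C_6, \mathrm{domino}, \mathrm{house}\}$ does not enlarge the union of switching classes. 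By Proposition~\ref{pro:lower-h-free} the corresponding lower switching classes therefore coincide, which yields $\LW{\mathcal{F}(\{C_5, C_6, \mathrm{domino}, \mathrm{house}\})}$ and proves the lemma.

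I expect the only genuinely delicate point to be the bookkeeping around the two inclusions $\mathcal{H}'\subseteq\mathcal{H}''\subseteq\mathcal{H}$ together with the deliberate placement of $C_6$ in $\mathcal{J}$ rather than in $\mathcal{H}''$; once the partition is chosen correctly, every remaining step is a mechanical invocation of Proposition~\ref{pro:switching classes}, Corollary~\ref{cor:lower-sandwich}, and Proposition~\ref{pro:lower-h-free}.
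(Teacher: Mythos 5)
Your proposal is correct and takes essentially the same route as the paper's proof: the identical split into $\mathcal{H}''$ (holes of length five or at least seven) and $\mathcal{J}=\{C_6, \mathrm{domino}, \mathrm{gem}, \mathrm{house}\}$, the same verification $\mathcal{H}'\subseteq\mathcal{H}''\subseteq\mathcal{H}$ via Observation~\ref{obs:lower-h}, the same chain of equalities through Proposition~\ref{pro:switching classes}(\ref{pro:switching classes:item:intersection}) and Corollary~\ref{cor:lower-sandwich}, and the same final removal of the gem because it lies in $\mathcal{S}(C_5)$. Your explicit invocation of Proposition~\ref{pro:lower-h-free} when dropping the gem merely spells out a step the paper leaves implicit.
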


A graph is a \emph{comparability graph} if its edges can be oriented in a transitive way, i.e., the existence of arcs $x y$ and $y z$ forces the existence of the arc $x z$ in the orientation.
Gallai~\cite{zbMATH03246066} listed all forbidden induced subgraphs of this class.  The list is long and hence not reproduced here.  The following summary is sufficient for our purpose.

\begin{proposition} (\cite{zbMATH03246066})
    \label{comparability switching classes: prop1}
    \begin{enumerate}[i)]
    \item     A comparability graph is $\{C_{5}, \overline{C_{6}}\}$-free.
    \item Every $\{\mathrm{bull}, \mathrm{gem}, \overline{C_{6}}, C_{2 k + 1}, k \ge 2\}$-free graph is a comparability graph.
    \end{enumerate}
\end{proposition}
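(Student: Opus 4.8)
The plan is to derive both parts from Gallai's complete forbidden-induced-subgraph characterization of comparability graphs. Write $\mathcal{L}$ for the (minimal) list Gallai produced; since the class of comparability graphs is hereditary, a graph $G$ is a comparability graph if and only if it is $\mathcal{L}$-free. The two parts then become statements relating the four detector graphs $C_5$, $\overline{C_6}$, the bull, and the gem to $\mathcal{L}$: part (i) asserts that $C_5$ and $\overline{C_6}$ lie (up to containment) in $\mathcal{L}$, while part (ii) asserts that every member of $\mathcal{L}$ is caught by one of the four detectors.

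For part (i) I would argue directly, without invoking all of $\mathcal{L}$, that neither $C_5$ nor $\overline{C_6}$ admits a transitive orientation; heredity then gives that every comparability graph is $\{C_5, \overline{C_6}\}$-free. The tool is Gallai's forcing ($\Gamma$) relation: in any transitive orientation, two edges $xy$ and $xz$ with $yz \notin E(G)$ must agree at $x$ (both leaving $x$ or both entering $x$), since otherwise one of $z \to x \to y$ or $y \to x \to z$ would force the absent edge $yz$. For $C_5 = v_1 v_2 v_3 v_4 v_5 v_1$ every two incident edges satisfy this hypothesis (the opposite vertex is nonadjacent), so fixing the orientation of $v_1 v_2$ propagates a forced orientation once around the cycle and returns the \emph{opposite} orientation to $v_1 v_2$, a contradiction. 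For $\overline{C_6}$ (the triangular prism) the forcing is even more rigid: at every vertex the three incident edges are pairwise $\Gamma$-related, so each vertex must be a local source or a local sink; but the prism contains triangles, and the middle vertex of a transitively oriented triangle has both an incoming and an outgoing edge, a contradiction. Hence neither $C_5$ nor $\overline{C_6}$ is a comparability graph.

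For part (ii) I would use Gallai's list in full. By the characterization it suffices to prove that every $H \in \mathcal{L}$ contains, as an induced subgraph, a bull, a gem, $\overline{C_6}$, or an odd hole $C_{2k+1}$ with $k \ge 2$: for then a $\{\mathrm{bull}, \mathrm{gem}, \overline{C_6}, C_{2k+1}\}$-free graph contains no member of $\mathcal{L}$ and is therefore a comparability graph. The members of $\mathcal{L}$ that are themselves odd holes, or are $\overline{C_6}$, are handled immediately. For each remaining member of $\mathcal{L}$, including the sporadic finite graphs and each of Gallai's parametrized families, one exhibits an induced bull or gem (or, where convenient, $\overline{C_6}$ or a short odd hole) on an explicitly chosen vertex set; for the parametrized families the witness should be placed uniformly in the parameter so that a single verification covers the whole family.

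The main obstacle is exactly this last step: the case analysis across Gallai's long list. The finite sporadic graphs are a routine finite check, but the infinite families require locating a detector subgraph uniformly in the parameter and choosing which of the four detectors to use so that the selected vertices induce it and nothing extra. Part (i), by contrast, is short and self-contained through the forcing relation, and does not depend on reproducing the list.
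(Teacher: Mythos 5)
The paper does not actually prove this proposition: it is stated as a quotation of Gallai's forbidden-induced-subgraph characterization, with the explicit remark that Gallai's list is ``long and hence not reproduced here.'' So your overall route---reduce both parts to Gallai's list $\mathcal{L}$---is exactly the paper's (implicit) route, and your proposal is sound. Where you go beyond the paper is part (i): your forcing argument is a correct, self-contained verification that $C_5$ and $\overline{C_6}$ admit no transitive orientation, something the paper leaves entirely to the citation. One small inaccuracy there: in $\overline{C_6}$ (the triangular prism) the three edges at a vertex are \emph{not} pairwise $\Gamma$-related as you claim---the two triangle edges incident to a vertex have adjacent far endpoints (e.g., at vertex $1$ of the triangle $\{1,3,5\}$, the edges $13$ and $15$ have $3$ adjacent to $5$), so that pair is not directly forced. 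However, each triangle edge \emph{is} forced against the incident matching edge, and since ``agreeing at $v$'' (both entering or both leaving $v$) is an equivalence relation on the edges at $v$, all three edges still agree; hence every vertex is a source or a sink, and your triangle contradiction goes through unchanged. For part (ii) you correctly identify that the entire content is the check that every member of $\mathcal{L}$---sporadic graphs and parametrized families alike---contains a bull, a gem, $\overline{C_6}$, or a short odd hole, but you do not execute that case analysis. Relative to the paper this is not a defect, since the paper itself discharges part (ii) purely by citing Gallai; as a standalone proof, though, your part (ii) remains a plan rather than an argument, and completing it would require exactly the uniform-in-the-parameter witness hunting you flag as the main obstacle.
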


Let $\mathcal{H}''$ be the set of odd holes (i.e., $\mathcal{H}''=\{C_{2k+1},k\geq\ 2\}$) and $\mathcal{J} = \{\mathrm{bull}, \mathrm{gem}, \overline{C_{6}}\}$. 
By Observation~\ref{obs:lower-h}, $\mathcal{H}'\subseteq\mathcal{H}''\subseteq\mathcal{H}$. 
Then by Corollary~\ref{cor:lower-sandwich}, we obtain that \LW{\mathcal{F}(\mathcal{H}'')} $=$ \LW{\mathcal{H}'}.
Then by Proposition~\ref{pro:switching classes}, we obtain that
\LW{\mathcal{F}(\mathcal{H''}\cup\mathcal{J})} $=$ \LW{\mathcal{F}(\mathcal{H''})\cap\mathcal{F}(\mathcal{J})} $=$ \LW{\mathcal{F}(\mathcal{H''})} $\cap$ \LW{\mathcal{F}(\mathcal{J})} $=$ \LW{\mathcal{F}(\mathcal{H'})} $\cap$ \LW{\mathcal{F}(\mathcal{J})} $=$ \LW{\mathcal{F}(\mathcal{H'}\cup\mathcal{J})} = \LW{\mathcal{F}(\{C_5, \mathrm{bull}, \mathrm{gem}, \overline{C_{6}}\})}. Then, since $\mathrm{bull}, \mathrm{gem}\in \SW{C_5}$, we obtain
that \LW{\mathcal{F}(\mathcal{H''}\cup\mathcal{J})} $=$ \LW{\mathcal{F}(C_5, \overline{C_{6}})}, and since the class of 
comparability graphs is a superclass of $\mathcal{F}(\mathcal{H}''\cup \mathcal{J})$ and a subclass of $\mathcal{F}(\mathcal{H}'\cup\mathcal{J})$, by Lemma~\ref{lem:lower-subclass}, we obtain 
Lemma~\ref{comparability switching classes: lm1}.  

\begin{lemma} \label{comparability switching classes: lm1}
  The lower comparability switching class is equivalent to \LW{\mathcal{F}(\{C_5, \overline{C_6}\})}.
\end{lemma}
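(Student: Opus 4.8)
The plan is to sandwich the class of comparability graphs between two classes of $\mathcal{H}$-free graphs whose lower switching classes are already known to coincide, and then invoke Lemma~\ref{lem:lower-subclass}. Write $\mathcal{H}''=\{C_{2k+1}\mid k\ge 2\}$ for the set of odd holes and $\mathcal{J}=\{\mathrm{bull},\mathrm{gem},\overline{C_6}\}$. Proposition~\ref{comparability switching classes: prop1} supplies both walls of the sandwich: part (ii) says every graph in $\mathcal{F}(\mathcal{H}''\cup\mathcal{J})$ is a comparability graph, so $\mathcal{F}(\mathcal{H}''\cup\mathcal{J})\subseteq{}$comparability; part (i) says every comparability graph is $\{C_5,\overline{C_6}\}$-free, so comparability${}\subseteq\mathcal{F}(\{C_5,\overline{C_6}\})$. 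It therefore suffices to prove that the two extremal classes share the same lower switching class, namely \LW{\mathcal{F}(\{C_5,\overline{C_6}\})}.

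The upper wall equals that class by definition, so the real work is to show \LW{\mathcal{F}(\mathcal{H}''\cup\mathcal{J})} $=$ \LW{\mathcal{F}(\{C_5,\overline{C_6}\})}. First I would peel off the odd holes. By Observation~\ref{obs:lower-h} we have $\mathcal{H}'\subseteq\mathcal{H}''\subseteq\mathcal{H}$ (with $\mathcal{H}'=\{C_5\}$, since $C_5$ is the odd hole with $k=2$ and the remaining odd holes have length at least seven), so Corollary~\ref{cor:lower-sandwich} gives \LW{\mathcal{F}(\mathcal{H}'')} $=$ \LW{\mathcal{F}(\{C_5\})}. Then Proposition~\ref{pro:switching classes}(\ref{pro:switching classes:item:intersection}) splits the forbidden family:
\[
\LW{\mathcal{F}(\mathcal{H}''\cup\mathcal{J})}
= \LW{\mathcal{F}(\mathcal{H}'')}\cap\LW{\mathcal{F}(\mathcal{J})}
= \LW{\mathcal{F}(\{C_5\})}\cap\LW{\mathcal{F}(\mathcal{J})}
= \LW{\mathcal{F}(\{C_5,\mathrm{bull},\mathrm{gem},\overline{C_6}\})}.
\]

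Finally I would collapse the bull and the gem into $C_5$. Since $\mathrm{bull},\mathrm{gem}\in\SW{C_5}$ (Figure~\ref{fig:s-c4-c5}b), we have $\SW{\mathrm{bull}}=\SW{\mathrm{gem}}=\SW{C_5}$, hence $\SW{\{C_5,\mathrm{bull},\mathrm{gem},\overline{C_6}\}}=\SW{\{C_5,\overline{C_6}\}}$. Proposition~\ref{pro:lower-h-free} rewrites both lower classes as $\mathcal{F}$ of these identical upper families, yielding \LW{\mathcal{F}(\{C_5,\mathrm{bull},\mathrm{gem},\overline{C_6}\})} $=$ \LW{\mathcal{F}(\{C_5,\overline{C_6}\})}. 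Chaining the equalities gives \LW{\mathcal{F}(\mathcal{H}''\cup\mathcal{J})} $=$ \LW{\mathcal{F}(\{C_5,\overline{C_6}\})}, and Lemma~\ref{lem:lower-subclass} applied to $\mathcal{F}(\mathcal{H}''\cup\mathcal{J})\subseteq{}$comparability${}\subseteq\mathcal{F}(\{C_5,\overline{C_6}\})$ finishes the proof.

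The main obstacle is the \emph{asymmetry} of the two walls. Unlike the weakly chordal and distance-hereditary cases, the upper wall here must be taken to be $\mathcal{F}(\{C_5,\overline{C_6}\})$ rather than $\mathcal{F}(\mathcal{H}'\cup\mathcal{J})=\mathcal{F}(\{C_5,\mathrm{bull},\mathrm{gem},\overline{C_6}\})$, because a comparability graph may well contain a bull or a gem---indeed both the bull and the gem are themselves comparability graphs---so comparability is not $\{\mathrm{bull},\mathrm{gem}\}$-free and cannot sit inside $\mathcal{F}(\mathcal{H}'\cup\mathcal{J})$. The delicate point is therefore to confirm that deleting the bull and the gem from the forbidden family leaves the lower switching class unchanged, which is exactly the $\SW{C_5}$-collapse step above; the remainder is routine bookkeeping with Proposition~\ref{pro:switching classes} and the sandwich lemma.
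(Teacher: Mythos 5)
Your proof is correct and follows essentially the same route as the paper: the same split into $\mathcal{H}''$ (odd holes) and $\mathcal{J}=\{\mathrm{bull},\mathrm{gem},\overline{C_6}\}$, the same chain of equalities via Corollary~\ref{cor:lower-sandwich} and Proposition~\ref{pro:switching classes}(\ref{pro:switching classes:item:intersection}), the same collapse of the bull and the gem using $\mathrm{bull},\mathrm{gem}\in\SW{C_5}$, and the same final appeal to Lemma~\ref{lem:lower-subclass}. The one point where you diverge works in your favor: the paper's closing sentence sandwiches the comparability class as a subclass of $\mathcal{F}(\mathcal{H}'\cup\mathcal{J})=\mathcal{F}(\{C_5,\mathrm{bull},\mathrm{gem},\overline{C_6}\})$, which is literally false---both the bull and the gem are themselves comparability graphs (for the bull, orient the triangle so that the two vertices carrying pendant edges are its source and sink; for the gem, make the dominating vertex a source over a transitive orientation of the $P_4$), exactly as you observe. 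The correct upper wall is $\mathcal{F}(\{C_5,\overline{C_6}\})$, supplied by Proposition~\ref{comparability switching classes: prop1}(i), and your version---taking $\mathcal{F}(\mathcal{H}''\cup\mathcal{J})\subseteq{}$comparability${}\subseteq\mathcal{F}(\{C_5,\overline{C_6}\})$ together with $\LW{\mathcal{F}(\mathcal{H}''\cup\mathcal{J})}=\LW{\mathcal{F}(\{C_5,\overline{C_6}\})}$ before invoking Lemma~\ref{lem:lower-subclass}---is the repaired form of the paper's argument; your justification of the bull/gem collapse through $\UP{\mathcal{H}}=\SW{\mathcal{H}}$ and Proposition~\ref{pro:lower-h-free} is also a slightly more explicit bookkeeping than the paper's one-line assertion, but mathematically identical.
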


The complement of a comparability graph is a \emph{co-comparability graph}.  
A graph $G$ is a \emph{permutation graph} if it is both a
comparability graph and a co-comparability graph \cite{dushnik-41-poset}. Then by Proposition~\ref{pro:switching classes}(\ref{pro:switching classes:item:complement}) and \ref{pro:switching classes}(\ref{pro:switching classes:item:intersection}), we obtain Lemma~\ref{permutation switching class: lm1}.

\begin{lemma}
  \label{permutation switching class: lm1}
  The lower co-comparability switching class is equivalent to \LW{\mathcal{F}(\{C_5, {C_6}\})}.
  The lower permutation switching class is equivalent to \LW{\mathcal{F}(\{C_5, {C_6}, \overline{C_6}\})}.
\end{lemma}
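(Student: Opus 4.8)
The plan is to derive both statements directly from Lemma~\ref{comparability switching classes: lm1}, which already identifies the lower comparability switching class as \LW{\mathcal{F}(\{C_5, \overline{C_6}\})}, using only the complement and intersection rules recorded in Proposition~\ref{pro:switching classes}.

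First I would handle co-comparability. Since the class of co-comparability graphs is by definition the complement of the class of comparability graphs, Proposition~\ref{pro:switching classes}(\ref{pro:switching classes:item:complement}) tells us that its lower switching class is the complement of \LW{\mathcal{F}(\{C_5, \overline{C_6}\})}, and applying the same rule a second time rewrites this as \LW{\overline{\mathcal{F}(\{C_5, \overline{C_6}\})}}. The one computation to carry out is to identify $\overline{\mathcal{F}(\{C_5, \overline{C_6}\})}$: a graph is $\{C_5, \overline{C_6}\}$-free exactly when its complement is $\{\overline{C_5}, C_6\}$-free, and because $\overline{C_5}=C_5$ this family of complements is precisely $\mathcal{F}(\{C_5, C_6\})$. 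Hence the lower co-comparability switching class equals \LW{\mathcal{F}(\{C_5, C_6\})}.

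For permutation graphs I would use that they are exactly the graphs that are simultaneously comparability and co-comparability graphs, so the permutation class is the intersection of the two. By Proposition~\ref{pro:switching classes}(\ref{pro:switching classes:item:intersection}), the lower switching class of an intersection is the intersection of the lower switching classes, which here is \LW{\mathcal{F}(\{C_5, \overline{C_6}\})} $\cap$ \LW{\mathcal{F}(\{C_5, C_6\})}. Applying the intersection rule in the reverse direction, together with the identity $\mathcal{F}(\mathcal{H}\cup\mathcal{H}')=\mathcal{F}(\mathcal{H})\cap\mathcal{F}(\mathcal{H}')$, collapses this to \LW{\mathcal{F}(\{C_5, C_6, \overline{C_6}\})}, as claimed.

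I do not anticipate any real obstacle: the argument is a mechanical application of Proposition~\ref{pro:switching classes} to the already-established comparability case. The only point that needs genuine care is the complement bookkeeping, namely that complementation fixes $C_5$ and interchanges $C_6$ with $\overline{C_6}$, so that the forbidden families transform into exactly the advertised ones.
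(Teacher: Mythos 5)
Your proof is correct and takes exactly the paper's route: the paper's proof of this lemma is the one-line invocation of Proposition~\ref{pro:switching classes}(\ref{pro:switching classes:item:complement}) and (\ref{pro:switching classes:item:intersection}) applied to Lemma~\ref{comparability switching classes: lm1}. Your write-up merely makes explicit the bookkeeping the paper leaves implicit, namely $\overline{\mathcal{F}(\mathcal{H})}=\mathcal{F}(\overline{\mathcal{H}})$ with $\overline{C_5}=C_5$, and the collapse $\LW{\mathcal{F}(\mathcal{H})}\cap\LW{\mathcal{F}(\mathcal{H}')}=\LW{\mathcal{F}(\mathcal{H}\cup\mathcal{H}')}$ -- all of which is accurate.
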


A graph is \emph{Meyniel} if every odd cycle that is not a triangle has at least two chords.
The forbidden induced subgraphs of Meyniel graphs are odd holes and odd buildings~\cite{meyniel1984115}. Then with a similar analysis, we obtain Lemma~\ref{lem:lower-meyniel}.
\begin{lemma}
  \label{lem:lower-meyniel}
  The lower Meyniel switching class is equivalent to the \LW{\mathcal{F}(\{C_5, \mathrm{house}\})}.
\end{lemma}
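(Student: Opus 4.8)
The plan is to follow exactly the same template used for the weakly chordal, distance-hereditary, and comparability cases, now specialized to Meyniel graphs. The forbidden induced subgraphs of Meyniel graphs are the odd holes and the odd buildings, so I would split this family into a part $\mathcal{H}''$ that is sandwiched between $\mathcal{H}'=\{C_5\}$ and $\mathcal{H}$, and a small remainder $\mathcal{J}$ that I handle directly. Concretely, I would set $\mathcal{H}''$ to be the set of all odd holes together with all odd buildings on at least seven vertices, and let $\mathcal{J}$ be the set of remaining odd buildings, namely those on at most six vertices. Since buildings are obtained from a hole by joining two vertices at distance two, the smallest odd building has five vertices, and I would check that the only odd building on at most six vertices that is not already covered is the \emph{house} (the odd buildings must have an odd number of vertices, so five vertices is the only relevant small case, and the five-vertex building is precisely the house, $\overline{P_5}$). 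Thus $\mathcal{J}=\{\mathrm{house}\}$.

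The first key step is to verify the containment $\mathcal{H}'\subseteq\mathcal{H}''\subseteq\mathcal{H}$, which lets me invoke Corollary~\ref{cor:lower-sandwich}. The left inclusion is immediate because $C_5\in\mathcal{H}''$. For the right inclusion I would appeal to Observation~\ref{obs:lower-h}: that observation already records that $\mathcal{H}$ contains $C_5$, all holes of length at least seven, and all buildings on at least six vertices. Every odd hole of length at least seven is therefore in $\mathcal{H}$, and every odd building on at least seven vertices is a building of at least six vertices and hence also in $\mathcal{H}$. So every graph in $\mathcal{H}''$ lies in $\mathcal{H}$, giving $\mathcal{H}''\subseteq\mathcal{H}$.

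The second step is the same chain of equalities as in the earlier lemmas. Applying Proposition~\ref{pro:switching classes}(\ref{pro:switching classes:item:intersection}) together with Corollary~\ref{cor:lower-sandwich}, I obtain
\[
\LW{\mathcal{F}(\mathcal{H}''\cup\mathcal{J})}=\LW{\mathcal{F}(\mathcal{H}'')}\cap\LW{\mathcal{F}(\mathcal{J})}=\LW{\mathcal{F}(\mathcal{H}')}\cap\LW{\mathcal{F}(\mathcal{J})}=\LW{\mathcal{F}(\mathcal{H}'\cup\mathcal{J})}=\LW{\mathcal{F}(\{C_5,\mathrm{house}\})}.
\]
Since $\mathcal{H}''\cup\mathcal{J}$ is exactly the set of forbidden induced subgraphs of Meyniel graphs (all odd holes together with all odd buildings), the left-hand side is the lower Meyniel switching class, and the lemma follows.

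The main obstacle, and the only place demanding care rather than routine substitution, is the bookkeeping in defining $\mathcal{J}$: I must be certain that I have correctly enumerated which odd buildings fail to land in $\mathcal{H}$, so that the equation $\mathcal{H}''\cup\mathcal{J}=\{\text{odd holes}\}\cup\{\text{odd buildings}\}$ is an exact equality and not merely an inclusion. The delicate point is that Observation~\ref{obs:lower-h} guarantees buildings of \emph{at least six} vertices are in $\mathcal{H}$, whereas buildings can have as few as five vertices; I would double-check that the five-vertex odd building is precisely the house and that there is no other odd building left uncovered, so that $\mathcal{J}=\{\mathrm{house}\}$ is complete. Everything else is a direct reuse of Lemma~\ref{lem:lower-subclass} and Proposition~\ref{pro:switching classes}.
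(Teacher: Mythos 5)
Your proposal is correct and is precisely the ``similar analysis'' the paper invokes for Lemma~\ref{lem:lower-meyniel}: you decompose the forbidden set of Meyniel graphs (odd holes and odd buildings) into an $\mathcal{H}''$ sandwiched between $\{C_5\}$ and $\mathcal{H}$ plus the remainder $\mathcal{J}=\{\mathrm{house}\}$, then apply Observation~\ref{obs:lower-h}, Corollary~\ref{cor:lower-sandwich}, and Proposition~\ref{pro:switching classes}(\ref{pro:switching classes:item:intersection}) exactly as in the weakly chordal and comparability cases. Your careful check that the house is the unique odd building not covered by Observation~\ref{obs:lower-h} (five vertices being the only odd order below seven) is the right bookkeeping and matches the paper's intent.
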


\subsection{Chordal graphs and subclasses}

\begin{figure}[h]
\tikzstyle{filled vertex}  = [{circle,draw=blue,fill=black!50,inner sep=1pt}]  \tikzstyle{empty vertex}  = [{circle, draw, fill = white, inner sep=1.pt}]
  \centering \small
    \begin{tikzpicture}[scale=.5]
      \def\n{6}
      \def\radius{1.5}      
      \coordinate (v0) at ({90 + 180 / \n}:\radius) {};
      \foreach \i in {1,..., \n} {
        \pgfmathsetmacro{\angle}{90 - (\i - .5) * (360 / \n)}
        \coordinate (v\i) at (\angle:\radius) {};
        \draw let \n1 = {int(\i - 1)} in (v\n1) -- (v\i);
      }
      \foreach \i in {1,..., \n}
      \node[empty vertex] at (v\i) {};
    \end{tikzpicture}
    \;
    \begin{tikzpicture}[scale=.5]
      \def\n{6}
      \def\radius{1.5}      
      \node[filled vertex] (v2) at ({90 - (2 - .5) * (360 / \n)}:\radius) {};
      \coordinate (v3) at ({90 - (3 - .5) * (360 / \n)}:\radius) {};
      \foreach \i in {4,..., 7} {
        \pgfmathsetmacro{\angle}{90 - (\i - .5) * (360 / \n)}
        \coordinate (v\i) at (\angle:\radius) {};
        \draw let \n1 = {int(\i - 1)} in (v\n1) -- (v\i);
      }
      \foreach \i in {1,3, 4, ..., \n} {
        \pgfmathsetmacro{\angle}{90 - (\i - .5) * (360 / \n)}
        \node[empty vertex] (v\i) at (\angle:\radius) {};
      }
      \foreach \i in {4, ..., \n} 
        \draw (v2) -- (v\i);
    \end{tikzpicture}
    \;
    \begin{tikzpicture}[scale=.5]
      \def\n{6}
      \def\radius{1.5}      
      \coordinate (v3) at ({90 - (3 - .5) * (360 / \n)}:\radius) {};
      \foreach \i in {4,..., 7} {
        \pgfmathsetmacro{\angle}{90 - (\i - .5) * (360 / \n)}
        \coordinate (v\i) at (\angle:\radius) {};
        \draw let \n1 = {int(\i - 1)} in (v\n1) -- (v\i);
      }
      \foreach \i in {2, 4, 5, \n} {
        \pgfmathsetmacro{\angle}{90 - (\i - .5) * (360 / \n)}
        \node[empty vertex] (v\i) at (\angle:\radius) {};
      }
      \foreach \i in {1, 3} {
        \pgfmathsetmacro{\angle}{90 - (\i - .5) * (360 / \n)}
        \node[filled vertex] (v\i) at (\angle:\radius) {};
        \draw (v5) -- (v\i);
      }
\end{tikzpicture}
    \;
    \begin{tikzpicture}[scale=.5]
      \def\n{6}
      \def\radius{1.5}
      \coordinate (v1) at ({90 - (1 - .5) * (360 / \n)}:\radius) {};
      \foreach \i in {2,..., 5} {
        \pgfmathsetmacro{\angle}{90 - (\i - .5) * (360 / \n)}
        \coordinate (v\i) at (\angle:\radius) {};
        \draw let \n1 = {int(\i - 1)} in (v\n1) -- (v\i);
      }
      \foreach \x in {-1, 1} 
        \foreach \i in {2, 3, 4} {
        \pgfmathsetmacro{\angle}{90 - (\i - .5) * (360 / \n)*\x}
        \draw ({90 - 30*\x}:\radius) -- (\angle:\radius);
      }
      \foreach \i in {2, ..., 5} {
        \pgfmathsetmacro{\angle}{90 - (\i - .5) * (360 / \n)}
        \node[empty vertex] (v\i) at (\angle:\radius) {};
      }
      \foreach \i in {1, 6} {
        \pgfmathsetmacro{\angle}{90 - (\i - .5) * (360 / \n)}
        \node[filled vertex] (v\i) at (\angle:\radius) {};
      }
      \draw (v1) -- (v6);
\end{tikzpicture}
    \;
    \begin{tikzpicture}[scale=.5]
      \def\n{6}
      \def\radius{1.5}      
      \foreach \i in {1, 3, 5} {
        \pgfmathsetmacro{\angle}{90 - (\i - .5) * (360 / \n)}
        \node[empty vertex] at (\angle:\radius) {};
      }
      \foreach \i in {2, 4, 6} {
        \pgfmathsetmacro{\angle}{90 - (\i - .5) * (360 / \n)}
        \node[filled vertex] at (\angle:\radius) {};
        \draw let \n1 = {int(\i - 1)} in (v\n1) -- (v\i);
      }
    \end{tikzpicture}
    \;
    \begin{tikzpicture}[scale=.5]
      \def\n{6}
      \def\radius{1.5}      
      \node[filled vertex] (v0) at ({90 - ( - .5) * (360 / \n)}:\radius) {};
      \foreach \i in {1,..., \n} {
        \pgfmathsetmacro{\angle}{90 - (\i - .5) * (360 / \n)}
        \node[empty vertex] (v\i) at (\angle:\radius) {};
        \draw let \n1 = {int(\i - 1)} in (v\n1) -- (v\i);
      }
      \foreach \i in {4, ..., \n} {
        \draw (v2) -- (v\i);
        \node[empty vertex] at (v\i) {};
      }
      \foreach \i in {1, ..., 3} {
        \draw (v5) -- (v\i);
        \node[filled vertex] at (v\i) {};
      }
    \end{tikzpicture}
  \caption{Switching equivalent graphs of $C_{6}$.   The set $A$ consists of all the empty nodes or all the solid nodes.}
  \label{fig:c6}
\end{figure} 

We start with showing that the lower $\{C_4 , C_5, C_6\}$-free switching class is a subclass of proper interval graphs and has very simple structures.
For the statement of the result and easy reference to graphs in $\mathcal{S}(C_{6})$, we need a handy notation.
Let $a_{1}$, $\ldots$, $a_{p}$ be $p$ nonnegative integers.
For $1 \le i\le p$, we substitute the $i$th vertex of a path on $p$ vertices with a clique of $a_{i}$ vertices.  We denote the resulting graph as $(a_1,a_2,\ldots,a_p)$.  For example, the paw and the diamond are $(1,1,2)$ and $(1,2,1)$, respectively, while the complement of the diamond can be represented as $(2,0,1,0,1)$.  We use ``$+$'' to denote an unspecified positive integer, and hence $(+)$ stands for all complete graphs.  Thus,
\begin{align}
\mathcal{S}(C_6) =& \{C_{6}, (1,1,2,1,1), (2,1,2,0,1), (1,2,2,1), (2,0,2,0,2), (2,2,2) \}.
            \label{eq:c6 switching class}            
\end{align}

Note that a sun and a net (see Figure~\ref{fig:small-graphs}) contains an induced bull (e.g., by removing a degree-one vertex from a net or removing a degree-four vertex from a sun), while any cycle on at least seven vertices contains an induced $P_{4} + K_{1}$.
An interval graph has at most $n$ maximal cliques, and they can be arranged in a linear manner such that each vertex appears in a consecutive sequence of them \cite{fulkerson-65-interval-graphs}.

\begin{lemma}\label{lem:c4 c5 c6 free}
  The lower $\{C_4 , C_5, C_6\}$-free switching class consists of graphs
  $(+)$, $(+,+,1)$, $(+,1,+)$, $(+,0,+)$, $(+,+,1,0,+)$, $(+,0,+,0,1)$, $(+,+,1,+)$, and $(+,+,1,+,+)$.
\end{lemma}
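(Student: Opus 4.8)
The plan is to pass to the forbidden-induced-subgraph description of the class. By Proposition~\ref{pro:lower-h-free} (together with $\UP{\mathcal{G}}=\SW{\mathcal{G}}$), the lower $\{C_4,C_5,C_6\}$-free switching class equals $\mathcal{F}(\SW{C_4}\cup\SW{C_5}\cup\SW{C_6})$. Hence a graph lies in the class precisely when it contains none of $4K_1$, the claw, $C_4$ (the members of $\SW{C_4}$), $C_5$, the bull, the gem, $P_4+K_1$ (the members of $\SW{C_5}$), and $C_6$ together with the five clique-substitution graphs of \eqref{eq:c6 switching class} (the members of $\SW{C_6}$). I would then prove the two inclusions separately: that every graph in the displayed list is $(\SW{C_4}\cup\SW{C_5}\cup\SW{C_6})$-free, and conversely that every such free graph occurs in the list.

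For the forward inclusion I would verify directly that each of the eight templates avoids all of these subgraphs. Since every forbidden graph has at most six vertices and each template has a uniform clique-substitution structure, an embedding depends only on the group pattern (which groups are present and whether each has size $1$ or size $\ge 2$), so the check is finite: every forbidden graph is a hole (impossible, as each template is chordal), or a claw/bull/gem/$(P_4+K_1)$, or one of the five graphs of \eqref{eq:c6 switching class}, and in each case the position of the mandatory size-$1$ group together with the zero separators between components blocks the embedding.

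For the reverse inclusion, let $G$ be $(\SW{C_4}\cup\SW{C_5}\cup\SW{C_6})$-free. I would first show that $G$ is a proper interval graph: it is chordal, since $C_4,C_5,C_6$ are excluded and, as noted before the statement, every hole of length at least seven contains an induced $P_4+K_1\in\SW{C_5}$; it is claw-free; and it is net- and sun-free because the net and the sun each contain an induced bull (again in $\SW{C_5}$). Being $\{$claw, net, sun, hole$\}$-free, $G$ is proper interval. Next I would collapse $G$ into the form $(a_1,\dots,a_p)$ by grouping true twins: the relevant twin-free proper interval graphs that are not paths are exactly the bull and the gem, so bull- and gem-freeness force the true-twin quotient of $G$ to be a disjoint union of paths, i.e.\ $G$ is a disjoint union of clique substitutions $(a_1,\dots,a_p)$ of paths. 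Finally, $P_4+K_1$-freeness yields $P_6$-freeness, which bounds each connected piece to at most five classes ($p\le 5$) and, moreover, forces every component of a disconnected $G$ to be $P_4$-free and hence to have at most three classes; and $4K_1$-freeness gives $\alpha(G)\le 3$.

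It remains to determine the admissible patterns by a bounded case analysis, which I expect to be the main obstacle, together with the twin-collapse step. For a connected piece on $p$ classes: $p=1$ gives $(+)$; for $p=3$, excluding $(2,2,2)$ forces a size-$1$ class, giving $(+,+,1)$ or $(+,1,+)$; for $p=4$, excluding $(2,2,2)$ and $(1,2,2,1)$ forces an interior size-$1$ class, giving $(+,+,1,+)$; and for $p=5$, excluding $(1,1,2,1,1)$ forces the central class to have size $1$, giving $(+,+,1,+,+)$. A $p=5$ piece has $\alpha=3$ and so must be all of $G$. If $G$ is disconnected, every component has $p\le 3$; three components force three cliques (as $\alpha\le 3$), and excluding $(2,0,2,0,2)$ makes one a single vertex, giving $(+,0,+,0,1)$; two components are either two cliques, giving $(+,0,+)$, or a clique together with an $(a,b,1)$-piece, giving $(+,+,1,0,+)$, where the coexistence of a clique with a $(2,1,2)$-piece is barred by $(2,1,2,0,1)$. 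Using the reversal symmetry $(a_1,\dots,a_p)\cong(a_p,\dots,a_1)$ of the notation throughout, the surviving patterns are exactly the eight listed, which together with the forward inclusion proves the lemma.
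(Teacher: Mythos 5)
Your route has the same skeleton as the paper's proof: both reduce the reverse inclusion to showing $G$ is a proper interval graph (claw-, bull-, net-, sun- and hole-freeness via $\mathcal{S}(C_4)$ and $\mathcal{S}(C_5)$, then Wegner), both then establish that each component is a clique substitution $(a_1,\dots,a_p)$ of a path, and your endgame pattern elimination --- using $(2,2,2)$, $(1,2,2,1)$, $(1,1,2,1,1)$, $(2,1,2,0,1)$, $(2,0,2,0,2)$ from \eqref{eq:c6 switching class} together with $4K_1$- and $(P_4+K_1)$-freeness --- matches the paper's case analysis essentially clause for clause. Your forward inclusion does differ: the paper argues that the eight templates are hole-free and that the family is closed under switching, whereas you invoke Proposition~\ref{pro:lower-h-free} and check the templates against the finite list $\mathcal{S}(C_4)\cup\mathcal{S}(C_5)\cup\mathcal{S}(C_6)$. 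That variant is sound and arguably more self-contained (the paper's closure assertion is itself left to the reader), and your observation that induced subgraphs of clique substitutions of paths are disjoint unions of the same kind makes the finite check clean.

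The genuine gap is the twin-collapse step. You assert that ``the relevant twin-free proper interval graphs that are not paths are exactly the bull and the gem.'' As literally stated this is false: take the bull and attach a pendant vertex to one of its degree-one vertices; the result is a twin-free unit interval graph on six vertices that is neither a path nor the bull nor the gem. What your argument actually needs is the corrected claim that every connected twin-free proper interval graph other than a path \emph{contains} an induced bull or gem --- equivalently, that every connected $\{\mathrm{bull},\mathrm{gem}\}$-free proper interval graph is a clique substitution of a path. That claim is true, but it is not an off-the-shelf classification one can cite; it is precisely the nontrivial structural core of the lemma, and the paper spends the bulk of its proof establishing it: fixing a clique path $\langle Q_1,\dots,Q_\ell\rangle$ of a component, it shows $Q_i\setminus(Q_{i-1}\cup Q_{i+1})=\emptyset$ and $Q_{i-1}\cap Q_{i+1}=\emptyset$ for every interior $i$ (a claw rules out both sets being nonempty simultaneously; then vertices $x_3\in Q_{i-1}\cap Q_i\setminus Q_{i+1}$ and $x_4\in Q_i\cap Q_{i+1}\setminus Q_{i-1}$ are shown to exist, and a surviving vertex of either set yields a bull or a gem). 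You flag the twin-collapse as remaining work, but flagging it does not fill it: without proving this step (your quotient framework is fine for the reduction --- the true-twin quotient is a twin-free induced subgraph and hence proper interval --- but the classification of such quotients still has to be done), the proposal assumes its hardest step. Everything downstream of it --- the $P_6$/$P_4$ bounds giving $p\le 5$ and $p\le 3$ for components of a disconnected $G$, $\alpha(G)\le 3$, and the pattern elimination --- is correct as written.
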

\begin{proof}
  Let $G$ be a graph in the lower $\{C_4 , C_5, C_6\}$-free switching class.
Since the claw is in $\mathcal{S}(C_4)$, while the bull and $P_{4}+K_{1}$ are both in $\mathcal{S}(C_5)$, the graph $G$ must be \{claw, bull, $P_{4}+K_{1}\}$-free.
  Since a net or sun contains an induced bull, and a hole longer than six contains an induced $P_{4}+K_{1}$, the graph $G$ does not contain an induced net, sun, or hole. 
  Thus, $G$ is a proper interval graph \cite{wegner1967eigenschaften}.
Let $\langle Q_{1}, Q_{2}, \ldots, Q_{\ell}\rangle$ be a clique path 
  of a component $G'$ of $G$.
  Note that for $i =1, \ldots, \ell - 1$, none of the following can be empty: $Q_{i}\cap Q_{i+1}$ (because $Q_{i}$ and $Q_{i+1}$ belong to the same component), $Q_{i}\setminus Q_{i+1}$, and $Q_{i+1}\setminus Q_{i}$ (because $Q_{i}$ and $Q_{i+1}$ are maximal cliques).

  We argue that for all $i =2, \ldots, \ell - 1$, both $Q_{i}\setminus (Q_{i-1}\cup Q_{i+1})$ and $Q_{i-1}\cap Q_{i+1}$ are empty.
We take an arbitrary vertex $x_{1}\in Q_{i-1}\setminus Q_{i}$ and an arbitrary vertex $x_{2}\in Q_{i+1}\setminus Q_{i}$.
  Note that $Q_{i}\setminus (Q_{i-1}\cup Q_{i+1})$ and $Q_{i-1}\cap Q_{i+1}$ are disjoint.
  If neither is empty, then there exists a claw induced by $x_{1}$, $x_{2}$, and any pair of vertices $x\in Q_{i}\setminus (Q_{i-1}\cup Q_{i+1})$ and $y\in Q_{i-1}\cap Q_{i+1}$.
  This is impossible since $G$ is claw-free.
  In the rest, at most one of $Q_{i}\setminus (Q_{i-1}\cup Q_{i+1})$ and $Q_{i-1}\cap Q_{i+1}$ can be nonempty.  As a result, $Q_{i-1}\cap Q_{i}\setminus Q_{i+1}\ne\emptyset$ because
  \begin{align*}
    Q_{i-1}\cap Q_{i}\setminus Q_{i+1} =& (Q_{i-1}\cap Q_{i})\setminus (Q_{i-1}\cap Q_{i+1})
    \\
    =& (Q_{i}\setminus Q_{i+1})\setminus (Q_{i}\setminus (Q_{i-1}\cup Q_{i+1})).
  \end{align*}
  If $Q_{i}\setminus (Q_{i-1}\cup Q_{i+1}) = \emptyset$, then $Q_{i-1}\cap Q_{i}\setminus Q_{i+1}  = Q_{i}\setminus Q_{i+1}\ne \emptyset$; if $Q_{i-1}\cap Q_{i+1} = \emptyset$, then $Q_{i-1}\cap Q_{i}\setminus Q_{i+1}  = Q_{i-1}\cap Q_{i}\ne \emptyset$.
  By symmetry, $Q_{i}\cap Q_{i+1}\setminus Q_{i-1}\ne\emptyset$.
We take an arbitrary vertex $x_{3}\in Q_{i-1}\cap Q_{i}\setminus Q_{i+1}$ and an arbitrary vertex $x_{4}\in Q_{i}\cap Q_{i+1}\setminus Q_{i-1}$.
  If there exists a vertex $x$ in $Q_{i}\setminus (Q_{i-1}\cup Q_{i+1})\ne\emptyset$, then $\{x, x_{1}, x_{2}, x_{3}, x_{4}\}$ induces a bull.
  If there exists a vertex $x$ in $Q_{i-1}\cap Q_{i+1}$, then $\{x, x_{1}, x_{2}, x_{3}, x_{4}\}$ induces a gem.
  Since both bull and gem are in $\mathcal{S}(C_{5})$, we have a contradiction.

  Thus, for each $i =2, \ldots, \ell - 1$, the set $Q_{i}$ can be partitioned into $Q_{i-1}\cap Q_{i}$ and $Q_{i}\cap Q_{i+1}$.
This component $G'$ is
  \[
    \left( |Q_{1}\setminus Q_{2}|, |Q_{1}\cap Q_{2}|, |Q_{2}\cap Q_{3}|, \ldots, |Q_{\ell-1}\cap Q_{\ell}|, |Q_{\ell}\setminus Q_{\ell-1}| \right).
  \]
  We note that $\ell \le 4$, and $\ell \le 2$ when $G$ is disconnected.
  If $\ell > 4$, then $G$ contains an induced $P_{4} + K_{1}$. We end with the same contradiction if $\ell > 2$ and there is another component.
  Since $\overline{K_{4}}\in \mathcal{S}(C_{4})$, there cannot be four independent vertices in $G$.
  \begin{itemize}
  \item If $\ell = 4$, then $G$ is $(+,+,1,+,+)$.  If $|Q_{2}\cap Q_{3}| > 1$, then $G$ contains an induced $(1,1,2,1,1)$, which is in $\mathcal{S}(C_{6})$.
  \item If $\ell = 3$, then $G$ is $(+,+,1,+)$.  If both $|Q_{1}\cap Q_{2}| > 1$ and $|Q_{2}\cap Q_{3}| > 1$, then $G$ contains an induced $(1,2,2,1)$, which is in $\mathcal{S}(C_{6})$.
  \item If $\ell = 2$, then $G$ is $(+,+,1,0,+)$, $(+,1,+)$, or $(+,+,1)$.
    If all of $|Q_{1}\setminus Q_{2}|$, $|Q_{1}\cap Q_{2}|$, and $|Q_{2}\setminus Q_{1}|$ are greater than one, then $G$ contains an induced $(2,2,2)$, which is in $\mathcal{S}(C_{6})$.
    If $G$ is connected, it is either $(+,1,+)$ or $(+,+,1)$.
    If $G$ is disconnected, then there is precisely one other component different from $G'$, and it has to be a clique (otherwise there is an induced $\overline{K_{4}}$).
    Since $(2,1,2,0,1)\in \mathcal{S}(C_{6})$, the only possibility is $(+,+,1,0,+)$.
  \item We are in one of the previous cases if $G$ has a non-clique component.  Otherwise, $G$ comprises at most three clique components.
    Note that $(2,0,2,0,2)$ is in $\mathcal{S}(C_{6})$.
Then $G$ is $(+)$, $(+,0,+)$, or $(+,0,+,0,1)$.
  \end{itemize}

  No graph of the form $(+)$, $(+,+,1)$, $(+,1,+)$, $(+,0,+)$, $(+,+,1,0,+)$, $(+,0,+,0,1)$, $(+,+,1,+)$, and $(+,+,1,+,+)$ contains a hole.
  On the other hand, the switching operation on such a graph always leads to a graph of one of these forms.
This completes the proof.
\end{proof}

Let $\mathcal{C}_{0}$ denote the lower $\{C_4 , C_5, C_6\}$-free switching class.
Since chordal graphs are  $\{C_4 , C_5, C_6\}$-free,
lower chordal switching class is a subclass of $\mathcal{C}_0$.
By Lemma~\ref{lem:c4 c5 c6 free}, $\mathcal{C}_0$ is a subclass of lower chordal switching class. Therefore, lower chordal switching class is equivalent to $\mathcal{C}_0$.
This same observation applies to subclasses of chordal graphs that contain all the graphs in $\mathcal{C}_{0}$ and by Lemma~\ref{lem:lower-subclass} to superclasses of chordal graphs which are $\{C_4, C_5, C_6\}$-free. 

\begin{corollary}\label{chordal switching classes: lm1}
  The following switching classes are all equivalent to $\mathcal{C}_{0}$: lower chordal switching class, lower strongly chordal switching class, lower interval switching class, lower proper interval switching class, and lower Ptolemaic switching class.
\end{corollary}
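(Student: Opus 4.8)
The plan is to prove the statement for chordal graphs first, and then transfer it to the four subclasses in one stroke via Lemma~\ref{lem:lower-subclass}. The first task is to establish that $\LW{\text{chordal}} = \mathcal{C}_0$, where $\mathcal{C}_0 = \LW{\mathcal{F}(\{C_4, C_5, C_6\})}$. The inclusion $\LW{\text{chordal}} \subseteq \mathcal{C}_0$ is immediate: chordal graphs are $\{C_4, C_5, C_6\}$-free, i.e.\ the class of chordal graphs is contained in $\mathcal{F}(\{C_4, C_5, C_6\})$, so Proposition~\ref{pro:switching classes}(\ref{pro:switching classes:item:subset}) gives $\LW{\text{chordal}} \subseteq \LW{\mathcal{F}(\{C_4, C_5, C_6\})} = \mathcal{C}_0$. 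For the reverse inclusion I would invoke Lemma~\ref{lem:c4 c5 c6 free}: every graph of $\mathcal{C}_0$ has one of the listed forms, none of which contains a hole, so each is chordal. Since $\mathcal{C}_0$ is a lower switching class it is closed under switching, hence for any $G \in \mathcal{C}_0$ we have $\SW{G} \subseteq \mathcal{C}_0 \subseteq \text{chordal}$, which is exactly the defining condition $G \in \LW{\text{chordal}}$. This yields $\mathcal{C}_0 \subseteq \LW{\text{chordal}}$, and therefore $\LW{\text{chordal}} = \mathcal{C}_0$.

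Next I would record the containments that let me sandwich each target class between $\mathcal{C}_0$ and the chordal graphs. By the analysis in the proof of Lemma~\ref{lem:c4 c5 c6 free}, every graph of $\mathcal{C}_0$ is a proper interval graph and is moreover gem-free. Consequently $\mathcal{C}_0 \subseteq \text{proper interval} \subseteq \text{interval} \subseteq \text{strongly chordal} \subseteq \text{chordal}$, and, since a chordal gem-free graph is Ptolemaic, also $\mathcal{C}_0 \subseteq \text{Ptolemaic} \subseteq \text{chordal}$. Thus each of strongly chordal, interval, proper interval, and Ptolemaic graphs is a class $\mathcal{G}''$ with $\mathcal{C}_0 \subseteq \mathcal{G}'' \subseteq \text{chordal}$.

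Finally I would apply Lemma~\ref{lem:lower-subclass} with $\mathcal{G} = \mathcal{C}_0$ and $\mathcal{G}' = \text{chordal}$. Because $\mathcal{C}_0$ is closed under switching we have $\LW{\mathcal{C}_0} = \mathcal{C}_0$, and the first paragraph gives $\LW{\text{chordal}} = \mathcal{C}_0$; hence $\LW{\mathcal{G}'} = \LW{\mathcal{G}}$, which is precisely the hypothesis of the lemma. The lemma then forces $\LW{\mathcal{G}''} = \mathcal{C}_0$ for every $\mathcal{G}''$ sandwiched between $\mathcal{C}_0$ and the chordal graphs, covering all four remaining classes and finishing the proof. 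The only genuine work lies in the second paragraph, namely verifying the containments $\mathcal{C}_0 \subseteq \mathcal{G}''$---that the explicit families in Lemma~\ref{lem:c4 c5 c6 free} are proper interval and chordal-plus-gem-free; this is routine because those proofs already exhibit the relevant clique paths and establish gem-freeness, so everything else is a formal consequence of Proposition~\ref{pro:switching classes} and Lemma~\ref{lem:lower-subclass}.
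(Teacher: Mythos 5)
Your proposal is correct and takes essentially the same approach as the paper: both arguments rest on Lemma~\ref{lem:c4 c5 c6 free} (every graph in $\mathcal{C}_0$ is proper interval, hence belongs to each class in the chain up to chordal), on the hole-freeness of all five classes for the inclusion $\LW{\mathcal{G}''}\subseteq \mathcal{C}_0$, and on gem-freeness (gem $\in \mathcal{S}(C_5)$) to place $\mathcal{C}_0$ inside the Ptolemaic graphs. The only cosmetic difference is that you route the transfer to the four subclasses through Lemma~\ref{lem:lower-subclass} with $\mathcal{G}=\mathcal{C}_0$ and $\mathcal{G}'$ the chordal graphs (using $\LW{\mathcal{C}_0}=\mathcal{C}_0$), whereas the paper's proof checks the two inclusions for each class directly---a sandwich observation the paper itself already anticipates in the paragraph preceding the corollary.
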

\begin{proof}
  Since chordal graphs, strongly chordal graphs, interval graphs, and proper interval graphs are all hole-free, all the lower switching classes are subclasses of $\mathcal{C}_{0}$ by Proposition~\ref{pro:switching classes}.
  On the other hand, by Lemma~\ref{lem:c4 c5 c6 free}, all the graphs in $\mathcal{C}_{0}$ are proper interval graphs.
  Thus, $\mathcal{C}_{0}$ is a subclass of proper interval switching graphs, hence also a subclass of the first three switching classes.
  Ptolemaic graphs are gem-free chordal graphs.  Since gem is in $\mathcal{S}(C_{5}$), the lower Ptolemaic switching class is also $\mathcal{C}_{0}$.
  Thus, they are all equal.
\end{proof}

A graph $G$ is a \emph{block graph} if every maximal biconnected subgraph is a clique.
Block graphs are precisely diamond-free chordal graphs \cite{DBLP:journals/jct/BandeltM86}.

\begin{lemma}
  \label{block switching class: lm1}
  The lower block switching class is equivalent to \LW{\mathcal{F}(\{C_4,\mathrm{diamond}\})}.
\end{lemma}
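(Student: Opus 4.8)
The plan is to prove both inclusions of $\LW{\text{block}} = \LW{\mathcal{F}(\{C_4,\mathrm{diamond}\})}$, with all the real content concentrated in showing that $\LW{\mathcal{F}(\{C_4,\mathrm{diamond}\})}$ already consists of block graphs. First I would invoke Proposition~\ref{pro:lower-h-free} to rewrite $\LW{\mathcal{F}(\{C_4,\mathrm{diamond}\})} = \mathcal{F}(\SW{\{C_4,\mathrm{diamond}\}})$, so that membership in the lower switching class is nothing but avoiding every graph switching equivalent to $C_4$ or to the diamond. I would then compute the two switching classes: $\SW{C_4} = \{C_4, \mathrm{claw}, 4K_1\}$ (recorded in Figure~\ref{fig:s-c4-c5}), and, since $\mathrm{diamond} = \overline{K_2+2K_1}$ lies in the order-four class $\{P_4, K_2+2K_1, \mathrm{diamond}, P_3+K_1, \mathrm{paw}\}$, we have $P_4 \in \SW{\mathrm{diamond}}$.

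The key step is the inclusion $\LW{\mathcal{F}(\{C_4,\mathrm{diamond}\})} \subseteq \text{block}$. Let $G$ be a graph in this lower switching class. Because $C_4$, $P_4$, and the diamond all belong to $\SW{\{C_4,\mathrm{diamond}\}}$, the graph $G$ is in particular $\{C_4, P_4, \mathrm{diamond}\}$-free. Now any hole $C_\ell$ with $\ell \ge 5$ contains an induced $P_4$ on four consecutive vertices (the only possible chord among them would join vertices at cyclic distance three, which is absent for $\ell \ge 5$), so $G$ has no hole of length at least five; combined with $C_4$-freeness this makes $G$ hole-free, hence chordal. Being also diamond-free, $G$ is a block graph by the characterization of block graphs as diamond-free chordal graphs~\cite{DBLP:journals/jct/BandeltM86}. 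I expect this hole-from-$P_4$ observation, together with correctly identifying $P_4 \in \SW{\mathrm{diamond}}$, to be the crux; it is clean rather than hard, so the main thing to watch is verifying the two switching-class computations and the four-consecutive-vertices argument.

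It then remains to lift this set-theoretic containment to the switching classes. For one direction, block graphs are diamond-free and chordal, hence $\{C_4,\mathrm{diamond}\}$-free, giving $\text{block} \subseteq \mathcal{F}(\{C_4,\mathrm{diamond}\})$, and Proposition~\ref{pro:switching classes}(\ref{pro:switching classes:item:subset}) yields $\LW{\text{block}} \subseteq \LW{\mathcal{F}(\{C_4,\mathrm{diamond}\})}$. For the reverse, $\LW{\mathcal{F}(\{C_4,\mathrm{diamond}\})}$ is closed under switching and, by the key step, contained in the class of block graphs; since $\LW{\text{block}}$ is the maximal switching-closed subclass of block graphs, it contains $\LW{\mathcal{F}(\{C_4,\mathrm{diamond}\})}$. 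Alternatively, I would package this last part through Lemma~\ref{lem:lower-subclass}: with $\mathcal{H}' = \{C_4,\mathrm{diamond}\}$ and $\mathcal{H}$ the forbidden set $\{\mathrm{diamond}\}\cup\{C_\ell : \ell \ge 4\}$ of block graphs, the chain $\LW{\mathcal{F}(\mathcal{H}')} \subseteq \mathcal{F}(\mathcal{H}) = \text{block} \subseteq \mathcal{F}(\mathcal{H}')$ forces equality of the two lower switching classes, matching the sandwich arguments used for the weakly chordal and distance-hereditary cases earlier in this section.
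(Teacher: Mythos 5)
Your proof is correct, and it reaches the conclusion by a somewhat more self-contained route than the paper's. Both arguments pivot on the same two facts---that $P_4\in\SW{\mathrm{diamond}}$ and that block graphs are exactly the diamond-free chordal graphs---but where you use the full strength of $P_4$-freeness to eliminate every hole directly (four consecutive vertices of $C_\ell$, $\ell\ge 5$, induce a $P_4$, so a $\{C_4,P_4\}$-free graph is chordal), the paper observes only that $C_5$ and $C_6$ contain induced $P_4$'s, places \LW{\mathcal{F}(\{C_4,\mathrm{diamond}\})} inside $\mathcal{C}_0=$ \LW{\mathcal{F}(\{C_4,C_5,C_6\})}, and then invokes the already-proved structural machinery of Lemma~\ref{lem:c4 c5 c6 free} and Corollary~\ref{chordal switching classes: lm1} to conclude that these graphs are (proper interval, hence) chordal. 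Your version is more elementary and independent of the $\mathcal{C}_0$ classification; the paper's version is shorter given that classification and, as a byproduct, situates the class within the explicit list of Lemma~\ref{lem:c4 c5 c6 free} (spelled out in Lemma~\ref{lem:c4 diamond free}). One small caveat: your ``alternative packaging'' via Lemma~\ref{lem:lower-subclass} does not apply as stated, because that lemma's \emph{hypothesis} is an equality of lower switching classes at the two endpoints of the sandwich---precisely what is being proved here. But this is only a packaging slip, not a gap: the maximality argument you give first (equivalently, applying monotonicity and idempotence of $\mathcal{L}$ to the chain \LW{\mathcal{F}(\{C_4,\mathrm{diamond}\})} $\subseteq$ block $\subseteq \mathcal{F}(\{C_4,\mathrm{diamond}\})$) is valid and suffices.
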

\begin{proof}
  Since a block graph is $\{C_{4}, \text{diamond}\}$-free, the first is a subclass of the second.
  Note that a diamond can be switched to a $P_{4}$.
  Since both $C_5$ and $C_6$ contains an induced $P_4$, \LW{\mathcal{F}(\{C_{4}, \text{diamond}\})} is a subclass of $\mathcal{C}_{0}$.
  Thus, every graph in the lower $\{C_{4}, \text{diamond}\}$-free switching class is a block graph by Corollary~\ref{chordal switching classes: lm1}. 
\end{proof}
   
The following can be obtained by checking the list in Lemma~\ref{lem:c4 c5 c6 free}.  We give a simple argument.
\begin{lemma}\label{lem:c4 diamond free}
  The lower $\{C_4 , \text{diamond}\}$-free switching class consists of graphs
  $(+)$, $(+,0,+)$, $(1,1,1)$, and $(1,0,1,0,1)$.
\end{lemma}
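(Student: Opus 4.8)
The plan is to bypass the case analysis over the eight forms in Lemma~\ref{lem:c4 c5 c6 free} and instead read off a finite forbidden-induced-subgraph characterization directly. By Proposition~\ref{pro:lower-h-free}, the lower $\{C_4,\mathrm{diamond}\}$-free switching class equals $\mathcal{F}(\SW{\{C_4,\mathrm{diamond}\}})$, so I first compute $\SW{\{C_4,\mathrm{diamond}\}} = \SW{C_4}\cup\SW{\mathrm{diamond}}$. From Figure~\ref{fig:s-c4-c5} and the partition of the order-$4$ graphs given in the preliminaries, $\SW{C_4}=\{C_4,\mathrm{claw},4K_1\}$ and $\SW{\mathrm{diamond}}=\{P_4, K_2+2K_1, \mathrm{diamond}, P_3+K_1, \mathrm{paw}\}$. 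These eight graphs are precisely all graphs on four vertices other than $2K_2$, $K_3+K_1$, and $K_4$. Hence a graph $G$ lies in the class if and only if every four of its vertices induce one of $2K_2$, $K_3+K_1$, or $K_4$.

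Next I would dispose of the forward (easy) direction by checking that each of $(+)$, $(+,0,+)$, $(1,1,1)$, and $(1,0,1,0,1)$ satisfies this condition: in $K_n$ every four vertices induce $K_4$; in $K_a+K_b$ four vertices split across the two cliques as $4{+}0$, $3{+}1$, or $2{+}2$, yielding $K_4$, $K_3+K_1$, or $2K_2$; and $P_3$ and $3K_1$ have only three vertices. For the converse I would split on $|V(G)|$. If $|V(G)|\le 3$, a direct inspection shows every graph on at most three vertices is one of $K_n$ ($=(+)$), $K_2+K_1$ or $K_1+K_1$ ($=(+,0,+)$), $P_3$ ($=(1,1,1)$), or $3K_1$ ($=(1,0,1,0,1)$); this is exactly where the two small forms $(1,1,1)$ and $(1,0,1,0,1)$ enter, since they escape any four-vertex constraint.

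The heart of the argument is the case $|V(G)|\ge 4$. The three allowed order-$4$ graphs are all $P_3$-free, so if $G$ contained an induced $P_3$ I could extend it by any fourth vertex to obtain a $P_3$-containing, hence forbidden, induced subgraph; thus $G$ is $P_3$-free, i.e., a disjoint union of cliques. Now forbidding $4K_1$ forces at most three cliques, and since $G$ has at least four vertices some clique has size at least two; forbidding $K_2+2K_1$ then rules out a clique of size $\ge 2$ coexisting with two further cliques, leaving at most two cliques. Therefore $G$ is $K_a$ ($=(+)$) or $K_a+K_b$ ($=(+,0,+)$), completing the characterization. I expect the only real care to be needed in the bookkeeping of the order-$4$ switching classes (so that the forbidden list is exactly the complement of $\{2K_2,K_3+K_1,K_4\}$) and in remembering that the two sporadic forms arise solely from graphs on at most three vertices; the cluster-graph reduction itself is routine.
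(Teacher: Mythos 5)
Your proof is correct and follows essentially the same route as the paper's: both arguments deduce from the order-four switching classes that a graph in the class with $|V(G)|\ge 4$ must be $P_3$-free (a cluster), and then exclude four or more components via $4K_1\in\SW{C_4}$ and exactly three components via $K_2+2K_1\in\SW{\mathrm{diamond}}$. The only differences are cosmetic: you make the finite forbidden-list characterization via Proposition~\ref{pro:lower-h-free} explicit and you verify the easy containment of the four forms, which the paper's proof leaves implicit (deferring to the check against Lemma~\ref{lem:c4 c5 c6 free}).
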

\begin{proof}
  Let $G$ be a graph in the lower $\{C_4 , \text{diamond}\}$-free switching class.
  It is obvious when the order of $G$ is at most three.
  In the sequel we assume that $V(G)\ge 4$, and we show that $G$ must be a graph $(+)$ or $(+,0,+)$.
  As we can see in Table~\ref{tbl:switching-equivalents}, every four-vertex graph containing an induced $P_3$ can be switched to a $C_{4}$ or diamond.  Thus, $G$ is $P_3$-free, i.e., a cluster.
  If $G$ has four or more connected components, then there is an induced $4K_1$, a switching equivalent of $C_4$.
  Suppose $G$ has exactly three connected components, then there is an induced copy of the complement of diamond, a switching equivalent of a diamond.
 This concludes the proof.
\end{proof}

A graph $G$ is a \emph{line graph} if there is a one-to-one mapping $\phi$ from $V(G)$ to the edge set of another graph $H$ such that $u v\in E(G)$ if and only if $\phi(u)$ and $\phi(v)$ share an end.
The class of line graphs has nine forbidden induced subgraphs~\cite{beineke1970characterizations}, two of which are switching equivalent to $C_{6}$, and one $C_{4}$.  Although $C_{5}$ is not forbidden, we show that a graph in the lower line switching class contains an induced $C_{5}$ if and only if it is a $C_{5}$.  Thus, this switching class consists of $\mathcal{S}(C_{5})$ and a subclass of $\mathcal{C}_{0}$.

\begin{lemma}
  \label{line switching class: lm1}
  The lower line switching class comprises of
(+), (1,1,1), (2,1,1), (1,2,1), (2,1,2), (+,0,+), (1,1,1,0,1), (2,1,1,0,1), (1,0,1,0,1), (2,0,1,0,1), (2,0,2,0,1), (1,1,1,1), (1,2,1,1), (1,1,1,1,1), and $\mathcal{S}(C_5)$. 
\end{lemma}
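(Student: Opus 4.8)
The plan is to combine the explicit description of $\mathcal{C}_{0}$ from Lemma~\ref{lem:c4 c5 c6 free} with a dichotomy driven by the five-cycle. First I would apply Proposition~\ref{pro:lower-h-free} to rewrite the lower line switching class as $\mathcal{F}(\SW{\mathcal{H}_L})$, where $\mathcal{H}_L$ is the set of nine Beineke forbidden induced subgraphs. Since one of these is switching equivalent to $C_4$ and two are switching equivalent to $C_6$, we get $\SW{C_4}\cup\SW{C_6}\subseteq\SW{\mathcal{H}_L}$, so every graph $G$ in the lower line switching class is $\SW{C_4}$-free and $\SW{C_6}$-free; equivalently $G\in\LW{\mathcal{F}(\{C_4,C_6\})}$. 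I would also record at the outset that the lower line switching class is hereditary and closed under switching, and that $\SW{C_5}=\{C_5,\mathrm{bull},\mathrm{gem},P_4+K_1\}$ is contained in it: all four of these five-vertex graphs are line graphs (a Krausz clique partition of the gem into $\{a,b,e\},\{c,d,e\},\{b,c\}$ witnesses this for the only nonobvious case), and since $\SW{\SW{C_5}}=\SW{C_5}$, all their switching equivalents remain line graphs.

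The crux is the claim that a graph in the lower line switching class contains an induced $C_5$ if and only if it is $C_5$. Because the class is hereditary, it suffices to show that no six-vertex graph containing an induced $C_5$ lies in the class: if some $G$ of order at least six contained an induced $C_5$, then $C_5$ together with any further vertex would give a forbidden six-vertex induced subgraph. To enumerate the six-vertex candidates I would attach a vertex $u$ to the cycle $C_5$ and classify by $k=|N(u)\cap V(C_5)|$ and the positions of the neighbors. Switching $\{u\}$ fixes the $C_5$ and complements $u$'s adjacency to it, sending the $k$-neighbor configuration to the $(5-k)$-neighbor one; as the class is closed under switching, this reduces the analysis to $k\in\{0,1,2\}$, i.e.\ to the four graphs obtained from $C_5$ by adding an isolated vertex, a pendant, a vertex joined to an adjacent pair, and a vertex joined to a nonadjacent pair. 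For each I would exhibit a switching equivalent containing a Beineke forbidden subgraph (for instance $C_5+K_1$ switches at $\{u\}$ to the wheel $W_5$), certifying that none is in the lower line switching class.

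With the claim in hand, I would argue that the lower line switching class is the disjoint union of $\SW{C_5}$ and the part lying inside $\mathcal{C}_{0}$. Indeed, if $G$ is in the class but not in $\mathcal{C}_0=\LW{\mathcal{F}(\{C_4,C_5,C_6\})}$, then some switching equivalent $G'$ of $G$ has an induced $C_4$, $C_5$, or $C_6$; but $G'$ is again in the class, hence $\SW{C_4}$- and $\SW{C_6}$-free, so $G'$ must contain an induced $C_5$, whence $G'\cong C_5$ by the claim and $G\in\SW{C_5}$. Conversely $\SW{C_5}\cap\mathcal{C}_0=\emptyset$ because $\mathcal{C}_0$ is switching-closed and $C_5\notin\mathcal{C}_0$. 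This isolates the $\SW{C_5}$ summand and reduces the problem to identifying exactly which members of $\mathcal{C}_0$ have all switching equivalents free of the remaining six Beineke graphs.

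The final step, which I expect to be the main obstacle, is this determination of $(\text{lower line switching class})\cap\mathcal{C}_0$. Using Lemma~\ref{lem:c4 c5 c6 free} and Corollary~\ref{chordal switching classes: lm1}, I would run through the eight families $(+)$, $(+,+,1)$, $(+,1,+)$, $(+,0,+)$, $(+,+,1,0,+)$, $(+,0,+,0,1)$, $(+,+,1,+)$, $(+,+,1,+,+)$ and, for each, use the six Beineke forbidden subgraphs together with their switching classes (via Proposition~\ref{prop:basic properties}) to bound the clique sizes denoted by ``$+$''. The families $(+)$ and $(+,0,+)$ survive unrestricted, since their switching equivalents are disjoint unions of cliques and hence line graphs, while the remaining families collapse to the finitely many small members listed in the statement. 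Completing both directions requires, on one hand, bounding each family as above, and on the other, verifying for each listed graph that its (small, finite) switching class avoids all nine Beineke graphs; the bookkeeping across these cases is the laborious part of the argument, but it is entirely finite.
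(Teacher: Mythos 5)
Your proposal is correct and follows essentially the same route as the paper's proof: the same key claim that a member of the class contains an induced $C_5$ only if it \emph{is} $C_5$ (proved by the same six-vertex analysis, just normalized to $|N(u)\cap V(C_5)|\le 2$ instead of the paper's $\ge 3$ via switching $\{u\}$), the same placement of all remaining members into $\mathcal{C}_{0}$ using the Beineke graphs switching equivalent to $C_4$ and $C_6$, and the same final pruning of the eight families from Lemma~\ref{lem:c4 c5 c6 free}. The only cosmetic difference is that the paper streamlines the endgame by observing that seven of the nine Beineke graphs are already excluded by membership in $\mathcal{C}_0$ and the remaining two reduce to avoiding switching equivalents of $(1,3,1)$, whereas you check the residual Beineke graphs family by family — an equivalent finite computation.
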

\begin{proof}
  Let $G$ be a graph in the lower line switching class.
  We first show that $G$ contains an induced $C_{5}$ if and only if it is a $C_{5}$.
  Suppose for contradiction that $G$ contains an induced $C_{5}$ of length 5 and another vertex $x$ not on this cycle.
  Let $H$ be the vertex set of this $C_{5}$, and $G' = G[H\cup \{x\}]$.
We may assume that $|N(x)\cap H| \ge 3$; otherwise, we consider $S(G, \{x\})$.
  Since a line graph is $W_5$-free, $|N(x)\cap H| < 5$.
The graph is also forbidden in line graphs when $x$ has three consecutive neighbors on $H$.
  Thus, either $|N(x)\cap H| = 4$, or $|N(x)\cap H| = 3$ and the three neighbors are not consecutive.
  In either case, $G'$ contains an induced $C_{4}$, which is switching equivalent to the claw, which is a forbidden induced subgraph for line graphs.

  In the rest, assume $G$ is not in $\mathcal{S}(C_{5})$.  Since a line graph is $(2, 2, 2)$-free and claw-free, $G$ must be in $\mathcal{C}_{0}$.
  By Lemma~\ref{lem:c4 c5 c6 free}, $G$ is
  $(+)$, $(+,+,1)$, $(+,1,+)$, $(+,0,+)$, $(+,+,1,0,+)$, $(+,0,+,0,1)$, $(+,+,1,+)$, or $(+,+,1,+,+)$.
  Of the nine forbidden induced subgraphs of line graphs, seven contains an induced subgraph in $\mathcal{S}(C_{4}, C_{5}, C_{6})$.  The other two contains a switching equivalent of $(1, 3, 1)$ as an induced subgraph. 
Thus, the statement follows.
\end{proof}

\subsection{Minor-closed graphs}

A graph $F$ is a \emph{minor} of a graph $G$ if $F$ can be obtained from a subgraph of $G$ by contracting edges (identifying the two ends of the edge and keeping one edge between the resulting vertex and each of
the neighbors of the end points of the edge).  For example, any cycle contains all shorter cycles as minors.
A graph class $\mathcal{G}$ is \emph{minor-closed} if every minor of a graph in $\mathcal{G}$ also belongs to $\mathcal{G}$.
In other words, there is a set $\mathcal{M}$ of \emph{forbidden minors} such that a graph belongs to $\mathcal{G}$ if and only if it does not contain as a minor any graph in $\mathcal{M}$. 
Since an induced subgraph of a graph $G$ is a minor of $G$, a minor-closed graph class is hereditary.
We say that a graph class is \emph{nontrivial} if there is at least one graph not in the class.
%For two positive integers $p$ and $q$, the Ramsey number $R(p, q)$ is the minimum number such that a graph of this order must have a $p$ clique or a $q$ independent set.

Kostochka~\cite{kostochka1982minimum,kostochka1984lower} and Thomason~\cite{thomason1984extremal} proved that, there exists an absolute constant $c>0$ such that every graph $G$ with at least $c\cdot |V(G)|\cdot p\sqrt{p}$ edges has $K_p$ as a minor. See~\cite{DBLP:journals/jct/Thomason01} for an overview. This helps us to prove  \cref{thm:lower-class-2}:

\begin{comment}
\begin{theorem}
  Let $\mathcal{G}$ be a nontrivial minor-closed graph class, and let $p$ be the smallest order of a forbidden minor of $\mathcal{G}$.
  The order of every graph in the lower $\mathcal{G}$ switching class is smaller than $R(p, p(p+1)/2)$.
%let $F$ be the forbidden minor of $\mathcal{G}$ with the minimum order.  
\end{theorem}
\end{comment}

\lowerclasstwo*

\begin{proof}
    Let $G\in \LW{\mathcal{G}}$ be a graph with $n$ vertices. It is straight-forward to verify that there exists a constant $c'>0$ such that either
    $G$ or $S(G, A)$ has $c'\cdot n^2$ edges, where 
    $A$ is any subset of $V(G)$ with cardinality $\lfloor n/2\rfloor$. If $c'\cdot n^2 \geq c\cdot n\cdot p\sqrt{p}$, then $G$ has a $K_p$-minor.
    Therefore, $n = O(p\sqrt{p})$.
\end{proof}
\begin{comment}
\begin{proof}
  Let $G$ be a graph in the lower $\mathcal{G}$ switching class.  It suffices to show that the size of a maximum clique and the size of a maximum independent set of $G$ are upper bounded by $p$ and $2p-1$, respectively.
  Since all graphs of order $p$ are subgraphs, hence minors, of $K_{p}$, the graph $G$ is $K_{p}$-free.
  On the other hand, if there exists an independent set $I$ of $2p-1$ vertices in $G$, then $S(G, A)$, where $|A\cap I| = p$, contains an induced $K_{p, p-1}$.
  Then contracting the edges of a maximum matching in the $K_{p, p-1}$ results in a $K_p$.
  %For each vertex in the part of size $p-1$, we contract one edge incident to it.  The result is $K_{p}$.  
  By the Ramsey theorem, a graph of order $R(p, 2p-1$ must contain a $p$-clique or a $(2p-1)$-independent set.  Thus, $n < R(p, 2p-1)$.
\end{proof}
\end{comment}
%Ramsey numbers are usually huge, while for a specific minor-closed graph class, we can usually get a far smaller bound.  
%For example, 
%We have found that, for the class of outerplanar graphs, planar graphs, and series-parallel graphs, the maximum orders of graphs in the lower switching classes are five, seven, and at most 13, respectively.
%Since this value is a constant, 
Theorem~\ref{thm:lower-class-2} implies that, for any nontrivial minor-closed graph $\mathcal{G}$ class, there are only a finite number of graphs in the lower $\mathcal{G}$ switching class. This also means a trivial constant-time algorithm for checking whether a graph belongs to such a lower switching class.
Some interesting minor-closed graph classes are planar, outerplanar~\cite{syslo1979characterizations}, series parallel~\cite{zbMATH03209598}, bounded genus, bounded treewidth~\cite{DBLP:journals/tcs/Bodlaender98}, and bounded pathwidth graphs.
%Ramsey numbers are usually huge, while for a specific minor-closed graph class, we can usually get a far smaller bound.  For example, the forbidden minor of acyclic graphs is $K_3$, and hence the bound is  $R(3,6) -1 = 17$.  It is an easy exercise to show that all the graphs in the lower acyclic switching class comprises only five graphs, namely, $K_1$, $K_2$, $2K_1$, $P_3$ and $3K_1$, and the maximum order of them is three. 
For outerplanar graphs, whose forbidden minors are $K_{4}$ and $K_{2, 3}$, 
%we improve the constant from $R(4, 10) -1 \ge 91$ to five.
%For outerplanar graphs, 
we obtain a tight bound of 5 vertices.

\begin{proposition}\label{outerplar switching class: lm1}
    A graph in the lower outerplanar switching class has at most five vertices.
\end{proposition}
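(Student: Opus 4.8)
The plan is to argue directly that a graph in the lower outerplanar switching class cannot have six or more vertices, by inspecting the \emph{densest} member of its switching class. First I would record the two ingredients. By definition, the lower outerplanar switching class is closed under switching and is contained in the class of outerplanar graphs; hence if $G$ lies in it, then every graph switching equivalent to $G$---that is, every graph in $\mathcal{S}(G)$---is outerplanar. Second, I would use the standard fact that every outerplanar graph has a vertex of degree at most two (outerplanar graphs are $2$-degenerate).

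I would then suppose, for contradiction, that $G$ has $n \ge 6$ vertices and lies in the lower outerplanar switching class, and among the finitely many graphs in $\mathcal{S}(G)$ pick one, say $H$, with the maximum number of edges. The key computation is that switching a singleton $\{v\}$ only complements the edges incident to $v$, so $\deg_{S(H,\{v\})}(v) = (n-1) - \deg_H(v)$ while all other adjacencies are unchanged; consequently
\[
  e(S(H,\{v\})) = e(H) + (n-1) - 2\deg_H(v).
\]
By Proposition~\ref{prop:basic properties}, $S(H,\{v\})$ is again in $\mathcal{S}(G)$, so the maximality of $H$ gives $e(S(H,\{v\})) \le e(H)$, which forces $\deg_H(v) \ge (n-1)/2$. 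Since this holds for every vertex $v$ and $n \ge 6$, the graph $H$ has minimum degree at least $\lceil 5/2\rceil = 3$. This is the desired contradiction: $H \in \mathcal{S}(G)$ is outerplanar, yet it has no vertex of degree at most two. Hence $n \le 5$.

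I expect the only genuine difficulty to be conceptual rather than technical, namely realizing that one should optimize over the entire switching class instead of reasoning about a fixed representative: once the maximum-edge representative $H$ is fixed, the single-vertex switchings immediately lower-bound its minimum degree through the edge-count identity above, and the $2$-degeneracy of outerplanar graphs closes the argument in one line. The bound is tight, which is a useful sanity check: for $n=5$ the inequality only yields minimum degree $\ge 2$, and indeed $C_5$ is $2$-regular with $\mathcal{S}(C_5)=\{C_5,\mathrm{bull},\mathrm{gem},P_4+K_1\}$ (see Figure~\ref{fig:s-c4-c5}) consisting entirely of outerplanar graphs, so $C_5$ belongs to the lower outerplanar switching class.
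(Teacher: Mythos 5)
Your proof is correct, and it takes a genuinely different route from the paper's. The paper argues structurally: it first shows a graph $G$ of order at least six in the class would have to be $C_4$-free (a fifth vertex attached to an induced $C_4$ always allows switching to a $W_4$ or yields a $K_{2,3}$), then uses $2K_2 \sim K_4$ to place $G$ among pseudo-split graphs, and finally bounds the sizes of the split partition and rules out the net and sun via $W_5$; a side benefit of that case analysis is that it produces the complete inventory of the class (four graphs of order five, namely $\mathcal{S}(C_5)$, eight of order four, and everything smaller), which the paper remarks on afterwards. Your extremal argument --- pick the edge-maximum representative $H \in \mathcal{S}(G)$, observe that single-vertex switching gives $e(S(H,\{v\})) = e(H) + (n-1) - 2\deg_H(v)$, deduce $\delta(H) \ge (n-1)/2 \ge 3$ for $n \ge 6$, and contradict the $2$-degeneracy of outerplanar graphs --- is shorter and avoids all case analysis; each step checks out, including the closure $S(H,\{v\}) \in \mathcal{S}(G)$ via Proposition~\ref{prop:basic properties} and the degeneracy fact (note the edge bound $e \le 2n-3$ alone would \emph{not} suffice here, since $3n/2 \le 2n-3$ for $n \ge 6$, so invoking a degree-$2$ vertex rather than average degree is essential, and you do). Moreover, your method generalizes cleanly: for any hereditary class in which every member has a vertex of degree at most $d$, the same computation confines the lower switching class to graphs on at most $2d+1$ vertices; combined with the Kostochka--Thomason degeneracy bound $O(p\sqrt{p})$ for graphs with no $K_p$-minor, this even recovers the bound of Theorem~\ref{thm:lower-class-2}, whose proof in the paper uses the coarser global device of switching a half-sized vertex set. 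What you lose relative to the paper is only the explicit description of the class's members; one small imprecision in your sanity check is that the inequality $\delta \ge (n-1)/2$ applies to the edge-maximum representative (for $n=5$ that is the gem, which indeed has minimum degree $2$), not to $C_5$ itself, but this does not affect the argument.
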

\begin{proof}
  Let $G$ be a graph in the lower outerplanar switching class with five or more vertices.
  We start with showing that $G$ is $C_4$-free.
  Suppose for contradiction that $G$ contains an induced cycle $H$ of length four.
  Since $|V(G)| \ge 5$, there is another vertex $x$ not on this cycle.  Let $G'$ be the subgraph of $G$ induced by $V(H)\cup \{x\}$.
  If $x$ is adjacent to an even number
  of vertices on $H$, then $G'$ either is $K_{2, 3}$, or can be switched to a $W_{4}$ (Figure~\ref{fig:switching equivalent house}a), which contains $K_{4}$ as a minor.
  If $x$ has three neighbors on $H$, then both $K_{4}$ and $K_{2, 3}$ are minors of $G'$.
  If $x$ has only one neighbor on $H$, then $G'$ can be switched to the previous case (Figure~\ref{fig:switching equivalent house}b).  
  In either case, $G$ cannot be in the lower outerplanar switching class.
  
  Since $G$ is $K_{4}$-free, it is $2K_2$-free because $2K_2$ and $K_4$ are switching equivalent. 
Thus, $G$ is a pseudo-split graph.
  If $G$ contains an induced $C_{5}$, then it must be a $C_{5}$.
  (Note that $W_{5}$ is not a outerplanar graph, while $\overline{W_{5}}$, i.e., the graph consisting a $C_{5}$ and an isolated vertex, is switching equivalent to $W_{5}$.)
  Otherwise, $G$ is a split graph.
  Let $K\cup I$ be a split partition of $G$.
  Since $\overline{K_4}$ is switching equivalent to $C_4$, the graph $G$ cannot contain an independent set of four vertices.  
  Thus, $|K| \le 3$ and $|I| \le 3$.
  It remains to show that at least one inequality is strict.
  Suppose that $|K| = |I| =3$.
  If there is an isolated vertex (which is in $I$), then
  there is an induced $K_{3} + K_{1}$, which is switching equivalent to $K_{4}$.
  There is a $K_{4}$ if a vertex in $I$ is adjacent to all the vertices in $K$.
  Since $G$ cannot contain an induced $K_{4}$ or claw (both in $\mathcal{S}(C_{4})$), a vertex in $K$ has either one or two neighbors in $I$.
Thus, $G$ is either the net or the sun, both switching equivalent to $W_{5}$.
  Therefore, the order of $G$ is at most five.
\end{proof}
One can derive from the proof all the graphs in this switching class.  They include four graphs of order five, $\mathcal{S}(C_{5})$, eight graphs of order four, and all graphs on three or fewer vertices.

With a similar argument, one can show that the order of a graph in the lower series-parallel switching class is lower than 13. With a computer program, we obtained that the maximum order of graphs in a lower planar switching class is 7, and the seven vertex graphs in the class are graphs in \SW{C_7}; see Figure~\ref{fig:switching equivalent C7}.

\begin{figure}[h]
\tikzstyle{filled vertex}  = [{circle,draw=blue,fill=black!50,inner sep=1pt}]  \tikzstyle{empty vertex}  = [{circle, draw, fill = white, inner sep=1.pt}]
  \centering \small
  \begin{subfigure}[b]{0.35\linewidth}
    \centering
    \begin{tikzpicture}[scale=.6]
      \def\n{4}
      \def\radius{1.5}      
      \node[empty vertex] (c) at (0, 0) {};
      \coordinate (v0) at ({90 + 180 / \n}:\radius) {};
      \foreach \i in {1,..., \n} {
        \pgfmathsetmacro{\angle}{90 - (\i - .5) * (360 / \n)}
        \node[empty vertex] (v\i) at (\angle:\radius) {};
        \draw let \n1 = {int(\i - 1)} in (v\n1) -- (v\i);
        \draw (c) -- (v\i);
      }
    \end{tikzpicture}
    \;
    \begin{tikzpicture}[scale=.6]
      \def\n{4}
      \def\radius{1.5}      
      \coordinate (v0) at ({90 + 180 / \n}:\radius) {};
      \foreach \i in {1,..., \n} {
        \pgfmathsetmacro{\angle}{90 - (\i - .5) * (360 / \n)}
        \node[empty vertex] (v\i) at (\angle:\radius) {};
        \draw let \n1 = {int(\i - 1)} in (v\n1) -- (v\i);
      }
      \node[empty vertex] (c) at (0, 0) {};
      \foreach \i in {1, 4} 
      \draw (c) -- (v\i);
      \foreach \i in {2, 3} 
        \node[filled vertex] at (v\i) {};
    \end{tikzpicture}
    \;
    \begin{tikzpicture}[scale=.6]
      \def\n{4}
      \def\radius{1.5}      
      \node[filled vertex] (c) at (0, 0) {};
      \coordinate (v0) at ({90 + 180 / \n}:\radius) {};
      \foreach \i in {1,..., \n} {
        \pgfmathsetmacro{\angle}{90 - (\i - .5) * (360 / \n)}
        \node[empty vertex] (v\i) at (\angle:\radius) {};
        \draw let \n1 = {int(\i - 1)} in (v\n1) -- (v\i);
      }
    \end{tikzpicture}
\caption{}
  \end{subfigure}
  \begin{subfigure}[b]{0.25\linewidth}
    \centering
    \begin{tikzpicture}[scale=.6]
      \def\n{4}
      \def\radius{1.5}      
      \coordinate (v0) at ({90 + 180 / \n}:\radius) {};
      \foreach \i in {1,..., \n} {
        \pgfmathsetmacro{\angle}{90 - (\i - .5) * (360 / \n)}
        \node[empty vertex] (v\i) at (\angle:\radius) {};
        \draw let \n1 = {int(\i - 1)} in (v\n1) -- (v\i);
      }
      \node[empty vertex] (c) at (0, 0) {};
      \foreach \i in {1, 2, 3}      
      \draw (c) -- (v\i);
    \end{tikzpicture}
    \;
    \begin{tikzpicture}[scale=.6]
      \def\n{4}
      \def\radius{1.5}      
      \coordinate (v0) at ({90 + 180 / \n}:\radius) {};
      \foreach \i in {1,..., \n} {
        \pgfmathsetmacro{\angle}{90 - (\i - .5) * (360 / \n)}
        \node[empty vertex] (v\i) at (\angle:\radius) {};
        \draw let \n1 = {int(\i - 1)} in (v\n1) -- (v\i);
      }
      \node[filled vertex] (c) at (0, 0) {};
      \foreach \i in {4}      
      \draw (c) -- (v\i);
    \end{tikzpicture}
\caption{}
  \end{subfigure}
  \begin{subfigure}[b]{0.14\linewidth}
    \centering
    \begin{tikzpicture}[scale=.6]
      \def\n{4}
      \def\radius{1.5}      
      \coordinate (v0) at ({90 + 180 / \n}:\radius) {};
      \foreach \i in {1,..., \n} {
        \pgfmathsetmacro{\angle}{90 - (\i - .5) * (360 / \n)}
        \node[empty vertex] (v\i) at (\angle:\radius) {};
        \draw let \n1 = {int(\i - 1)} in (v\n1) -- (v\i);
      }
      \node[empty vertex] (c) at (0, 0) {};
      \foreach \i in {1, 3}      
      \draw (c) -- (v\i);
    \end{tikzpicture}
    \caption{}
  \end{subfigure}
  \caption{All five-vertex graph containing a $C_{4}$ form three groups.  The set $A$ consists of the solid nodes.}
  \label{fig:switching equivalent house}
\end{figure}

\subsection{Lower switching classes with infinite forbidden induced subgraphs}
\begin{proposition}
 \label{switching class general: ob1}   
 Let $G$ and $H$ be two graphs.  The graph $G$ is $\mathcal{S}(H)$-free if and only if every switching equivalent graph of $G$ is $\mathcal{S}(H)$-free.
\end{proposition}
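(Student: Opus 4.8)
The plan is to prove the two directions separately. The backward direction is immediate: since $S(G,\emptyset)=G$ by Proposition~\ref{prop:basic properties}, the graph $G$ is among its own switching equivalent graphs, so if all of them are $\mathcal{S}(H)$-free then in particular $G$ is. All the content lies in the forward direction, and I would establish it by contraposition.

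First I would suppose that some switching equivalent graph $G'=S(G,A)$ fails to be $\mathcal{S}(H)$-free, i.e., there is a vertex subset $U\subseteq V(G)$ such that $G'[U]$ is isomorphic to some graph $F\in\mathcal{S}(H)$. The crucial structural fact I would invoke is that switching commutes with taking induced subgraphs, namely
\[
S(G,A)[U]=S(G[U],A\cap U).
\]
This is exactly the principle already used in the introduction (an induced subgraph of a switched graph is a switch of an induced subgraph), and it is verified directly from the definition: a pair of vertices in $U$ has its adjacency reversed in $S(G,A)$ precisely when one endpoint lies in $A$ and the other outside $A$, which, restricted to $U$, is the same as one endpoint lying in $A\cap U$ and the other in $U\setminus A$.

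From this identity, switching the set $A\cap U$ back on both sides and using $S(S(\cdot,B),B)=\cdot$ from Proposition~\ref{prop:basic properties}, I get $G[U]=S(F,A\cap U)$. The final step is to observe that $\mathcal{S}(H)$ is a switching equivalence class: since switching is an equivalence relation (Proposition~\ref{prop:basic properties}), $\mathcal{S}(H)$ is closed under switching, so $F\in\mathcal{S}(H)$ forces $S(F,A\cap U)\in\mathcal{S}(H)$ as well. Hence $G[U]$ is (isomorphic to) a graph in $\mathcal{S}(H)$, showing that $G$ is not $\mathcal{S}(H)$-free and completing the contrapositive.

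I do not expect a genuine obstacle in this argument; it is essentially a bookkeeping consequence of two facts already available in the excerpt. The only step that requires care—rather than being routine—is justifying the commutativity identity $S(G,A)[U]=S(G[U],A\cap U)$, which should be checked casewise on the three types of vertex pairs (both in $A$, both outside $A$, and split across $A$), and the closure of $\mathcal{S}(H)$ under switching, which follows from switching being an equivalence relation.
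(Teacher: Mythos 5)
Your proposal is correct and follows essentially the same route as the paper's proof: both rest on the restriction identity $S(G,A)[U]=S(G[U],A\cap U)$ (the paper applies it when switching $A$ back in $S(G,A)$, so that the induced subgraph on $U$ becomes $S(S(G,A)[U],A\cap U)$) together with the closure of $\mathcal{S}(H)$ under switching via Proposition~\ref{prop:basic properties}. Framing the necessity as contraposition rather than the paper's contradiction is only a cosmetic difference.
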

\begin{proof}
  Since $G$ is switching equivalent to itself, the sufficiency is trivial.
  We prove the necessity by contradiction.  Suppose that there are subsets $A, U\subseteq V(G)$ such that the subgraph induced by $U$ in $S(G, A)$
  is isomorphic to some graph $H'\in \mathcal{S}(H)$.
If we switch $A$ in $S(G, A)$, the subgraph induced by $U$ in the resulting graph is $S(S(G, A)[U], A\cap U)$, which is switching equivalent to $H'$, hence in $\mathcal{S}(H)$ by Proposition~\ref{prop:basic properties}. 
  Since $S(S(G, A), A) = G$, we have a contradiction because $G$ is supposed to be $\mathcal{S}(H)$-free.
\end{proof}

We have seen a lot of graph classes $\mathcal{G}$ with infinite forbidden induced subgraphs, but the lower $\mathcal{G}$ switching class has only a finite number of them.
This is not always the case.  Indeed, a hole contains an induced copy of a switching equivalent graph of a shorter hole can only happen for very short ones.
For all $\ell \ge 9$, every switching equivalent graph of $C_{\ell}$ either is $C_\ell$ itself or contains a vertex of degree at least three.
\begin{proposition}
  \label{counter example class: ob2}
  Let $i$ and $j$ be two integers with $9\le i < j$.
  Every switching equivalent graph of $C_{j}$ is $\mathcal{S}(C_{i})$-free.
\end{proposition}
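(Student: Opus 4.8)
The plan is to reduce the statement about the entire switching class of $C_j$ to a statement about $C_j$ alone, using the closure property already established. By Proposition~\ref{switching class general: ob1} applied with $G = C_j$ and $H = C_i$, the graph $C_j$ is $\mathcal{S}(C_i)$-free if and only if \emph{every} switching equivalent graph of $C_j$ is $\mathcal{S}(C_i)$-free. Hence it suffices to prove that $C_j$ itself contains no induced subgraph isomorphic to a member of $\mathcal{S}(C_i)$. Since every graph in $\mathcal{S}(C_i)$ has exactly $i$ vertices, I only need to inspect the induced subgraphs of $C_j$ on exactly $i$ vertices. Because $i < j$, each such subgraph is obtained by deleting at least one vertex from the cycle, so it is a disjoint union of paths; in particular it is acyclic and has maximum degree at most two.

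The crux is the degree dichotomy for $C_\ell$ with $\ell \ge 9$: every switching equivalent graph of $C_\ell$ either equals $C_\ell$ or has a vertex of degree at least three. I would prove this by a direct degree count. Write the vertices of $C_\ell$ cyclically, let $A \subseteq V(C_\ell)$ with $S(C_\ell, A) \ne C_\ell$, and, using $S(C_\ell, A) = S(C_\ell, V(C_\ell)\setminus A)$ from Proposition~\ref{prop:basic properties}, assume $1 \le |A| = a \le \lfloor \ell/2 \rfloor$. For a vertex $v \in A$ having $d(v)$ of its two cycle-neighbors inside $A$, a routine count gives its degree in $S(C_\ell, A)$ as $\ell - a - 2 + 2 d(v) \ge \ell - a - 2$; symmetrically, for $v \in V(C_\ell)\setminus A$ with $d'(v)$ cycle-neighbors inside $V(C_\ell)\setminus A$, its degree is $a - 2 + 2 d'(v) \ge a - 2$. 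If $a \le 4$, then any vertex of the nonempty set $A$ has degree at least $\ell - a - 2 \ge \ell - 6 \ge 3$; if $a \ge 5$, then any vertex of the nonempty set $V(C_\ell)\setminus A$ has degree at least $a - 2 \ge 3$. Either way $S(C_\ell, A)$ has a vertex of degree at least three, and this is precisely where $\ell \ge 9$ is needed (it forces $\ell - 6 \ge 3$ in the first case).

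Finally I would combine the two parts. Taking $\ell = i \ge 9$, the dichotomy says every member of $\mathcal{S}(C_i)$ is either $C_i$ or has a vertex of degree at least three. A disjoint union of paths is acyclic, hence not isomorphic to $C_i$, and has maximum degree at most two, hence has no vertex of degree three; so it lies in neither category. Therefore no $i$-vertex induced subgraph of $C_j$ belongs to $\mathcal{S}(C_i)$, which shows $C_j$ is $\mathcal{S}(C_i)$-free, and the reduction completes the proof. I expect the degree dichotomy to be the main obstacle: the arithmetic is routine, but one must split on $|A|$ correctly and verify that the threshold $\ell \ge 9$ is exactly what makes both cases yield degree at least three. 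Indeed the bound is sharp, since $C_8$ can be switched (by taking $A$ to be every other vertex) into a $2$-regular graph, so the statement genuinely fails for shorter cycles.
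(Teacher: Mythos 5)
Your proof is correct and follows essentially the same route as the paper: reduce via Proposition~\ref{switching class general: ob1} to showing that $C_j$ itself is $\mathcal{S}(C_i)$-free, note that $i$-vertex induced subgraphs of $C_j$ are disjoint unions of paths, and show every switch of $C_i$ by a nonempty $A$ with $|A|\le\lfloor i/2\rfloor$ creates a vertex of degree at least three. The paper does the degree count in one stroke --- a vertex $v\in A$ has at least $|V(C_i)\setminus A|-2\ge 3$ neighbors after switching, since $|V(C_i)\setminus A|\ge\lceil i/2\rceil\ge 5$ --- so your explicit formula $\ell-a-2+2d(v)$ with the split into $a\le 4$ and $a\ge 5$ is a slightly more elaborate version of the same computation (with the WLOG $a\le\lfloor\ell/2\rfloor$ in place, the single bound $\ell-a-2\ge\lceil\ell/2\rceil-2\ge 3$ already covers both cases).

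One flaw, though only in your closing aside, not in the proof itself: switching $C_8$ at every other vertex does yield a $2$-regular graph, but that graph is the $8$-cycle $(1,4,7,2,5,8,3,6)$ --- it is isomorphic to $C_8$. So this example does not show the statement ``genuinely fails'' at $i=8$. Since $\mathcal{S}(C_8)$ is defined up to isomorphism, your own degree formula shows the alternating set is the \emph{only} switch of $C_8$ with maximum degree at most two, hence every graph in $\mathcal{S}(C_8)$ other than $C_8$ has a vertex of degree at least three, and the proposition in fact persists at $i=8$ (a similar check handles $i=7$). The threshold $9$ is where the paper's one-line argument (every vertex of $A$ gets degree at least three) starts to work, not where the conclusion first becomes true.
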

\begin{proof}
  By Proposition~\ref{switching class general: ob1}, it suffices to show that $C_{j}$ is $\mathcal{S}(C_{i})$-free.
  We consider $S(C_{i}, A)$ for each subset $A\subseteq V(C_{i})$; we may assume without loss of generality that $|A| \le i/2$.
  Obvious, $C_{j}$ does not contain an induced copy of $C_{i}$.
  Hence, $A \ne \emptyset$, and let $v$ be an arbitrary vertex in $A$.
  Since $v$ has precisely two neighbors in $C_{i}$, and $|V(C_{i})\setminus A| \ge 5$, it has at least three neighbors in $S(C_{i}, A)$.
  Thus, $C_{j}$ does not contain an induced copy of $S(C_{i}, A)$.
  This concludes the proof.
\end{proof}

Therefore, the following lower switching classes have an infinite number of forbidden induced subgraphs.

\begin{corollary}\label{counter example class: ob3}
  For any infinite set $I\subseteq \{9, 10, \ldots\}$, the forbidden induced subgraphs of the lower $\{C_{\ell}, \ell \in I\}$-free switching class are
 $\bigcup_{\ell\in I}\mathcal{S}(C_{\ell})$.
\end{corollary}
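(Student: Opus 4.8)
The plan is to derive the corollary by combining Proposition~\ref{pro:lower-h-free} with Proposition~\ref{counter example class: ob2}. First I would set $\mathcal{H} = \{C_\ell : \ell \in I\}$ and invoke Proposition~\ref{pro:lower-h-free} together with the identity $\UP{\mathcal{H}} = \SW{\mathcal{H}}$ noted earlier, which yields
\[
\LW{\mathcal{F}(\{C_\ell : \ell \in I\})} = \mathcal{F}(\UP{\mathcal{H}}) = \mathcal{F}\!\left(\bigcup_{\ell \in I}\SW{C_\ell}\right).
\]
This already certifies that $\bigcup_{\ell\in I}\SW{C_\ell}$ is a defining family of forbidden induced subgraphs for the class. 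What remains, and what carries the real content of the statement, is to show that this family is \emph{irredundant}, i.e.\ that it coincides with the (unique) minimal family of forbidden induced subgraphs; only then does the infinitude of $I$ translate into infinitely many forbidden induced subgraphs rather than a collapse to a finite list as happened for the classes in Section~\ref{sec:lower}.

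The key step is to prove that $\bigcup_{\ell \in I}\SW{C_\ell}$ is an antichain under the induced-subgraph relation. I would split this into two cases. Within a single $\SW{C_\ell}$, every graph has exactly $\ell$ vertices and the graphs are pairwise non-isomorphic by the definition of $\SW{\cdot}$, so none is a proper induced subgraph of another. Across two indices $i < j$ of $I$ (both at least $9$), a graph in $\SW{C_j}$ has $j > i$ vertices and therefore cannot be an induced subgraph of any graph in $\SW{C_i}$; conversely, no graph in $\SW{C_i}$ is an induced subgraph of any graph in $\SW{C_j}$, because every member of $\SW{C_j}$ is $\SW{C_i}$-free by Proposition~\ref{counter example class: ob2}. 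This last point is the crux of the whole argument, and it is exactly where the hypothesis $I \subseteq \{9, 10, \ldots\}$ is essential: for shorter holes the analogous statement fails---for instance $P_4 + K_1 \in \SW{C_5}$ is induced in every longer hole, which is precisely why the lower switching classes of the earlier sections collapse to finitely many obstructions.

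Finally I would convert the antichain property into the conclusion. Because $\bigcup_{\ell\in I}\SW{C_\ell}$ is a family of forbidden induced subgraphs forming an antichain, each of its members is in fact a \emph{minimal} forbidden induced subgraph: for $H \in \SW{C_j}$, any proper induced subgraph $H''$ of $H$ has fewer than $j$ vertices, and were $H''$ to contain some member of $\SW{C_i}$ with $i < j$, that member would be induced in $H$ itself, contradicting the $\SW{C_i}$-freeness of $H$; hence every proper induced subgraph of $H$ lies in the class. Conversely, any minimal forbidden induced subgraph $M$ is not in the class and thus contains some member $H$ of the family as an induced subgraph, and minimality of $M$ forces $H = M$. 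Therefore the minimal forbidden induced subgraphs are exactly $\bigcup_{\ell \in I}\SW{C_\ell}$, which is infinite since $I$ is, establishing the corollary. I expect the only nontrivial ingredient to be the cross-index non-containment, and that is fully supplied by Proposition~\ref{counter example class: ob2}.
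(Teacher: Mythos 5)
Your proposal is correct and follows essentially the same route as the paper: the paper also obtains the corollary by combining the observation that the switching equivalents of the forbidden holes form a defining family (Proposition~\ref{pro:lower-h-free}) with Proposition~\ref{counter example class: ob2}, which supplies exactly the cross-index non-containment you identify as the crux. You merely spell out the antichain and minimality bookkeeping that the paper leaves implicit in its ``Therefore'' preceding the corollary, which is a faithful elaboration rather than a different argument.
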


 \section{Upper switching classes:  algorithms}\label{sec:upper}

For the recognition of the upper $\mathcal{G}$ switching class, the input is a graph $G$, and the solution is a vertex subset $A\subseteq V(G)$ such that $S(G, A)\in \mathcal{G}$.

\subsection{(Pseudo-)split graphs}

We start with split graphs.
If the input graph $G$ is a split graph, then we have nothing to do.
Suppose that $G$ is in the upper split switching class.  Let $A$ be a solution, and $K\uplus I$ a split partition of $S(G, A)$.
Note that if $A\in \{K, I\}$, then $G$ is a split graph.
We may assume that $A$ intersects both $K$ and $I$: if $A$ is a proper subset of $K$ or $I$, we replace $A$ with $V(G)\setminus A$.
We can guess a pair of vertices $u\in A\cap K$ and $v\in A\cap I$.
The vertex set $V(G)\setminus \{u, v\}$ can be partitioned into four parts, namely, $N(u)\setminus N[v]$, $N(v)\setminus N[u]$, $N(u)\cap N(v)$, and $V(G) \setminus N[u, v]$.
It is easy to see that the first is a subset of $A$ while the second is disjoint from $A$.
The subgraphs
$G[N(u)\cap N(v)]$ and $G - N[u, v]$ must be split graphs, and each admits a special split partition with respect to $A$.  Although a split graph may admit more than one split partition, the following observation allows us to find the desired one by enumeration.

\begin{proposition}[\cite{DBLP:journals/algorithmica/FominGST20}]
  \label{prop:split-partitions}
  A split graph has at most $n$ split partitions
   and they can be enumerated in $O(m+n)$ time.
\end{proposition}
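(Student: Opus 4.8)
The plan is to reduce every split partition to a small perturbation of a single fixed one and then count. The cornerstone is that any two split partitions are very close to each other: if $(K_1,I_1)$ and $(K_2,I_2)$ are split partitions of $G$, then $K_1\cap I_2$ is simultaneously a clique (being a subset of $K_1$) and an independent set (being a subset of $I_2$), so $|K_1\cap I_2|\le 1$; symmetrically $|K_2\cap I_1|\le 1$. Since $V(G)=K_2\uplus I_2$, we have $K_1\setminus K_2=K_1\cap I_2$, whence $|K_1\Delta K_2|\le 2$. Thus every split partition is obtained from any fixed reference partition by moving at most one vertex out of its clique and at most one vertex into it.

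First I would fix a reference partition $(K_0,I_0)$ in which $K_0$ is a \emph{maximum} clique; such a partition is produced while recognizing $G$ as a split graph. Maximality forbids adding any $b\in I_0$ to $K_0$ (this would give a clique of size $|K_0|+1$), so every split partition $(K,I)$ is encoded by a pair $(a,b)$, where $a\in K_0\cup\{\bot\}$ is the vertex leaving $K_0$ and $b\in I_0\cup\{\bot\}$ the vertex entering it. I would then classify these pairs. Partitions with $b=\bot$ merely move a single $a\in K_0$ having no neighbor in $I_0$, and there are at most $|K_0|$ of them. For a genuine swap ($a,b\neq\bot$), validity of $K=(K_0\setminus\{a\})\cup\{b\}$ forces $b$ to be adjacent to every vertex of $K_0$ except $a$; since $K_0$ is maximum, $b$ must miss at least one vertex of $K_0$, and here it misses exactly $a$.

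The key counting step is that in a swap the incoming vertex $b\in I_0$ determines the outgoing vertex $a$ \emph{uniquely}: $a$ is the single non-neighbor of $b$ in $K_0$. Hence distinct swaps use distinct $b$'s, giving at most $|I_0|$ of them. Together with the reference partition itself, the total is at most $1+|K_0|+|I_0|=n+1$, linear in $n$ (and it drops to the stated bound once the degenerate all-clique and all-independent partitions are discounted). For enumeration in $O(m+n)$ time I would obtain $(K_0,I_0)$ in linear time; observe that $a\in K_0$ is movable alone exactly when $\deg(a)=|K_0|-1$, which is read off from the degrees in $O(n)$ total; and observe that $b\in I_0$ admits a swap exactly when $\deg(b)=|K_0|-1$, in which case its unique non-neighbor $a$ in $K_0$ is found by one scan of $N(b)$, the swap being valid iff in addition $a$ has no neighbor in $I_0$. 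These scans cost $\sum_b \deg(b)=O(m)$ overall.

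The main obstacle is precisely the simultaneous-swap case: a priori the pair $(a,b)$ ranges over $|K_0|\cdot|I_0|$ possibilities, which is quadratic. The insight that tames it is the disjointness forced by maximality of $K_0$, namely that each admissible $b$ pins down $a$ as its unique non-neighbor in $K_0$, so the swaps inject into $I_0$ rather than into $K_0\times I_0$. Everything else is routine bookkeeping with vertex degrees.
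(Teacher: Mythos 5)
The paper does not prove this proposition at all: it is imported verbatim from Fomin, Golovach, Str{\o}mme and Thilikos \cite{DBLP:journals/algorithmica/FominGST20}, so there is no in-paper proof to compare against. Your argument is, in substance, the standard (and correct) one: the observation that $K_1\cap I_2$ is both a clique and an independent set, hence $|K_1\Delta K_2|\le 2$, is exactly the right rigidity lemma; anchoring at a partition whose clique side is a maximum clique kills the $b$-only case; and the injection of genuine swaps into $I_0$ via the unique non-neighbor (note that $b$ adjacent to all of $K_0\setminus\{a\}$ must miss $a$, else $K_0\cup\{b\}$ beats $\omega$) correctly collapses the a priori quadratic $(a,b)$-space to linear. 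I verified the side conditions too: since $a$ is a non-neighbor of $b$, ``$a$ has no neighbor in $I_0$'' and ``$a$ has no neighbor in $I_0\setminus\{b\}$'' coincide, so your degree tests $\deg(a)=|K_0|-1$ and $\deg(b)=|K_0|-1$ are the right characterizations.

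Two minor points deserve tightening. First, your count is genuinely $n+1$, not $n$, under the convention that allows an empty side: $K_n$ attains $n+1$ (the partition $(V,\emptyset)$ plus $n$ single-vertex moves), so the parenthetical ``drops to the stated bound once the degenerate partitions are discounted'' is doing real work and should be pinned to an explicit convention (e.g., both sides nonempty, as the bound ``at most $n$'' implicitly assumes); for the paper's applications only the $O(n)$ bound matters. Second, two implementation details are glossed: (i) a reference partition with $K_0$ a \emph{maximum} clique is obtained from any split partition by repeatedly moving a vertex of $I$ complete to $K$ into $K$ --- and one should note that a maximal clique side is automatically maximum, since a maximum clique meets $I$ in at most one vertex; (ii) ``one scan of $N(b)$'' does not by itself locate the unique non-neighbor of $b$ in $K_0$ --- literally scanning $K_0$ for each $b$ costs $O(n\cdot|K_0|)$; you need a constant-overhead trick such as timestamp marking or comparing $\sum_{v\in K_0}\mathrm{id}(v)$ with $\sum_{v\in N(b)}\mathrm{id}(v)$ to stay within $O(m+n)$. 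Neither issue affects correctness of the proposition as used in the paper.
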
 

\begin{figure}[h!]
  \centering 
  \begin{tikzpicture}
    \path (0,0) node[text width=.85\textwidth, inner xsep=20pt, inner ysep=10pt] (a) {
      \begin{minipage}[t!]{\textwidth}
        \begin{tabbing}
          AAA\=Aaa\=aaa\=Aaa\=MMMMMAAAAAAAAAAAA\=A \kill
          1. \> \textbf{if} $G$ is a split graph \textbf{then return} ``\YES'';
          \\
          2. \> \textbf{for each} pair of vertices $u, v\in V(G)$ \textbf{do}
          \\
2.1. \>\> \textbf{if} $G[N(u)\cap N(v)]$ is not a split graph \textbf{then continue};
          \\
          2.2. \>\> \textbf{if} $G - N[u, v]$ is not a split graph \textbf{then continue};
          \\
          2.3. \>\> \textbf{for each} split partition $K_{1}\uplus I_{1}$ of $G[N(u)\cap N(v)]$ \textbf{do}
          \\
          2.3.1. \>\>\> \textbf{for each} split partition $K_{2}\uplus I_{2}$ of $G - N[u, v]$ \textbf{do}
          \\
          2.3.1.1. \>\>\>\> \textbf{if} $S(G, \{u,v\} \cup (N(u)\setminus N[v]) \cup K_1 \cup I_2)$ is a split graph \textbf{then return} ``\YES'';
          \\
          3. \> \textbf{return} ``\NO.''
        \end{tabbing}
      \end{minipage}
    };
    \draw[draw=gray!60] (a.north west) -- (a.north east) (a.south west) -- (a.south east);
  \end{tikzpicture}
  \caption{The algorithm for split graphs.}
  \label{alg:split}
\end{figure}

\begin{theorem} \label{thm:split}
  We can decide in polynomial time whether a graph can be switched to a split graph.
\end{theorem}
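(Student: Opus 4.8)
The plan is to prove that the algorithm in Figure~\ref{alg:split} is correct and runs in polynomial time. Soundness is immediate: lines~1 and~2.3.1.1 only answer ``\YES'' after explicitly verifying that $G$, respectively some $S(G,A)$, is a split graph, so a positive answer always certifies a valid switching set. The whole content therefore lies in \emph{completeness}---showing that whenever $G$ belongs to the upper split switching class, one of the sets $A$ tested by the algorithm is a solution---together with the running-time bound.

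For completeness I would start from an arbitrary solution $A$ with a split partition $K\uplus I$ of $S(G,A)$. By the reduction described before the statement (complementing $A$ when it is a proper subset of $K$ or of $I$, and using line~1 for the case that $G$ is itself split), I may assume that $A$ meets both $K$ and $I$; fix $u\in A\cap K$ and $v\in A\cap I$, and note that the pair $(u,v)$ is among those tried in the loop of line~2. The heart of the argument is that, once $u$ and $v$ are fixed, the set $A$ is almost entirely forced. To see this, partition $V(G)\setminus\{u,v\}$ into the four parts $N(u)\setminus N[v]$, $N(v)\setminus N[u]$, $N(u)\cap N(v)$, and $V(G)\setminus N[u,v]$ according to adjacency in $G$.

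The key step---where I expect the care to concentrate---is a short case analysis on each part, using the single bookkeeping fact that, because $u,v\in A$, the adjacency of a vertex $w$ to $u$ (or to $v$) is flipped by the switching precisely when $w\notin A$, combined with ``$u\in K$ is adjacent to all of $K$'' and ``$v\in I$ is nonadjacent to all of $I$.'' For $w\in N(u)\setminus N[v]$, assuming $w\notin A$ forces $w$ out of both $K$ and $I$, a contradiction, so $N(u)\setminus N[v]\subseteq A$; symmetrically $N(v)\setminus N[u]$ is disjoint from $A$. For $w\in N(u)\cap N(v)$ the same reasoning yields $w\in A\iff w\in K$, and for $w\in V(G)\setminus N[u,v]$ it yields $w\in A\iff w\in I$. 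Writing $K_1=A\cap N(u)\cap N(v)$, $I_1=(N(u)\cap N(v))\setminus A$, $K_2=(V(G)\setminus N[u,v])\setminus A$, and $I_2=A\cap(V(G)\setminus N[u,v])$, the observation that adjacencies inside $A$ and inside its complement are unchanged by switching shows that $K_1\uplus I_1$ is a split partition of $G[N(u)\cap N(v)]$ and $K_2\uplus I_2$ is a split partition of $G-N[u,v]$; in particular both induced subgraphs are split, so lines~2.1--2.2 do not discard this pair.

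Finally I would assemble the pieces. The four membership identities give exactly $A=\{u,v\}\cup(N(u)\setminus N[v])\cup K_1\cup I_2$, which is precisely the set switched in line~2.3.1.1. Since by Proposition~\ref{prop:split-partitions} the algorithm enumerates \emph{all} split partitions of $G[N(u)\cap N(v)]$ and of $G-N[u,v]$, the particular partitions $K_1\uplus I_1$ and $K_2\uplus I_2$ are eventually reached, so this exact $A$ is tested and the algorithm returns ``\YES.'' For the running time, the pair loop contributes $O(n^2)$ iterations; within each, the double loop runs over at most $n\cdot n$ partition pairs by Proposition~\ref{prop:split-partitions}, and every switch-and-split test costs $O(n+m)$ time, giving a polynomial bound overall.
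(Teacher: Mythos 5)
Your proposal is correct and follows essentially the same route as the paper's proof: fixing $u\in A\cap K$, $v\in A\cap I$, showing $N(u)\setminus N[v]\subseteq A$ and $N(v)\setminus N[u]\cap A=\emptyset$, recovering the forced split partitions of $G[N(u)\cap N(v)]$ and $G-N[u,v]$, and concluding via the enumeration of Proposition~\ref{prop:split-partitions} that $A=\{u,v\}\cup(N(u)\setminus N[v])\cup K_1\cup I_2$ is tested, with the same $O(n^4(m+n))$ bound. The only cosmetic difference is that you verify the split partition of $G-N[u,v]$ directly from the flip bookkeeping, where the paper invokes the symmetry $S(\overline G,A)=\overline{S(G,A)}$.
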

\begin{proof}
  We use the algorithm described in Figure~\ref{alg:split}.
  Since the algorithm returns ``\YES'' only when a solution is identified, it suffices to show a solution must be returned for a yes-instance.
  Suppose that $S(G, A)$ is a split graph, and $K\uplus I$ is a split partition of $S(G, A)$.
  If $A$ is the empty set, $K$, or their complements, then $G$ is a split graph, and step~1 returns ``\YES.''
Hence, we may assume without loss of generality that neither $A\cap K$ nor $A\cap I$ is empty; otherwise, we replace $A$ with $V(G)\setminus A$.
In one of the iterations of step~2, the vertex $u$ is in $A\cap K$ the vertex $v$ is in $A\cap I$.
  We first argue that $N(u)\setminus N[v]$ must be a subset of $A$, and $N(v)\setminus N[u]$ must be disjoint from $A$.
  If a vertex $x\in N(v)\setminus N[u]$ is in $A$, or if a vertex $y\in N(u)\setminus N[v]$ is not in $A$, then they are adjacent to $v$ but not $u$ in $S(G, A)$, which is impossible.
Since $K\uplus I$ is a split partition,
  \begin{align*}
    N(v)\cap A &\subseteq K
    \\
    N(u)\setminus A&\subseteq I.
  \end{align*}
  Thus, $G[N(u)\cap N(v)]$ is a split graph, with a split partition $(N(u)\cap N(v)\cap A)\uplus (N(u)\cap N(v)\setminus A)$.
By symmetry, $(V(G)\setminus (N[u, v]\cup A))\uplus (A\setminus (N[u, v]))$ is a split partition of $G - N[u, v]$.
  (Note that $S(\overline G, A) = \overline{S(G, A)}$ is also a split graph.)
  Step~2.3 checks all split partitions of $G[N(u)\cap N(v)]$ and $G - N[u, v]$.
  By Proposition~\ref{prop:split-partitions}, in one of the iterations,
  \begin{align*}
    K_{1} &= N(u)\cap N(v)\cap A,
    \\
    I_{2} &= A\setminus (N[u, v]).
  \end{align*}
Since $A = \{u,v\} \cup (N(u)\setminus N[v]) \cup K_1 \cup I_2$, the algorithm must return ``\YES'' in step~2.3.1.1 of this iteration.

  By Proposition~\ref{prop:split-partitions}, there is a linear number of iterations in step~2.3 and step~2.3.1.
  The algorithm takes $O(n^4(m+n))$ time.
\end{proof}

Step~2 of the algorithm in Figure~\ref{alg:split} can be easily modified to enumerate all solutions.  The same holds when $G$ is a split graph.

\begin{theorem}\label{thm:split2}
  Let $G$ be a graph.  There are a polynomial number of subsets $A$ of $V(G)$ such that $S(G, A)$ is a split graph, and they can be enumerated in polynomial time.
\end{theorem}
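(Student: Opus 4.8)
The plan is to convert the decision procedure of Figure~\ref{alg:split} into an enumeration procedure, and then argue that it misses only a polynomially bounded, easily listed set of solutions. The engine is the structural fact established inside the proof of Theorem~\ref{thm:split}: if $A$ is a solution and $K\uplus I$ is a split partition of $S(G,A)$ with $A$ meeting both $K$ and $I$, then, fixing any $u\in A\cap K$ and any $v\in A\cap I$, the whole set $A$ is forced to equal $\{u,v\}\cup(N(u)\setminus N[v])\cup K_1\cup I_2$, where $K_1=N(u)\cap N(v)\cap A$ is one side of a split partition of $G[N(u)\cap N(v)]$ and $I_2=A\setminus N[u,v]$ is one side of a split partition of $G-N[u,v]$. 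First I would modify Step~2.3.1.1 so that, instead of returning on the first success, it records every set $A$ of this form that passes the test ``$S(G,A)$ is a split graph.'' By Proposition~\ref{prop:split-partitions} each of the two induced subgraphs has at most $n$ split partitions, so together with the $O(n^2)$ choices of $(u,v)$ this collects, in polynomial time, a family $\mathcal{C}_1$ of $O(n^4)$ candidate sets that contains every solution meeting both parts of some split partition of its image.

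Next I would exploit complementation. By Proposition~\ref{prop:basic properties}, $S(G,A)=S(G,V(G)\setminus A)$, so the solution set is closed under taking complements; adding $V(G)\setminus A$ for every $A\in\mathcal{C}_1$ yields a family that additionally captures every solution whose complement meets both parts. It then remains to pin down the solutions escaping both nets. Fixing a split partition $K\uplus I$ of $S(G,A)$, a short case analysis shows that if neither $A$ nor $V(G)\setminus A$ meets both $K$ and $I$, then $A\in\{K,I\}$ (allowing a degenerate empty part, which gives $A\in\{\emptyset,V(G)\}$). I would call these the \emph{trivial} solutions.

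The final step is to show that the trivial solutions are few and easy to find. I would prove that $A$ is a trivial solution exactly when $A$ is one side of a split partition of $G$ itself: if $A=I$ is the independent part of $S(G,A)$, then switching $A$ back shows that $G=S(S(G,A),A)$ has $A$ as an independent set and $V(G)\setminus A$ as a clique, so $A$ is the independent part of a split partition of $G$, and symmetrically when $A=K$; conversely, for any split partition $K_G\uplus I_G$ of $G$ both $S(G,K_G)$ and $S(G,I_G)$ are split, since switching one part keeps the clique a clique and the independent set independent while merely complementing the bipartite adjacency between them. Hence every trivial solution forces $G$ to be a split graph, and by Proposition~\ref{prop:split-partitions} there are at most $n$ split partitions of $G$, yielding at most $2n$ trivial solutions, enumerated by switching each part of each split partition of $G$. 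Since testing each of the $O(n^4)$ candidates for the split property takes $O(n+m)$ time, the whole enumeration runs in polynomial time.

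I expect the main obstacle to be the bookkeeping around solutions that the algorithm does not produce directly: the algorithm is built to name one vertex of $A$ in each part of the target split partition, so it is blind to solutions lying entirely inside a single part. The crux is therefore the complementation argument combined with the identification of the leftover trivial solutions with the (at most $n$) split partitions of $G$; once this correspondence is in place, both the polynomial bound and the enumeration follow immediately.
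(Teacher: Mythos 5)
Your proof is correct, and it uses the same engine as the paper: turn the decision procedure of Figure~\ref{alg:split} into record-and-continue enumeration, observe via the correctness argument of Theorem~\ref{thm:split} that every solution meeting both parts of a split partition of its image is forced to be one of the $O(n^4)$ candidate sets, and verify each candidate. Where you diverge is the treatment of the leftover solutions. The paper handles the case that $G$ is itself a split graph by a separate direct bound: fixing a split partition $K\uplus I$ of $G$, if $2\le|A\cap K|\le|K|-2$ then two vertices of $A\cap K$ and two of $K\setminus A$ induce a $\overline{C_4}$ in $S(G,A)$, so $|A\cap K|\le 1$ or $|A\cap K|\ge|K|-1$ (and symmetrically for $I$), giving $O(n^2)$ solutions for split inputs; for non-split $G$ it simply asserts that every solution corresponds to an iteration of step~2. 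You instead run the enumeration uniformly, explicitly close the candidate family under complementation (using $S(G,A)=S(G,V(G)\setminus A)$), and characterize the escaping ``one-sided'' solutions as exactly the sides of split partitions of $G$ itself, of which there are at most $n$ by Proposition~\ref{prop:split-partitions}, plus $\emptyset$ and $V(G)$. Each approach buys something: your explicit complement-closure step patches a genuine looseness in the paper's one-line claim, since a solution contained in a single part of the image's partition (a proper nonempty subset of $K$, say) is never constructed directly by step~2 and arises only as the complement of a constructed set; the paper's $\overline{C_4}$ lemma, on the other hand, yields a sharper $O(n^2)$ count when $G$ is split and avoids your trivial/nontrivial dichotomy. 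One small point to tighten in your case analysis: fix a split partition of $S(G,A)$ with both parts nonempty (always possible when $n\ge 2$, by moving a single vertex across if needed), since otherwise ``meets both parts'' is vacuously unachievable for complete or edgeless images and the conclusion $A\in\{K,I,\emptyset,V(G)\}$ would not follow as stated.
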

\begin{proof}
  Suppose first that $G$ is a split graph, and let $K\uplus I$ be a split partition of $G$.
  The four trivial solutions are $V(G)$, $\emptyset$, $K$, and $I$.
  We first argue that for any solution $A$, we have $|A\cap K| \le 1$ or $|A\cap K| \ge |K| - 1$.
  Suppose for contradiction that
  \[
    2\le |A\cap K| \le |K| - 2.
  \]
  Then two vertices in $A\cap K$ and two vertices in $K\setminus A$ induce an $\overline{C_{4}}$ in $S(G, A)$, a contradiction.
  A symmetric argument show that $|A\cap I| \le 1$ or $|A\cap I| \ge |I| - 1$.
  Thus, the number of solutions is $O(n^2)$, and they can be enumerated in polynomial time.

  In the rest, $G$ is not a split graph.
  As we see in the proof of Theorem~\ref{thm:split}, any solution $A$ of a graph in the upper split switching class corresponds to an iteration of step 2 of Figure~\ref{alg:split}.
After finding a solution, if we we record the solution and continue, instead of terminating the algorithm after returning the solution, we can enumerate all possible solutions.
Since the number of iterations is polynomial on $n$, the statement follows.
\end{proof}

A \emph{pseudo-split graph} is either a split graph, or a graph whose vertex set can be partitioned into a clique $K$, an independent set $I$, and a set $H$ that (1) induces a $C_{5}$; (2) is complete to $K$; and (3) is nonadjacent to $I$.
We say that $K\uplus I \uplus H$ is a \emph{pseudo-split partition} of the graph, where $H$ may or may not be empty.
If $H$ is empty, then $K\uplus I$ is a split partition of the graph.  When $H$ is nonempty, the graph has a unique pseudo-split partition.
Similar to split graphs, the complement of a pseudo-split graph remains a pseudo-split graph.

For pseudo-split graphs, we may start with checking whether the input graph can be switched to a split graph.  We are done if the answer is ``yes.''
Henceforth, we are looking for a resulting graph that contains a hole $C_{5}$.
Suppose that $G$ is in the upper pseudo-split switching class.  Let $A$ be a solution, and $K\uplus I \uplus H$ is a \emph{pseudo-split partition} of $S(G, A)$.
We may assume that $|A\cap H| \ge 3$: otherwise, we replace $A$ with $V(G)\setminus A$.
The subgraph $G[H]$ must be one of Figure~\ref{fig:s-c4-c5}b, and $A\cap H$ are precisely the vertices represented as empty nodes.
We can guess the vertex set $H$ as well as its partition with respect to $A$, and then all the other vertices are fixed by the following observation:
\begin{itemize}
\item $K$ is complete to $H\cap A$ and nonadjacent to $H\setminus A$, and
\item $I$ is complete to $H\setminus A$ and nonadjacent to $H\cap A$.
\end{itemize}

\begin{figure}[h!]
  \centering 
  \begin{tikzpicture}
    \path (0,0) node[text width=.85\textwidth, inner xsep=20pt, inner ysep=10pt] (a) {
      \begin{minipage}[t!]{\textwidth}
        \begin{tabbing}
          AAA\=Aaa\=aaa\=Aaa\=MMMMMAAAAAAAAAAAA\=A \kill
          1. \> \textbf{if} $G$ can be switched to a split graph \textbf{then return} ``\YES'';
          \\
          2. \> \textbf{for each} vertex set $H$ such that $G[H]\in \mathcal{S}(C_{5})$ \textbf{do}
          \\
          2.0. \>\> $H_{1}\leftarrow$ the empty nodes of $G[H]$ as in Figure~\ref{fig:s-c4-c5}b; $H_{2}\leftarrow H\setminus H_{1}$;
          \\
          2.1. \>\> \textbf{for each} vertex $x$ in $V(G)\setminus H$ \textbf{do}
          \\
          2.1.1. \>\>\> \textbf{if} $N(x)\cap H$ is neither $H_{1}$ nor $H_{2}$ \textbf{then continue};
          \\
          2.2. \>\> \textbf{if} $N(H_{1})\setminus H$ does not induce a split graph \textbf{then continue};
          \\
          2.3. \>\> \textbf{if} $N(H_{2})\setminus H$ does not induce a split graph \textbf{then continue};
          \\
          2.4. \>\> \textbf{for each} split partition $K_{1}\uplus I_{1}$ of the subgraph induced by $N(H_{1})\setminus H$ \textbf{do}
          \\
          2.4.1. \>\>\> \textbf{for each} split partition $K_{2}\uplus I_{2}$ of the subgraph induced by $N(H_{2})\setminus H$ \textbf{do}
          \\
          2.4.1.1. \>\>\>\> \textbf{if} $S(G, H_{1} \cup K_1 \cup I_2)$ is a pseudo-split graph \textbf{then return} ``\YES'';
          \\
          3. \> \textbf{return} ``no.''
        \end{tabbing}
      \end{minipage}
    };
    \draw[draw=gray!60] (a.north west) -- (a.north east) (a.south west) -- (a.south east);
  \end{tikzpicture}
  \caption{The algorithm for pseudo-split graphs.}
  \label{alg:pseudo-split}
\end{figure}

\begin{theorem}
\label{thm:$2K_2 ,C_4$}
We can decide in polynomial time whether a graph can be switched to a pseudo-split graph.
\end{theorem}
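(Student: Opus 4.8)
The plan is to verify that the algorithm in Figure~\ref{alg:pseudo-split} is correct and runs in polynomial time, following the template of Theorem~\ref{thm:split}. Soundness is immediate, since the algorithm only answers ``\YES'' after it has exhibited a vertex set $A$ and checked that $S(G,A)$ is actually a pseudo-split graph (step~1 invokes Theorem~\ref{thm:split}, and step~2.4.1.1 tests the candidate directly). So the real task is completeness: assuming $G$ lies in the upper pseudo-split switching class, I must show the algorithm reports ``\YES.'' If $G$ can be switched to a split graph, step~1 already succeeds; otherwise fix a solution $A$ together with a pseudo-split partition $K\uplus I\uplus H$ of $S(G,A)$ in which $H\neq\emptyset$ induces a $C_5$. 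Since $S(G,A)=S(G,V(G)\setminus A)$, I may assume $|A\cap H|\ge 3$. Writing $H_1=A\cap H$ and $H_2=H\setminus H_1$, the identity $G[H]=S(C_5,A\cap H)$ shows $G[H]\in\SW{C_5}$ and that $H_1$ is exactly the empty-node set marked in Figure~\ref{fig:s-c4-c5}b; hence this very $H$, with this bipartition, is examined in some pass of step~2.

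The core of the argument is a short case analysis on the four cells $K\cap A$, $K\setminus A$, $I\cap A$, $I\setminus A$, recording how switching $A$ flips the edges incident to $H$. In $S(G,A)$ each vertex of $K$ is complete to $H$ and each vertex of $I$ is nonadjacent to $H$; undoing the switch (an edge between a vertex and $H$ is preserved exactly when both lie on the same side of $A$) yields, for every $x\in V(G)\setminus H$,
\[
N_G(x)\cap H=H_1\iff x\in(K\cap A)\cup(I\setminus A),
\qquad
N_G(x)\cap H=H_2\iff x\in(K\setminus A)\cup(I\cap A).
\]
In particular $N_G(x)\cap H\in\{H_1,H_2\}$ for every such $x$, so step~2.1.1 never discards this pass, and $V(G)\setminus H$ splits into $B_1=(K\cap A)\cup(I\setminus A)$ and $B_2=(K\setminus A)\cup(I\cap A)$; because $|H_1|\ge 3>0$ we have $B_1=N(H_1)\setminus H$. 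Applying the same ``same-side'' rule now to pairs of vertices outside $H$ shows $K\cap A$ and $K\setminus A$ are cliques in $G$ while $I\cap A$ and $I\setminus A$ are independent in $G$; hence $G[B_1]$ is a split graph with split partition $(K\cap A)\uplus(I\setminus A)$ and $G[B_2]$ is a split graph with split partition $(K\setminus A)\uplus(I\cap A)$, so the tests in steps~2.2--2.3 pass.

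It remains to recover $A$. By Proposition~\ref{prop:split-partitions} the two split graphs have only $O(n)$ split partitions each, all enumerated in steps~2.4 and~2.4.1, so some iteration hits $K_1=K\cap A$, $I_1=I\setminus A$, $K_2=K\setminus A$, $I_2=I\cap A$; for this choice $H_1\cup K_1\cup I_2=(A\cap H)\cup(A\cap K)\cup(A\cap I)=A$, and step~2.4.1.1 returns ``\YES.'' For the running time, there are $O(n^5)$ candidate sets $H$, and each is handled with a linear neighborhood scan plus two nested loops over $O(n)$ split partitions with a single pseudo-split test inside, so the total is polynomial. The step I expect to be most delicate is the bookkeeping above, where one must juggle the four cells together with two independent sign flips (edges from $H$ to the outside, and edges among the outside) without transposing $H_1$ with $H_2$ or $K_i$ with $I_i$. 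A genuine edge case hides here as well: when $G[H]=C_5$ the normalization forces $H_2=\emptyset$, so $B_2$ is the set of vertices \emph{non-adjacent} to $H$ rather than the literal $N(H_2)\setminus H$, and the argument must read step~2.3 as operating on this $B_2$ (equivalently $V(G)\setminus(H\cup N(H_1))$) for the reconstruction of $I_2=I\cap A$ to go through.
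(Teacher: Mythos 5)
Your proof is correct and follows essentially the same route as the paper's: soundness is immediate, and completeness is argued by normalizing $|A\cap H|\ge 3$, identifying $H_1=A\cap H$ with the empty nodes of Figure~\ref{fig:s-c4-c5}b, showing that both parts of the outside partition induce split graphs with split partitions $(K\cap A)\uplus(I\setminus A)$ and $(K\setminus A)\uplus(I\cap A)$, and recovering $A=H_1\cup K_1\cup I_2$ through the enumeration of Proposition~\ref{prop:split-partitions}. Your closing remark is moreover a genuine refinement rather than a quibble: when $G[H]=C_5$ the normalization forces $H_2=\emptyset$, so the paper's claim that $N(H_1)\setminus H$ and $N(H_2)\setminus H$ partition $V(G)\setminus H$ fails literally (vertices nonadjacent to $H$ lie in neither set), and reading step~2.3 as operating on $V(G)\setminus(H\cup N(H_1))$ is exactly the correction needed for $I_2=I\cap A$ to be recovered in that case.
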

\begin{proof}
  We use the algorithm described in Figure~\ref{alg:pseudo-split}.
  Since the algorithm returns ``\YES'' only when a solution is identified, it suffices to show a solution must be returned for a yes-instance.
  Suppose that $S(G, A)$ is a pseudo-split graph, and $K\uplus I\uplus H$ is a pseudo-split partition of $S(G, A)$.
  If $H$ is empty, then $S(G, A)$ is a split graph, and step~1 returns ``\YES.''
  Henceforth, $H\ne\emptyset$.
  We may assume that $|A\cap H| \ge 3$: otherwise, we replace $A$ with $V(G)\setminus A$.
  One of the iterations of step~2 uses the vertex set $H$.
  Note that $G[H] = S(C_{5}, A\cap H)$, and hence must be a graph in $\mathcal{S}(C_{5})$.  Moreover, $A\cap H$ are precisely the vertices represented as empty nodes in Figure~\ref{fig:s-c4-c5}b.
  Thus, $H_{1} = A\cap H$, and $H_{2} = H\setminus A$.
  By definition, $H$ is complete to $K$ and nonadjacent to $I$ in $S(G, A)$.
Thus, $N(H_{1})\setminus H$ and $N(H_{2})\setminus H$ is a partition of $V(G)\setminus H$.
  More specifically,
  \begin{align*}
    K &= (A\cap N(H_{1})\setminus H) \cup (N(H_{2})\setminus (A\cup H)),
    \\
    I &= (N(H_{1})\setminus (A\cup H)) \cup (A\cap N(H_{2})\setminus H).
  \end{align*}
  Thus, both $N(H_{1})\setminus H$ and $N(H_{2})\setminus H$ induce split graphs.
  The algorithm will pass the tests in steps~2.1--2.3.
  By Proposition~\ref{prop:split-partitions}, in one of the iterations of step~2.4, 
  \begin{align*}
    K_{1} &= A\cap N(H_{1})\setminus H,
    \\
    I_{2} &= A\cap N(H_{2})\setminus H.
  \end{align*}
  Since $A = H_{1}\cup K_1 \cup I_2$, the algorithm must return ``\YES'' in step~2.4.1.1 of this iteration.

  By Proposition~\ref{prop:split-partitions}, there is a linear number of iterations in step~2.4 and step~2.4.1.
  The algorithm takes $O(m^{3} n^4)$ time.
\end{proof}

Similar to Theorem~\ref{thm:split2}, we have the following result.

\begin{theorem}
  Let $G$ be a graph.  There are a polynomial number of subsets $A$ of $V(G)$ such that $S(G, A)$ is a pseudo-split graph, and they can be enumerated in polynomial time.
\end{theorem}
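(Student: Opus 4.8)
The plan is to mirror the proof of Theorem~\ref{thm:split2}, splitting the solutions into those that switch $G$ to a \emph{split} graph and those that switch it to a \emph{proper} pseudo-split graph, i.e.\ one whose pseudo-split partition $K \uplus I \uplus H$ has $H \neq \emptyset$. The split solutions are already known to be polynomially many and to be enumerable in polynomial time by Theorem~\ref{thm:split2}, so the entire task reduces to counting and listing the proper ones. I would simply append the list produced for split graphs to the list produced below.

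For the proper solutions I would argue that each such $A$ is pinned to a single branch of the algorithm in Figure~\ref{alg:pseudo-split}, so that a bound on the number of branches bounds the number of solutions. Fix a proper solution $A$ and let $K \uplus I \uplus H$ be the pseudo-split partition of $S(G,A)$. Since $H \neq \emptyset$, this partition is unique, hence $H$ is a function of $A$; as $|H| = 5$, there are only $O(n^{5})$ candidate sets $H$. Replacing $A$ by $V(G)\setminus A$ if necessary (which leaves $S(G,A)$ unchanged and so is again a solution), I may assume $|A\cap H|\ge 3$, so that $H_{1}:=A\cap H$ is one of the constantly many subsets $B\subseteq H$ with $S(G[H],B)=C_{5}$. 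As shown in the proof of Theorem~\ref{thm:$2K_2 ,C_4$}, once $H$ and $H_{1}$ are fixed the sets $N(H_{1})\setminus H$ and $N(H_{2})\setminus H$ (with $H_{2}=H\setminus H_{1}$) are determined, each induces a split graph, and $A$ is recovered as $H_{1}\cup K_{1}\cup I_{2}$ from a split partition $K_{1}\uplus I_{1}$ of $G[N(H_{1})\setminus H]$ together with a split partition $K_{2}\uplus I_{2}$ of $G[N(H_{2})\setminus H]$.

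The count then follows by multiplying independent choices: $O(n^{5})$ sets $H$, $O(1)$ admissible sets $H_{1}$, and, by Proposition~\ref{prop:split-partitions}, at most $n$ split partitions for each of $G[N(H_{1})\setminus H]$ and $G[N(H_{2})\setminus H]$. This yields $O(n^{7})$ proper solutions with $|A\cap H|\ge 3$; since every proper solution with $|A\cap H|\le 2$ is the complement of one of these (because $|H|=5$ forces $|(V(G)\setminus A)\cap H|\ge 3$), the total number of proper solutions is $O(n^{7})$. To enumerate them I would run the algorithm of Figure~\ref{alg:pseudo-split} but, instead of returning at step~2.4.1.1, record both $A$ and $V(G)\setminus A$ whenever the candidate passes the pseudo-split test, and then continue; combined with the split solutions from Theorem~\ref{thm:split2}, this lists every set $A$ with $S(G,A)$ pseudo-split in polynomial time.

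The only delicate point is ensuring completeness of the enumeration and justifying the $O(1)$ bound on admissible $H_{1}$. The uniqueness of the pseudo-split partition when $H\neq\emptyset$ is exactly what makes $H$, and hence $H_{1}=A\cap H$, a function of $A$, which is what attaches each solution to one branch; the $O(1)$ bound is immediate since there are at most $2^{5}$ subsets of a five-element set. The complement step is what recovers the low-intersection solutions. Everything else is the same bookkeeping already used for split graphs, so I expect no genuine obstacle beyond verifying that the recorded candidates indeed cover all solutions.
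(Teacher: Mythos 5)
Your proposal is correct and follows essentially the same route as the paper, which proves this theorem by the one-line remark that it is ``similar to Theorem~\ref{thm:split2}'': modify the algorithm of Figure~\ref{alg:pseudo-split} to record solutions rather than halt, attach each proper solution (after complementing so that $|A\cap H|\ge 3$) to a branch determined by $H$, $H_{1}$, and the two split partitions, and combine with the split-graph solutions from Theorem~\ref{thm:split2}. Your explicit accounting---uniqueness of the pseudo-split partition making $H$ a function of $A$, the $O(1)$ bound on $H_{1}$, Proposition~\ref{prop:split-partitions} for the partitions, and the complementation step for low-intersection solutions---is exactly the bookkeeping the paper leaves implicit.
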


As a result, we have an algorithm for any hereditary subclass $\mathcal{G}$ of pseudo-split graphs that can be recognized in polynomial time.

\begin{corollary}
  Let $\mathcal{G}$ be any subclass of pseudo-split graphs.  If $\mathcal{G}$ can be recognized in polynomial time, then we can decide in polynomial time whether a graph can be switched to a pseudo-split graph.  
\end{corollary}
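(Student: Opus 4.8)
The plan is to reduce the recognition of the upper $\mathcal{G}$ switching class, namely $\UP{\mathcal{G}}$, to the enumeration result established immediately above for pseudo-split graphs. Given an input graph $G$, we want to decide whether there is a set $A\subseteq V(G)$ with $S(G,A)\in\mathcal{G}$. The key observation is that, since $\mathcal{G}$ is a subclass of pseudo-split graphs, every such solution $A$ automatically makes $S(G,A)$ a pseudo-split graph. Consequently, every candidate solution already appears in the polynomial-size list produced by the preceding enumeration theorem, so we never need to examine all $2^{|V(G)|}$ subsets.

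Concretely, I would first invoke the enumeration theorem to obtain, in polynomial time, the list $\mathcal{A}$ of all subsets $A$ of $V(G)$ for which $S(G,A)$ is a pseudo-split graph; by that theorem, $|\mathcal{A}|$ is polynomial in $|V(G)|$. Then, for each $A\in\mathcal{A}$, I would compute $S(G,A)$ and test, using the assumed polynomial-time recognition algorithm for $\mathcal{G}$, whether $S(G,A)\in\mathcal{G}$. The procedure answers ``\YES'' if some test succeeds, and ``\NO'' otherwise.

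For correctness, suppose $G\in\UP{\mathcal{G}}$. Then some $A$ satisfies $S(G,A)\in\mathcal{G}$, and since $\mathcal{G}$ consists of pseudo-split graphs, $S(G,A)$ is pseudo-split; hence $A\in\mathcal{A}$ and the corresponding test reports membership. Conversely, a ``\YES'' answer exhibits a concrete $A\in\mathcal{A}$ with $S(G,A)\in\mathcal{G}$, witnessing $G\in\UP{\mathcal{G}}$. The total running time is polynomial: enumerating $\mathcal{A}$ is polynomial, and each of the polynomially many iterations spends polynomial time computing a switch and running the recognition routine for $\mathcal{G}$.

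There is essentially no obstacle beyond correctly applying the enumeration theorem; its entire force is that it shrinks the search space from exponentially many subsets to polynomially many explicit candidates. The only subtlety worth stating is the inclusion of $\mathcal{G}$ in the pseudo-split graphs, which is precisely what guarantees that restricting attention to $\mathcal{A}$ loses no solution.
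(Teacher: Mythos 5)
Your proposal is correct and matches the paper's intended argument exactly: the corollary follows immediately from the enumeration theorem, by listing the polynomially many sets $A$ with $S(G,A)$ pseudo-split and testing each switched graph with the assumed recognition algorithm for $\mathcal{G}$ (interpreting the corollary's conclusion, as you rightly did, as deciding whether $G$ can be switched into $\mathcal{G}$). Nothing is missing.
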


Since a graph has $2^{n}$ subsets, and the switching of only a polynomial number of them leads to a pseudo-split graph, every graph of sufficiently large order can be switched to a graph that is not a pseudo-split graph.

\begin{corollary}
  The lower pseudo-split switching class is finite.
\end{corollary}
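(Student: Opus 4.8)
The plan is to make rigorous the counting heuristic stated immediately before the corollary, using only the enumeration theorem established just above: for any graph $G$ on $n$ vertices, only polynomially many subsets $A\subseteq V(G)$ satisfy that $S(G,A)$ is a pseudo-split graph. First I would fix, once and for all, a polynomial $p$ such that for every graph $G$ on $n$ vertices the number of subsets $A$ with $S(G,A)$ pseudo-split is at most $p(n)$; this uniform $p$ is precisely what the enumeration theorem supplies. Then I would unfold the definition of the lower switching class: writing $\mathcal{G}$ for the class of pseudo-split graphs, a graph $G$ lies in $\LW{\mathcal{G}}$ exactly when $\SW{G}\subseteq\mathcal{G}$, that is, when $S(G,A)$ is a pseudo-split graph for \emph{every} one of the $2^{n}$ subsets $A\subseteq V(G)$.

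Next I would combine these two observations by a direct count. If $G$ belongs to the lower pseudo-split switching class, then all $2^{n}$ subsets of $V(G)$ produce a pseudo-split graph, so the number of such ``good'' subsets is exactly $2^{n}$; on the other hand the enumeration theorem caps this number at $p(n)$. Hence $2^{n}\le p(n)$. Because $2^{n}$ eventually exceeds every polynomial, there is a constant $N$ with $2^{n}>p(n)$ for all $n>N$, so no graph on more than $N$ vertices can lie in the lower class. Every graph in the lower pseudo-split switching class therefore has order at most $N$, and as there are only finitely many graphs (up to isomorphism) of bounded order, the class is finite.

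I do not expect a genuine obstacle here, since the argument is essentially bookkeeping once the enumeration theorem is available; the only two points requiring care are both about the counting. I must ensure the bound $p(n)$ is \emph{uniform}, i.e.\ depends only on $n$ and not on the particular graph $G$, which is how the theorem is phrased. And I must count \emph{subsets} $A$ rather than the distinct resulting graphs: although $A$ and $V(G)\setminus A$ switch to the same graph, the clean inequality $2^{n}\le p(n)$ compares the number of subsets on both sides, and it is exactly the number of subsets that the enumeration theorem bounds.
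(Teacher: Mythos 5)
Your proposal is correct and follows exactly the argument the paper gives (in its one-line remark preceding the corollary): since every graph in the lower class would make all $2^{n}$ subsets yield pseudo-split graphs, the uniform polynomial bound from the enumeration theorem forces a bound on the order, hence finiteness. Your two points of care—uniformity of the polynomial in $n$ and counting subsets rather than resulting graphs—are exactly the right ones, and both hold as the theorem is stated.
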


 \subsection{Paw-free graphs}

Since a paw contains an induced $C_{3}$ and an induced $\overline{P_3}$, both ${C_3}$-free graphs and $\overline{P_3}$-free graphs are paw-free.
Olariu~\cite{DBLP:journals/ipl/Olariu88} showed that
a connected paw-free graph is ${C_3}$-free or $\overline{P_3}$-free.
Note that $\overline{P_3}$-free graphs are precisely complete multipartite graphs, and a connected complete multipartite graphs either is trivial or has at least two parts.

Hayward~\cite{hayward1996recognizingp3} presented a polynomial-time algorithm for deciding whether a graph is switching equivalent to a $C_{3}$-free graph.  Kratochv{\'\i}l et al.~\cite{kratochvil1992computational} dealt with $P_3$-free graphs; see also Jel{\'{\i}}nkov{\'{a}} and Kratochv{\'{\i}}l~\cite{DBLP:journals/jgt/JelinkovaK14}.
We may start with calling these algorithms to check whether $G$ can be switched to a ${C_3}$-free graph or a $\overline{P_3}$-free graph.  We proceed only when the answers are both ``\NO.''  Hence, we are looking for a set $A\subseteq V(G)$ such that $S(G, A)$ is not connected and contains a triangle.
It is quite simple when $S(G, A)$ has three or more components.  We can always assume that $A$ intersects two of them.  We guess one vertex from each of these intersections, and an arbitrary vertex from another component (which can be in $A$ or not).  The three vertices are sufficient to determine $A$.
It is more challenging when $S(G, A)$ comprises precisely two components.
The crucial observation here is that one of the components is ${C_3}$-free and the other $\overline{P_3}$-free.
We have assumed the graph contains a triangle.
If both components contain triangles, hence $\overline{P_3}$-free, then $S(G,A)$ can be switched to a complete multipartite graph, contradicting the assumption above.
We guess a triple of vertices that forms a triangle in $S(G,A)$, and they can determine $A$.

A \emph{co-component} of a graph $G$ is a component of the complement of $G$.  Indeed, a graph is a complete multipartite if and only if every co-component is an independent set.

\begin{figure}[h!]
  \centering 
  \begin{tikzpicture}
    \path (0,0) node[text width=.85\textwidth, inner xsep=20pt, inner ysep=10pt] (a) {
      \begin{minipage}[t!]{\textwidth}
        \begin{tabbing}
          AAA\=Aaa\=aaa\=Aaa\=Aaa\=MMMMMAAAAAAAAAAAA\=A \kill
          1. \> \textbf{if} $G$ can be switched to a $\overline{P_{3}}$- or $C_{3}$-free graph \textbf{then return} ``\YES'';
          \\
          2. \> \textbf{for each} pair of nonadjacent vertices $u_1, u_2$ \textbf{do} // three or more components.
          \\
          2.1. \>\> \textbf{for each} $u_{3}\in V(G)\setminus N[u_1, u_2]$ \textbf{do} 
          \\
          2.1.1. \>\>\> $A\leftarrow \{x\in V(G)\mid |N[x]\cap \{u_1, u_2, u_3\} \le 1\}$;
\\
          2.1.2. \>\>\> \textbf{if} $S(G, A)$ is paw-free \textbf{then return} ``\YES'';
          \\
          2.2. \>\> \textbf{for each} $u_{3}\in N(u_1)\cap N(u_2)$ \textbf{do}  
          \\
          2.2.1. \>\>\> $A\leftarrow (V(G)\setminus N[u_{1}, u_{2}]) \cup ( (N[u_{1}]\Delta N[u_{2}])\setminus N(u_{3}) )$;
\\
          2.2.2. \>\>\> \textbf{if} $S(G, A)$ is paw-free \textbf{then return} ``\YES'';
          \\
          3. \> \textbf{for each} pair of adjacent vertices $u_1, u_2$ \textbf{do} // two components, one containing $C_3$.
          \\
          3.1. \>\> $p\leftarrow$ number of components of $G[N(u_{1})\cap N(u_{2})]$;
          \\
          3.2. \>\> $q\leftarrow$ number of components of $G - N[u_1, u_2]$;
          \\
          3.3. \>\> \textbf{for each} $I\subseteq \{1, \ldots, p\}$ and $J\subseteq \{1, \ldots, q\}$ with $|I|, |J| \le 2$ \textbf{do}
          \\
3.3.1. \>\>\> $X\leftarrow\bigcup_{i\not\in I} i\text{th co-component of } G[N(u_1)\cap N(u_2)]$;
          \\
          3.3.2. \>\>\> $Y\leftarrow\bigcup_{j\in J} j\text{th co-component of } G - N[u_1, u_2]$;
\\
          3.3.3. \>\>\> \textbf{if} $X\ne \emptyset$ \textbf{then}
          \\
          3.3.3.1. \>\>\>\> $u_{3}\leftarrow$ an arbitrary vertex from $X$;
          \\
          3.3.3.2. \>\>\>\> $A\leftarrow X\cup Y\cup ( ( N(u_{1}) \Delta N(u_{2}) )\cap N(u_3))$;
          \\
          3.3.4. \>\>\> \textbf{else}
          \\
          3.3.4.1. \>\>\>\> $u_{3}\leftarrow$ an arbitrary vertex from $V(G) \setminus ( N[u_1, u_2]\cup Y)$;
          \\
          3.3.4.2. \>\>\>\> $A\leftarrow X\cup Y\cup ( ( N(u_{1}) \Delta N(u_{2}) ) \setminus N(u_3))$;
          \\
          3.3.5. \>\>\> \textbf{if} $S(G, A)$ is paw-free \textbf{then return} ``\YES'';
          \\
          4. \> \textbf{return} ``\NO.''
        \end{tabbing}
      \end{minipage}
    };
    \draw[draw=gray!60] (a.north west) -- (a.north east) (a.south west) -- (a.south east);
  \end{tikzpicture}
  \caption{The algorithm for paw-free graphs.
}
  \label{alg:paw-free}
\end{figure}

\begin{theorem} \label{thm:paw-free}
  We can decide in polynomial time whether a graph can be switched to a paw-free graph.
\end{theorem}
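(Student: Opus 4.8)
The plan is to verify the algorithm in Figure~\ref{alg:paw-free}. Since it returns ``\YES'' only after explicitly checking that the constructed $S(G,A)$ is paw-free, soundness is immediate, and the work lies in showing that for a yes-instance some iteration constructs a valid set. So fix a solution $A$ with $S(G,A)$ paw-free. If $S(G,A)$ is connected, Olariu's theorem~\cite{DBLP:journals/ipl/Olariu88} makes it $C_3$-free or $\overline{P_3}$-free, and step~1 (via the algorithms of~\cite{hayward1996recognizingp3,kratochvil1992computational}, using $\overline{S(G,A)}=S(\overline G,A)$ from Proposition~\ref{prop:basic properties} for the $\overline{P_3}$-free check) returns ``\YES.'' Hence I may assume $S(G,A)$ is disconnected. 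Applying Olariu's theorem to each component, every component is triangle-free or complete multipartite. At most one component contains a triangle: if two did, both would be complete multipartite, and switching one entire component turns the union into the join of two complete multipartite graphs, which is again complete multipartite (its complement is a disjoint union of two cluster graphs); then $G$ would lie in the upper $\overline{P_3}$-free switching class and be caught by step~1. Since a disconnected graph containing a triangle is neither $C_3$-free nor $\overline{P_3}$-free (it contains an induced $K_2+K_1$), exactly one component is complete multipartite with a triangle and the rest are triangle-free.

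First I would handle the case that $S(G,A)$ has at least three components, corresponding to step~2. After possibly replacing $A$ with $V(G)\setminus A$ (Proposition~\ref{prop:basic properties}), the set $A$ meets at least two components; pick $u_1,u_2\in A$ from two of them and $u_3$ from a third. As $u_1,u_2$ lie in different components of $S(G,A)$ but on the same side of $A$, they are nonadjacent in $G$, matching the outer loop. The decisive point is that, with three fixed vertices in three distinct components, every other vertex $w$ is nonadjacent in $S(G,A)$ to at least two of $u_1,u_2,u_3$; translating each such nonadjacency through the switch pins down whether $w\in A$ purely from the $G$-adjacencies of $w$ to the anchors, which is exactly the rule computed in steps~2.1.1 and~2.2.1. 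Whether $u_3\in A$ is detected by whether $u_3$ is a common nonneighbour or a common neighbour of $u_1,u_2$ in $G$, which is precisely the split between the loop of step~2.1 and that of step~2.2.

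The harder case, treated in step~3, is when $S(G,A)$ has exactly two components, one of which, say $D_1$, is complete multipartite with a triangle while $D_2$ is triangle-free. Here I would fix an edge $u_1u_2$ of $G$ that the switch keeps inside $D_1$ as a triangle edge (so $u_1,u_2$ lie on the same side of $A$), together with a third triangle vertex $u_3$. Relative to a triangle of $D_1$, every other vertex of $D_1$ is adjacent in $S(G,A)$ to at least two triangle vertices while every vertex of $D_2$ is adjacent to none, so the partition $\{D_1,D_2\}$ is recoverable from adjacencies to the triangle. The remaining freedom is how $A$ slices the parts of $D_1$ and the structure of $D_2$, and here I would argue, in the spirit of the split-graph analysis of Theorem~\ref{thm:split2}, that a solution cannot split more than a bounded number of the relevant co-components without creating a forbidden configuration, so that guessing the sets $I,J$ of size at most two in step~3 suffices. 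Verifying that this bounded guessing genuinely reconstructs a valid $A$—recovering the membership of each vertex through the symmetric difference $N(u_1)\Delta N(u_2)$ and the neighbourhood of $u_3$—is the main obstacle and the most delicate part of the case analysis. Finally, the algorithm performs a polynomial number of iterations, each doing polynomial work (the paw-free test and the neighbourhood computations), so the procedure runs in polynomial time.
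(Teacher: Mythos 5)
Your handling of step~1 and of the three-or-more-components case matches the paper's proof and is sound: the reduction via Olariu's theorem, the argument that at most one component of $S(G,A)$ contains a triangle (switching one of two complete multipartite components yields a connected complete multipartite graph, which step~1 would already have caught), and the observation that three anchors in three distinct components pin down membership in $A$ from $G$-adjacencies alone are exactly the paper's steps~1 and~2. The genuine gap is the exactly-two-components case, which you explicitly defer (``the main obstacle''), and which is the actual content of the theorem: without it, completeness of the algorithm --- that some iteration of step~3 reconstructs a solution for every yes-instance --- is unproven. Moreover, your sketch of how that case would go mischaracterizes what must be shown. The claim is \emph{not} that ``a solution cannot split more than a bounded number of the relevant co-components'': in fact no co-component of $G[N(u_1)\cap N(u_2)]$ or of $G-N[u_1,u_2]$ is split at all, and the bounded quantity is the number of co-components lying inside the triangle-free component $G_2$.

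Concretely, the missing lemmas are: (i) $N(u_1)\cap N(u_2)$ decomposes as $\bigl(N(u_1)\cap N(u_2)\cap A\bigr)\subseteq V(G_1)$ together with $V(G_2)\setminus A$, and every crossing pair has exactly one endpoint in $A$ while being nonadjacent in $S(G,A)$ (different components), hence is adjacent in $G$; thus $\overline{G[N(u_1)\cap N(u_2)]}$ has no edges across the $G_1/G_2$ boundary and each co-component lies wholly in $G_1$ or wholly in $G_2$; (ii) the $G_2$-side vertices all lie on the same side of $A$, so their mutual $G$-edges survive the switch, and since distinct co-components are pairwise completely joined in $G$, three co-components inside the triangle-free $G_2$ would force a triangle --- this, not a split-bound, is what justifies $|I|\le 2$; (iii) the symmetric statement for $G-N[u_1,u_2]$ (via $\overline{S(G,A)}=S(\overline G,A)$ from Proposition~\ref{prop:basic properties}), giving $|J|\le 2$; and (iv) every vertex of $N(u_1)\Delta N(u_2)\setminus\{u_1,u_2\}$ is adjacent in $S(G,A)$ to exactly one of $u_1,u_2$, hence lies in the parts $V_1\cup V_2$ of $G_1$ and is adjacent there to $u_3$, which is what converts adjacency to $u_3$ in $G$ into membership in $A$ (with the $\cap N(u_3)$ versus $\setminus N(u_3)$ dichotomy governed by whether $u_3$ can be chosen inside $A$, i.e.\ whether $X\neq\emptyset$). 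Your appeal to Theorem~\ref{thm:split2} does not supply any of this: there the bound came from an induced $\overline{C_4}$ inside a clique of a split partition, a mechanism with no counterpart here. So the proposal correctly reproduces the easy half of the paper's argument but leaves its crux --- the co-component analysis that makes the size-two guessing in step~3.3 exhaustive --- unestablished.
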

\begin{proof}
  We use the algorithm described in Figure~\ref{alg:paw-free}.
Since the algorithm returns ``\YES'' only when a solution is identified (in steps~1--3), it suffices to show that given a yes-instance, the algorithm always finds a solution.
Suppose that there is a set $A\subseteq V(G)$ such that $S(G, A)$ is paw-free.
  If $S(G, A)$ is ${C_3}$-free or $\overline{P_3}$-free, then step~1 returns ``\YES.''
Henceforth, $S(G, A)$ is not connected and contains a triangle.
  
  Suppose first that $S(G, A)$ consists of three or more components.  We may assume without loss of generality that $A$ intersects two or more components of $S(G, A)$; otherwise, we replace $A$ with $V(G)\setminus A$.
  We number the components of $S(G, A)$ as $V_{1}$, $\ldots$, $V_{p}$ such that those with nonempty intersection with $A$ are ordered first.
  In one of the iterations, $u_{1} \in A\cap V_{1}$ and $u_{2} \in A\cap V_{2}$ (note that there is no edge between $A\cap V_{1}$ and $A\cap V_{2}$).
  Depending on whether $A\cap V_{3}$ is empty (i.e., whether $A\subseteq (V_1 \cup V_2)$), we separate into two cases.
  \begin{itemize}
  \item Case 1, $A\cap V_{3}\ne \emptyset$.
    In one of the iterations of step~2.1, $u_{3}$ is a vertex in $A\cap V_{3}$.
    Note that in $S(G, A)$, no vertex is adjacent to two or more vertices in $\{u_1, u_2, u_3\}$.
    Thus, a vertex $x\in V(G)\setminus \{u_1, u_2, u_3\}$ is in $A$ if and only if it is adjacent to at most one vertex in $\{u_1, u_2, u_3\}$.   In this case, step~2.1.2 returns ``\YES.''
  \item Case 2, $A\cap V_{3} = \emptyset$, i.e., $A\subseteq (V_1 \cup V_2)$.
    In one of the iterations of step~2.2, $u_{3}$ is a vertex in $V_{3}$.
    Since no vertex is adjacent to both $u_1$ and $u_2$ in $S(G, A)$, 
  \[
    V(G)\setminus (N[u_{1}, u_{2}]) \subseteq A \subseteq V(G)\setminus (N(u_{1})\cap N(u_{2})).
  \]
  It remains to deal with vertices in $N[u_{1}]\Delta N[u_{2}]$.  Note that each vertex in $N[u_{1}]\Delta N[u_{2}]\setminus \{u_1, u_2\}$ is adjacent to precisely one of $u_1$ and $u_2$ in $S(G, A)$.  Thus, $N[u_{1}]\Delta N[u_{2}]\subseteq V_{1}\cup V_{2}$, and
  \[
    A \cap (N[u_{1}]\Delta N[u_{2}]) = (N[u_{1}]\Delta N[u_{2}])\setminus N(u_{3}).
  \]
  Thus, step~2.2.2 returns ``\YES'' in this case.
  \end{itemize}

  In the sequel, $S(G, A)$ consists of precisely two components, denoted by $G_{1}$ and $G_{2}$.
  Recall that there exists a triangle in $S(G, A)$.
  May assume without loss of generality that $G_{1}$ contains a triangle.
  We argue that $G_{2}$ is triangle-free.
  Suppose that both $G_{1}$ and $G_{2}$ contain triangles, and hence they are complete multipartite graphs with at least three parts \cite{DBLP:journals/ipl/Olariu88}.
  Then
  \[
    S(G, A\Delta V(G_1)) = S(S(G, A), V(G_1))
  \]
  is a connected complete multipartite graph, and step~1 should have returned ``\YES.''

  We fix a triangle $\{u_{1}, u_{2}, u_{3}\}$ of $S(G, A)$.
  We may assume without loss of generality that $\{u_{1}, u_{2}\}\subseteq A$; otherwise, we replace $A$ with $V(G)\setminus A$ and renumber the vertices.
  Since $u_{1}$, $u_{2}$, and $u_{3}$ are pairwise adjacent in $S(G, A)$, they belong to different parts of $G_{1}$.
  We number the parts of $G_{1}$ as $V_{1}$, $\ldots$, $V_{r}$ such that $u_{i}\in V_{i}, i = 1, 2, 3$.
  Then
  \[
    N(u_{1})\cap N(u_{2})\cap V(G_1) = A\cap \bigcup_{i = 3}^{r} V_{i} \subseteq A.
  \]
Note that no vertex in $V(G_2)\cap A$ is adjacent to $u_1$ or $u_2$, while every vertex in $V(G_2)\setminus A$ is adjacent to both $u_1$ and $u_2$.
  One of the iterations of step~3 identifies $u_{1}$ and $u_{2}$ correctly. 
If $N(u_{1})\cap N(u_{2})$ intersects both $V(G_1)$ and $V(G_2)$, then all the edges between these two subsets are present in $G$.  Hence, the complement of $G[N(u_{1})\cap N(u_{2})]$ is not connected, and a co-component of $G[N(u_{1})\cap N(u_{2})]$ is completely contained in either $G_{1}$ or $G_{2}$.
  Since $G_2$ is $C_3$-free, it contains at most two co-components of $G[N(u_{1})\cap N(u_{2})]$.
  By symmetry, $G_{2}$ contains at most two co-components of $G - N[u_1, u_2]$.
(If a co-component of $G[N(u_{1})\cap N(u_{2})]$ or $G - N[u_1, u_2]$ is not an independent set, then it must be in $G_2$.)
  In one of the iterations of step~3.3, $I$ and $J$ comprise the indices of the co-components of, respectively, $G[N(u_{1})\cap N(u_{2})]$ and $G - N[u_1, u_2]$ in $G_{2}$.
  Then the two sets defined in steps~3.3.1 and~3.3.2 are
  \begin{align*}
    X &= N(u_{1})\cap N(u_{2})\cap V(G_{1}) = N(u_{1})\cap N(u_{2})\cap A,
    \\
    Y &= V(G_{2})\setminus N[u_{1}, u_{2}] = A\cap V(G_{2}).    
  \end{align*}
  It remains to deal with vertices in $N(u_{1})\Delta N(u_{2})$.
  Note that each vertex in $N(u_{1})\Delta N(u_{2})\setminus \{u_1, u_2\}$ is adjacent to precisely one of $u_1$ and $u_2$ in $S(G, A)$.  Thus, $N(u_{1})\Delta N(u_{2})\subseteq V_{1}\cup V_{2}$, and they are adjacent to $u_{3}$ in $S(G, A)$.
If $X = \emptyset$, then for any vertex $u_{3}\in X\subseteq A$, 
    \[
      A \cap (N(u_{1})\Delta N(u_{2})) = ( N(u_{1}) \Delta N(u_{2}) )\cap N(u_3).
    \]
    Otherwise, for any vertex $u_{3}\in V(G) \setminus ( N[u_1, u_2]\cup Y) = \left(  \bigcup_{i = 3}^{r} V_{i} \right) \setminus A$, 
    \[
      A \cap (N(u_{1})\Delta N(u_{2})) = ( N(u_{1}) \Delta N(u_{2}) )\setminus N(u_3).
    \]
    In either case, Step~3.3.5 always returns ``\YES.''

  The algorithm takes $O(n^4(m+n))$ time.
\end{proof}

 \subsection{$\{K_{1,p}, \overline{K_{1,q}}\}$-free graphs}

This section deals with $\{K_{1,p}, \overline{K_{1,q}}\}$-free graphs.  Since it is trivial when one of $p$ and $q$ is one, we assume throughout that $p,q\geq 2$. 
For a pair of positive integers $p$ and $q$, a graph is a \emph{$(p, q)$-split graph} if its vertex set can be partitioned into $S$ and $T$ such that
$G[S]$ is $K_{p+1}$-free and $G[T]$ is $\overline{K_{q+1}}$-free.
The partition $(S, T)$ is called a \emph{$(p,q)$-split partition} of $G$.
Note that $(1, 1)$-split graphs are precisely split graphs.
The key observation is that if $G$ is a yes-instance, then for any vertex $u$, both subgraphs $G[N[u]]$ and $G - N[u]$ are $(p - 1, q - 1)$-split graphs.

\begin{proposition} [\cite{DBLP:conf/fsttcs/KolayP15, DBLP:journals/algorithmica/AntonyGPSSS22}] \label{star, co-star: prop1.1}
For any pair of fixed positive integers $p$ and $q$, we can decide in polynomial time whether a graph $G$ is a $(p,q)$-split graph, and if the answer is yes, produce in polynomial time all $(p,q)$-split partitions of $G$.
\end{proposition}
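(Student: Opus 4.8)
The plan is to solve the two tasks together: I will describe a search that, for a yes-instance, outputs every $(p,q)$-split partition, and for a no-instance halts with nothing, so recognition is just the test whether at least one partition survives. Two easy cases are disposed of first. If $G$ is itself $K_{p+1}$-free then $(V(G),\emptyset)$ is a partition, and if $G$ is $\overline{K_{q+1}}$-free then $(\emptyset,V(G))$ is one; both tests take $n^{O(p+q)}$ time by scanning all $(p+1)$- and $(q+1)$-subsets. Hence I may assume that in every valid partition $(S,T)$ the graph $G[S]$ has a clique of size between $1$ and $p$ and $G[T]$ has an independent set of size between $1$ and $q$.

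First I would guess the two extremal objects of a hypothetical partition $(S,T)$: a maximum clique $C$ of $G[S]$ (so $|C|\le p$) and a maximum independent set $J$ of $G[T]$ (so $|J|\le q$). Since $|C|+|J|\le p+q$ is a constant, there are only $n^{O(p+q)}$ choices, and I iterate over all of them. Fixing these extremal sets forces the side of many vertices. Because $C$ is a \emph{maximum} clique of $G[S]$, any vertex adjacent to all of $C$ cannot lie in $S$ (it would create a $K_{|C|+1}$) and is therefore forced into $T$; dually, any vertex nonadjacent to all of $J$ is forced into $S$. A guess is discarded at once if some vertex is forced to both sides, if forcing already produces a monochromatic $K_{p+1}$ or $\overline{K_{q+1}}$, or if $C\not\subseteq S$ or $J\not\subseteq T$ after forcing. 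These checks use only bounded-size certificates and run in $n^{O(p+q)}$ time.

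The remaining ``free'' vertices are exactly those that miss some vertex of $C$ and hit some vertex of $J$. The crucial structural step I would establish is that, relative to the already-fixed part, the free vertices admit only polynomially many legal completions. Here I exploit the elementary fact that a clique and an independent set of $G$ share at most one vertex, together with a Ramsey-type bound $R(p+1,q+1)$ controlling how a $K_{p+1}$-free set and a $\overline{K_{q+1}}$-free set can interleave: grouping the free vertices according to which vertex of $C$ they miss and which vertex of $J$ they hit partitions them into $O(pq)$ classes, and within each class the colouring reduces to a strictly smaller instance in which $p$ or $q$ has decreased, resolved by induction on $p+q$ with split-graph recognition (Proposition~\ref{prop:split-partitions}) as the base case. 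Each leaf of this bounded recursion yields at most polynomially many partitions, and iterating the forcing rules to a fixpoint determines the rest deterministically; collecting the distinct outputs enumerates all partitions.

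The main obstacle is precisely the analysis of the free vertices: the two covering requirements pull in opposite directions — clique constraints push vertices toward $T$ while independent-set constraints push them toward $S$ — so I must show that, once the extremal clique and independent set are fixed, these residual constraints \emph{decouple} into independently solvable lower-parameter pieces instead of interacting globally. The clique/independent-set intersection lemma and the Ramsey bound are what prevent a blow-up, keeping both the running time and the number of reported partitions polynomial for every fixed $p$ and $q$; recognition then follows by answering ``yes'' iff at least one partition is produced.
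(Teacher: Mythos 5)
First, a point of reference: the paper gives no proof of this proposition at all --- it is imported from the cited works \cite{DBLP:conf/fsttcs/KolayP15, DBLP:journals/algorithmica/AntonyGPSSS22}, whose underlying argument is the classical ``sparse--dense partition'' theorem of Feder, Hell, Klein, and Motwani. So your proposal must be judged as a from-scratch proof, and it has a genuine gap at exactly the step you yourself flag as the main obstacle. Your trivial cases, your guessing of a maximum clique $C$ of $G[S]$ and a maximum independent set $J$ of $G[T]$, and your two forcing rules are all correct. But the crucial claim --- that the remaining free vertices, grouped into $O(pq)$ classes by a missed vertex of $C$ and a hit vertex of $J$, ``decouple into independently solvable lower-parameter pieces'' with $p$ or $q$ strictly decreased --- is asserted, not proved, and I see no way to make it true as stated. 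A forbidden $K_{p+1}$ inside $S$ may use one free vertex from each of $p+1$ \emph{different} classes together with already-forced vertices, and dually for a $\overline{K_{q+1}}$ inside $T$; nothing in your setup localizes these witnesses to a single class. Nor does membership in a class $(c,j)$ reduce the parameter: a vertex nonadjacent to $c\in C$ can still participate in cliques of size $p+1$ in $S$ built from forced $S$-vertices and other free vertices, so the ``instance with $p$ decreased'' that your induction on $p+q$ needs simply does not exist. The Ramsey bound $R(p+1,q+1)$ and the clique/independent-set intersection fact are invoked but never actually connected to the classes, so the induction has no content, and with it the polynomial bound on the number of partitions --- which your recognition-via-enumeration strategy also depends on --- is unestablished.

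The standard (and correct) route applies the Ramsey bound to \emph{pairs of partitions}, not to classes of free vertices. If $(S_1,T_1)$ and $(S_2,T_2)$ are both $(p,q)$-split partitions, then $G[S_1\cap T_2]$ is simultaneously $K_{p+1}$-free and $\overline{K_{q+1}}$-free, so $|S_1\cap T_2| < R(p+1,q+1)$, and symmetrically $|S_2\cap T_1| < R(p+1,q+1)$. Hence all partitions lie within bounded symmetric difference of any single one, which immediately bounds their number by $n^{2R(p+1,q+1)}$ --- for every induced subgraph as well, since the classes are hereditary. This yields a simple incremental enumeration: process the vertices one at a time, maintain the (polynomially long) list of all $(p,q)$-split partitions of the current induced subgraph, extend each by placing the new vertex on either side, and filter by the polynomial-time $K_{p+1}$/$\overline{K_{q+1}}$ checks; Proposition~\ref{prop:split-partitions} is the $p=q=1$ instance of the same phenomenon rather than a base case of a parameter induction. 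If you want to salvage your guess-and-force architecture, you would still need this pairwise symmetric-difference lemma (or an equivalent) to control the completions of the free vertices; without it, the proposal does not go through.
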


\begin{figure}[h!]
  \centering 
  \begin{tikzpicture}
    \path (0,0) node[text width=.85\textwidth, inner xsep=20pt, inner ysep=10pt] (a) {
      \begin{minipage}[t!]{\textwidth}
        \begin{tabbing}
          AAA\=Aaa\=aaa\=Aaa\=MMMMMAAAAAAAAAAAA\=A \kill
          1. \> \textbf{if} $G$ is $\{K_{1,p}, \overline{K_{1,q}}\}$-free \textbf{then return} ``\YES'';
          \\
          2. \> fix an arbitrary vertex $u$;
          \\
          3. \> \textbf{if} $G[N[u]]$ or $G - N[u]$ is not a $(q-1, p-1 )$-split graph \textbf{then return} ``\NO'';
          \\
          4. \> \textbf{for each} $(q-1, p-1)$-split partition $(S_1, T_1)$ of $G[N[u]]$ \textbf{do}
          \\
          4.1. \>\> \textbf{for each} $(q-1, p-1)$-split partition $(S_2, T_2)$ of $G - N[u]$ \textbf{do}
          \\
          4.1.1. \>\>\> \textbf{if} $S(G, T_1\cup S_2)$ is $\{K_{1,p}, \overline{K_{1,q}}\}$-free \textbf{then return} ``\YES'';
          \\
          5. \> \textbf{return} ``\NO.''
        \end{tabbing}
      \end{minipage}
    };
    \draw[draw=gray!60] (a.north west) -- (a.north east) (a.south west) -- (a.south east);
  \end{tikzpicture}
  \caption{The algorithm for $\{K_{1,p}, \overline{K_{1,q}}\}$-free graphs.}
  \label{alg:star,co-star}
\end{figure}

\begin{theorem}
  \label{star, co-star: thm1}
  For any pair of integers $p,q\geq 2$, we can decide in polynomial time whether a graph can be switched to a $\{K_{1,p}, \overline{K_{1,q}} \}$-free graph.
\end{theorem}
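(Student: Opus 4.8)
The plan is to establish the correctness and the polynomial running time of the algorithm in Figure~\ref{alg:star,co-star}. Since the algorithm returns ``\YES'' only after explicitly checking that some switching $S(G,\cdot)$ is $\{K_{1,p},\overline{K_{1,q}}\}$-free (in steps~1 and~4.1.1), soundness is immediate, and the substance is to show that for every yes-instance the algorithm does return ``\YES.'' The first thing I would record is a local reformulation of the target property: a graph is $K_{1,p}$-free if and only if the neighborhood of every vertex induces a $\overline{K_p}$-free graph (no independent set of size $p$), and it is $\overline{K_{1,q}}$-free if and only if the non-neighborhood of every vertex induces a $K_q$-free graph (note $\overline{K_{1,q}}=K_q+K_1$). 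These are exactly what links the global property at $u$ to the $(q-1,p-1)$-split structure tested in step~3.

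Next, suppose $S(G,A)$ is $\{K_{1,p},\overline{K_{1,q}}\}$-free and let $u$ be the vertex fixed in step~2. Since $S(G,A)=S(G,V(G)\setminus A)$, I may assume $u\notin A$. Writing $P=N_G(u)$ and $Q=V(G)\setminus N_G[u]$, I split each according to the solution: $A_P=A\cap P$ and $A_Q=A\cap Q$. Tracing the switching with $u\notin A$, in $H:=S(G,A)$ the neighbors of $u$ are exactly $(P\setminus A_P)\cup A_Q$ and the non-neighbors of $u$ are exactly $A_P\cup(Q\setminus A_Q)$. The key computation I would carry out uses that induced subgraphs commute with switching, $S(G,A)[U]=S(G[U],A\cap U)$, together with the fact that switching an entire vertex set does nothing; these give $H[P\setminus A_P]=G[P\setminus A_P]$, $H[A_P]=G[A_P]$, $H[A_Q]=G[A_Q]$, and $H[Q\setminus A_Q]=G[Q\setminus A_Q]$.

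Feeding these identities into the reformulation yields the key observation. Because $N_H(u)$ induces a $\overline{K_p}$-free graph, its subsets $P\setminus A_P$ and $A_Q$ induce $\overline{K_p}$-free graphs in $G$; because the non-neighborhood of $u$ in $H$ induces a $K_q$-free graph, its subsets $A_P$ and $Q\setminus A_Q$ induce $K_q$-free graphs in $G$. Hence $(A_P,\,\{u\}\cup(P\setminus A_P))$ is a $(q-1,p-1)$-split partition of $G[N[u]]$—here I would observe that $u$ is universal in $G[N[u]]$, so adding it to the $\overline{K_p}$-free side preserves $\overline{K_p}$-freeness because $p\ge 2$—and $(Q\setminus A_Q,\,A_Q)$ is a $(q-1,p-1)$-split partition of $G-N[u]$. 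This justifies step~3 and guarantees that step~4 enumerates, among others, $T_1=\{u\}\cup(P\setminus A_P)$ and $S_2=Q\setminus A_Q$. Since $V(G)\setminus A=\{u\}\cup(P\setminus A_P)\cup(Q\setminus A_Q)=T_1\cup S_2$, the set tested in step~4.1.1 for this pair is $S(G,T_1\cup S_2)=S(G,V(G)\setminus A)=S(G,A)$, which is $\{K_{1,p},\overline{K_{1,q}}\}$-free, so the algorithm returns ``\YES.''

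Finally, for the running time I would invoke Proposition~\ref{star, co-star: prop1.1}: for fixed $p,q$ the $(q-1,p-1)$-split partitions of $G[N[u]]$ and of $G-N[u]$ can each be enumerated in polynomial time and there are only polynomially many of them, while testing $\{K_{1,p},\overline{K_{1,q}}\}$-freeness of a candidate switching is polynomial; hence the whole procedure is polynomial. I expect the main obstacle to be the bookkeeping of the third paragraph, namely verifying that the partition pieces forced by the unknown solution $A$ are precisely the pieces the algorithm ranges over, placing $u$ on the correct (the $\overline{K_p}$-free) side, and confirming that switching the complement $V(G)\setminus A$ rather than $A$ recovers the same graph. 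Everything else follows directly from the neighborhood reformulation and the commutation of switching with taking induced subgraphs.
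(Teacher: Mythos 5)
Your proof is correct and follows essentially the same route as the paper: both derive $(q-1,p-1)$-split partitions of $G[N[u]]$ and $G-N[u]$ from the $K_{1,p}$- and $\overline{K_{1,q}}$-freeness of $S(G,A)$ at $u$, then argue that one iteration of the enumeration in step~4 reproduces the solution. The only (cosmetic) difference is your normalization $u\notin A$, so the successful iteration switches $T_1\cup S_2=V(G)\setminus A$ rather than $A$ itself, which is equivalent since $S(G,A)=S(G,V(G)\setminus A)$.
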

\begin{proof}
  We use the algorithm described in Figure~\ref{alg:star,co-star}.
  The algorithm returns ``yes'' only when a solution is verified.  Thus, it suffices to show that it always returns ``yes'' for a yes-instance.
  Step~1 takes care of the trivial case, when the input graph $G$ is $\{K_{1,p}, \overline{K_{1,q}}\}$-free.
  Henceforth, the input graph $G$ is not $\{K_{1,p}, \overline{K_{1,q}}\}$-free.
  Let $A$ be a solution containing the vertex $u$ chosen in step~2.

  Since $S(G, A)$ is $K_{1,p}$-free, there cannot be an independent set of size $p$ in $ N[u]\cap {A}$ or $V(G)\setminus (A\cup N[u])$.
  Likewise, since $S(G, A)$ is $\overline{K_{1,q}}$-free, there cannot be a clique of size $q$ in $ N[u]\setminus A$ or $A\setminus N[u]$.
  Thus,
  \begin{itemize}
  \item $(N[u]\setminus A, N[u]\cap A)$ is a $(q-1, p-1 )$-split partition of $G[N[u]]$, and
  \item   $(A\setminus N[u], V(G)\setminus (A\cup N[u]))$ is a $(q-1, p-1 )$-split partition of $G - {N[u]}$.
  \end{itemize}  
  Step~4 tries all $(q-1, p-1 )$-split partitions of $G[N[u]]$ and $G - {N[u]}$.
In one of the iterations,
  \begin{align*}
    T_{1} &= N[u]\cap A,
    \\
    S_{2} &= A\setminus N[u],
  \end{align*}
and hence $A = T_1 \cup S_2$.  Thus, the algorithm must return ``\YES'' in this iteration.
  
  The main work is done in step~4.
  By Proposition~\ref{star, co-star: prop1.1}, there are a polynomial number of iterations, and each iteration can be conducted in polynomial time.  Thus, the total time is polynomial.
\end{proof}

The algorithm in Figure~\ref{alg:star,co-star} is also applicable to any subclass of $\{K_{1,p}, \overline{K_{1,q}} \}$-free graphs that can be recognized in polynomial-time.   In particular, it can handle $\{K_{p}, K_{1,q} \}$-free graphs as well as $\{\overline{K_{p}}, \overline{K_{1,q}}\}$-free graphs.

\begin{theorem}
  \label{star,co-star: thm1}
  We can decide in polynomial time whether a graph can be switched to a $\{K_{1,p}, \overline{K_{1,q}} \}$-free graph.
\end{theorem}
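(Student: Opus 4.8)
The plan is to derive the fully general statement (all positive integers $p,q$) from the case $p,q\ge 2$, which is exactly Theorem~\ref{star, co-star: thm1}. When $p,q\ge 2$ the preceding theorem already supplies a polynomial-time algorithm, so the only new work is to settle the degenerate values $p=1$ and $q=1$; the section's own remark flags these as trivial, and I would simply make that precise. By Proposition~\ref{pro:switching classes}(\ref{pro:switching classes:item:complement}), switching $G$ to a $\{K_{1,p},\overline{K_{1,q}}\}$-free graph is equivalent to switching $\overline{G}$ to a $\{\overline{K_{1,p}},K_{1,q}\}$-free graph, which is the same problem with the roles of $p$ and $q$ exchanged. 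Hence the instances $(G;p,q)$ and $(\overline{G};q,p)$ are equivalent, the two boundary cases are mirror images, and it suffices to treat $p=1$.

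At $p=1$ I would unwind the definitions. Since $K_{1,1}=K_2$, being $K_{1,1}$-free is the same as being edgeless; and for $q\ge 2$ the graph $\overline{K_{1,q}}$ contains a $K_q$, hence an edge, so every edgeless graph is vacuously $\overline{K_{1,q}}$-free. Thus for $p=1,\ q\ge 2$ the target class is exactly the edgeless graphs, and the question becomes whether $G$ can be switched to an edgeless graph. Using the edge-set formula for $S(G,A)$ from the preliminaries, $S(G,A)$ is edgeless precisely when both $A$ and $V(G)\setminus A$ are independent in $G$ and $A$ is complete to $V(G)\setminus A$; that is, when $(A,V(G)\setminus A)$ exhibits $G$ as a complete bipartite graph. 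Hence $G$ lies in this upper switching class if and only if $G$ is complete bipartite, which is decidable in linear time.

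The mirror case $q=1,\ p\ge 2$ follows by complementation: by the reduction above it amounts to asking whether $\overline{G}$ can be switched to an edgeless graph, i.e.\ whether $\overline{G}$ is complete bipartite, equivalently whether $G$ is a disjoint union of (at most) two cliques. The final corner $p=q=1$ makes the target graphs $\{K_2,2K_1\}$-free, hence of order at most one, so the answer is ``yes'' exactly when $|V(G)|\le 1$. I do not anticipate a real obstacle here, since the substance is carried entirely by Theorem~\ref{star, co-star: thm1}; the only point demanding a little care is the clean equivalence ``switchable to edgeless $\Leftrightarrow$ complete bipartite'' (dually, ``switchable to complete $\Leftrightarrow$ union of two cliques''), together with the observation that at a boundary value one of the two forbidden subgraphs renders the other vacuous. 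Assembling the four cases yields the claimed polynomial-time test for all $p,q\ge 1$.
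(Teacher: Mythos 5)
Your proposal is correct and follows essentially the same route as the paper: the paper proves the statement for $p,q\ge 2$ via the algorithm of Figure~\ref{alg:star,co-star} (Theorem~\ref{star, co-star: thm1}) and, at the start of the section, declares the cases where $p=1$ or $q=1$ trivial, which is precisely the decomposition you adopt. Your explicit treatment of the boundary cases---complementing via Proposition~\ref{pro:switching classes}(\ref{pro:switching classes:item:complement}) to reduce $q=1$ to $p=1$, and the equivalences ``switchable to edgeless if and only if complete bipartite'' and ``switchable to complete if and only if a disjoint union of at most two cliques''---correctly supplies the detail the paper leaves unstated.
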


\subsection{Bipartite chain graphs}\label{sec:bipartite chain}

A bipartite graph on bipartition $L\cup R$ is a \emph{bipartite chain} graph if the neighborhoods of the vertices in $L$ can be ordered linearly with respect to inclusion.
Yannakakis~\cite{yannakakis-82-partial-order-dimension} show that the forbidden induced subgraphs of bipartite chain graphs are $C_{3}, \overline{C_{4}}$, and $C_{5}$.
Hage et al.~\cite{DBLP:journals/fuin/HageHW03} developed an algorithm for the upper bipartite switching class, which, however, does not imply algorithms for subclasses of bipartite graphs.
Indeed, a graph may have an exponential number of solutions if we want to switch it to a bipartite graph; e.g., a complete bipartite graph.

\begin{theorem}\label{lem:bipartite chain}
  A graph is in the upper bipartite chain switching class if and only if it is $\{\overline{C_{4}}, K_{3} + K_{1}, K_4\}$-free and is in the upper bipartite switching class.
\end{theorem}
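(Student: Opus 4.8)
The plan is to reduce the equivalence to two clean observations and then push everything through Proposition~\ref{switching class general: ob1}. First I would note that $\overline{C_4}=2K_2$ and, reading off the partition of the eleven four-vertex graphs recorded in Section~\ref{sec:introduction}, that $\SW{\overline{C_4}}=\{\overline{C_4},K_3+K_1,K_4\}$. Hence the hypothesis ``$G$ is $\{\overline{C_4},K_3+K_1,K_4\}$-free'' is \emph{exactly} ``$G$ is $\SW{\overline{C_4}}$-free,'' and by Proposition~\ref{switching class general: ob1} this is a switching-invariant property: it holds for $G$ if and only if it holds for every graph switching equivalent to $G$. Second, I would recall the cited characterization of Yannakakis, that bipartite chain graphs are precisely the $\{C_3,\overline{C_4},C_5\}$-free graphs; since $C_3$ and $C_5$ are odd cycles, a \emph{bipartite} graph is a bipartite chain graph if and only if it is $\overline{C_4}$-free.

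For the forward direction, suppose $G$ lies in the upper bipartite chain switching class, so $H:=S(G,A)$ is a bipartite chain graph for some $A\subseteq V(G)$. Since $H$ is bipartite, $G$ is in the upper bipartite switching class. Moreover $H$, being bipartite, contains neither $K_3+K_1$ nor $K_4$ (both have triangles), and being a bipartite chain graph it contains no induced $\overline{C_4}$; thus $H$ is $\SW{\overline{C_4}}$-free. Proposition~\ref{switching class general: ob1} then forces every switching equivalent graph of $H$, in particular $G$, to be $\SW{\overline{C_4}}$-free, i.e.\ $\{\overline{C_4},K_3+K_1,K_4\}$-free.

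For the backward direction, assume $G$ is $\{\overline{C_4},K_3+K_1,K_4\}$-free and lies in the upper bipartite switching class. Choose $A$ with $H:=S(G,A)$ bipartite. Because $G$ is $\SW{\overline{C_4}}$-free, Proposition~\ref{switching class general: ob1} gives that $H$ is $\SW{\overline{C_4}}$-free as well, hence in particular $\overline{C_4}$-free. A bipartite $\overline{C_4}$-free graph is a bipartite chain graph by the observation above, so $H$ witnesses that $G$ is in the upper bipartite chain switching class, completing the equivalence.

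Once the two structural observations are in place the argument is essentially bookkeeping, so I do not anticipate a genuine technical obstacle. The point most worth double-checking is that the three-graph set $\{\overline{C_4},K_3+K_1,K_4\}$ really is a full switching class, namely $\SW{\overline{C_4}}$; this is exactly the four-vertex orbit $\{2K_2,K_3+K_1,K_4\}$ recorded earlier, and it is precisely what lets $\overline{C_4}$-freeness pass freely between $G$ and any of its switchings when we invoke Proposition~\ref{switching class general: ob1}.
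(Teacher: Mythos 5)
Your proof is correct and takes essentially the same route as the paper's: both directions rest on the observation that $\{\overline{C_{4}}, K_{3}+K_{1}, K_4\}$ is the full switching orbit $\mathcal{S}(\overline{C_{4}})$, so that freeness from this set is switching-invariant (Proposition~\ref{switching class general: ob1}), combined with Yannakakis's characterization showing a bipartite $\overline{C_{4}}$-free graph is a bipartite chain graph. The only difference is presentational: you make the orbit-invariance explicit in the necessity direction, which the paper compresses into the remark that it is ``trivial.''
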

\begin{proof}
  Since both $K_{3} + K_{1}$ and $K_4$ contains $K_{3}$, the necessity is trivial.
  Now suppose that a graph $G$ is $\{\overline{C_{4}}, K_{3} + K_{1}, K_4\}$-free and is in the upper bipartite switching class.
  Since $K_{3} + K_{1}$ and $K_4$ are the only graphs switching equivalent to $\overline{C_{4}}$, any switching equivalent graph of $G$ is $\overline{C_{4}}$-free.
  Since $G$ is in the upper bipartite switching class, it can be switched to a bipartite graph.  This graph must be a bipartite chain graph.
\end{proof}

We can check the first condition of Lemma~\ref{lem:bipartite chain} by enumerating all induced subgraphs on four vertices, and the second by calling the algorithm of Hage et al.~\cite{DBLP:journals/fuin/HageHW03}. 

\begin{corollary}
  We can decide in polynomial time whether a graph can be switched to a bipartite chain graph.
\end{corollary}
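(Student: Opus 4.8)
The plan is to read the algorithm directly off Theorem~\ref{lem:bipartite chain}, which states that $G$ is in the upper bipartite chain switching class if and only if two conditions hold simultaneously: $G$ is $\{\overline{C_{4}}, K_{3} + K_{1}, K_4\}$-free, and $G$ is in the upper bipartite switching class. It therefore suffices to argue that each of these two conditions can be tested in polynomial time, after which the decision procedure simply runs both tests and accepts exactly when both succeed.

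For the first condition I would note that it is a \emph{finite} forbidden-induced-subgraph test: each of the three obstructions $\overline{C_{4}}$, $K_{3} + K_{1}$, and $K_4$ has exactly four vertices. Hence $G$ is $\{\overline{C_{4}}, K_{3} + K_{1}, K_4\}$-free precisely when none of the $\binom{n}{4}$ four-vertex induced subgraphs of $G$ is isomorphic to one of these three graphs. Enumerating all four-element subsets of $V(G)$ and comparing each induced subgraph against the constantly many forbidden graphs takes $O(n^{4})$ time. For the second condition I would invoke the algorithm of Hage et al.~\cite{DBLP:journals/fuin/HageHW03}, which decides in polynomial time whether a graph can be switched to a bipartite graph, i.e.\ whether $G$ lies in the upper bipartite switching class.

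Since both tests run in polynomial time, and by Theorem~\ref{lem:bipartite chain} the graph $G$ belongs to the upper bipartite chain switching class exactly when both tests succeed, the combined procedure decides membership in polynomial time. I do not expect any genuine obstacle at this stage: the substantive content has already been isolated in Theorem~\ref{lem:bipartite chain}, whose proof reduces bipartite chain membership to bipartite membership by exploiting that $K_{3} + K_{1}$ and $K_4$ are the only graphs switching equivalent to $\overline{C_{4}}$, so that forbidding the three four-vertex graphs forces \emph{every} switching-equivalent graph to be $\overline{C_{4}}$-free. The only point worth stating explicitly is that the first condition is a constant-size local check and hence polynomial; once that is observed, the corollary follows by assembling the two subroutines.
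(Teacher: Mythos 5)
Your proof is correct and is essentially identical to the paper's: the authors likewise verify the first condition of Theorem~\ref{lem:bipartite chain} by enumerating all four-vertex induced subgraphs and the second by invoking the algorithm of Hage et al.~\cite{DBLP:journals/fuin/HageHW03}. Your added remark about why the characterization works (that $K_3+K_1$ and $K_4$ are the only switching equivalents of $\overline{C_4}$) merely restates the content of the theorem's proof and introduces nothing new.
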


We end this section with the following remark. By Proposition~\ref{pro:switching classes}(\ref{pro:switching classes:item:c}), we know that recognizing \LW{\mathcal{G}} is 
polynomially equivalent to recognizing \UP{\mathcal{G}^c}. This implies polynomial-time algorithms for \UP{\mathcal{G}^c} for all the classes $\mathcal{G}$ for which we proved (in Section~\ref{sec:lower}) the finiteness of \LW{\mathcal{G}} or finiteness of the set of forbidden induced subgraphs of \LW{\mathcal{G}}. For example, this implies that we have polynomial-time algorithms for recognizing \UP{\mathcal{G}}, when $\mathcal{G}$ is any of the following (non-hereditary) graph classes: non-chordal, contains a minor from a fixed set $\mathcal{H}$ (this includes the class non-planar), non-bipartite, non-complete bipartite, non-chordal bipartite, non-weakly chordal, non-distance hereditary, non-comparability, non-co-comparability, non-Meyniel, non-permutation, non-strongly chordal, non-interval, non-proper inerval, non-Ptolemaic, non-block, 
or when non-$\mathcal{G}$ is characterized by a finite set of forbidden induced subgraphs.

 \section{Upper switching classes: hardness}\label{sec:hardness}
In this section we consider recognition problems for \UP{\mathcal{G}}: Given a graph $G$ find whether $G$ can be 
switched to a graph in $\mathcal{G}$. We denote the problem by \textsc{Switching-to-$\mathcal{G}$}. We prove that \SWTF{P_{10}} and \SWTF{C_7} are \NPC\ and cannot be solved in time subexponential in the number of vertices, assuming the Exponential Time Hypothesis (ETH). 

ETH is essentially the conjecture that \TSAT\ cannot be solved in time $2^{o(n)}$-time, where $n$ is the number of variables in the input formula. Under this hypothesis, the Sparsification lemma~\cite{DBLP:journals/jcss/ImpagliazzoPZ01} proves that \TSAT\ cannot be solved even in time $2^{o(n+m)}$-time, where $m$ is the number of clauses in the input formula. To transfer this complexity lower bound to other problems, it is sufficient to provide a \textit{linear reduction} - a polynomial-time reduction in which the size of the resultant instance, under the measure that we are interested in, is a linear function of the size of the input instance. For example, if we obtain a polynomial-time reduction from \TSAT\ to a graph problem $Q$ such that the number of vertices, say $n'$, in the resultant instance of $Q$ is a linear function of $n+m$, then it proves that $Q$ cannot be 
solved in $2^{o(n')}$-time, assuming ETH. We refer to the book~\cite{DBLP:books/sp/CyganFKLMPPS15} for an exposition to these topics. 

Our reductions are from \KSATM. A \KSATM\ instance is a boolean formula $\Phi$ with $n$ variables and $m$ clauses where each clause contains exactly $k$ positive literals (and no negative literals). The objective is to check whether there is a truth assignment to the variables so that there is at least one TRUE literal and at least one FALSE literal in each clause in $\Phi$. It is folklore that the problem is \NPC\ and cannot be solved in subexponential-time assuming ETH. We give here a proof for clarity.

\begin{proposition}[folklore]
    \label{pro:ksatm}
    For every $k\geq 3$,
    \KSATM\ is \NPC. 
    Further, the problem cannot be solved in time $2^{o(n+m)}$, assuming ETH.
\end{proposition}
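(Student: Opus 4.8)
The plan is to prove both assertions at once by giving, for every $k\ge 3$, a polynomial-time reduction from \TSAT\ to \KSATM\ whose output has a number of variables and clauses \emph{linear} in $n+m$. Membership in NP is immediate (a truth assignment is a certificate), so such a reduction establishes that \KSATM\ is \NPC; and since a linear reduction preserves the instance measure up to a constant factor, the Sparsification Lemma then rules out a $2^{o(n'+m')}$ algorithm for \KSATM\ under ETH. I would structure the reduction as a base case $k=3$ followed by an arity-raising inductive step from \KaSATM\ to \KSATM, verifying at each stage that only a constant amount of new material is created per existing variable and clause, so that linearity survives the (at most $k$) inductive applications.

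For the base case I would first turn a \TSAT\ instance into a not-all-equal one and then make it monotone. Introducing a single global variable $s$ and replacing each clause $(\ell_1\vee\ell_2\vee\ell_3)$ by the constraint $\mathrm{NAE}(\ell_1,\ell_2,\ell_3,s)$ gives a satisfiability-equivalent instance (read a not-all-equal model with $s$ set to \FALSE), and a standard arity split replaces each four-literal constraint by two three-literal ones using one fresh variable. The only remaining obstruction to monotonicity is the negated literals, which I would eliminate using the following gadget: for variables $x,x'$ the four monotone clauses
\[
\mathrm{NAE}(x,x',y_1),\quad \mathrm{NAE}(x,x',y_2),\quad \mathrm{NAE}(x,x',y_3),\quad \mathrm{NAE}(y_1,y_2,y_3)
\]
over fresh variables $y_1,y_2,y_3$ are not-all-equal satisfiable exactly when $x\ne x'$. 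Indeed, if $x=x'$ the first three clauses force $y_1=y_2=y_3=\overline{x}$ and the last becomes monochromatic, while if $x\ne x'$ the first three hold for every choice of the $y_i$ and the last is easily satisfied. Replacing each negated occurrence $\overline{x}$ by a fresh positive variable tied to $x$ through one copy of this gadget makes the whole formula monotone while adding only a constant number of variables and clauses per original variable and clause; hence the reduction is linear and \TSATM\ is \NPC\ with no $2^{o(n+m)}$ algorithm under ETH.

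For the inductive step I would raise the arity by one. Viewing \KaSATM\ as $2$-colouring a $(k-1)$-uniform hypergraph, I add two global vertices $T,F$ constrained to receive opposite colours and replace every hyperedge $C=\{v_1,\dots,v_{k-1}\}$ by the two size-$k$ edges $C\cup\{T\}$ and $C\cup\{F\}$. If $T\ne F$, then one of these edges forbids $C$ from being monochromatic in one colour and the other forbids the complementary colour, so the pair is satisfiable precisely when $C$ is not monochromatic; thus the new instance is a yes-instance iff the old one is, and it has linear size. The colour constraint on $T$ and $F$ is supplied by an opposite-forcing gadget, the exact-arity-$k$ analogue of the one displayed above.

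The main obstacle, and the only place needing real care, is constructing this opposite-forcing gadget using clauses of size \emph{exactly} $k$ on distinct variables, as required by \KSATM: unlike the arity-$3$ case, two equal values no longer collapse a $k$-clause to a unit constraint, so the forcing cannot be read off by naive propagation. I would resolve this by a short cascade of size-$k$ clauses that pins auxiliary variables to $\overline{T}$ whenever $T=F$ and then closes a monochromatic clause, mirroring the arity-$3$ gadget; as a single such gadget is shared by the whole instance, it adds only an additive constant and does not disturb linearity. (Should repeated literals within a clause be allowed, the gadget degenerates to the single clause formed by $k-1$ copies of $T$ together with $F$, and the step becomes immediate.) What then remains is the routine but necessary bookkeeping confirming that the variable and clause counts stay $O(n+m)$ throughout the induction.
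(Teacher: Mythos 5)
Your overall plan is sound and your base case is correct, but the inductive step has a genuine gap at exactly the point you flag: the opposite-forcing gadget for $T \ne F$ in exact-arity-$k$ monotone NAE clauses is never constructed, and the mechanism you sketch for it cannot work. As you yourself observe, for $k \ge 4$ a clause $\{T, F, z_1, \ldots, z_{k-2}\}$ with $T = F$ pins no individual $z_i$; it only forbids the $z_i$ from \emph{jointly} taking $T$'s colour. So a ``cascade of size-$k$ clauses that pins auxiliary variables to $\overline{T}$'' cannot exist: every monotone NAE $k$-clause containing a variable of each colour is satisfied outright, and one that is not constrains its remaining variables only as a group, never singly. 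The gadget you need must work by \emph{counting} rather than pinning. One that does: take auxiliaries $y_1, \ldots, y_{2k-3}$, add all $\binom{2k-3}{k}$ clauses on $k$-subsets of the $y$'s (these force the number $t$ of \TRUE{} $y$'s to satisfy $k-2 \le t \le k-1$), and add the clause $\{T,F\}\cup S$ for every $(k-2)$-subset $S$ of the $y$'s. If $T = F = \TRUE$, the latter clauses force $t \le k-3$; if $T = F = \FALSE$, they force $t \ge k$; either way a contradiction, while for $T \ne F$ they are automatically satisfied and $t = k-2$ completes the assignment. For $k = 3$ this is precisely your gadget, and it is a constant-size addition, so linearity survives. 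Note that without some such gadget your reduction genuinely fails (an unsatisfiable \KaSATM{} instance becomes satisfiable by setting every original variable \TRUE{} and $T = F = \FALSE$), and your parenthetical escape via repeated literals is unavailable, since the paper's clauses are sets of $k$ distinct variables.

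For contrast, the paper sidesteps the forcing problem entirely. Its step from \KaSATM{} to \KSATM{} introduces $k$ auxiliaries $a_1, \ldots, a_k$, adds the single clause $\{a_1, \ldots, a_k\}$---which is directly expressible and guarantees both truth values occur among the $a_j$---and replaces each clause $C$ by the $k$ clauses $C \cup \{a_j\}$, $1 \le j \le k$: a monochromatic $C$ collides with whichever $a_j$ shares its colour, while a properly not-all-equal $C$ satisfies all $k$ copies regardless of the $a_j$. No inequality gadget is needed, and the instance stays linear ($n+k$ variables, $km+1$ clauses). The paper also handles the base case by citing known linear reductions from \TSAT{} to \TSATM, whereas your explicit chain (global variable $s$, arity split, and your three-variable inequality gadget) is a correct self-contained substitute.
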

\begin{proof}
  There is a sequence of linear reductions ( \cite{DBLP:journals/tcs/Rusu19}, Proposition 1 and Table 1) from \TSAT\ to \TSATM. We give a linear reduction from \KaSATM\ to \KSATM, which completes the proof.

  Let $\Phi$ be an instance of \KaSATM. Introduce $k$ new variables $a_1,a_2,\ldots a_k$. 
  Replace every clause $\{x_1, x_2, \ldots, x_{k-1}\}$ with $k$ clauses: $\{x_1, x_2, \ldots$, $x_{k-1}, a_1\}$, $\{x_1, x_2, \ldots, x_{k-1}, a_2\}$, $\{x_1, x_2$, $\ldots, x_{k-1}, a_3\}$, $\ldots$, $\{x_1, x_2, \ldots, x_{k-1}, a_k\}$. 
  In addition to the $km$ such clauses, introduce a new clause $\{a_1, a_2, \ldots, a_k\}$. Let the resultant formula be $\Phi'$. It is straight-forward to verify that there is a truth assignment for the variables in $\Phi$ such that it assigns TRUE to at least one variable and FALSE to at least one variable in $\{x_1, x_2,\ldots, x_{k-1}\}$ if and only if there exists a truth assignment for the variables in $\Phi'$ where at least one variable is assigned TRUE and at least one variable is assigned FALSE in all the $km+1$ clauses in $\Phi'$.
\end{proof}

We make use of the concept of a module in our proofs. A module $M$ in a graph $G$ is a subset of vertices of $G$ such that every pair $u,v$ of vertices in $M$ has the same neighborhood outside $M$, i.e., $N(u)\setminus M = N(v)\setminus M$. The set $V(G)$ and every singleton subset of $V(G)$ are the trivial modules. Every other module is a non-trivial module. 
A graph which has no non-trivial module is known as a prime graph. It is straight-forward to note that every prime graph has at least four vertices and every path on at least four vertices and every cycle on at least five vertices is a prime graph.
The following is a simple observation about induced prime graphs in a graph.

\begin{observation}
    \label{obs:module:prime}
    Let $H$ be a prime graph. If a graph $G$ has a subset $V'$ of its vertices such that $V'$ induces $H$ in $G$, then either $V'$ is a subset of a non-trivial module or $|V'\cap M|=1$ for every non-trivial module $M$ of $G$. 
\end{observation}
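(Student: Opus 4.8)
The plan is to reduce everything to one structural fact: \emph{modules restrict to induced subgraphs}. Concretely, if $M$ is a module of $G$ and $V'\subseteq V(G)$, then $M\cap V'$ is a module of $G[V']$. First I would verify this directly from the definition. Take any two vertices $u,v\in M\cap V'$ and any vertex $w\in V'\setminus M$. Since $w\in V'$ but $w\notin M$, the vertex $w$ lies outside $M$, so the module property $N_G(u)\setminus M = N_G(v)\setminus M$ forces $w\in N_G(u)\iff w\in N_G(v)$; as $u,v,w$ all lie in $V'$, the same equivalence holds in $G[V']$. Hence $u$ and $v$ have identical neighborhoods outside $M\cap V'$ within $G[V']$, so $M\cap V'$ is a module of $G[V']$.

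Now I would invoke primality. Since $V'$ induces the prime graph $H$, the graph $G[V']$ has no non-trivial module; its only modules are $V'$ itself, the singletons, and the (degenerate) empty set. Consequently, for \emph{every} non-trivial module $M$ of $G$, the restricted set $M\cap V'$ must be one of these, i.e. $M\cap V'\in\{\emptyset,V'\}$ or $|M\cap V'|=1$.

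The dichotomy then follows by a single case split. If some non-trivial module $M$ of $G$ satisfies $M\cap V'=V'$, then $V'\subseteq M$ and the first alternative holds. Otherwise no non-trivial module contains $V'$, and by the previous paragraph each non-trivial module $M$ meets $V'$ in at most one vertex: either $M\cap V'$ is a singleton, giving $|M\cap V'|=1$, or $M$ is disjoint from $V'$ and contributes the harmless empty intersection. This is exactly the second alternative.

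I expect no genuine obstacle here: the whole content is the module-restriction lemma, which is a one-line computation from the definitions, after which primality of $H$ does all the work through the observation that a prime graph admits only trivial modules. The only point demanding a little care is the bookkeeping of the degenerate intersections ($\emptyset$ and $V'$ itself), which must be accounted for so that the two stated alternatives are genuinely exhaustive.
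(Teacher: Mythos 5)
The paper gives no proof of this observation at all---it is asserted as a ``simple observation''---so there is no argument of the authors' to compare against; your proof supplies the canonical justification, and it is correct. The module-restriction lemma (for a module $M$ of $G$ and any $V'\subseteq V(G)$, the set $M\cap V'$ is a module of $G[V']$) is verified exactly as you do it, and primality of $H$ then leaves only the trivial possibilities $\emptyset$, singletons, and $V'$ for $M\cap V'$. One caveat deserves more than the ``bookkeeping'' framing you give it: the observation as literally stated is slightly too strong. If some non-trivial module $M$ of $G$ is disjoint from $V'$ while no non-trivial module contains $V'$ (take $H=P_4$ with vertices $a,b,c,d$, add a twin pair $\{x,y\}$ both adjacent only to $a$; then $\{x,y\}$ is a non-trivial module with $|V'\cap\{x,y\}|=0$), the second alternative ``$|V'\cap M|=1$ for every non-trivial module $M$'' fails. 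What your argument actually proves is $|V'\cap M|\le 1$ for every non-trivial module $M$, and that weaker conclusion is precisely how the paper applies the observation (e.g., to deduce $|R\cap I_i|\le 1$ in Lemma~\ref{lem:p10:forward:c}), so your ``harmless empty intersection'' remark is the right repair---but it should be flagged as a correction to the statement, since empty intersections are not an instance of the stated alternative.
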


\subsection{Path}
\label{sub:p10}

We use the following construction for a reduction from \FiSATM\ to \SWTF{P_{10}}.
\begin{construction}
\label{cons p10-free}
     Let $\Phi$ be a \FiSATM\ formula with $n$ variables $X_1, X_2, \cdots, X_n$, and $m$ clauses $C_1, C_2, \cdots,$ $C_m$. 
    We construct a graph $G_{\Phi}$ as follows:
    \begin{itemize}
        \item  For each variable $X_i$ in $\Phi$, introduce a variable vertex $x_i$.  Let $L$ be the set of all variable vertices, which forms an independent set of size $n$.
        
        \item For each clause $C_i$ in $\Phi$ of the form  $\{\ell_{i1}, \ell_{i2}, \ell_{i3}, \ell_{i4}, \ell_{i5}\}$, 
        introduce a set of clause vertices, also named $C_i$, consisting of an independent set of size 5, denoted by $I_i$, and  5 disjoint $P_9$s each of which is denoted by  $B_{ij}$, for $1\leq j\leq 5$. Let $B_i=\bigcup_{j=1}^{5} B_{ij}$. 
        The adjacency among the set $B_{ij}$ and $I_i$, for  $1\leq j\leq 5$, is in such a way that the set of vertices in the $P_9$ induced by the $B_{ij}$, except one of the end vertex $v_{ij}$, is complete to $I_i$. 
        Note that $C_i$ =  $B_i\cup I_i$.
        The set of union of all clause vertices is denoted by $C$. 
        Let the 5 vertices introduced (in the previous step)      for the variables $\ell_{i1}, \ell_{i2}, \ell_{i3}, \ell_{i4},
        \ell_{i5}$ be denoted by     $L_i=\{x_{i1},x_{i2},x_{i3},x_{i4}, x_{i5}\}$.
        Make the adjacency between the vertices in $L_i$ and the sets of $P_9$s in $B_{i}$s in such a way that, taking one vertex from each set $B_{ij}$ along with the variable vertices in $L_i$ induces a $P_{10}$, where the vertices in $L_i$ correspond to an independent set of size 5 in $P_{10}$. More precisely, $x_{i1}$ is complete
        to $B_{i1}$ and $x_{ij}$ is complete to $B_{i(j-1)}\cup B_{ij}$, for $2\leq j\leq 5$.
        Further, make the adjacency among the set $I_i$ and $L_i$ in such a way that, if exactly one of the set $L_i$ or $I_i$ is in the switching set $A$, then the vertices in $L_i\cup I_i$ together induce a $P_{10}$ in $S(G_{\Phi},A)$.     
     
        \item For all $i \neq j$, $C_{i}$ is complete to $C_j$.
     
     \end{itemize}
     
      This completes the construction of the graph $G_{\Phi}$ (see Figure~\ref{fig:cons p10-free} for an example of the construction and Figure~\ref{fig:switch p10-free adjacency} for the adjacency between the sets $L_i$ and $I_i$). 
    
\end{construction}

\begin{figure}[ht]
  \centering
    \begin{tikzpicture}  [myv/.style={circle, draw, inner sep=0pt},myv1/.style={rectangle, draw,inner sep=1pt},myv2/.style={rectangle, draw,inner sep=2.5pt},my/.style={rectangle, draw,dashed,inner sep=0pt},myv3/.style={circle, draw, inner sep=0.25pt},scale=0.85, myv5/.style={draw, inner sep=1pt}] 

\tikzset{
dotted_block/.style={draw=block,  dash pattern=on 3pt off 2pt, rectangle}}

     \node (wa1) at (-2,3.5)  {\resizebox{0.04\textwidth}{!}{\begin{tikzpicture}[myv/.style={circle, draw, inner sep=1.5pt},myv1/.style={circle, draw, inner sep=1.5pt, white},myv2/.style={rectangle,dotted,line width=0.5mm, draw,inner ysep=2.5pt,inner xsep=2pt},myv3/.style={rectangle,dotted,line width=0.5mm, draw,inner ysep=6.8pt,inner xsep=2pt}]
  \node (z) at (0,0) {};

  \node[myv] (a) at (-0.3,0.6) {};
  \node[myv] (j) at (0.3,0.6) {};
  \node[myv] (b) at (-0.3,0) {};
  \node[myv] (c) at (-0.3,0.3) {};
  \node[myv] (d) at (0.3,0.3) {};
  \node[myv] (e) at (0.3,0) {};
  \node[myv] (f) at (0.3,-0.3) {};
  \node[myv] (g) at (-0.3,-0.3) {};
  \node[myv] (h) at (0.6,-0.3) {};
  \node[myv1] (i) at (-0.6,-0.3) {}; 
\node[myv2][fit=(a) (f) (g)] {}; 
\node[myv3][fit=(a)(i) (f) (h)] {}; 
  \draw (a) -- (c);
  \draw (a) -- (j);
  \draw (b) -- (c);
  \draw (j) -- (d);
  \draw (d) -- (e);
  \draw (f) -- (e);
  \draw (b) -- (g);
  \draw (h) -- (f);
  
\end{tikzpicture}} };
     \node (x) at ((-2,4.15)  {\small{$B_{11}$}};

     \node  (wa2) at (-1,3.5) {\resizebox{0.04\textwidth}{!}{\begin{tikzpicture}[myv/.style={circle, draw, inner sep=1.5pt},myv1/.style={circle, draw, inner sep=1.5pt, white},myv2/.style={rectangle,dotted,line width=0.5mm, draw,inner ysep=2.5pt,inner xsep=2pt},myv3/.style={rectangle,dotted,line width=0.5mm, draw,inner ysep=6.8pt,inner xsep=2pt}]
  \node (z) at (0,0) {};

  \node[myv] (a) at (-0.3,0.6) {};
  \node[myv] (j) at (0.3,0.6) {};
  \node[myv] (b) at (-0.3,0) {};
  \node[myv] (c) at (-0.3,0.3) {};
  \node[myv] (d) at (0.3,0.3) {};
  \node[myv] (e) at (0.3,0) {};
  \node[myv] (f) at (0.3,-0.3) {};
  \node[myv] (g) at (-0.3,-0.3) {};
  \node[myv] (h) at (0.6,-0.3) {};
  \node[myv1] (i) at (-0.6,-0.3) {}; 
\node[myv2][fit=(a) (f) (g)] {}; 
\node[myv3][fit=(a)(i) (f) (h)] {}; 
  \draw (a) -- (c);
  \draw (a) -- (j);
  \draw (b) -- (c);
  \draw (j) -- (d);
  \draw (d) -- (e);
  \draw (f) -- (e);
  \draw (b) -- (g);
  \draw (h) -- (f);
  
\end{tikzpicture}} };
     \node (x) at ((-1,4.15)  {\small{$B_{12}$}};
    
     \node  (wa3) at (0,3.5)  {\resizebox{0.04\textwidth}{!}{\begin{tikzpicture}[myv/.style={circle, draw, inner sep=1.5pt},myv1/.style={circle, draw, inner sep=1.5pt, white},myv2/.style={rectangle,dotted,line width=0.5mm, draw,inner ysep=2.5pt,inner xsep=2pt},myv3/.style={rectangle,dotted,line width=0.5mm, draw,inner ysep=6.8pt,inner xsep=2pt}]
  \node (z) at (0,0) {};

  \node[myv] (a) at (-0.3,0.6) {};
  \node[myv] (j) at (0.3,0.6) {};
  \node[myv] (b) at (-0.3,0) {};
  \node[myv] (c) at (-0.3,0.3) {};
  \node[myv] (d) at (0.3,0.3) {};
  \node[myv] (e) at (0.3,0) {};
  \node[myv] (f) at (0.3,-0.3) {};
  \node[myv] (g) at (-0.3,-0.3) {};
  \node[myv] (h) at (0.6,-0.3) {};
  \node[myv1] (i) at (-0.6,-0.3) {}; 
\node[myv2][fit=(a) (f) (g)] {}; 
\node[myv3][fit=(a)(i) (f) (h)] {}; 
  \draw (a) -- (c);
  \draw (a) -- (j);
  \draw (b) -- (c);
  \draw (j) -- (d);
  \draw (d) -- (e);
  \draw (f) -- (e);
  \draw (b) -- (g);
  \draw (h) -- (f);
  
\end{tikzpicture}} };
     \node (x) at ((0,4.15)  {\small{$B_{13}$}};
   
     \node  (wa4) at (1,3.5)  {\resizebox{0.04\textwidth}{!}{\begin{tikzpicture}[myv/.style={circle, draw, inner sep=1.5pt},myv1/.style={circle, draw, inner sep=1.5pt, white},myv2/.style={rectangle,dotted,line width=0.5mm, draw,inner ysep=2.5pt,inner xsep=2pt},myv3/.style={rectangle,dotted,line width=0.5mm, draw,inner ysep=6.8pt,inner xsep=2pt}]
  \node (z) at (0,0) {};

  \node[myv] (a) at (-0.3,0.6) {};
  \node[myv] (j) at (0.3,0.6) {};
  \node[myv] (b) at (-0.3,0) {};
  \node[myv] (c) at (-0.3,0.3) {};
  \node[myv] (d) at (0.3,0.3) {};
  \node[myv] (e) at (0.3,0) {};
  \node[myv] (f) at (0.3,-0.3) {};
  \node[myv] (g) at (-0.3,-0.3) {};
  \node[myv] (h) at (0.6,-0.3) {};
  \node[myv1] (i) at (-0.6,-0.3) {}; 
\node[myv2][fit=(a) (f) (g)] {}; 
\node[myv3][fit=(a)(i) (f) (h)] {}; 
  \draw (a) -- (c);
  \draw (a) -- (j);
  \draw (b) -- (c);
  \draw (j) -- (d);
  \draw (d) -- (e);
  \draw (f) -- (e);
  \draw (b) -- (g);
  \draw (h) -- (f);
  
\end{tikzpicture}} };
     \node (x) at ((1,4.15)  {\small{$B_{14}$}};

    \node  (wa5) at (2,3.5)  {\resizebox{0.04\textwidth}{!}{\begin{tikzpicture}[myv/.style={circle, draw, inner sep=1.5pt},myv1/.style={circle, draw, inner sep=1.5pt, white},myv2/.style={rectangle,dotted,line width=0.5mm, draw,inner ysep=2.5pt,inner xsep=2pt},myv3/.style={rectangle,dotted,line width=0.5mm, draw,inner ysep=6.8pt,inner xsep=2pt}]
  \node (z) at (0,0) {};

  \node[myv] (a) at (-0.3,0.6) {};
  \node[myv] (j) at (0.3,0.6) {};
  \node[myv] (b) at (-0.3,0) {};
  \node[myv] (c) at (-0.3,0.3) {};
  \node[myv] (d) at (0.3,0.3) {};
  \node[myv] (e) at (0.3,0) {};
  \node[myv] (f) at (0.3,-0.3) {};
  \node[myv] (g) at (-0.3,-0.3) {};
  \node[myv] (h) at (0.6,-0.3) {};
  \node[myv1] (i) at (-0.6,-0.3) {}; 
\node[myv2][fit=(a) (f) (g)] {}; 
\node[myv3][fit=(a)(i) (f) (h)] {}; 
  \draw (a) -- (c);
  \draw (a) -- (j);
  \draw (b) -- (c);
  \draw (j) -- (d);
  \draw (d) -- (e);
  \draw (f) -- (e);
  \draw (b) -- (g);
  \draw (h) -- (f);
  
\end{tikzpicture}} };
    \node (x) at ((2,4.15)  {\small{$B_{15}$}};
    ;

     \node (wa6) at (4.5,3.5)  {\resizebox{0.04\textwidth}{!}{\begin{tikzpicture}[myv/.style={circle, draw, inner sep=1.5pt},myv1/.style={circle, draw, inner sep=1.5pt, white},myv2/.style={rectangle,dotted,line width=0.5mm, draw,inner ysep=2.5pt,inner xsep=2pt},myv3/.style={rectangle,dotted,line width=0.5mm, draw,inner ysep=6.8pt,inner xsep=2pt}]
  \node (z) at (0,0) {};

  \node[myv] (a) at (-0.3,0.6) {};
  \node[myv] (j) at (0.3,0.6) {};
  \node[myv] (b) at (-0.3,0) {};
  \node[myv] (c) at (-0.3,0.3) {};
  \node[myv] (d) at (0.3,0.3) {};
  \node[myv] (e) at (0.3,0) {};
  \node[myv] (f) at (0.3,-0.3) {};
  \node[myv] (g) at (-0.3,-0.3) {};
  \node[myv] (h) at (0.6,-0.3) {};
  \node[myv1] (i) at (-0.6,-0.3) {}; 
\node[myv2][fit=(a) (f) (g)] {}; 
\node[myv3][fit=(a)(i) (f) (h)] {}; 
  \draw (a) -- (c);
  \draw (a) -- (j);
  \draw (b) -- (c);
  \draw (j) -- (d);
  \draw (d) -- (e);
  \draw (f) -- (e);
  \draw (b) -- (g);
  \draw (h) -- (f);
  
\end{tikzpicture}} };
     \node (x) at ((4.5,4.15)  {\small{$B_{21}$}};

     \node  (wa7) at (5.5,3.5) {\resizebox{0.04\textwidth}{!}{\begin{tikzpicture}[myv/.style={circle, draw, inner sep=1.5pt},myv1/.style={circle, draw, inner sep=1.5pt, white},myv2/.style={rectangle,dotted,line width=0.5mm, draw,inner ysep=2.5pt,inner xsep=2pt},myv3/.style={rectangle,dotted,line width=0.5mm, draw,inner ysep=6.8pt,inner xsep=2pt}]
  \node (z) at (0,0) {};

  \node[myv] (a) at (-0.3,0.6) {};
  \node[myv] (j) at (0.3,0.6) {};
  \node[myv] (b) at (-0.3,0) {};
  \node[myv] (c) at (-0.3,0.3) {};
  \node[myv] (d) at (0.3,0.3) {};
  \node[myv] (e) at (0.3,0) {};
  \node[myv] (f) at (0.3,-0.3) {};
  \node[myv] (g) at (-0.3,-0.3) {};
  \node[myv] (h) at (0.6,-0.3) {};
  \node[myv1] (i) at (-0.6,-0.3) {}; 
\node[myv2][fit=(a) (f) (g)] {}; 
\node[myv3][fit=(a)(i) (f) (h)] {}; 
  \draw (a) -- (c);
  \draw (a) -- (j);
  \draw (b) -- (c);
  \draw (j) -- (d);
  \draw (d) -- (e);
  \draw (f) -- (e);
  \draw (b) -- (g);
  \draw (h) -- (f);
  
\end{tikzpicture}} };
     \node (x) at ((5.5,4.15)  {\small{$B_{22}$}};
    
     \node  (wa8) at (6.5,3.5)  {\resizebox{0.04\textwidth}{!}{\begin{tikzpicture}[myv/.style={circle, draw, inner sep=1.5pt},myv1/.style={circle, draw, inner sep=1.5pt, white},myv2/.style={rectangle,dotted,line width=0.5mm, draw,inner ysep=2.5pt,inner xsep=2pt},myv3/.style={rectangle,dotted,line width=0.5mm, draw,inner ysep=6.8pt,inner xsep=2pt}]
  \node (z) at (0,0) {};

  \node[myv] (a) at (-0.3,0.6) {};
  \node[myv] (j) at (0.3,0.6) {};
  \node[myv] (b) at (-0.3,0) {};
  \node[myv] (c) at (-0.3,0.3) {};
  \node[myv] (d) at (0.3,0.3) {};
  \node[myv] (e) at (0.3,0) {};
  \node[myv] (f) at (0.3,-0.3) {};
  \node[myv] (g) at (-0.3,-0.3) {};
  \node[myv] (h) at (0.6,-0.3) {};
  \node[myv1] (i) at (-0.6,-0.3) {}; 
\node[myv2][fit=(a) (f) (g)] {}; 
\node[myv3][fit=(a)(i) (f) (h)] {}; 
  \draw (a) -- (c);
  \draw (a) -- (j);
  \draw (b) -- (c);
  \draw (j) -- (d);
  \draw (d) -- (e);
  \draw (f) -- (e);
  \draw (b) -- (g);
  \draw (h) -- (f);
  
\end{tikzpicture}} };
     \node (x) at ((6.5,4.15)  {\small{$B_{23}$}};
   
     \node  (wa9) at (7.5,3.5)  {\resizebox{0.04\textwidth}{!}{\begin{tikzpicture}[myv/.style={circle, draw, inner sep=1.5pt},myv1/.style={circle, draw, inner sep=1.5pt, white},myv2/.style={rectangle,dotted,line width=0.5mm, draw,inner ysep=2.5pt,inner xsep=2pt},myv3/.style={rectangle,dotted,line width=0.5mm, draw,inner ysep=6.8pt,inner xsep=2pt}]
  \node (z) at (0,0) {};

  \node[myv] (a) at (-0.3,0.6) {};
  \node[myv] (j) at (0.3,0.6) {};
  \node[myv] (b) at (-0.3,0) {};
  \node[myv] (c) at (-0.3,0.3) {};
  \node[myv] (d) at (0.3,0.3) {};
  \node[myv] (e) at (0.3,0) {};
  \node[myv] (f) at (0.3,-0.3) {};
  \node[myv] (g) at (-0.3,-0.3) {};
  \node[myv] (h) at (0.6,-0.3) {};
  \node[myv1] (i) at (-0.6,-0.3) {}; 
\node[myv2][fit=(a) (f) (g)] {}; 
\node[myv3][fit=(a)(i) (f) (h)] {}; 
  \draw (a) -- (c);
  \draw (a) -- (j);
  \draw (b) -- (c);
  \draw (j) -- (d);
  \draw (d) -- (e);
  \draw (f) -- (e);
  \draw (b) -- (g);
  \draw (h) -- (f);
  
\end{tikzpicture}} };
     \node (x) at ((7.5,4.15)  {\small{$B_{24}$}};

    \node  (wa10) at (8.5,3.5)  {\resizebox{0.04\textwidth}{!}{\begin{tikzpicture}[myv/.style={circle, draw, inner sep=1.5pt},myv1/.style={circle, draw, inner sep=1.5pt, white},myv2/.style={rectangle,dotted,line width=0.5mm, draw,inner ysep=2.5pt,inner xsep=2pt},myv3/.style={rectangle,dotted,line width=0.5mm, draw,inner ysep=6.8pt,inner xsep=2pt}]
  \node (z) at (0,0) {};

  \node[myv] (a) at (-0.3,0.6) {};
  \node[myv] (j) at (0.3,0.6) {};
  \node[myv] (b) at (-0.3,0) {};
  \node[myv] (c) at (-0.3,0.3) {};
  \node[myv] (d) at (0.3,0.3) {};
  \node[myv] (e) at (0.3,0) {};
  \node[myv] (f) at (0.3,-0.3) {};
  \node[myv] (g) at (-0.3,-0.3) {};
  \node[myv] (h) at (0.6,-0.3) {};
  \node[myv1] (i) at (-0.6,-0.3) {}; 
\node[myv2][fit=(a) (f) (g)] {}; 
\node[myv3][fit=(a)(i) (f) (h)] {}; 
  \draw (a) -- (c);
  \draw (a) -- (j);
  \draw (b) -- (c);
  \draw (j) -- (d);
  \draw (d) -- (e);
  \draw (f) -- (e);
  \draw (b) -- (g);
  \draw (h) -- (f);
  
\end{tikzpicture}} };
    \node (x) at ((8.5,4.15)  {\small{$B_{25}$}};
    ;

     \node[myv3] (i1) at (-2,5.5) {$I_{11}$};
     \node[myv3] (i2) at (-1,5.5) {$I_{12}$};
     \node[myv3] (i3) at (0,5.5) {$I_{13}$};
     \node[myv3] (i4) at (1,5.5) {$I_{14}$};
     \node[myv3] (i5) at (2,5.5) {$I_{15}$};

     \node[myv3] (j1) at (4.5,5.5) {$I_{21}$};
     \node[myv3] (j2) at (5.5,5.5) {$I_{22}$};
     \node[myv3] (j3) at (6.5,5.5) {$I_{23}$};
     \node[myv3] (j4) at (7.5,5.5) {$I_{24}$};
     \node[myv3] (j5) at (8.5,5.5) {$I_{25}$};

    \node[myv3] (x1) at (-2,1.25) {$x_{1}$};
 
    \node[myv3] (x2) at (-0.5,1.25) {${x_{2}}$};

    \node[myv3] (x3) at (1,1.25) {$x_{3}$};

    \node[myv3] (x4) at (2.5,1.25) {${x_{4}}$};
    
    \node[myv3] (x5) at (4,1.25) {${x_{5}}$};

    \node[myv3] (x6) at (5.5,1.25) {${x_{6}}$};

    \node[myv3] (x7) at (7,1.25) {${x_{7}}$};

    \node[myv3] (x8) at (8.5,1.25) {${x_{8}}$};

    \node[my](b)[fit=(wa1) (wa5),  inner xsep=0.1ex, inner ysep=1.2ex, label=left:$B_{1}$] {}; 
 
    \node[my] (i) [fit=(i1) (i5),  inner xsep=0.5ex, inner ysep=0.5ex, label=left:$I_{1}$] {}; 

     \node[my] (j) [fit=(j1) (j5),  inner xsep=0.5ex, inner ysep=0.5ex, label=right:$I_{2}$] {}; 
    
    \node[my] (x) [fit=(x1) (x5),  inner xsep=1.5ex, inner ysep=0.5ex, label=left:$L_{1}$] {}; 
    \node[my] (y) [fit=(wa1) (i) (wa5),  inner xsep=2.5ex, inner ysep=1.75ex, label=above:$C_{1}$] {};

    \draw [double,line width=0.3mm] (-2.4,5.25) to[bend right=80]  (-2.5,1.3);

    \node[my](b)[fit=(wa6) (wa10),  inner xsep=0.25ex, inner ysep=1.2ex, label=right:$B_{2}$] {};

    \node[my][dotted,line width=0.4mm] (x) [fit=(x4) (x8),  inner xsep=1.5ex, inner ysep=0.5ex, label=right:$L_{2}$] {}; 
    \node[my] (z) [fit=(wa6) (j) (wa10),  inner xsep=2.25ex, inner ysep=1.75ex, label=above:$C_{2}$] {};

    \draw [line width=0.5mm] (y)--(z);
   
    \draw [double,line width=0.3mm] (8.9,5.25) to[bend left=80]  (9,1.3);

  \draw [line width=0.3mm](-2,3.2) -- (x1);
  \draw [line width=0.3mm](-2,3.2) -- (x2);
  \draw [line width=0.3mm](-1,3.2) -- (x2);
  \draw [line width=0.3mm](-1,3.2) -- (x3);
  \draw [line width=0.3mm](0,3.2) -- (x3);
  \draw [line width=0.3mm](0,3.2) -- (x4);
  \draw [line width=0.3mm](1,3.2) -- (x4);
  \draw [line width=0.3mm](1,3.2) -- (x5);
  \draw [line width=0.3mm](2,3.2) -- (x5);

  \draw [line width=0.3mm](4.5,3.2) -- (x4);
  \draw [line width=0.3mm](4.5,3.2) -- (x5);
  \draw [line width=0.3mm](5.5,3.2) -- (x5);
  \draw [line width=0.3mm](5.5,3.2) -- (x6);
  \draw [line width=0.3mm](6.5,3.2) -- (x6);
  \draw [line width=0.3mm](6.5,3.2) -- (x7);
  \draw [line width=0.3mm](7.5,3.2) -- (x7);
  \draw [line width=0.3mm](7.5,3.2) -- (x8);
  \draw [line width=0.3mm](8.5,3.2) -- (x8);
  
  \draw [line width=0.3mm](i) -- (-2,3.7);
  \draw [line width=0.3mm](i) -- (-1,3.7);
  \draw [line width=0.3mm](i) -- (0,3.7);
  \draw [line width=0.3mm](i) -- (1,3.7);
  \draw [line width=0.3mm](i) -- (2,3.7);

   \draw [line width=0.3mm](j) -- (4.5,3.7);
  \draw [line width=0.3mm](j) -- (5.5,3.7);
  \draw [line width=0.3mm](j) -- (6.5,3.7);
  \draw [line width=0.3mm](j) -- (7.5,3.7);
  \draw [line width=0.3mm](j) -- (8.5,3.7);

\end{tikzpicture}  
     \caption{An example of Construction \ref{cons p10-free} with the formula $\Phi = C_1\wedge C_2$,  where $C_1 = \{x_1, x_2, x_3, x_4, x_5\}$ and $C_2 = \{x_4, x_5, x_6, x_7, x_8\}$.
    Single lines connecting two rectangles indicate that 
    each vertex in one rectangle is adjacent to all vertices in the other rectangle. 
    The double line connecting two rectangles indicates that 
    each vertex in one rectangle is adjacent to the vertices in the other rectangle in such a way that 
    %if the vertices in exactly one of the rectangles are in switching set $A$, 
    if a switching set $A$ contains all the vertices of one rectangle and no vertex of the other rectangle,
    then a $P_{10}$ is induced by these two sets of vertices after switching (Figure~\ref{fig:switch p10-free adjacency} shows this connectivity).}
  \label{fig:cons p10-free}
\end{figure}

\begin{figure}[!htb]
  \centering
    \resizebox{0.5\textwidth}{!}{\begin{tikzpicture}  [myv/.style={circle, draw, inner sep=0pt},myv1/.style={rectangle, draw,inner sep=1pt},myv2/.style={rectangle, draw,inner sep=2.5pt},my/.style={rectangle, draw,dashed,inner sep=0pt},myv3/.style={circle, draw, inner sep=0.25pt},scale=0.85, myv5/.style={circle, draw, inner sep=0.75pt}] 

\tikzset{
dotted_block/.style={draw=block,  dash pattern=on 3pt off 2pt, rectangle}}

     \node[myv3] (i1) at (-0.5,4.5) {$I_{i1}$};
     \node[myv3] (i2) at (2,4.5) {$I_{i2}$};
     \node[myv3] (i3) at (4.5,4.5) {$I_{i3}$};
     \node[myv3] (i4) at (7,4.5) {$I_{i4}$};
     \node[myv3] (i5) at (9.5,4.5) {$I_{i5}$};

    \node[myv5] (x1) at (-0.5,1.5) {$x_{1}$};
 
    \node[myv5] (x2) at (2,1.5) {${x_{2}}$};

    \node[myv5] (x3) at (4.5,1.5) {$x_{3}$};

    \node[myv5] (x4) at (7,1.5) {${x_{4}}$};
    
    \node[myv5] (x5) at (9.5,1.5) {${x_{5}}$};

    \node[my] (i) [fit=(i1) (i5),  inner xsep=0.5ex, inner ysep=0.5ex, label=left:$I_{i}$] {}; 
    
    \node[my] (x) [fit=(x1) (x5),  inner xsep=0.5ex, inner ysep=0.5ex, label=left:$L_{i}$] {};

  \draw [line width=0.2mm](i1) -- (x3);
  \draw [line width=0.2mm](i1) -- (x4);
  \draw [line width=0.2mm](i1) -- (x5);
\draw [line width=0.2mm](i2) -- (x1);
  \draw [line width=0.2mm](i2) -- (x4);
  \draw [line width=0.2mm](i2) -- (x5);
\draw [line width=0.2mm](i3) -- (x1);
  \draw [line width=0.2mm](i3) -- (x2);
  
  \draw [line width=0.2mm](i3) -- (x5);
\draw [line width=0.2mm](i4) -- (x1);
  \draw [line width=0.2mm](i4) -- (x2);
  \draw [line width=0.2mm](i4) -- (x3);
\draw [line width=0.2mm](i5) -- (x1);
  \draw [line width=0.2mm](i5) -- (x2);
  \draw [line width=0.2mm](i5) -- (x3);
  \draw [line width=0.2mm](i5) -- (x4);

\end{tikzpicture} } 
     \caption{The adjacency between $L_i$ and $I_i$ in Construction~\ref{cons p10-free}}
  \label{fig:switch p10-free adjacency}
\end{figure}

We recall that the vertices in $L_i$ and one vertex each from $B_{ij}$s ($1\leq j\leq 5$) induce a $P_{10}$. If we have a truth assignment which satisfies $\Phi$, then the vertices in $L$ corresponding to the TRUE literals can be switched to obtain a $P_{10}$-free graph (Lemma~\ref{lem:p10:forward}). The backward direction is easy and is proved in Lemma~\ref{lem:p10:backward}.

\begin{lemma}\label{lem:p10:backward}
    Let $\Phi$ be an instance of \FiSATM. If $S(G_{\Phi},A)$ is $P_{10}$-free, for some $A\subseteq V(G_{\Phi})$, then there exists a truth assignment satisfying $\Phi$.
\end{lemma}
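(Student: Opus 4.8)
The plan is to read a truth assignment directly off the switching set and argue by contradiction. Suppose $S(G_{\Phi},A)$ is $P_{10}$-free. Using $S(G_{\Phi},A)=S(G_{\Phi},V(G_{\Phi})\setminus A)$ from Proposition~\ref{prop:basic properties}, I may normalize $A$, and then define $\tau(X_k)=\text{TRUE}$ exactly when $x_k\in A$. To show $\tau$ NAE-satisfies $\Phi$, it suffices to show that no clause is monochromatic under $\tau$; that is, for each clause $C_i$ I must rule out both $L_i\cap A=\emptyset$ (all literals \FALSE) and $L_i\subseteq A$ (all literals \TRUE). In each of these two situations I will exhibit an induced $P_{10}$ in $S(G_{\Phi},A)$, contradicting the hypothesis, and conclude that every clause contains a \TRUE\ and a \FALSE\ literal.

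Before the two cases I would establish the interaction of $A$ with the gadget interiors. Each block $B_{ij}$ is an induced $P_9$, hence a prime graph, and it is attached to $I_i$ (all but the distinguished endpoint $v_{ij}$), to $L_i$, and completely to the other clause gadgets. Using Observation~\ref{obs:module:prime} together with a direct switching analysis, I would argue that I may assume each $B_{ij}$ lies wholly inside or wholly outside $A$, and control $A\cap I_i$ similarly; a ``split'' block or a split $I_i$ already completes a $P_{10}$ with a single neighbour drawn from $I_i$ or $L_i$. The purpose of the nine-vertex blocks is robustness: whatever side of $A$ I need, each $B_{ij}$ still supplies a representative $b_j$ on that side.

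With the normalization in hand, the two cases are short. If $C_i$ is all-\FALSE, then $L_i\cap A=\emptyset$; choosing $b_j\in B_{ij}\setminus A$ for each $j$, both the $x_{ij}$'s and the chosen $b_j$'s lie outside $A$, so the switch preserves the path $x_{i1}b_1x_{i2}b_2x_{i3}b_3x_{i4}b_4x_{i5}b_5$ guaranteed by Construction~\ref{cons p10-free}, giving an induced $P_{10}$. If $C_i$ is all-\TRUE, then $L_i\subseteq A$, and after the normalization $I_i\cap A=\emptyset$, so exactly one of $L_i,I_i$ lies in $A$; by the construction $L_i\cup I_i$ then induces a $P_{10}$ in $S(G_{\Phi},A)$. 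Either outcome contradicts $P_{10}$-freeness, so $C_i$ is NAE-satisfied, and since $i$ is arbitrary, $\tau$ satisfies $\Phi$.

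I expect the main obstacle to be precisely the normalization of the second paragraph: ruling out (or exploiting) the arbitrary way $A$ can meet the gadget interiors, most delicately the all-\TRUE\ clause when $I_i$ would also sit inside $A$, and the all-\FALSE\ clause when some $B_{ij}$ is swallowed entirely by $A$. This is exactly the situation the design is engineered to defeat, through the odd-length $P_9$ blocks with a distinguished endpoint $v_{ij}$ that is nonadjacent to $I_i$, and the complete join between distinct clause gadgets; these features always leave a suitable representative vertex available to complete a $P_{10}$, which is what makes the backward direction go through quickly.
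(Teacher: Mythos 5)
Your skeleton is the right one (and the same as the paper's: read the assignment off $A\cap L$, then rule out $L_i\cap A=\emptyset$ and $L_i\subseteq A$ by exhibiting induced $P_{10}$s), but the proposal has a genuine gap exactly where you locate the difficulty: the ``normalization'' of your second paragraph is neither proved nor true as stated, and the lemma cannot be reduced to it. You claim that each $B_{ij}$ and each $I_i$ may be assumed to lie wholly inside or wholly outside $A$, because a split block or split $I_i$ ``already completes a $P_{10}$ with a single neighbour.'' Nothing supports this: Observation~\ref{obs:module:prime} constrains where an induced copy of a prime graph can sit relative to modules of the \emph{graph}; it says nothing about how an arbitrary switching set $A$ meets the gadgets. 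A $P_9$ partially switched by $A$ is some switching of $P_9$ on only nine vertices, and there is no reason it contains, or extends by one vertex to, an induced $P_{10}$; likewise splitting the independent set $I_i$ merely creates a complete bipartite pattern inside $I_i$, with no $P_{10}$ in sight. Worse, in the all-\TRUE{} case you then assert ``after the normalization $I_i\cap A=\emptyset$,'' which is precisely the subcase you may \emph{not} assume away: when $L_i\subseteq A$ and $I_i\cap A\neq\emptyset$, the $L_i\cup I_i$ gadget produces nothing, and the contradiction must come from elsewhere.

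The paper's proof shows no normalization is needed; it replaces your global structural claim with two short forced-containment chains, and this is the missing content. If $L_i\cap A=\emptyset$: either every $B_{ij}$ has a vertex outside $A$ (then $L_i$ plus those five vertices lies entirely outside $A$, so the $P_{10}$ of Construction~\ref{cons p10-free} survives switching), or some $B_{ij}\subseteq A$; in the latter case any $w\in I_i\setminus A$ would, after switching, be adjacent to $B_{ij}$ only at the distinguished endpoint $v_{ij}$, extending the preserved $P_9$ to a $P_{10}$ --- so $I_i\subseteq A$, and then exactly one of $L_i,I_i$ lies in $A$ and $L_i\cup I_i$ induces a $P_{10}$, a contradiction. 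Dually, if $L_i\subseteq A$: some $I_{i\ell}\in A$ is forced (else $L_i\cup I_i$ gives a $P_{10}$), then every $B_{ij}$ must meet $A$ (else $I_{i\ell}$ together with an all-outside block gives a $P_{10}$ via $v_{ij}$), and finally $L_i$ together with one in-$A$ vertex per block lies entirely inside $A$, so the construction's $P_{10}$ is preserved --- again a contradiction. Your write-up exhibits only the two easy endpoints of these chains (all chosen vertices outside $A$, respectively the $L_i\cup I_i$ gadget) and defers the intermediate forcing steps to an unproven design principle; those steps are the actual proof.
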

\begin{proof}
We claim that assigning TRUE to the variables corresponding to the variable vertices in $A\cap L$ satisfies $\Phi$. It is sufficient to prove that 
$A\cap L_i\neq \emptyset$ and $L_i\setminus A\neq \emptyset$, for every $1\leq i\leq m$. 

For a contradiction, assume that $A\cap L_i=\emptyset$, for some $1\leq i\leq m$.
Since $L_i$ and one vertex each from $B_{ij}$ induces a $P_{10}$, we obtain that $B_{ij}\subseteq A$, for some $1\leq j\leq 5$. Then $I_i\subseteq A$ (otherwise, there is a $P_{10}$ induced in $S(G_{\Phi},A)$ by $B_{ij}$ and a vertex in $I_i$ not in $A$ - recall that one end vertex $v_{ij}$ of the $P_9$ formed by $B_{ij}$ is not adjacent to $I_i$). Then at least one vertex from $L_i$ is in $A$, otherwise there is a $P_{10}$ induced in $S(G_{\Phi},A)$ by $I_i\cup L_i$. 
This gives us a contradiction.

Next we show that $L_i$ is not a subset of $A$.
For a contradiction, assume that $L_i\setminus A=\emptyset$. 
Then at least one vertex $I_{i\ell}\in I_i$ (for some $1\leq \ell\leq 5$) is in $A$ - otherwise there is an $P_{10}$ induced in $S(G_{\Phi}, A)$ by $L_i\cup I_i$.
Then at least one vertex from each $B_{ij}$ (for $1\leq j\leq 5$) must be in $A$ - otherwise there is a $P_{10}$ induced in $S(G_{\Phi}, A)$ by $I_{i\ell}$ and $B_{ij}$, where $B_{ij}\cap A=\emptyset$. Then there is a $P_{10}$ induced by $L_i$ and one vertex, which is in $A$, from each $B_{ij}$ (for $1\leq j\leq 5$). This is a contradiction.  
\end{proof}

We next handle the forward direction. 
Now onward we assume that $A$ is a subset of $L$ such that $L_i\cap A\neq \emptyset$ and $L_i\setminus A\neq \emptyset$. Let $G' = S(G_{\Phi}, A)$. With the help of \cref{lem:p10:forward:c,lem:p10:forward:bij,lem:p10:forward:ci,lem:p10:forward:lind,lem:p10:forward:ci2k2,lem:p10:forward:cili}, and Observation~\ref{obs:p10:forward:l}, we prove that $G'$ is $P_{10}$-free. For a contradiction, assume that $R\subseteq V(G')$ induces a $P_{10}$.

\begin{lemma}
    \label{lem:p10:forward:c}
    $R$ is not a subset of $C$.
\end{lemma}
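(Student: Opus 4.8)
The plan is to exploit the fact that the switching set $A$ is confined to $L$, so it never touches $C$: since $A\cap C=\emptyset$, switching changes no edge inside $C$, and therefore $G'[C]=G_{\Phi}[C]$. Consequently it suffices to prove that $G_{\Phi}[C]$ contains no induced $P_{10}$. The first structural fact I would record is that, because $C_i$ is complete to $C_j$ for all $i\neq j$, every vertex of $C_i$ has the same neighbourhood $\bigcup_{j\neq i}C_j$ outside $C_i$; hence each clause gadget $C_i$ is a non-trivial module of $G_{\Phi}[C]$.

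Next I would invoke primality. A $P_{10}$ is a path on at least four vertices and so is a prime graph. If $R\subseteq C$ induced a $P_{10}$, then for every $i$ the set $R\cap C_i$ would be a module of the prime graph $G'[R]\cong P_{10}$, hence empty, a single vertex, or all of $R$ (this is the dichotomy furnished by Observation~\ref{obs:module:prime}). If $R\cap C_i\neq R$ for all $i$, then $R$ meets each clause gadget in at most one vertex; since the $C_i$ are pairwise complete and $|R|=10$, the ten vertices of $R$ would lie in ten distinct gadgets and be pairwise adjacent, giving an induced $K_{10}$ rather than a $P_{10}$, a contradiction. Thus we may assume $R\subseteq C_i$ for a single clause $i$, and the whole problem reduces to showing that $G_{\Phi}[C_i]$ is $P_{10}$-free.

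For the internal analysis of $C_i=B_i\cup I_i$ I would split on whether $R$ uses $I_i$. If $R\cap I_i=\emptyset$, then $R\subseteq B_i=\bigcup_{j}B_{ij}$; but the five $P_9$'s $B_{ij}$ are pairwise non-adjacent, so a connected set such as $R$ must lie inside one $B_{ij}$, which has only nine vertices, too few for a $P_{10}$. If instead some $a\in R\cap I_i$, I would use that within $C_i$ the neighbourhood of $a$ is exactly $\bigcup_j\bigl(B_{ij}\setminus\{v_{ij}\}\bigr)$ (the excepted endpoints $v_{ij}$ and the other vertices of $I_i$ are non-neighbours of $a$). Since $a$ has degree at most two in $P_{10}$, at most two vertices of $R$ lie in $\bigcup_j(B_{ij}\setminus\{v_{ij}\})$, so at least eight vertices of $R$ lie in $I_i\cup\{v_{i1},\ldots,v_{i5}\}$. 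This latter set is independent ($I_i$ is independent, each $v_{ij}$ is adjacent only to its path-neighbour in $B_{ij}$, and there are no edges between distinct $B_{ij}$ or between $I_i$ and the $v_{ij}$). Hence $R$ would contain an independent set of size at least eight, contradicting $\alpha(P_{10})=5$.

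The module bookkeeping and the clique case are routine; the step I expect to require the most care is the internal case $R\subseteq C_i$, specifically pinning down the neighbourhoods inside the gadget, in particular the special role of the endpoints $v_{ij}$ that are \emph{not} joined to $I_i$, so that the degree bound and the independence-number bound combine to force $|R|\le 8$.
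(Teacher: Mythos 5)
Your proof is correct, and it follows the paper's overall scaffolding while finishing the crucial case by a genuinely different argument. Like the paper, you use that $A\subseteq L$ fixes $G'[C]=G_{\Phi}[C]$, and you use primality of $P_{10}$ against the modules of $G_{\Phi}[C]$: your per-module reading of Observation~\ref{obs:module:prime} (each $R\cap C_i$ is a module of the prime graph induced by $R$, hence empty, a singleton, or all of $R$) is the correct and in fact cleaner formulation, and your disposal of the one-vertex-per-gadget alternative via an induced $K_{10}$ matches the paper's use of the pairwise completeness of the $C_i$'s. The divergence is in the internal case $R\subseteq C_i$. The paper first exploits that $I_i$ is itself a module of $G_{\Phi}[C]$ to get $|R\cap I_i|\le 1$, and then shows by a segment analysis that one vertex of $I_i$ together with vertices of $B_i$ induces a path on at most $5$ vertices (at most two path segments meeting the $I_i$-vertex, at most one non-neighbour $v_{ij}$ per segment). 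You never bound $|R\cap I_i|$ at all: you observe that any $a\in R\cap I_i$ is adjacent inside $C_i$ to everything except the independent set $I_i\cup\{v_{i1},\dots,v_{i5}\}$, so $\deg_{P_{10}}(a)\le 2$ forces at least $8$ vertices of $R$ into that independent set, contradicting $\alpha(P_{10})=5$. Your counting argument is more robust and less casework-prone (it would go through verbatim for $P_t$-free with any $t\ge 6$ in this internal step), whereas the paper's segment analysis buys the sharper structural fact that no induced path on more than five vertices lives inside $I_i\cup B_i$; both correctly dispatch the $R\subseteq B_i$ subcase via connectivity of $P_{10}$ versus the five pairwise nonadjacent nine-vertex paths. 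No gaps.
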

\begin{proof}
    For a contradiction, assume that $R\subseteq C$. 
    We observe that $I_i$, for $1\leq i\leq m$, is a module in $G_{\Phi}[C]$. Therefore, by Observation~\ref{obs:module:prime}, $|R\cap I_i|\leq 1$. The set $B_i$ induces a collection of five $P_9$s. Therefore, $R$ is not a subset of $B_i$. A vertex from $I_i$ and vertices from $B_i$ together cannot induce a path of more than 5 vertices. Therefore $R$ cannot be a subset of $C_i$. We observe that $C_i$ is a module in $G_{\Phi}[C]$. Therefore, by Observation~\ref{obs:module:prime}, $|R\cap C_i|\leq 1$, for $1\leq i\leq m$. Since $C_i$ is complete 
    to $C_j$ for $1\leq i<j\leq m$, we obtain that $R$ is not a subset of $C$. 
\end{proof}

Observation~\ref{obs:p10:forward:l} follows from the fact that switching a subset of vertices of an edgeless graph 
produces a complete bipartite graph.
\begin{observation}
    \label{obs:p10:forward:l}
$R$ is not a subset of $L$.
\end{observation}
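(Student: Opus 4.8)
The plan is to reduce the global switching to a purely local one on $L$ and then exploit the fact that $L$ is independent. By Construction~\ref{cons p10-free}, the set $L$ of variable vertices is an independent set, so $G_{\Phi}[L]$ is edgeless. Since $A\subseteq L$, we have $A\cap L=A$, and the standard identity $S(G_{\Phi},A)[L]=S(G_{\Phi}[L],A\cap L)$ (immediate from the definition of switching, as switching reverses exactly the adjacencies between $A$ and its complement) gives $G'[L]=S(G_{\Phi}[L],A)$. Thus the induced subgraph of $G'$ on $L$ depends only on how $A$ partitions the independent set $L$.

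Next I would describe $G'[L]$ explicitly. Because $G_{\Phi}[L]$ has no edges, a pair of vertices of $L$ that lie on the same side of the partition $(A,\,L\setminus A)$ is non-adjacent in $G'$, while a pair with one vertex in $A$ and the other in $L\setminus A$ is a reversed non-edge and hence becomes an edge. Therefore $G'[L]$ is exactly the complete bipartite graph with parts $A$ and $L\setminus A$ (one part may be empty, in which case $G'[L]$ is edgeless). This is the precise content of the remark that switching a subset of an edgeless graph produces a complete bipartite graph.

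Finally I would invoke the structure of complete bipartite graphs: every complete bipartite graph, and a fortiori every edgeless graph, is $P_4$-free. Indeed, four vertices of $K_{a,b}$ forming a path would force each endpoint into the same part as an internal vertex, making the two endpoints adjacent and destroying inducedness. Since $P_{10}$ contains an induced $P_4$, the graph $G'[L]$ contains no induced $P_{10}$. As $R$ is assumed to induce a $P_{10}$ in $G'$, it cannot be contained in $L$, which is the claim. There is no genuine obstacle here; the only step deserving care is the first reduction, namely checking that restricting the global switch $S(G_{\Phi},A)$ to $L$ coincides with the local switch $S(G_{\Phi}[L],A)$, which follows at once from $A\subseteq L$ and Proposition~\ref{prop:basic properties}.
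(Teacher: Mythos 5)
Your proof is correct and takes essentially the same route as the paper, which justifies this observation with exactly the remark that switching a subset of an edgeless graph produces a complete bipartite graph. You merely make explicit the two steps the paper leaves implicit, namely the identity $S(G_{\Phi},A)[L]=S(G_{\Phi}[L],A\cap L)$ and the fact that complete bipartite graphs are $P_4$-free (hence $P_{10}$-free), both of which are sound.
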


\begin{lemma}
\label{lem:p10:forward:bij}
    $|R\cap B_{ij}|\leq 2$, for $1 \leq i \leq m$, $1 \leq j \leq 5 $. If $|R\cap B_{ij}|=2$, then $v_{ij}\in R$.
\end{lemma}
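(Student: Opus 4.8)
The plan is to exploit the fact that switching on $A\subseteq L$ leaves the internal structure of each $B_{ij}$ untouched while making the vertices of $B_{ij}$ almost indistinguishable from outside. Since $B_{ij}\subseteq C$ is disjoint from the switching set $A\subseteq L$, no edge inside $B_{ij}$ is flipped, so $G'[B_{ij}]$ is still the $P_9$ from the construction. First I would record the external neighbourhoods in $G'$: by the construction every vertex of $B_{ij}$ is complete to the other clauses $\bigcup_{i'\ne i}C_{i'}$ and shares the same (switched) adjacency to $L_i$, while among the vertices of $B_{ij}$ only $v_{ij}$ is non-adjacent to $I_i$. Hence in $G'$ all vertices of $B_{ij}\setminus\{v_{ij}\}$ have one common neighbourhood $N^{*}$ outside $B_{ij}$, and $N_{G'}(v_{ij})\setminus B_{ij}=N^{*}\setminus I_i\subseteq N^{*}$.

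Write $S=R\cap B_{ij}$ and $S'=S\setminus\{v_{ij}\}$. The statement is equivalent to $|S'|\le 1$: this gives $|S|\le 2$ and forces $v_{ij}\in R$ whenever $|S|=2$. So I would assume $|S'|\ge 2$ and derive a contradiction. Let $W=R\setminus B_{ij}$ and $W_1=\{w\in W: w\text{ is adjacent to }S'\}$; since the vertices of $S'$ all share the neighbourhood $N^{*}$ outside $B_{ij}$, each $w\in W$ is adjacent either to all of $S'$ or to none, so $W_1$ is well defined. Because $R$ induces a $P_{10}$, which has maximum degree two and no $C_4$, a short degree/cycle count shows that $W_1=\emptyset$ when $|S'|\ge 3$ (a vertex adjacent to three vertices of $S'$ would have degree at least three) and $|W_1|\le 1$ when $|S'|=2$ (two such vertices together with two vertices of $S'$ would yield a $C_4$ or a degree-three vertex).

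The heart of the argument is a closure property. Put $D=W_1\cup(\{v_{ij}\}\cap R)$. I claim every vertex of $S'\cup D$ has all of its $P_{10}$-neighbours inside $S'\cup D$. For a vertex of $S'$ this is immediate: its neighbours inside $B_{ij}$ lie in $S=S'\cup(\{v_{ij}\}\cap R)$, and its neighbours outside $B_{ij}$ lie in $W_1$ by definition. For $v_{ij}$ it uses $N_{G'}(v_{ij})\setminus B_{ij}\subseteq N^{*}$: any $P_{10}$-neighbour of $v_{ij}$ outside $B_{ij}$ is then adjacent to all of $S'$ as well, hence lies in $W_1\subseteq D$, while its only neighbour inside $B_{ij}$ is the unique $P_9$-neighbour of $v_{ij}$. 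For a vertex $w\in W_1$ (which only occurs when $|S'|=2$) degree two forces its two neighbours to be exactly the two vertices of $S'$. Thus $S'\cup D$ is a union of connected components of the $P_{10}$; as $P_{10}$ is connected and $S'\ne\emptyset$, we must have $S'\cup D=V(P_{10})$, so $|S'\cup D|=10$. But $|D|\le 2$, with $|D|\le 1$ as soon as $|S'|\ge 3$; combined with $S'\subseteq B_{ij}\setminus\{v_{ij}\}$ (so $|S'|\le 8$) this gives $|S'\cup D|\le 9<10$ when $|S'|\ge 3$ and $|S'\cup D|\le 4<10$ when $|S'|=2$, a contradiction in every case.

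I expect the only delicate point to be the bookkeeping of the external neighbourhoods in $G'$ — in particular verifying that $v_{ij}$ differs from the rest of $B_{ij}$ only through $I_i$ and that this difference survives the switching on $A\subseteq L$ — since the closure step and the final cardinality count are then routine. An alternative, more case-heavy route would trace the $P_{10}$ explicitly through the forced subpath at a common neighbour and reach the same contradiction, but the component/closure formulation avoids splitting into the many positional cases.
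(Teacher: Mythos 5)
Your proof is correct, but it is organized quite differently from the paper's. The paper argues locally: assuming $|R\cap B_{ij}|\ge 3$, it picks an outside neighbour $u$ of $R\cap B_{ij}$ forced by connectivity of the $P_{10}$, observes that any $u\in L\cup C_\ell$ ($\ell\ne i$) is complete to $B_{ij}$ and hence has degree at least three, concludes $u\in I_i$, and then traces the forced local configurations (the four vertices induce a $P_4$ or $P_3+K_1$, which drags a second $I_i$-vertex into $R$ and produces an induced $C_4$); the case $|R\cap B_{ij}|=2$ without $v_{ij}$ is then dispatched as a byproduct (``these arguments also imply\ldots''). You instead isolate the structural fact both proofs secretly rest on---that $B_{ij}\setminus\{v_{ij}\}$ has a single common external neighbourhood $N^*$ in $G'$ (switching on $A\subseteq L$ preserves this uniformity), with $N_{G'}(v_{ij})\setminus B_{ij}=N^*\setminus I_i\subseteq N^*$---and convert the lemma into a closure-plus-counting argument: after bounding $|W_1|$ by the degree/$C_4$ constraints of an induced path, the set $S'\cup W_1\cup(\{v_{ij}\}\cap R)$ is closed under adjacency in the $P_{10}$, hence by connectivity equals all of $R$, yet has at most $9$ (resp.\ $4$) vertices. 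This buys you a uniform treatment of both conclusions at once (via the clean equivalence with $|S'|\le 1$) and eliminates the positional case analysis; the paper's route is more elementary in that it only exhibits small forbidden configurations (degree-three vertices, triangles, $C_4$'s), matching the style of the neighbouring lemmas, but it leaves the $|R\cap B_{ij}|=2$ claim somewhat implicit, whereas your count makes it explicit. I verified the delicate points you flagged: every vertex of $L$ is complete or anticomplete to each $B_{ij}$ in $G_\Phi$ and this survives switching, the five paths $B_{i1},\ldots,B_{i5}$ are pairwise nonadjacent, and $v_{ij}$ differs from the rest of $B_{ij}$ outside $B_{ij}$ only through $I_i$, so $N^*$ and the containment $N_{G'}(v_{ij})\setminus B_{ij}\subseteq N^*$ are as you state, and the closure claim holds for each of the three vertex types.
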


\begin{proof}
    For a contradiction assume that $|R\cap B_{ij}|\geq 3$. 
    Since $B_{ij}$ does not induce a $P_{10}$, we obtain that there is a vertex $u\in R\setminus B_{ij}$ which is adjacent to at least one vertex in $R\cap B_{ij}$.
    If $u\in L\cup C_\ell$, where $\ell\neq i$, then $u$ is adjacent to all the vertices in $R\cap B_{ij}$ creating a claw. 
    Then $u\in I_i$ and the adjacency between $u$ and $R\cap B_{ij}$ does not create a claw only when $|R\cap B_{ij}|=3$ and $v_{ij}\in R$. Further $(R\cap B_{ij})\cup \{u\}$
    induces a either a $P_4$ or a $P_3+K_1$. In either case, $R$ contains one more vertex $w$ from $I_i$ other than $u$. Then $\{u,w\}\cup ((R\cap B_{ij})\setminus \{v_{ij}\})$ induces a $C_4$, which is a contradiction. These arguments also imply that if $|R\cap B_{ij}|=2$, then one of the vertex in $R\cap B_{ij}$ must be $v_{ij}$.
\end{proof}

\begin{lemma}
    \label{lem:p10:forward:ci}
    $R\cap C\subseteq C_i$, for some $1\leq i\leq m$.
\end{lemma}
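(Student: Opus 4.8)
The plan is to exploit two structural features of $G' = S(G_\Phi, A)$ that survive the switching. First, since the standing assumption here is $A \subseteq L$ and distinct clause gadgets $C_i, C_j$ are complete to each other in $G_\Phi$, they remain pairwise complete in $G'$ (switching inside $L$ does not touch edges within $C$). Second, since $L$ is independent in $G_\Phi$ and $A \subseteq L$, the switched graph $G'[L]$ is exactly the complete bipartite graph with parts $A$ and $L \setminus A$. I would argue by contradiction: assuming $R \cap C$ meets at least two clause sets, I would show that $R \cap C$ is forced to be too small for the remaining path vertices to be accommodated inside $L$.

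First I would bound $|R \cap L|$. Because $G'[L]$ is complete bipartite, $G'[R \cap L]$ is complete bipartite (with one side possibly empty), its parts being $R \cap L \cap A$ and $(R \cap L) \setminus A$. As an induced subgraph of $P_{10}$ it is a linear forest, and the only complete bipartite linear forests are edgeless graphs (independent sets), $K_2$, and $P_3 = K_{1,2}$ (any $K_{1,b}$ with $b \ge 3$ has a degree-$3$ vertex and $K_{2,2} = C_4$ has a cycle). Since the independence number of $P_{10}$ is $5$, each of these has at most five vertices, so $|R \cap L| \le 5$. As $R \cap C \neq \emptyset$ by Observation~\ref{obs:p10:forward:l}, this gives $|R \cap C| \ge 5$.

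Next I would bound the contribution of $C$ when it is spread over several clauses. Writing $S_i = R \cap C_i$, the pairwise-complete property means that for distinct clauses $G'[S_i \cup S_j]$ is the join of $G'[S_i]$ and $G'[S_j]$. If $R \cap C$ met three distinct clauses, one vertex from each would be pairwise adjacent, forming a triangle, which is impossible in $P_{10}$; hence at most two clauses are met. If exactly two clauses $C_i, C_j$ are met, then any vertex of $S_j$ is adjacent to all of $S_i$ in $G'$, so $|S_i| \le 2$ by the maximum-degree-$2$ property of a path, and symmetrically $|S_j| \le 2$; moreover they cannot both be $2$, since the four cross edges would form a $4$-cycle, which is impossible in the acyclic graph $P_{10}$. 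Therefore $|R \cap C| = |S_i| + |S_j| \le 3$.

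Combining the bounds yields the contradiction: meeting two clauses forces $|R \cap C| \le 3$ while $|R \cap C| \ge 5$, and meeting three or more clauses forces a triangle. Hence $R \cap C$ is contained in a single $C_i$. The step I expect to be most delicate, and which I would justify with care, is the bound $|R \cap L| \le 5$: this is exactly where one must use that $A \subseteq L$ makes $G'[L]$ complete bipartite, so that its induced subgraphs inside the path are severely restricted; for an arbitrary switching set the $L$-part could be far richer and the counting would fail.
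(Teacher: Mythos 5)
Your proof is correct, and it rests on the same two structural pillars as the paper's: the clause gadgets $C_i$ remain pairwise complete in $G'$ (so meeting three of them yields a triangle, and the two-clause case is capped by a degree-$2$/$C_4$ argument giving $|R\cap C|\le 3$), and $G'[L]$ is the complete bipartite graph on $(A, L\setminus A)$. Where you genuinely diverge is in how the final contradiction is reached. The paper, after capping the two-clause case at $|R\cap C|\le 3$, deduces $|R\cap L|\ge 7$, splits the $P_{10}$ into its two independent sides of size $5$, and extracts a forbidden $C_4$ from two $R$-vertices in $A$ and two in $L\setminus A$. You instead prove up front the self-contained bound $|R\cap L|\le 5$, by classifying the graphs that are simultaneously complete bipartite and induced subgraphs of a path (independent sets, $K_2$, $P_3$), so that $|R\cap C|\ge 5$ collides numerically with the cap of $3$. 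The two contradictions are essentially mirror images of the same fact about $G'[L]$ sitting inside $P_{10}$, but your organization is arguably cleaner: the classification of complete bipartite linear forests is a crisp standalone observation, and it spares you the paper's somewhat delicate bookkeeping about exactly how the at-least-seven vertices of $R\cap L$ distribute over the two independent sides of the path. Two minor remarks: your appeal to Observation~\ref{obs:p10:forward:l} is redundant, since $|R\cap C| = 10 - |R\cap L| \ge 5$ already follows from your counting (and the lemma is vacuous when $R\cap C=\emptyset$ anyway); and your degree-$2$ bound $|S_i|\le 2$ quietly subsumes the paper's explicit case split $|R\cap C_i|=1$, $1\le |R\cap C_j|\le 2$, which is fine as stated.
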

\begin{proof}
    Recall that $C_i$ is complete to $C_j$ if $i\neq j$. Therefore, if $R$ has nonempty intersection with $C_i, C_j$, and $C_k$, for $i\neq j\neq k$, then there is a triangle in the graph induced by $R$. Now assume that $R$ has nonempty intersection with $C_i$ and $C_j$, where $i\neq j$. If $|R\cap C_i|\geq 2$ and $|R\cap C_j|\geq 2$, then there is an $C_4$ in the graph induced by $R$. Therefore, assume that $|R\cap C_i|=1$ and $1\leq |R\cap C_j|\leq 2$. Then $R\cap C$ induces either a $P_3$ or a $K_2$. Let $v$ be the vertex in $R\cap C_i$.
    Clearly $|R\cap L|\geq 7$. Recall that the vertices of a $P_{10}$ can be partitioned into two independent sets each of size 5. Also, recall that $L$ induces a complete bipartite graph in $G'$. 
    Let $L'$ and $L''$ be the two independent sets which forms a partition of the complete bipartite graph induced by $L$. Then there are exactly 4 vertices in $R$ from one of the partition, say $L'$, which forms an independent set of size 5 along with $v$, and there are at least 3 vertices from $L''$ which forms an independent set of size 5 with $R\cap C_j$. Then there is a $C_4$ formed by two vertices from $R\cap L'$ and two vertices from $R\cap L''$, which gives us a contradiction. 
\end{proof}

\begin{lemma}
    \label{lem:p10:forward:lind}
    If $R\cap C\subseteq C_i$ (for some $1\leq i\leq m$) and $R\cap L$ is an independent set, then $R\cap L\subseteq L_i$.
\end{lemma}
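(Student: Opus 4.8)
The plan is to argue by contradiction: suppose some variable vertex $x\in R\cap L$ satisfies $x\notin L_i$, and derive a contradiction with $R$ inducing a $P_{10}$. First I would record two consequences of $A\subseteq L$. Since switching a subset of an edgeless graph yields a complete bipartite graph, $G'[L]$ is complete bipartite with parts $A$ and $L\setminus A$; and since no vertex of $C$ lies in $A$, we have $G'[C]=G_{\Phi}[C]$. Next, in $G_{\Phi}$ a variable vertex outside $L_i$ has no neighbour in $C_i$, because the only $L$--$C$ edges created by Construction~\ref{cons p10-free} join $L_j$ to $C_j$. Hence in $G'$ such a vertex $x$ is \emph{complete} to $C_i$ when $x\in A$ (all of its non-edges to $C_i$ are reversed) and \emph{nonadjacent} to $C_i$ when $x\in L\setminus A$. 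As $R\cap C\subseteq C_i$, the same dichotomy holds with $C_i$ replaced by $R\cap C$.

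Since $R\cap L$ is independent in $G'$ and $G'[L]$ is complete bipartite, $R\cap L$ lies in a single part, i.e.\ either $R\cap L\subseteq A$ or $R\cap L\subseteq L\setminus A$. Before splitting into these cases I would first establish that $R\cap C$ is large. Because $R\cap L$ is independent, every one of the nine edges of the $P_{10}$ induced by $R$ is incident to $R\cap C$; each vertex has degree at most two in a $P_{10}$, and each of the $|R\cap L|$ vertices of $R\cap L$ needs at least one neighbour in $R\cap C$, so $|R\cap L|\le 2\,|R\cap C|$. Combined with $|R\cap L|+|R\cap C|=10$, this forces $|R\cap C|\ge 4$.

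It then remains to rule out $x$ in each case. If $R\cap L\subseteq A$, then $x$ is complete to $R\cap C$, so $\deg_R(x)\ge |R\cap C|\ge 4>2$, which is impossible in a $P_{10}$. If $R\cap L\subseteq L\setminus A$, then $x$ has no neighbour in $R\cap C$ and none in the independent set $R\cap L$, so $x$ is isolated in $R$, contradicting the connectivity of $P_{10}$. Either way we reach a contradiction, so no such $x$ exists and $R\cap L\subseteq L_i$. The only point needing genuine care, beyond bookkeeping about how switching acts on the edgeless set $L$, is the degree count giving $|R\cap C|\ge 3$: without it the ``complete to $R\cap C$'' case would not immediately produce a degree violation, and this is where the $P_{10}$ structure (exactly nine edges, maximum degree two) is essential.
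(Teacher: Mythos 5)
Your proof is correct and takes essentially the same route as the paper's: both split on whether $R\cap L\subseteq A$ or $R\cap L\cap A=\emptyset$, observe that a vertex of $L\setminus L_i$ is (after switching) either complete or nonadjacent to $R\cap C\subseteq C_i$, and derive a contradiction from an over-large degree or an isolated vertex in the induced $P_{10}$. The only cosmetic difference is the lower bound on $|R\cap C|$: the paper gets $|R\cap C_i|\ge 5$ from the fact that a maximum independent set of $P_{10}$ has five vertices, while you get $|R\cap C|\ge 4$ by counting edges between $R\cap L$ and $R\cap C$; either bound suffices for the degree contradiction.
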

\begin{proof}
Since $R\cap L$ is an independent set, either $R\cap L\subseteq A$ or $R\cap L\cap A = \emptyset$. Since the size of a maximum independent set is 5 in $P_{10}$, we obtain that $|R\cap L|\leq 5$. Therefore, $|R\cap C_i|\geq 5$. For a contradiction assume that $R\cap L_j\neq \emptyset$, for some $j\neq i$. Let $v\in R\cap L_j$. We note that $v$ is either adjacent to all vertices in $R\cap C_i$ or nonadjacent to all vertices in $R\cap C_i$. In the former case, $v$ is a vertex with degree at least 5, and in the later case $v$ is not adjacent to any other vertex in the graph induced by $R$. Both are contradictions. 
\end{proof}

\begin{lemma}
    \label{lem:p10:forward:ci2k2}
    The graph induced by $R\cap C_i$ is $2K_2$-free. Further, if $R\cap C_i$ is not an independent set, then $R\cap I_i\neq \emptyset$ and none of the vertex in $R\cap I_i$ is an isolated vertex in the graph induced by $R\cap C_i$.
\end{lemma}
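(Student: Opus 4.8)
The plan is to first record the local structure of $C_i$ that the switching leaves untouched, and then combine it with the global $P_{10}$ and the earlier lemmas. Since $A\subseteq L$, no edge of $G_{\Phi}$ with both endpoints in $C_i$ is affected, so $G'[C_i]=G_{\Phi}[C_i]$. In this graph the only edges are the path-edges inside a single block $B_{ij}$ (each a $P_9$) and the edges between $I_i$ and $\bigcup_{j}(B_{ij}\setminus\{v_{ij}\})$; distinct blocks are pairwise nonadjacent, $I_i$ is an independent set, and every block vertex other than $v_{ij}$ is complete to all of $I_i$. I would also use that, by Observation~\ref{obs:p10:forward:l}, $G'[L]$ is a complete bipartite graph, and that $R\cap C\subseteq C_i$ by Lemma~\ref{lem:p10:forward:ci}.

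For the first assertion, suppose $R\cap C_i$ contains an induced $2K_2$ with edges $e_1$ and $e_2$. I classify each edge as \emph{internal} (both ends in one block) or \emph{mixed} (one end in a block, the other in $I_i$). Two internal edges cannot lie in the same block, since that needs four vertices of $B_{ij}$, while Lemma~\ref{lem:p10:forward:bij} gives $|R\cap B_{ij}|\le 2$. Two mixed edges are impossible: their block ends are adjacent to $I_i$, hence distinct from the relevant $v_{ij}$, and are therefore complete to all of $I_i$, so the two edges cannot be independent. This leaves the case of two internal edges in distinct blocks, and the case of one internal and one mixed edge. Here I would invoke the global path: the $P_{10}$ on $R$ must connect the two independent edges through intermediate vertices, and each such vertex, being a neighbour of a block vertex, lies in the same block (excluded by Lemma~\ref{lem:p10:forward:bij} once three vertices of a block would appear), in $I_i$, or in $L$. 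Tracing these neighbours and using that a non-$v_{ij}$ block vertex is complete to $I_i$ and that $G'[L]$ is complete bipartite, I reach a forbidden induced subgraph of $P_{10}$, namely a triangle, a claw, or a $C_4$, which is the desired contradiction.

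For the second assertion, assume $R\cap C_i$ is not independent. If the witnessing edge is mixed, then $R\cap I_i\neq\emptyset$ immediately; if it is internal, Lemma~\ref{lem:p10:forward:bij} forces the block endpoint $v_{ij}$ into $R$, and the same $P_{10}$-tracing as above forces a vertex of $I_i$ into $R$, giving $R\cap I_i\neq\emptyset$. Finally, let $v\in R\cap I_i$ and suppose it is isolated in $G'[R\cap C_i]$. Then $v$ is nonadjacent to every other vertex of $R\cap C_i$; since $v$ is complete to every non-$v_{ij}$ vertex of every block, the only block vertices that may appear in $R\cap C_i$ are the endpoints $v_{ij}$. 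But distinct blocks are nonadjacent, the vertices $v_{ij}$ are nonadjacent to $I_i$, and $I_i$ is independent, so $R\cap C_i$ would be an independent set, contradicting our assumption. Hence no vertex of $R\cap I_i$ is isolated.

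The main obstacle is the global tracing step in the first assertion (and in establishing $R\cap I_i\neq\emptyset$ for an internal edge): the edge-type bookkeeping is clean and mechanical, but ruling out two independent edges sitting in different blocks requires following the $P_{10}$ between them and carefully bounding degrees, so that the completeness of $I_i$ to the blocks, or the complete-bipartite structure of $G'[L]$, produces a triangle, claw, or $C_4$. I expect the bulk of the work, and the only delicate case analysis, to be concentrated there.
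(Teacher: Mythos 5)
Your edge classification and the easy cases are handled correctly (two internal edges in one block die by Lemma~\ref{lem:p10:forward:bij}; two mixed edges die because their block endpoints, being adjacent to $I_i$, are distinct from the $v_{ij}$'s and hence complete to $I_i$), and your proof of the non-isolation claim is complete — indeed slightly more explicit than the paper's, which obtains it as a byproduct of the observation that every vertex of $I_i$ is adjacent to an endpoint of every $K_2$ induced in $C_i$. However, the lemma's entire content lives in the one case you defer: a $2K_2$ whose two edges sit in distinct blocks, which by Lemma~\ref{lem:p10:forward:bij} must be $P=\{u_j,v_{ij},u_\ell,v_{i\ell}\}$. There you offer only an intention (``tracing these neighbours \ldots\ I reach a triangle, a claw, or a $C_4$'') rather than an argument, and the predicted shape of the contradiction is in fact wrong. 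The paper's proof of this case runs: first, $|R\cap I_i|\le 1$, since two vertices of $I_i$ together with $u_j,u_\ell$ would induce a $C_4$; second, every vertex of $R\setminus P$ adjacent to $P$ lies in $I_i$, because any other candidate (in $L$, other clause gadgets being excluded by Lemma~\ref{lem:p10:forward:ci}) is complete or anticomplete to each $B_{ij}$, and completeness produces a triangle with the edge $u_jv_{ij}$; hence $P$ together with the unique vertex $w\in R\cap I_i$ induces a $P_5$ with vertex order $v_{ij},u_j,w,u_\ell,v_{i\ell}$. This $P_5$ is a perfectly legal induced subgraph of $P_{10}$ — no triangle, claw, or $C_4$ ever appears — and the actual contradiction is one of connectivity: the three internal vertices already have degree two, the endpoints $v_{ij},v_{i\ell}$ admit no further neighbour in $R$ ($I_i$ is nonadjacent to them, and everything else is excluded as above), so this five-vertex piece cannot be attached to the remaining five vertices of the induced $P_{10}$. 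Without this step your proof does not close, and a search restricted to small forbidden subgraphs would fail to find a contradiction here.

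Two smaller points. Your internal-plus-mixed case needs no tracing at all: by Lemma~\ref{lem:p10:forward:bij} the internal edge is $\{u_j,v_{ij}\}$, and the $I_i$-endpoint of the mixed edge is complete to $u_j$, so the two edges are not independent. Likewise, your appeal to ``the same tracing'' to derive $R\cap I_i\neq\emptyset$ from an internal edge should be written out, though it is short: any vertex of $R$ outside $C_i$ adjacent to $u_j$ or $v_{ij}$ is complete to $B_{ij}$ and creates a triangle, the other blocks are nonadjacent to $B_{ij}$, and $|R\cap B_{ij}|\le 2$, so the $P_{10}$ can reach this edge only through $I_i$.
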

\begin{proof}
    For a contradiction, assume that $P\subseteq R\cap C_i$ induces a $2K_2$. 
    It is straight-forward to verify that every vertex in $I_i$ is adjacent to at least one end vertex of every $K_2$ induced by the vertices in $C_i$.
    This implies that $P\cap I_i=\emptyset$ and hence $P\subseteq B_i$. By Lemma~\ref{lem:p10:forward:bij} and the fact that $B_{ij}$ and $B_{i\ell}$ are nonadjacent (for $j\neq \ell$), we obtain that $P$ has two vertices from $B_{ij}$ and two vertices form $B_{i\ell}$, for some $j\neq \ell$. Moreover, $P=\{u_j,v_{ij},u_\ell,v_{i\ell}\}$, where $u_j$ is the vertex in $B_{ij}$ adjacent to $v_{ij}$
    and $u_\ell$ is the vertex in $B_{i\ell}$ adjacent
    to $v_{i\ell}$. If $R\cap I_i$ has at least two vertices, then there is a $C_4$ formed by those two vertices and $\{u_j,u_\ell\}$. Therefore, $|R\cap I_i|\leq 1$. Clearly, $R\setminus P$ has at least one vertex adjacent to some vertices in $P$. Recall that every vertex outside $C_i$ is either complete to $B_{ij}$ or nonadjacent to $B_{ij}$ and if such a vertex is complete to $B_{ij}$, then it forms a triangle with $\{u_j,v_{ij}\}$. Therefore, the vertices in $R\setminus P$ which are adjacent to some vertices in $P$ must be from $I_i$. This implies that there is exactly one vertex in $R\cap I_i$ and it forms a $P_5$ with the vertices in $P$. Then there must be at least one more vertex in $R\setminus P$ adjacent to either $v_{ij}$ or $v_{i\ell}$. Since $|(R\setminus P)\cap I_i| = 1$, such a vertex does not exist. This is a contradiction. These arguments also imply that if $R\cap C_i$ is not an independent set, then $R\cap I_i\neq \emptyset$ and none of the vertices in $R\cap L_i$ is an isolated vertex in the graph induced by $R\cap C_i$.
\end{proof}

\begin{lemma}
\label{lem:p10:forward:cili}
    If $R\subseteq (C_i\cup L_i)$, then $R\cap L_i$ is not an independent set in $G'$.
 \end{lemma}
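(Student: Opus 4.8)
The plan is to argue by contradiction: assume $R\subseteq C_i\cup L_i$ induces a $P_{10}$ and yet $R\cap L_i$ is independent in $G'$, and derive a contradiction. First I would exploit the switching structure on $L$. Since $A\subseteq L$ and $L$ is edgeless in $G_{\Phi}$, switching turns $G'[L_i]$ into a complete bipartite graph with parts $L_i\cap A$ and $L_i\setminus A$ (the same observation underlying Observation~\ref{obs:p10:forward:l}), and both parts are nonempty by our standing assumption on $A$. An independent set of a complete bipartite graph lies inside a single part, so $R\cap L_i\subseteq A$ or $R\cap L_i\subseteq L_i\setminus A$; in particular $R\cap L_i\neq L_i$. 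Note also that $R\cap L_i\neq\emptyset$ and $R\cap C_i\neq\emptyset$, since otherwise $R$ would be contained in $C$ or in $L$, contradicting \cref{lem:p10:forward:c} or Observation~\ref{obs:p10:forward:l}.

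Next I would split on whether $R\cap C_i$ is independent. If it is, then $R\cap L_i$ and $R\cap C_i$ are two nonempty independent sets partitioning $V(R)$, hence a proper $2$-colouring of $R$; as $P_{10}$ is connected and bipartite, its bipartition is unique and consists of two classes of size five, forcing $R\cap L_i=L_i$. This contradicts $R\cap L_i\neq L_i$ from the first step, so this case cannot occur.

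The remaining, and main, case is that $R\cap C_i$ is not independent. Here I would invoke \cref{lem:p10:forward:ci2k2}: $R\cap I_i\neq\emptyset$ and every vertex of $R\cap I_i$ has a neighbor in $R\cap C_i$. Because $I_i$ is independent and, in $G_{\Phi}$ (hence in $G'$, as $C_i\cap A=\emptyset$), complete to $B_{ij}\setminus\{v_{ij}\}$ for every $j$, each $u\in R\cap I_i$ is adjacent in $G'$ to all of $R\cap\bigl(B_i\setminus\{v_{i1},\dots,v_{i5}\}\bigr)$. Since every vertex of a $P_{10}$ has degree at most two, this forces $|R\cap(B_i\setminus\{v_{i1},\dots,v_{i5}\})|\le 2$. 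Moreover, the edge guaranteed by \cref{lem:p10:forward:ci2k2} runs from $R\cap I_i$ to some non-$v$ vertex $b\in R\cap B_i$, and every vertex of $R\cap I_i$ is adjacent to $b$; the same degree bound then gives $|R\cap I_i|\le 2$. Combining these with \cref{lem:p10:forward:bij} (each $B_{ij}$ contributes at most one non-$v$ vertex to $R$, and a second vertex only together with $v_{ij}$) severely limits $|R\cap C_i|$, so the only way to reach $|R|=10$ is to include several endpoints $v_{ij}$ and several $L_i$-vertices.

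The final step is to rule out exactly these remaining configurations, and this is the part I expect to be the real obstacle. Each $v_{ij}\in R$ must acquire its at most two path-neighbors from its unique partner in $B_{ij}$ and from $x_{ij},x_{i(j+1)}$ in $L_i$ (with the $L_i$-adjacencies flipped precisely on $L_i\cap A$), while $R\cap L_i$ is pinned to a single side of the bipartition of $G'[L_i]$. I would track, block by block, how the $I_i$-, $v$-, non-$v$-$B$-, and $L_i$-vertices of $R$ can be linked, and show that the degree-two constraint of a path on ten vertices cannot be met simultaneously with $R\cap L_i$ lying in one part; in particular no arrangement realizes a single $P_{10}$. The two earlier cases are short, whereas this adjacency bookkeeping—played against the block structure of $B_i$ and the two possible sides of $R\cap L_i$—is where the bulk of the argument lies.
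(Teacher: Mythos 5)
Your preparatory steps are fine: the reduction of $R\cap L_i$ to one side of the $A$-split (the paper's Claim~3), the elegant disposal of the case where $R\cap C_i$ is independent via the unique bipartition of $P_{10}$, and the two degree bounds in your main case ($|R\cap I_i|\le 2$, and at most two vertices of $R$ in $B_i\setminus\{v_{i1},\ldots,v_{i5}\}$) are all correct. But the proof then stops precisely where the lemma's content lies. For the case where $R\cap C_i$ contains an edge you only announce a plan (``I would track, block by block \ldots\ and show that the degree-two constraint \ldots\ cannot be met''); no contradiction is actually derived, and the surviving configurations after your bounds are numerous (up to five endpoints $v_{ij}$, up to four $L_i$-vertices, one or two $I_i$- and non-endpoint $B$-vertices, with cross-adjacencies depending on $A$). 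The paper closes this case with machinery your sketch never isolates: from the $2K_2$-freeness of $G'[R\cap C_i]$ (Lemma~\ref{lem:p10:forward:ci2k2}) it gets at most $3$ edges inside $R\cap C_i$, hence at least $6$ of the $9$ path edges cross to $R\cap L_i$, forcing $3\le |R\cap L_i|\le 4$ and essentially pinning the structure of $R\cap C_i$ (e.g.\ $P_4+3K_1$ when $|R\cap L_i|=3$); it then invokes two quantitative adjacency counts --- every vertex of $I_i$ has at least $3$ neighbors in $L_i$ in $G_{\Phi}$, and every $B_{ij}$ is adjacent in $G'$ to at least $|A\cap L_i|-2$ vertices of $A\cap L_i$ --- to manufacture a vertex of degree at least $3$ on the supposed path in each subcase. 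Without such counting facts, the bookkeeping you defer is not routine, and as submitted the proposal is a sketch with the decisive step missing.

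There is also a concrete error in the setup of the deferred step. You assert that each $v_{ij}\in R$ must take its path-neighbors from its partner in $B_{ij}$ and from $x_{ij},x_{i(j+1)}$. After switching, however, $v_{ij}$'s adjacencies to $L$ are flipped on all of $A$: in $G'$ the vertex $v_{ij}$ is adjacent to every vertex of $(L_i\cap A)\setminus\{x_{ij},x_{i(j+1)}\}$ as well (for instance $v_{i3}$ becomes adjacent to $x_{i1}$ whenever $x_{i1}\in A$), and $v_{i5}$ has only $x_{i5}$ as an original $L_i$-neighbor, not two. So the neighborhood pattern you intend to enumerate is both wrong in detail and substantially larger than stated, which makes the missing case analysis even further from complete.
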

 
 \begin{proof}
    For a contradiction assume that $R\cap L_i$ induces an $xK_1$, for $x\geq 1$.
    It can be easily observed that the following claims are true. 
    
    Claim 1: Each vertex in $I_i$ is adjacent to at least 3 vertices in $L_i$ in $G_{\Phi}$.
    
    Claim 2: Each set $B_{ij}$ is adjacent to at least $|A\cap L_i|-2$ vertices in $A\cap L_i$ in $G'$.

    Claim 3: $R\cap L_i\subseteq A$ or $R\cap L_i\cap A=\emptyset$.

    Claim 1 and 2 follow from construction of $G_{\Phi}$ and Claim 3 follows from the assumption that $R\cap L_i$ forms an independent set and the fact that $L$ induces a complete bipartite graph in $G'$, where the bipartition is $(A, L\setminus A)$.

    The fact that $L_i$ contains 5 vertices and at least one vertex in $L_i$ belongs to $A$, and at least one vertex in $L_i$ does not belong to $A$ implies that the graph induced by $R\cap L_i$ is $5K_1$-free in $G'$.

    Now assume that $R\cap L_i$ induces an $xK_1$, for $x\leq 4$. By Lemma~\ref{lem:p10:forward:ci2k2}, the graph induced by $C_i\cap R$ is $2K_2$-free. Hence, the graph induced by $C_i\cap R$ can have at most 3 edges (observe that a path on 4 vertices does not have an induced $2K_2$). 
    Therefore, there are at least 6 edges between $R\cap L_i$ and $R\cap C_i$. Since every vertex in $R\cap L_i$ can have at most 2 edges in the graph induced by $R$, we obtain that $|R\cap L_i|\geq 3$.

    Now assume that $R\cap L_i$ induces $3K_1$ in $G'$. 
    Then by the above arguments, the graph induced by $R\cap C_i$
    is $P_4+3K_1$, where a vertex $u$ in $R\cap I_i$ is one of the two middle vertex of the induced $P_4$. Among the other 3 vertices of the $P_4$,
    two are from $B_{ij}$ and the other is from $B_{i\ell}$, for some $j\neq \ell$. 
    By Lemma~\ref{lem:p10:forward:bij}, the two vertices in $R\cap B_{ij}$ are $v_{ij}$ and $u_j$, where $u_j$ is the vertex adjacent to $v_{ij}$. 
    Let the other vertex in the $P_4$ be $u_\ell\in B_{i\ell}$. We also have that every vertex in $R\cap L_i$
    has exactly two neighbors in $R\cap C_i$. By Claim 3, either $R\cap L_i\subseteq A$ or $R\cap L_i\cap A=\emptyset$.
    If $R\cap L_i$ has no elements in $A$, then by Claim 1, $u$ is adjacent to at least one 
    of the three vertices in $R\cap L_i$ and hence has a degree 3 
    in the graph induced by $R$, which is a contradiction. Assume that $R\cap L_i\subseteq A$.
    By Claim 2, each vertex in $B_{ij}$ is adjacent to 
    at least one vertex in $R\cap L_i$. Then 
    there is a triangle in the graph induced by $R$, which is a contradiction.
    
    Suppose  $R\cap L_i$ induces an $xK_1$, for $x=4$.
    Since the degree of each vertex in $R\cap L_i$
    is at most 2 in the induced $P_{10}$, there is at least one edge in the graph induced by $R\cap C_i$.
    Therefore by Lemma~\ref{lem:p10:forward:ci2k2}, $R\cap I_i$ is nonempty and every vertex in $R\cap I_i$ is an end point of some edge in the graph induced by $R\cap C_i$.
    
    Now assume that $R\cap L_i\cap A=\emptyset$. By Claim 2, each vertex in $R\cap I_i$ is adjacent to at least three vertices in $L_i$ in $G_{\Phi}$.
    Then each of them is adjacent to at least two vertices in $R\cap L_i$. Then those vertices
    have degree at least 3 in the graph induced by $R$
    in $G'$. This is a contradiction. 

    Now assume that $R\cap L_i\subseteq A$. As shown above, there is a vertex $u_j\in B_{ij}$ (for some $1\leq j\leq 5$) adjacent to some vertex $u\in I_i$
    such that both $u_j$ and $u$ are in $R$. 
    By Claim 2, $u_j$ is adjacent to at least $2$
    vertices in $R\cap L_i$. This implies that
    $u_j$ has degree 3 in the graph induced by $R$
    in $G'$, which is a contradiction.
\end{proof}

 Now, we are ready to prove the forward direction of the reduction.
 
\begin{lemma}
\label{lem:p10:forward}
   Let $\Phi$ be a yes-instance of \FiSATM,  and $\psi$ be a truth assignment satisfying $\Phi$. 
   Let $A$ be the set of variable vertices whose corresponding variables were assigned \TRUE\ by $\psi$. Let $G'$ be $S(G_{\Phi}, A)$.
   Then $G'$ is $P_{10}$-free.
\end{lemma}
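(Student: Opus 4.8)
The plan is to argue by contradiction: assume that some $R \subseteq V(G')$ induces a $P_{10}$, and then assemble the lemmas proved just before the statement to reach a contradiction. Throughout I will use the fact behind Observation~\ref{obs:p10:forward:l}: since $A \subseteq L$ and $L$ is edgeless in $G_{\Phi}$, the set $L$ induces in $G'$ a complete bipartite graph with parts $A$ and $L \setminus A$.

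First I would locate $R$. As $V(G') = L \cup C$, Lemma~\ref{lem:p10:forward:c} gives $R \cap L \neq \emptyset$ and Observation~\ref{obs:p10:forward:l} gives $R \cap C \neq \emptyset$. By Lemma~\ref{lem:p10:forward:ci} all of $R \cap C$ lies in a single clause gadget $C_i$, so $R \subseteq L \cup C_i$ and hence $R \cap C_i = R \setminus (R \cap L)$.

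The crux is to show that $R \cap L$ is an independent set of $G'$. Suppose it is not. Then $G'[R \cap L]$ is the complete bipartite graph between $R \cap L \cap A$ and $R \cap L \cap (L \setminus A)$ with both sides nonempty; being an induced subgraph of the path $P_{10}$ it is a linear forest, and the only complete bipartite linear forests with both sides nonempty are $P_2$ and $P_3$. Thus $R \cap L$ consists of $b \in \{2,3\}$ vertices that form a single subpath, i.e.\ a block of $b$ consecutive vertices of the $P_{10}$. Deleting this block from $P_{10}$ leaves $R \cap C_i$ as a disjoint union of at most two subpaths whose sizes add up to $10 - b \ge 7$. If both subpaths contain an edge we already have an induced $2K_2$; otherwise the larger subpath has at least six vertices and hence contains an induced $2K_2$. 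Either way this contradicts Lemma~\ref{lem:p10:forward:ci2k2}, so $R \cap L$ must be independent.

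Finally I would close using the two remaining lemmas. With $R \cap C \subseteq C_i$ and $R \cap L$ independent, Lemma~\ref{lem:p10:forward:lind} yields $R \cap L \subseteq L_i$, so $R \subseteq C_i \cup L_i$ and $R \cap L_i = R \cap L$. Now Lemma~\ref{lem:p10:forward:cili} applies and forces $R \cap L_i$ to be \emph{not} independent, contradicting that $R \cap L$ is independent. Hence no induced $P_{10}$ exists and $G'$ is $P_{10}$-free. I expect the only real obstacle to be the independence step: one must observe that a non-independent $R \cap L$ is pinned, via the complete bipartite structure of $G'[L]$ together with the $C_4$-freeness of $P_{10}$, to a very short consecutive segment, so that its complement inside the $P_{10}$ is forced to be large enough to house a $2K_2$ and trip Lemma~\ref{lem:p10:forward:ci2k2}; the remaining steps are a direct chaining of the already-established lemmas.
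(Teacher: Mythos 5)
Your proof is correct and takes essentially the same route as the paper: the same localization chain (Lemma~\ref{lem:p10:forward:c}, Observation~\ref{obs:p10:forward:l}, Lemma~\ref{lem:p10:forward:ci}) followed by the same dichotomy on whether $R\cap L$ is independent, with Lemmas~\ref{lem:p10:forward:lind} and~\ref{lem:p10:forward:cili} killing the independent case and Lemma~\ref{lem:p10:forward:ci2k2} killing the other. The only (harmless) difference is the order of the two cases and how the non-independent one is dispatched: you exhibit an induced $2K_2$ in $G'[R\cap C_i]$ directly from the consecutive-block structure of the $P_2$/$P_3$ inside the $P_{10}$, while the paper uses the same lemma to cap $G'[R\cap C_i]$ at three edges and then forces a degree-$3$ vertex in $R\cap L$ by edge counting.
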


\begin{proof}
     Assume for a contradiction that there exists a set $R\subseteq V(G')$ such that $R$ induces 
     a $P_{10}$ in $G'$. By Lemma~\ref{lem:p10:forward:c},
     the graph induced by $C$ is $P_{10}$-free. 
     Therefore, $R\cap L$ is nonempty. By Observation~\ref{obs:p10:forward:l}, the graph induced by $L$ in $G'$ is $P_{10}$-free.
     Therefore, $R\cap C$ is nonempty. By Lemma~\ref{lem:p10:forward:ci}, $R\cap C\subseteq C_i$. Lemmas~\ref{lem:p10:forward:lind} and \ref{lem:p10:forward:cili} imply that the graph 
     induced by $R\cap L$ is not an independent set.
     Then, since $L$ induces a complete bipartite graph in $G'$, $R\cap L$ induces either a $K_2$ or a $P_3$. By Lemma~\ref{lem:p10:forward:ci2k2}, the graph induced by $R\cap C_i$ cannot have more than 3 edges. Therefore, there are at least 4 edges between $R\cap L$ and $R\cap C_i$. This implies that at least one vertex in $R\cap L$
     gets a degree 3 in the graph induced by $R$ in $G'$. This is a contradiction.
\end{proof}

Lemmas~\ref{lem:p10:forward}, \ref{lem:p10:backward}, and Proposition~\ref{pro:ksatm}, and the fact that the number of vertices in $G_{\Phi}$ is linear in the number of variables and clauses in $\Phi$ 
imply Theorem~\ref{thm:p10}.

\begin{theorem}
\label{thm:p10}
\SWTF{P_{10}} is \NPC\ and cannot be solved in 
$2^{o(n)}$-time, assuming ETH, where $n$ is the number of vertices in the input graph.
\end{theorem}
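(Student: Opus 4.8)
The plan is to show that Construction~\ref{cons p10-free} is a polynomial-time reduction from \FiSATM\ to \SWTF{P_{10}}, and then to read off both the \NPC-ness and the ETH lower bound from its structure. First I would settle membership in NP: a subset $A\subseteq V(G)$ serves as a certificate of polynomial size, and given $A$ one can compute $S(G,A)$ and test it for an induced $P_{10}$ in time $n^{O(1)}$ (for instance by inspecting all vertex subsets of size ten), so \SWTF{P_{10}} lies in NP.

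For hardness I would take an arbitrary instance $\Phi$ of \FiSATM\ and build $G_{\Phi}$ by Construction~\ref{cons p10-free}, which is plainly computable in polynomial time. The correctness of the reduction is exactly the conjunction of the two directions already established: Lemma~\ref{lem:p10:forward} shows that if $\Phi$ is satisfiable then the switching set $A$ consisting of the variable vertices assigned \TRUE\ makes $S(G_{\Phi},A)$ $P_{10}$-free, so $G_{\Phi}\in\UP{\mathcal{F}(P_{10})}$; and Lemma~\ref{lem:p10:backward} supplies the converse, namely that any $A$ with $S(G_{\Phi},A)$ $P_{10}$-free yields a satisfying assignment for $\Phi$. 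Hence $\Phi$ is a yes-instance if and only if $G_{\Phi}$ is. Since \FiSATM\ is \NPC\ by Proposition~\ref{pro:ksatm} (taking $k=5$), it follows that \SWTF{P_{10}} is \NPC.

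For the fine-grained bound I would count the vertices of $G_{\Phi}$. Writing $n$ and $m$ for the numbers of variables and clauses of $\Phi$, each variable contributes a single vertex, while each clause contributes its five-vertex independent set $I_i$ together with the five copies of $P_9$ forming $B_i$, that is $5+45=50$ vertices; hence $N:=|V(G_{\Phi})| = n+50m = O(n+m)$. This makes the reduction \emph{linear} in the sense of Section~\ref{sec:hardness}: a $2^{o(N)}$-time algorithm for \SWTF{P_{10}}, composed with the reduction, would solve \FiSATM\ in $2^{o(n+m)}$ time, contradicting the lower bound of Proposition~\ref{pro:ksatm}. Thus no $2^{o(N)}$-time algorithm exists for \SWTF{P_{10}} under ETH, which is precisely the claimed bound since $N$ is the number of vertices of the input graph.

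I expect no genuine obstacle to remain at this stage: the substance of the argument---the delicate forward direction showing that switching the \TRUE-vertices destroys every candidate induced $P_{10}$---has already been discharged through Lemmas~\ref{lem:p10:forward:c}--\ref{lem:p10:forward:cili} together with Observation~\ref{obs:p10:forward:l}. The only points in the wrap-up that warrant care are verifying the linear vertex count exactly and confirming that the $P_{10}$-freeness test invoked for NP membership runs in polynomial time.
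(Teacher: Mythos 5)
Your proposal is correct and follows essentially the same route as the paper, which derives Theorem~\ref{thm:p10} directly from Lemmas~\ref{lem:p10:forward} and~\ref{lem:p10:backward}, Proposition~\ref{pro:ksatm}, and the linearity of $|V(G_{\Phi})|$ in $n+m$; your explicit count $|V(G_{\Phi})|=n+50m$ (each clause contributing $I_i$ plus five $P_9$s) and the brief NP-membership check are accurate additions that the paper leaves implicit.
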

 \subsection{Cycle}
\label{subsec:cycle}
In this section we prove that \SWTF{C_7} is \NPC\
and cannot be solved in subexponential-time, assuming ETH. The reduction is from \TSATM.
\begin{construction}
\label{con:c7}
    Let $\Phi$ be a \TSATM\ formula with $n$ variables $X_1, X_2, \cdots, X_n$, and $m$ clauses $Y_1, Y_2, \ldots,Y_m$. 
    We construct a graph $G_{\Phi}$ as follows.
\begin{itemize}
    \item  For each variable $X_i$ in $\Phi$, introduce a variable vertex $x_i$.  Let $L$ be the set of all variable vertices, which forms an independent set of size $n$.

    \item For each clause $Y_i$ in $\Phi$ of the form  $\{\ell_{i1}, \ell_{i2}, \ell_{i3}\}$, 
    introduce a set of clause vertices, also named $Y_i$, which
    is the union of a set $I_i$, which induces a $K_2+2K_1$, and a set $B_i$. The set $B_i$ is the union of 8 levels of vertices, where the set forming $j$\textsuperscript{th} level is denoted by $B_{ij}$, for $1\leq j\leq 8$. 
    Each set $B_{ij}$ is the union of 4 sets, $B_{ij\ell}$, for $1\leq \ell\leq 4$, where each set $B_{ij\ell}$ induces a $P_{6}$. The two end points of the $P_6$ induced by each $B_{ij\ell}$
    is denoted by $p_{ij\ell}$ and $q_{ij\ell}$ and let $B'_{ij\ell}$ denote $B_{ij\ell}\setminus \{p_{ij\ell}, q_{ij\ell}\}$. The set $B_{i11}$ is complete to $B_{i14}$. 
    The set of all vertices in a level $j$ is denoted by $\beta_j$, i.e., $\beta_j = \bigcup_{i=1}^{m} B_{ij}$, for $1\leq j\leq 8$.
    The set $B'_{ij\ell}$ is complete to $B_{i(j+1)\ell}$, for $1\leq j\leq 7$, $1\leq \ell\leq 4$.
    Similarly, $B'_{i8\ell}$ is complete to $I_i$, for $1\leq \ell\leq 4$.
    The set of union of $I_i$s is denoted by $I$, and the set of union of $Y_i$s is denoted by $Y$. Let the $3$ vertices introduced (in the previous step) 
    for the variables $\ell_{i1}, \ell_{i2}, \ell_{i3}$ be denoted by  
    $L_i=\{x_{i1},x_{i2},x_{i3}\}$. 
    Make the adjacency among the sets $B_{i1\ell}$s, for $1\leq \ell \leq 4$, and $L_i$ in such a way that, taking one vertex from each $B_{i1\ell}$ along with the vertices in
    $L_i$ induces a $C_7$, where the vertices in $L_i$ correspond to an independent set of size $3$ in $C_7$. More precisely, $x_{i\ell}$ is complete
    to $B_{i1\ell}\cup B_{i1(\ell+1)}$, for $1\leq \ell\leq 3$.
    Similarly, make the adjacency between the vertices in $I_i$ and the vertices 
    in $L_i$ in such a way that, if exactly one of the set $L_i$ or $I_i$ is in a switching set $A$, then these vertices together induce a $C_7$ in $S(G_{\Phi},A)$. This adjacency is shown in Figure~\ref{fig:c7:liii}.
   
    \item For all $i \neq j$, $B_{i1}$ is complete to $B_{j1}$. 
     
    \item The set $B_{i1}$ is complete to $I$, for $1\leq i\leq m$.
     
    \item For all $i \neq j$, $I_i$ is complete to $I_j$. 
\end{itemize}

 This completes the construction of the graph $G_{\Phi}$. Figure~\ref{fig:c7} illustrates an example. 
\end{construction}

\begin{figure}[ht]
  \centering
    \begin{tikzpicture}  [myv/.style={circle, draw, inner sep=0pt,line width=0.3mm},myv1/.style={rectangle, draw,inner sep=1pt},myv2/.style={rectangle, draw,inner sep=2.5pt},my/.style={rectangle, draw,dashed,inner sep=0pt},my1/.style={rectangle, draw,dotted,inner sep=0pt,line width=0.3mm},myv3/.style={circle, draw, inner sep=0.25pt,line width=0.3mm},scale=0.85, myv5/.style={rectangle, draw, inner sep=0.5pt}] 

\tikzset{
dotted_block/.style={draw=black,  dash pattern=on 3pt off 2pt,
            inner ysep=1mm,inner xsep=0mm, rectangle}}
 
\node[inner sep=0pt] (a1) at (-1.5,5)   {\resizebox{0.04\textwidth}{!}{\begin{tikzpicture}[myv/.style={circle, draw, inner sep=1pt},myv1/.style={circle, draw, inner sep=1.5pt, white},myv2/.style={rectangle, draw,inner sep=1.5pt}]

\tikzset{
dotted_block/.style={draw=black,  dash pattern=on 3pt off 2pt,
            inner ysep=1mm,inner xsep=0.5mm, rectangle}}

  \node (z) at (0,0) {};

\node[myv] (c) at (-0.2,0.2) {};
  \node[myv] (d) at (-0.2,0) {};
  \node[myv] (e) at (0.2,0.2) {};
  \node[myv] (f) at (0.2,0) {};
\node[myv] (g) at (0.4,0) {};
  \node[myv] (h) at (-0.4,0) {}; 

   \node[myv2][dotted_block,fit=(c) (d) (e) (f)] {};

\draw (c) -- (d);
  \draw (c) -- (e);
  \draw (f) -- (e);
\draw (f) -- (g);
  \draw (h) -- (d);
  
\end{tikzpicture}} };
     \node[myv5] (wa1) [dotted_block,fit=(a1)] {};
    
     \node[inner sep=0pt] (a2) at (0,5) {\resizebox{0.04\textwidth}{!}{\begin{tikzpicture}[myv/.style={circle, draw, inner sep=1pt},myv1/.style={circle, draw, inner sep=1.5pt, white},myv2/.style={rectangle, draw,inner sep=1.5pt}]

\tikzset{
dotted_block/.style={draw=black,  dash pattern=on 3pt off 2pt,
            inner ysep=1mm,inner xsep=0.5mm, rectangle}}

  \node (z) at (0,0) {};

\node[myv] (c) at (-0.2,0.2) {};
  \node[myv] (d) at (-0.2,0) {};
  \node[myv] (e) at (0.2,0.2) {};
  \node[myv] (f) at (0.2,0) {};
\node[myv] (g) at (0.4,0) {};
  \node[myv] (h) at (-0.4,0) {}; 

   \node[myv2][dotted_block,fit=(c) (d) (e) (f)] {};

\draw (c) -- (d);
  \draw (c) -- (e);
  \draw (f) -- (e);
\draw (f) -- (g);
  \draw (h) -- (d);
  
\end{tikzpicture}} };
     \node[myv5] (wa2) [dotted_block,fit=(a2)] {};
    
     \node[inner sep=0pt] (a3) at (1.5,5) {\resizebox{0.04\textwidth}{!}{\begin{tikzpicture}[myv/.style={circle, draw, inner sep=1pt},myv1/.style={circle, draw, inner sep=1.5pt, white},myv2/.style={rectangle, draw,inner sep=1.5pt}]

\tikzset{
dotted_block/.style={draw=black,  dash pattern=on 3pt off 2pt,
            inner ysep=1mm,inner xsep=0.5mm, rectangle}}

  \node (z) at (0,0) {};

\node[myv] (c) at (-0.2,0.2) {};
  \node[myv] (d) at (-0.2,0) {};
  \node[myv] (e) at (0.2,0.2) {};
  \node[myv] (f) at (0.2,0) {};
\node[myv] (g) at (0.4,0) {};
  \node[myv] (h) at (-0.4,0) {}; 

   \node[myv2][dotted_block,fit=(c) (d) (e) (f)] {};

\draw (c) -- (d);
  \draw (c) -- (e);
  \draw (f) -- (e);
\draw (f) -- (g);
  \draw (h) -- (d);
  
\end{tikzpicture}} };
     \node[myv5] (wa3) [dotted_block,fit=(a3)] {};
 
    \node[inner sep=0pt] (a4) at (3,5) {\resizebox{0.04\textwidth}{!}{\begin{tikzpicture}[myv/.style={circle, draw, inner sep=1pt},myv1/.style={circle, draw, inner sep=1.5pt, white},myv2/.style={rectangle, draw,inner sep=1.5pt}]

\tikzset{
dotted_block/.style={draw=black,  dash pattern=on 3pt off 2pt,
            inner ysep=1mm,inner xsep=0.5mm, rectangle}}

  \node (z) at (0,0) {};

\node[myv] (c) at (-0.2,0.2) {};
  \node[myv] (d) at (-0.2,0) {};
  \node[myv] (e) at (0.2,0.2) {};
  \node[myv] (f) at (0.2,0) {};
\node[myv] (g) at (0.4,0) {};
  \node[myv] (h) at (-0.4,0) {}; 

   \node[myv2][dotted_block,fit=(c) (d) (e) (f)] {};

\draw (c) -- (d);
  \draw (c) -- (e);
  \draw (f) -- (e);
\draw (f) -- (g);
  \draw (h) -- (d);
  
\end{tikzpicture}} };
     \node[myv5] (wa4) [dotted_block,fit=(a4)] {};
 
\node[inner sep=0pt] (b1) at (-1.5,6.5){\resizebox{0.04\textwidth}{!}{\begin{tikzpicture}[myv/.style={circle, draw, inner sep=1pt},myv1/.style={circle, draw, inner sep=1.5pt, white},myv2/.style={rectangle, draw,inner sep=1.5pt}]

\tikzset{
dotted_block/.style={draw=black,  dash pattern=on 3pt off 2pt,
            inner ysep=1mm,inner xsep=0.5mm, rectangle}}

  \node (z) at (0,0) {};

\node[myv] (c) at (-0.2,0.2) {};
  \node[myv] (d) at (-0.2,0) {};
  \node[myv] (e) at (0.2,0.2) {};
  \node[myv] (f) at (0.2,0) {};
\node[myv] (g) at (0.4,0) {};
  \node[myv] (h) at (-0.4,0) {}; 

   \node[myv2][dotted_block,fit=(c) (d) (e) (f)] {};

\draw (c) -- (d);
  \draw (c) -- (e);
  \draw (f) -- (e);
\draw (f) -- (g);
  \draw (h) -- (d);
  
\end{tikzpicture}} };
     \node[myv5] (wb1) [dotted_block,fit=(b1)] {};
   
     \node[inner sep=0pt] (b2) at (0,6.5) {\resizebox{0.04\textwidth}{!}{\begin{tikzpicture}[myv/.style={circle, draw, inner sep=1pt},myv1/.style={circle, draw, inner sep=1.5pt, white},myv2/.style={rectangle, draw,inner sep=1.5pt}]

\tikzset{
dotted_block/.style={draw=black,  dash pattern=on 3pt off 2pt,
            inner ysep=1mm,inner xsep=0.5mm, rectangle}}

  \node (z) at (0,0) {};

\node[myv] (c) at (-0.2,0.2) {};
  \node[myv] (d) at (-0.2,0) {};
  \node[myv] (e) at (0.2,0.2) {};
  \node[myv] (f) at (0.2,0) {};
\node[myv] (g) at (0.4,0) {};
  \node[myv] (h) at (-0.4,0) {}; 

   \node[myv2][dotted_block,fit=(c) (d) (e) (f)] {};

\draw (c) -- (d);
  \draw (c) -- (e);
  \draw (f) -- (e);
\draw (f) -- (g);
  \draw (h) -- (d);
  
\end{tikzpicture}} };
     \node[myv5] (wb2) [dotted_block,fit=(b2)] {};
   
     \node[inner sep=0pt] (b3) at (1.5,6.5) {\resizebox{0.04\textwidth}{!}{\begin{tikzpicture}[myv/.style={circle, draw, inner sep=1pt},myv1/.style={circle, draw, inner sep=1.5pt, white},myv2/.style={rectangle, draw,inner sep=1.5pt}]

\tikzset{
dotted_block/.style={draw=black,  dash pattern=on 3pt off 2pt,
            inner ysep=1mm,inner xsep=0.5mm, rectangle}}

  \node (z) at (0,0) {};

\node[myv] (c) at (-0.2,0.2) {};
  \node[myv] (d) at (-0.2,0) {};
  \node[myv] (e) at (0.2,0.2) {};
  \node[myv] (f) at (0.2,0) {};
\node[myv] (g) at (0.4,0) {};
  \node[myv] (h) at (-0.4,0) {}; 

   \node[myv2][dotted_block,fit=(c) (d) (e) (f)] {};

\draw (c) -- (d);
  \draw (c) -- (e);
  \draw (f) -- (e);
\draw (f) -- (g);
  \draw (h) -- (d);
  
\end{tikzpicture}} };
     \node[myv5] (wb3) [dotted_block,fit=(b3)] {};
 
    \node[inner sep=0pt] (b4) at (3,6.5) {\resizebox{0.04\textwidth}{!}{\begin{tikzpicture}[myv/.style={circle, draw, inner sep=1pt},myv1/.style={circle, draw, inner sep=1.5pt, white},myv2/.style={rectangle, draw,inner sep=1.5pt}]

\tikzset{
dotted_block/.style={draw=black,  dash pattern=on 3pt off 2pt,
            inner ysep=1mm,inner xsep=0.5mm, rectangle}}

  \node (z) at (0,0) {};

\node[myv] (c) at (-0.2,0.2) {};
  \node[myv] (d) at (-0.2,0) {};
  \node[myv] (e) at (0.2,0.2) {};
  \node[myv] (f) at (0.2,0) {};
\node[myv] (g) at (0.4,0) {};
  \node[myv] (h) at (-0.4,0) {}; 

   \node[myv2][dotted_block,fit=(c) (d) (e) (f)] {};

\draw (c) -- (d);
  \draw (c) -- (e);
  \draw (f) -- (e);
\draw (f) -- (g);
  \draw (h) -- (d);
  
\end{tikzpicture}} };
     \node[myv5] (wb4) [dotted_block,fit=(b4)] {};
 
\node[inner sep=0pt] (c1) at (-1.5,8.5){\resizebox{0.04\textwidth}{!}{\begin{tikzpicture}[myv/.style={circle, draw, inner sep=1pt},myv1/.style={circle, draw, inner sep=1.5pt, white},myv2/.style={rectangle, draw,inner sep=1.5pt}]

\tikzset{
dotted_block/.style={draw=black,  dash pattern=on 3pt off 2pt,
            inner ysep=1mm,inner xsep=0.5mm, rectangle}}

  \node (z) at (0,0) {};

\node[myv] (c) at (-0.2,0.2) {};
  \node[myv] (d) at (-0.2,0) {};
  \node[myv] (e) at (0.2,0.2) {};
  \node[myv] (f) at (0.2,0) {};
\node[myv] (g) at (0.4,0) {};
  \node[myv] (h) at (-0.4,0) {}; 

   \node[myv2][dotted_block,fit=(c) (d) (e) (f)] {};

\draw (c) -- (d);
  \draw (c) -- (e);
  \draw (f) -- (e);
\draw (f) -- (g);
  \draw (h) -- (d);
  
\end{tikzpicture}} };
     \node[myv5]  (wc1) [dotted_block,fit=(c1)] {};
  
     \node[inner sep=0pt] (c2) at (0,8.5) {\resizebox{0.04\textwidth}{!}{\begin{tikzpicture}[myv/.style={circle, draw, inner sep=1pt},myv1/.style={circle, draw, inner sep=1.5pt, white},myv2/.style={rectangle, draw,inner sep=1.5pt}]

\tikzset{
dotted_block/.style={draw=black,  dash pattern=on 3pt off 2pt,
            inner ysep=1mm,inner xsep=0.5mm, rectangle}}

  \node (z) at (0,0) {};

\node[myv] (c) at (-0.2,0.2) {};
  \node[myv] (d) at (-0.2,0) {};
  \node[myv] (e) at (0.2,0.2) {};
  \node[myv] (f) at (0.2,0) {};
\node[myv] (g) at (0.4,0) {};
  \node[myv] (h) at (-0.4,0) {}; 

   \node[myv2][dotted_block,fit=(c) (d) (e) (f)] {};

\draw (c) -- (d);
  \draw (c) -- (e);
  \draw (f) -- (e);
\draw (f) -- (g);
  \draw (h) -- (d);
  
\end{tikzpicture}} };
     \node[myv5] (wc2) [dotted_block,fit=(c2)] {};
 
     \node[inner sep=0pt] (c3) at (1.5,8.5) {\resizebox{0.04\textwidth}{!}{\begin{tikzpicture}[myv/.style={circle, draw, inner sep=1pt},myv1/.style={circle, draw, inner sep=1.5pt, white},myv2/.style={rectangle, draw,inner sep=1.5pt}]

\tikzset{
dotted_block/.style={draw=black,  dash pattern=on 3pt off 2pt,
            inner ysep=1mm,inner xsep=0.5mm, rectangle}}

  \node (z) at (0,0) {};

\node[myv] (c) at (-0.2,0.2) {};
  \node[myv] (d) at (-0.2,0) {};
  \node[myv] (e) at (0.2,0.2) {};
  \node[myv] (f) at (0.2,0) {};
\node[myv] (g) at (0.4,0) {};
  \node[myv] (h) at (-0.4,0) {}; 

   \node[myv2][dotted_block,fit=(c) (d) (e) (f)] {};

\draw (c) -- (d);
  \draw (c) -- (e);
  \draw (f) -- (e);
\draw (f) -- (g);
  \draw (h) -- (d);
  
\end{tikzpicture}} };
     \node[myv5] (wc3) [dotted_block,fit=(c3)] {};
   
    \node[inner sep=0pt] (c4) at (3,8.5) {\resizebox{0.04\textwidth}{!}{\begin{tikzpicture}[myv/.style={circle, draw, inner sep=1pt},myv1/.style={circle, draw, inner sep=1.5pt, white},myv2/.style={rectangle, draw,inner sep=1.5pt}]

\tikzset{
dotted_block/.style={draw=black,  dash pattern=on 3pt off 2pt,
            inner ysep=1mm,inner xsep=0.5mm, rectangle}}

  \node (z) at (0,0) {};

\node[myv] (c) at (-0.2,0.2) {};
  \node[myv] (d) at (-0.2,0) {};
  \node[myv] (e) at (0.2,0.2) {};
  \node[myv] (f) at (0.2,0) {};
\node[myv] (g) at (0.4,0) {};
  \node[myv] (h) at (-0.4,0) {}; 

   \node[myv2][dotted_block,fit=(c) (d) (e) (f)] {};

\draw (c) -- (d);
  \draw (c) -- (e);
  \draw (f) -- (e);
\draw (f) -- (g);
  \draw (h) -- (d);
  
\end{tikzpicture}} };
     \node[myv5] (wc4) [dotted_block,fit=(c4)] {};

     \node[myv] at (-1.5,7.25) {};
     \node[myv] at (-1.5,7.5) {};
     \node[myv] at (-1.5,7.75) {};
     
     \node[myv]  at (0,7.25) {};
     \node[myv]  at (0,7.5) {};
     \node[myv]  at (0,7.75) {};
     
     \node[myv]  at (1.5,7.25) {};
     \node[myv]  at (1.5,7.5) {};
     \node[myv]  at (1.5,7.75) {};
     
     \node[myv]  at (3,7.25) {};
     \node[myv]  at (3,7.5) {};
     \node[myv]  at (3,7.75) {};

     \node[myv3] (i1) at (-1.5,10) {$I_{11}$};
     \node[myv3] (i2) at (0,10) {$I_{12}$};
     \node[myv3] (i3) at (1.5,10) {$I_{13}$};
     \node[myv3] (i4) at (3,10) {$I_{14}$};

     \node[myv3] (j1) at (6,10) {$I_{21}$};
     \node[myv3] (j2) at (7.5,10) {$I_{22}$};
     \node[myv3] (j3) at (9,10) {$I_{23}$};
     \node[myv3] (j4) at (10.5,10) {$I_{24}$};

    \node[myv3] (x1) at (-1.25,3) {$x_{1}$};
 
    \node[myv3] (x2) at (1.75,3) {${x_{2}}$};

    \node[myv3] (x3) at (4.75,3) {$x_{3}$};

    \node[myv3] (x4) at (7.75,3) {$x_{4}$};
 
    \node[myv3] (x5) at (10.75,3) {${x_{5}}$};

\node[inner sep=0pt] (aj1) at (6,5)   {\resizebox{0.04\textwidth}{!}{\begin{tikzpicture}[myv/.style={circle, draw, inner sep=1pt},myv1/.style={circle, draw, inner sep=1.5pt, white},myv2/.style={rectangle, draw,inner sep=1.5pt}]

\tikzset{
dotted_block/.style={draw=black,  dash pattern=on 3pt off 2pt,
            inner ysep=1mm,inner xsep=0.5mm, rectangle}}

  \node (z) at (0,0) {};

\node[myv] (c) at (-0.2,0.2) {};
  \node[myv] (d) at (-0.2,0) {};
  \node[myv] (e) at (0.2,0.2) {};
  \node[myv] (f) at (0.2,0) {};
\node[myv] (g) at (0.4,0) {};
  \node[myv] (h) at (-0.4,0) {}; 

   \node[myv2][dotted_block,fit=(c) (d) (e) (f)] {};

\draw (c) -- (d);
  \draw (c) -- (e);
  \draw (f) -- (e);
\draw (f) -- (g);
  \draw (h) -- (d);
  
\end{tikzpicture}} };
     \node[myv5] (waj1) [dotted_block,fit=(aj1)] {};
    
     \node[inner sep=0pt] (aj2) at (7.5,5) {\resizebox{0.04\textwidth}{!}{\begin{tikzpicture}[myv/.style={circle, draw, inner sep=1pt},myv1/.style={circle, draw, inner sep=1.5pt, white},myv2/.style={rectangle, draw,inner sep=1.5pt}]

\tikzset{
dotted_block/.style={draw=black,  dash pattern=on 3pt off 2pt,
            inner ysep=1mm,inner xsep=0.5mm, rectangle}}

  \node (z) at (0,0) {};

\node[myv] (c) at (-0.2,0.2) {};
  \node[myv] (d) at (-0.2,0) {};
  \node[myv] (e) at (0.2,0.2) {};
  \node[myv] (f) at (0.2,0) {};
\node[myv] (g) at (0.4,0) {};
  \node[myv] (h) at (-0.4,0) {}; 

   \node[myv2][dotted_block,fit=(c) (d) (e) (f)] {};

\draw (c) -- (d);
  \draw (c) -- (e);
  \draw (f) -- (e);
\draw (f) -- (g);
  \draw (h) -- (d);
  
\end{tikzpicture}} };
     \node[myv5] (waj2) [dotted_block,fit=(aj2)] {};
    
     \node[inner sep=0pt] (aj3) at (9,5) {\resizebox{0.04\textwidth}{!}{\begin{tikzpicture}[myv/.style={circle, draw, inner sep=1pt},myv1/.style={circle, draw, inner sep=1.5pt, white},myv2/.style={rectangle, draw,inner sep=1.5pt}]

\tikzset{
dotted_block/.style={draw=black,  dash pattern=on 3pt off 2pt,
            inner ysep=1mm,inner xsep=0.5mm, rectangle}}

  \node (z) at (0,0) {};

\node[myv] (c) at (-0.2,0.2) {};
  \node[myv] (d) at (-0.2,0) {};
  \node[myv] (e) at (0.2,0.2) {};
  \node[myv] (f) at (0.2,0) {};
\node[myv] (g) at (0.4,0) {};
  \node[myv] (h) at (-0.4,0) {}; 

   \node[myv2][dotted_block,fit=(c) (d) (e) (f)] {};

\draw (c) -- (d);
  \draw (c) -- (e);
  \draw (f) -- (e);
\draw (f) -- (g);
  \draw (h) -- (d);
  
\end{tikzpicture}} };
     \node[myv5] (waj3) [dotted_block,fit=(aj3)] {};
 
    \node[inner sep=0pt] (aj4) at (10.5,5) {\resizebox{0.04\textwidth}{!}{\begin{tikzpicture}[myv/.style={circle, draw, inner sep=1pt},myv1/.style={circle, draw, inner sep=1.5pt, white},myv2/.style={rectangle, draw,inner sep=1.5pt}]

\tikzset{
dotted_block/.style={draw=black,  dash pattern=on 3pt off 2pt,
            inner ysep=1mm,inner xsep=0.5mm, rectangle}}

  \node (z) at (0,0) {};

\node[myv] (c) at (-0.2,0.2) {};
  \node[myv] (d) at (-0.2,0) {};
  \node[myv] (e) at (0.2,0.2) {};
  \node[myv] (f) at (0.2,0) {};
\node[myv] (g) at (0.4,0) {};
  \node[myv] (h) at (-0.4,0) {}; 

   \node[myv2][dotted_block,fit=(c) (d) (e) (f)] {};

\draw (c) -- (d);
  \draw (c) -- (e);
  \draw (f) -- (e);
\draw (f) -- (g);
  \draw (h) -- (d);
  
\end{tikzpicture}} };
     \node[myv5] (waj4) [dotted_block,fit=(aj4)] {};

\node[inner sep=0pt] (bj1) at (6,6.5){\resizebox{0.04\textwidth}{!}{\begin{tikzpicture}[myv/.style={circle, draw, inner sep=1pt},myv1/.style={circle, draw, inner sep=1.5pt, white},myv2/.style={rectangle, draw,inner sep=1.5pt}]

\tikzset{
dotted_block/.style={draw=black,  dash pattern=on 3pt off 2pt,
            inner ysep=1mm,inner xsep=0.5mm, rectangle}}

  \node (z) at (0,0) {};

\node[myv] (c) at (-0.2,0.2) {};
  \node[myv] (d) at (-0.2,0) {};
  \node[myv] (e) at (0.2,0.2) {};
  \node[myv] (f) at (0.2,0) {};
\node[myv] (g) at (0.4,0) {};
  \node[myv] (h) at (-0.4,0) {}; 

   \node[myv2][dotted_block,fit=(c) (d) (e) (f)] {};

\draw (c) -- (d);
  \draw (c) -- (e);
  \draw (f) -- (e);
\draw (f) -- (g);
  \draw (h) -- (d);
  
\end{tikzpicture}} };
     \node[myv5] (wbj1) [dotted_block,fit=(bj1)] {};
    
     \node[inner sep=0pt] (bj2) at (7.5,6.5) {\resizebox{0.04\textwidth}{!}{\begin{tikzpicture}[myv/.style={circle, draw, inner sep=1pt},myv1/.style={circle, draw, inner sep=1.5pt, white},myv2/.style={rectangle, draw,inner sep=1.5pt}]

\tikzset{
dotted_block/.style={draw=black,  dash pattern=on 3pt off 2pt,
            inner ysep=1mm,inner xsep=0.5mm, rectangle}}

  \node (z) at (0,0) {};

\node[myv] (c) at (-0.2,0.2) {};
  \node[myv] (d) at (-0.2,0) {};
  \node[myv] (e) at (0.2,0.2) {};
  \node[myv] (f) at (0.2,0) {};
\node[myv] (g) at (0.4,0) {};
  \node[myv] (h) at (-0.4,0) {}; 

   \node[myv2][dotted_block,fit=(c) (d) (e) (f)] {};

\draw (c) -- (d);
  \draw (c) -- (e);
  \draw (f) -- (e);
\draw (f) -- (g);
  \draw (h) -- (d);
  
\end{tikzpicture}} };
     \node[myv5] (wbj2) [dotted_block,fit=(bj2)] {};
   
     \node[inner sep=0pt] (bj3) at (9,6.5) {\resizebox{0.04\textwidth}{!}{\begin{tikzpicture}[myv/.style={circle, draw, inner sep=1pt},myv1/.style={circle, draw, inner sep=1.5pt, white},myv2/.style={rectangle, draw,inner sep=1.5pt}]

\tikzset{
dotted_block/.style={draw=black,  dash pattern=on 3pt off 2pt,
            inner ysep=1mm,inner xsep=0.5mm, rectangle}}

  \node (z) at (0,0) {};

\node[myv] (c) at (-0.2,0.2) {};
  \node[myv] (d) at (-0.2,0) {};
  \node[myv] (e) at (0.2,0.2) {};
  \node[myv] (f) at (0.2,0) {};
\node[myv] (g) at (0.4,0) {};
  \node[myv] (h) at (-0.4,0) {}; 

   \node[myv2][dotted_block,fit=(c) (d) (e) (f)] {};

\draw (c) -- (d);
  \draw (c) -- (e);
  \draw (f) -- (e);
\draw (f) -- (g);
  \draw (h) -- (d);
  
\end{tikzpicture}} };
     \node[myv5] (wbj3) [dotted_block,fit=(bj3)] {};
  
    \node[inner sep=0pt] (bj4) at (10.5,6.5) {\resizebox{0.04\textwidth}{!}{\begin{tikzpicture}[myv/.style={circle, draw, inner sep=1pt},myv1/.style={circle, draw, inner sep=1.5pt, white},myv2/.style={rectangle, draw,inner sep=1.5pt}]

\tikzset{
dotted_block/.style={draw=black,  dash pattern=on 3pt off 2pt,
            inner ysep=1mm,inner xsep=0.5mm, rectangle}}

  \node (z) at (0,0) {};

\node[myv] (c) at (-0.2,0.2) {};
  \node[myv] (d) at (-0.2,0) {};
  \node[myv] (e) at (0.2,0.2) {};
  \node[myv] (f) at (0.2,0) {};
\node[myv] (g) at (0.4,0) {};
  \node[myv] (h) at (-0.4,0) {}; 

   \node[myv2][dotted_block,fit=(c) (d) (e) (f)] {};

\draw (c) -- (d);
  \draw (c) -- (e);
  \draw (f) -- (e);
\draw (f) -- (g);
  \draw (h) -- (d);
  
\end{tikzpicture}} };
     \node[myv5] (wbj4) [dotted_block,fit=(bj4)] {};

\node[inner sep=0pt] (cj1) at (6,8.5){\resizebox{0.04\textwidth}{!}{\begin{tikzpicture}[myv/.style={circle, draw, inner sep=1pt},myv1/.style={circle, draw, inner sep=1.5pt, white},myv2/.style={rectangle, draw,inner sep=1.5pt}]

\tikzset{
dotted_block/.style={draw=black,  dash pattern=on 3pt off 2pt,
            inner ysep=1mm,inner xsep=0.5mm, rectangle}}

  \node (z) at (0,0) {};

\node[myv] (c) at (-0.2,0.2) {};
  \node[myv] (d) at (-0.2,0) {};
  \node[myv] (e) at (0.2,0.2) {};
  \node[myv] (f) at (0.2,0) {};
\node[myv] (g) at (0.4,0) {};
  \node[myv] (h) at (-0.4,0) {}; 

   \node[myv2][dotted_block,fit=(c) (d) (e) (f)] {};

\draw (c) -- (d);
  \draw (c) -- (e);
  \draw (f) -- (e);
\draw (f) -- (g);
  \draw (h) -- (d);
  
\end{tikzpicture}} };
     \node[myv5]  (wcj1) [dotted_block,fit=(cj1)] {};
  
     \node[inner sep=0pt] (cj2) at (7.5,8.5) {\resizebox{0.04\textwidth}{!}{\begin{tikzpicture}[myv/.style={circle, draw, inner sep=1pt},myv1/.style={circle, draw, inner sep=1.5pt, white},myv2/.style={rectangle, draw,inner sep=1.5pt}]

\tikzset{
dotted_block/.style={draw=black,  dash pattern=on 3pt off 2pt,
            inner ysep=1mm,inner xsep=0.5mm, rectangle}}

  \node (z) at (0,0) {};

\node[myv] (c) at (-0.2,0.2) {};
  \node[myv] (d) at (-0.2,0) {};
  \node[myv] (e) at (0.2,0.2) {};
  \node[myv] (f) at (0.2,0) {};
\node[myv] (g) at (0.4,0) {};
  \node[myv] (h) at (-0.4,0) {}; 

   \node[myv2][dotted_block,fit=(c) (d) (e) (f)] {};

\draw (c) -- (d);
  \draw (c) -- (e);
  \draw (f) -- (e);
\draw (f) -- (g);
  \draw (h) -- (d);
  
\end{tikzpicture}} };
     \node[myv5] (wcj2) [dotted_block,fit=(cj2)] {};
 
     \node[inner sep=0pt] (cj3) at (9,8.5) {\resizebox{0.04\textwidth}{!}{\begin{tikzpicture}[myv/.style={circle, draw, inner sep=1pt},myv1/.style={circle, draw, inner sep=1.5pt, white},myv2/.style={rectangle, draw,inner sep=1.5pt}]

\tikzset{
dotted_block/.style={draw=black,  dash pattern=on 3pt off 2pt,
            inner ysep=1mm,inner xsep=0.5mm, rectangle}}

  \node (z) at (0,0) {};

\node[myv] (c) at (-0.2,0.2) {};
  \node[myv] (d) at (-0.2,0) {};
  \node[myv] (e) at (0.2,0.2) {};
  \node[myv] (f) at (0.2,0) {};
\node[myv] (g) at (0.4,0) {};
  \node[myv] (h) at (-0.4,0) {}; 

   \node[myv2][dotted_block,fit=(c) (d) (e) (f)] {};

\draw (c) -- (d);
  \draw (c) -- (e);
  \draw (f) -- (e);
\draw (f) -- (g);
  \draw (h) -- (d);
  
\end{tikzpicture}} };
     \node[myv5] (wcj3) [dotted_block,fit=(cj3)] {};
   
    \node[inner sep=0pt] (cj4) at (10.5,8.5) {\resizebox{0.04\textwidth}{!}{\begin{tikzpicture}[myv/.style={circle, draw, inner sep=1pt},myv1/.style={circle, draw, inner sep=1.5pt, white},myv2/.style={rectangle, draw,inner sep=1.5pt}]

\tikzset{
dotted_block/.style={draw=black,  dash pattern=on 3pt off 2pt,
            inner ysep=1mm,inner xsep=0.5mm, rectangle}}

  \node (z) at (0,0) {};

\node[myv] (c) at (-0.2,0.2) {};
  \node[myv] (d) at (-0.2,0) {};
  \node[myv] (e) at (0.2,0.2) {};
  \node[myv] (f) at (0.2,0) {};
\node[myv] (g) at (0.4,0) {};
  \node[myv] (h) at (-0.4,0) {}; 

   \node[myv2][dotted_block,fit=(c) (d) (e) (f)] {};

\draw (c) -- (d);
  \draw (c) -- (e);
  \draw (f) -- (e);
\draw (f) -- (g);
  \draw (h) -- (d);
  
\end{tikzpicture}} };
     \node[myv5] (wcj4) [dotted_block,fit=(cj4)] {};

     \node[myv] at (6,7.25) {};
     \node[myv] at (6,7.5) {};
     \node[myv] at (6,7.75) {};
     
     \node[myv]  at (7.5,7.25) {};
     \node[myv]  at (7.5,7.5) {};
     \node[myv]  at (7.5,7.75) {};
     
     \node[myv]  at (9,7.25) {};
     \node[myv]  at (9,7.5) {};
     \node[myv]  at (9,7.75) {};
     
     \node[myv]  at (10.5,7.25) {};
     \node[myv]  at (10.5,7.5) {};
     \node[myv]  at (10.5,7.75) {};

    \node[my1](p)[fit=(a1) (a4),  inner xsep=1ex, inner ysep=1.5ex] {}; 
      \node[inner sep=0pt,label=left:$B_{11}$] (bi) at (-1.95,5) {};
 
     \node[my1](b)[fit=(b1) (b4),  inner xsep=1ex, inner ysep=1.5ex] {}; 
    \node[inner sep=0pt,label=left:$B_{12}$] (x) at (-1.95,6.5) {};
   \node[my1](b)[fit=(c1) (c4),  inner xsep=1ex, inner ysep=1.5ex] {}; 
     \node[inner sep=0pt,label=left:$B_{18}$] (x) at (-1.95,8.5) {};

      \node[my1](q)[fit=(aj1) (aj4),  inner xsep=1ex, inner ysep=1.5ex] {}; 
      \node[inner sep=0pt,label=left:$B_{21}$] (x) at (11.9,4.5) {};
 
     \node[my1](b)[fit=(bj1) (bj4),  inner xsep=1ex, inner ysep=1.5ex] {}; 
    \node[inner sep=0pt,label=left:$B_{22}$] (x) at (11.9,6.5) {};
   \node[my1](b)[fit=(cj1) (cj4),  inner xsep=1ex, inner ysep=1.5ex] {}; 
     \node[inner sep=0pt,label=left:$B_{28}$] (x) at (11.9,8.5) {};

      \node[my1][line width=0.4mm](pq)[fit=(p) (q),  inner xsep=1ex, inner ysep=1ex] {}; 
 
    \node[my] (i) [fit=(i1) (i4),  inner xsep=1.2ex, inner ysep=1.7ex, label=right:$I_{1}$] {}; 

    \node[my] (j) [fit=(j1) (j4),  inner xsep=1.2ex, inner ysep=1.7ex, label=left:$I_{2}$] {}; 

     \node[my] (ij) [fit=(i) (j),  inner xsep=1ex, inner ysep=0.5ex, label=above:$I$] {};
    
    \node[my] (x) [fit=(x1) (x3),  inner xsep=1.5ex, inner ysep=2ex, label=left:$L_{1}$] {}; 

     \node[my] [dotted,line width=0.4mm] (y) [fit=(x3) (x5),  inner xsep=1.5ex, inner ysep=2ex, label=right:$L_{2}$] {};

    \node[my] (x) [fit=(wa1) (i) (wc4),  inner xsep=2.4ex, inner ysep=2.5ex, label=above:$Y_{1}$] {}; 

     \node[my] (y) [fit=(waj1) (j) (wcj4),  inner xsep=2.4ex, inner ysep=2.5ex, label=above:$Y_{2}$] {};

    \draw[line width=0.3mm] (-1.4,5.38) to[bend left=17]  (2.9, 5.38);
  
    \draw[line width=0.3mm] (6.1,5.38) to[bend left=17]  (10.6, 5.38);

    \draw[line width=0.3mm] (4.5,9.2) -- (4.5,5.75);

    \draw [double,line width=0.3mm] (11.15,10.1) to[bend left=30]  (11.35,2.8);
    \draw [double,line width=0.3mm] (-2.15,10.1) to[bend right=30]  (-1.85,2.8);

\draw[line width=0.3mm] (wa1) -- (x1);
   \draw[line width=0.3mm] (0,4.68) -- (x1);
    \draw[line width=0.3mm] (0,4.68) -- (x2);
   \draw[line width=0.3mm] (1.5,4.68) -- (x2);
    \draw[line width=0.3mm] (1.5,4.68) -- (x3);
   \draw[line width=0.3mm] (3,4.68) -- (x3);

\draw[line width=0.3mm] (waj1) -- (x3);
   \draw[line width=0.3mm] (7.5,4.68) -- (x3);
    \draw[line width=0.3mm] (7.5,4.68) -- (x4);
   \draw[line width=0.3mm] (9,4.68) -- (x4);
    \draw[line width=0.3mm] (9,4.68) -- (x5);
   \draw[line width=0.3mm] (10.5,4.68) -- (x5);
 
   \draw[line width=0.3mm] (a1) -- (wb1);
   \draw[line width=0.3mm] (a2) -- (wb2);
   \draw[line width=0.3mm] (a3) -- (wb3);
   \draw[line width=0.3mm] (a4) -- (wb4);

    \draw[line width=0.3mm] (aj1) -- (wbj1);
   \draw[line width=0.3mm] (aj2) -- (wbj2);
   \draw[line width=0.3mm] (aj3) -- (wbj3);
   \draw[line width=0.3mm] (aj4) -- (wbj4);
  
   \draw[line width=0.3mm] (i1) to[bend left=25] (i4);
   \draw[line width=0.3mm] (i) -- (-1.28,8.7);
   \draw[line width=0.3mm] (i) -- (c2);
   \draw[line width=0.3mm] (i) -- (c3);
   \draw[line width=0.3mm] (i) -- (3,8.7);

   \draw[line width=0.3mm] (j1) to[bend left=25] (j4);
   \draw[line width=0.3mm] (j) -- (6.1,8.7);
   \draw[line width=0.3mm] (j) -- (cj2);
   \draw[line width=0.3mm] (j) -- (cj3);
   \draw[line width=0.3mm] (j) -- (10.5,8.7);

   \draw[line width=0.3mm] (3.6,4.75) -- (5.5,4.75);

   \draw[line width=0.3mm] (3.6,9.55) -- (5.4,9.55);
  
\end{tikzpicture}  
     \caption{An example of Construction \ref{con:c7} with the formula $\Phi = C_1\wedge C_2$,  where $C_1=\{x_1, x_2, x_3\}$ and $C_2=\{x_3, x_4, x_5\}$.
    A single line connecting two rectangles indicates that 
    each vertex in one rectangle is adjacent to all vertices in the other rectangle.
    The adjacency between $L_i$ and $I_i$ is indicated by a double line and is illustrated in Figure~\ref{fig:c7:liii}.}
  \label{fig:c7}
\end{figure}

\begin{figure}[!htb]
  \centering
    \resizebox{0.4\textwidth}{!}{\begin{tikzpicture}  [myv/.style={circle, draw, inner sep=0pt},myv1/.style={rectangle, draw,inner sep=1pt},myv2/.style={rectangle, draw,inner sep=2.5pt},my/.style={rectangle, draw,dashed,inner sep=0pt},myv3/.style={circle, draw, inner sep=0.25pt},scale=0.85, myv5/.style={circle, draw, inner sep=0.75pt}] 

\tikzset{
dotted_block/.style={draw=block,  dash pattern=on 3pt off 2pt, rectangle}}

     \node[myv3] (i1) at (-0.5,4.5) {$I_{i1}$};
     \node[myv3] (i2) at (2,4.5) {$I_{i2}$};
     \node[myv3] (i3) at (4.5,4.5) {$I_{i3}$};
     \node[myv3] (i4) at (7,4.5) {$I_{i4}$};

    \node[myv5] (x1) at (0,2.25) {$x_{1}$};
 
    \node[myv5] (x2) at (3,2.25) {${x_{2}}$};

    \node[myv5] (x3) at (6,2.25) {$x_{3}$};

    \node[my] (i) [fit=(i1) (i4),  inner xsep=0.5ex, inner ysep=0.5ex, label=left:$I_{i}$] {}; 
    
    \node[my] (x) [fit=(x1) (x3),  inner xsep=0.5ex, inner ysep=0.5ex, label=left:$L_{i}$] {};

  \draw [line width=0.3mm](i1) to[bend left=20] (i4);
  \draw [line width=0.3mm](i1) -- (x2);
  \draw [line width=0.3mm](i1) -- (x3);
\draw [line width=0.3mm](i2) -- (x3);
  \draw [line width=0.3mm](i3) -- (x1);
\draw [line width=0.3mm](i4) -- (x1);
  \draw [line width=0.3mm](i4) -- (x2);

\end{tikzpicture}  
}     \caption{The adjacency between $L_i$ and $I_i$ in $G_\Phi$ as described in Construction~\ref{con:c7}}
  \label{fig:c7:liii}
\end{figure}

First we prove the backward direction of the reduction.

\begin{lemma}
    \label{lem:c7:backward}
    Let $A$ be a subset of vertices of $G_{\Phi}$
    such that $G' = S(G_{\Phi}, A)$ is $C_7$-free. Then
    assigning TRUE to all the variables corresponding to the variable vertices in $L\cap A$ satisfies $\Phi$.
\end{lemma}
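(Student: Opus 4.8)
The plan is to mirror the backward direction for $P_{10}$ (Lemma~\ref{lem:p10:backward}): it suffices to prove that the proposed assignment is a legal NAE-assignment, i.e., that for every clause $Y_i$ we have both $A\cap L_i\neq\emptyset$ (some literal TRUE) and $L_i\setminus A\neq\emptyset$ (some literal FALSE). Since $S(G_\Phi,A)=S(G_\Phi,V(G_\Phi)\setminus A)$ by Proposition~\ref{prop:basic properties}, replacing $A$ by its complement turns the condition $L_i\subseteq A$ into $A\cap L_i=\emptyset$ while keeping $G'$ unchanged; hence I only need to rule out $A\cap L_i=\emptyset$, and the case $L_i\subseteq A$ follows for free. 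So I assume, for contradiction, that $A\cap L_i=\emptyset$ for some $i$ and exhibit an induced $C_7$ in $G'$.

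Two gadget cycles drive the argument. First, the \emph{level-one cycle}: by Construction~\ref{con:c7}, picking one vertex from each $B_{i1\ell}$ ($1\le\ell\le 4$) together with $L_i$ induces a $C_7$ in $G_\Phi$, in which $L_i$ is the size-$3$ independent set. Because $A\cap L_i=\emptyset$, if every column $B_{i1\ell}$ contains a vertex outside $A$, I can choose all four representatives outside $A$; then the whole seven-vertex set avoids $A$, switching does not touch it, and it still induces a $C_7$ in $G'$. This contradiction forces some column to be entirely switched, say $B_{i1\ell_0}\subseteq A$. Second, the \emph{clause cycle} (Figure~\ref{fig:c7:liii}): when exactly one of $L_i,I_i$ lies in $A$, the seven vertices $L_i\cup I_i$ induce a $C_7$ in $G'$; I would verify this once by flipping the six edges between $L_i$ and $I_i$ and checking that the result is a chordless seven-cycle.

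The crux is a \emph{propagation} step along column $\ell_0$. I claim $B_{ij\ell_0}\subseteq A$ implies $B_{i(j+1)\ell_0}\subseteq A$ for $1\le j\le 7$. Suppose not, and let $w\in B_{i(j+1)\ell_0}\setminus A$. The key asymmetry is that only the internal set $B'_{ij\ell_0}$ is complete to the next block, while the two endpoints $p_{ij\ell_0},q_{ij\ell_0}$ of the $P_6$ are not. Hence switching the fully contained block $B_{ij\ell_0}\subseteq A$ against $w\notin A$ makes $p_{ij\ell_0}$ and $q_{ij\ell_0}$ adjacent to $w$ and makes the four internal vertices non-adjacent to $w$, while the $P_6$ on $B_{ij\ell_0}$ is preserved; the six vertices of this $P_6$ together with $w$ then induce a $C_7$ in $G'$, a contradiction. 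Iterating gives $B_{i8\ell_0}\subseteq A$, and running the identical endpoint argument with $I_i$ playing the role of the next block (since $B'_{i8\ell_0}$ is complete to $I_i$ but the endpoints are not) forces $I_i\subseteq A$. Now $A\cap L_i=\emptyset$ together with $I_i\subseteq A$ is exactly the clause-cycle situation, so $L_i\cup I_i$ induces a $C_7$ in $G'$ — the final contradiction.

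The main obstacle is purely verification: confirming that each of the three cycles is \emph{induced}, that is, chordless. This hinges on the endpoint-versus-internal distinction in the $P_6$ blocks and on checking that switching flips precisely the intended adjacencies; I would also confirm that the seven chosen vertices pick up no stray neighbours, which holds because in each case they stay inside a single clause's column, so that the cross-clause completeness between the $B_{i1}$'s, the joins to $I$, and the joins $I_i$–$I_j$ never enter the seven-set. With these routine checks the contradiction is complete, establishing $A\cap L_i\neq\emptyset$ and, by the complementation symmetry, $L_i\setminus A\neq\emptyset$, which is what the claimed assignment requires.
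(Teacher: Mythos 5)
Your proposal is correct, and its core engine is identical to the paper's: the level-one cycle ($L_i$ plus one representative per $B_{i1\ell}$, all chosen outside $A$) forces some full column $B_{i1\ell_0}\subseteq A$; the endpoint-versus-internal asymmetry of each $P_6$ block then propagates full containment up the levels ($B_{ij\ell_0}\subseteq A$ plus $w\in B_{i(j+1)\ell_0}\setminus A$ yields the $P_6$ closed into a $C_7$ through $p_{ij\ell_0},q_{ij\ell_0}$), the same step with $I_i$ in place of the next block gives $I_i\subseteq A$, and the clause cycle on $L_i\cup I_i$ delivers the contradiction — exactly the paper's chain. The one genuine deviation is how you handle the second case: the paper runs a separate, structurally different downward propagation for $L_i\subseteq A$ (some vertex of $I_i$ in $A$, then \emph{at least one} vertex of each $B_{i8\ell}$ in $A$, descending level by level to one vertex per $B_{i1\ell}$, which together with $L_i$ induces a $C_7$), whereas you dispatch it via $S(G_\Phi,A)=S(G_\Phi,V(G_\Phi)\setminus A)$ from Proposition~\ref{prop:basic properties}. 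Your shortcut is sound for two reasons worth making explicit: your first-case argument assumes nothing about $A$ beyond $A\cap L_i=\emptyset$, so it applies verbatim to the complement set, and the target condition ($A\cap L_i\neq\emptyset$ and $L_i\setminus A\neq\emptyset$ for every $i$) is invariant under replacing $A$ by $V(G_\Phi)\setminus A$. The trade-off is minor: your route halves the case analysis and avoids the paper's second propagation pattern entirely, at the cost of leaning on the complementation identity; the paper's version is longer but each case is verified directly on the gadget.
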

\begin{proof}
    It is sufficient to prove that $A\cap L_i\neq \emptyset$ and $L_i\setminus A\neq \emptyset$, for $1\leq i\leq m$.
    For a contradiction, first assume that $A\cap L_i=\emptyset$, for some $1\leq i\leq m$. Recall that the vertices in $L_i$
    along with one vertex each from $B_{i1\ell}$ (for $1\leq \ell\leq 4$) induces a $C_7$ in $G_{\Phi}$.
    Therefore, since $L_i$ has no vertex in $A$, at least one set $B_{i1\ell}\subseteq A$ (for some $1\leq \ell\leq 4$). This implies that every vertex in $B_{i2\ell}$ is in $A$ - otherwise there is a $C_7$ induced in $G'$ by $B_{i1\ell}$ and a vertex in $B_{i2\ell}\setminus A$. Continuing these arguments, we obtain that $B_{i8\ell}\subseteq A$. This in turn implies that $I_i\subseteq A$. Then by the construction, $L_i\cup I_i$ induces a $C_7$ in $G'$, which is a contradiction. 

    For a contradiction, next assume that $L_i\subseteq A$, for some $1\leq i\leq m$.
    Then at least one vertex in $I_i$ is in $A$ - otherwise there is a $C_7$ induced in $G'$ by $L_i\cup I_i$.
    Then at least one vertex each from $B_{i8\ell}$ is in $A$, for $1\leq \ell\leq 4$ - otherwise there is a $C_7$ induced by one vertex in $I_i\cap A$ and a set $B_{i8\ell}$ such that $B_{i8\ell}\cap A=\emptyset$. This implies that at least one vertex in each $B_{i7\ell}$ (for $1\leq \ell\leq 4$)
    is in $A$. Continuing these arguments, we obtain
    that at least one vertex each from $B_{i1\ell}$ (for $1\leq \ell\leq 4$) is in $A$. Then those
    vertices along with $L_i$ induce a $C_7$ in 
    $G'$, which is a contradiction.
\end{proof}

Now onward we assume that $A$ is a subset of $L$
such that $A\cap L_i\neq \emptyset$ and $L_i\setminus A\neq \emptyset$, for every $1\leq i\leq m$. 
Let $G' = S(G_{\Phi}, A)$.
For the forward direction of the reduction, it is 
sufficient to prove that $G'$ is $C_7$-free.
We do this in Lemma~\ref{lem:c7:forward} with the help 
of Observation~\ref{obs:c7:l}, and Lemmas~\ref{lem:c7:bij-atmost1} and \ref{lem:c7:2-8}. Assume for a contradiction  that $R\subseteq V(G')$
induces a $C_7$ in $G'$.
Observation~\ref{obs:c7:l} is implied by the fact
that the graph induced by $L$ in $G'$ is a complete
bipartite graph.

\begin{observation}
    \label{obs:c7:l}
    $R\setminus L\neq \emptyset$.
\end{observation}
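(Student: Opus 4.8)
The plan is to read off the structure of $G'[L]$ directly from the way $A$ was chosen. First I would recall that $L$ is an independent set in $G_{\Phi}$ and that, under the standing assumption, $A\subseteq L$. By the definition of the switching operation $S(G_{\Phi},A)$, the only adjacencies that change are those between $A$ and $V(G_{\Phi})\setminus A$. Restricting attention to $L$, this means that two vertices both lying in $A$, or both in $L\setminus A$, remain nonadjacent (they were nonadjacent in the edgeless $G_{\Phi}[L]$), whereas a vertex of $A$ together with a vertex of $L\setminus A$, which were nonadjacent, become adjacent. Therefore $G'[L]$ is precisely the complete bipartite graph with parts $A$ and $L\setminus A$.

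The second and final step is to invoke bipartiteness. A complete bipartite graph is bipartite, and every induced subgraph of a bipartite graph is again bipartite, since a bipartition of the whole graph restricts to a bipartition of any vertex subset. Hence $G'[R']$ is bipartite for every $R'\subseteq L$. On the other hand, $C_7$ is an odd cycle and is therefore not bipartite. Consequently no subset of $L$ can induce a $C_7$ in $G'$, so the set $R$, which by assumption induces a $C_7$, must contain a vertex outside $L$; that is, $R\setminus L\neq\emptyset$.

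I expect no genuine obstacle here: the statement is an immediate consequence of the single structural fact that switching an independent set produces a complete bipartite graph, combined with the elementary observation that a bipartite graph contains no induced odd cycle, and in particular no induced $C_7$. The only point requiring any care is the bookkeeping of which pairs inside $L$ do and do not have their adjacency flipped, and this is settled directly by the definition of $S(G_{\Phi},A)$.
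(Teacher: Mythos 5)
Your proof is correct and follows exactly the paper's argument: the paper likewise derives the observation from the single fact that $L$ induces a complete bipartite graph in $G'$ (since $A\subseteq L$ and $L$ is independent in $G_{\Phi}$), which excludes an induced $C_7$ inside $L$. You have merely spelled out the switching bookkeeping and the bipartiteness step that the paper leaves implicit.
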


\begin{lemma}
    \label{lem:c7:bij-atmost1}
    $|R\cap B_{ij\ell}|\leq 1$, for $1\leq i\leq m$, $1\leq j\leq 8$, $1\leq \ell\leq 4$.
\end{lemma}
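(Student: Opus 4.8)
The plan is to argue by contradiction: assume $|R\cap B_{ij\ell}|\ge 2$ and exhibit a forbidden configuration inside the induced $C_7$ on $R$. The first ingredient is an invariant coming from the hypothesis $A\subseteq L$. Since $B_{ij\ell}$ is disjoint from $L$, switching $A$ changes no adjacency inside $B_{ij\ell}$, so in $G'$ the set $B_{ij\ell}$ still induces the $P_6$ from Construction~\ref{con:c7}, with endpoints $p_{ij\ell},q_{ij\ell}$ and internal $P_4$ on $B'_{ij\ell}$. Hence $R\cap B_{ij\ell}$, being simultaneously an induced subgraph of $P_6$ and of $C_7$, induces a disjoint union of paths. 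I will also record that $C_7$ is triangle-free, has maximum degree $2$, and contains no $4$-cycle whatsoever (its only cycle is the Hamiltonian one).

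The second ingredient is a \emph{neighbourhood dichotomy}, obtained by inspecting the joins in Construction~\ref{con:c7}: every vertex of $V(G')\setminus B_{ij\ell}$ is, with respect to $B_{ij\ell}$, either (i) complete to all six vertices, or (ii) complete to the four internal vertices $B'_{ij\ell}$ while non-adjacent to the two endpoints, or (iii) non-adjacent to $B_{ij\ell}$. For $2\le j\le 7$ the only external neighbours are $B'_{i(j-1)\ell}$, which is complete to all of $B_{ij\ell}$ (type (i)), and $B_{i(j+1)\ell}$, which is complete to $B'_{ij\ell}$ only (type (ii)). At the boundary levels the extra joins all fit the same three patterns: for $j=1$ the blocks $I$, the other $B_{k1}$, and (when $\ell\in\{1,4\}$) the special partner from $B_{i11}$ or $B_{i14}$ are type (i), while the $L$-neighbours of $B_{i1\ell}$ become, after switching by $A\subseteq L$, either a homogeneous complete join or a homogeneous non-adjacency; for $j=8$ the join $B'_{i8\ell}$ to $I_i$ is type (ii). The key consequence is that \emph{any} external vertex adjacent to $B_{ij\ell}$ is complete to the whole internal $P_4$ on $B'_{ij\ell}$.

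With these two ingredients I would finish as follows. Because $|R|=7>6=|B_{ij\ell}|$, the set $R\cap B_{ij\ell}$ is proper, and connectivity of $C_7$ produces an external vertex $u\in R$ adjacent to $R\cap B_{ij\ell}$; by the dichotomy $u$ is adjacent to every internal vertex of $R\cap B_{ij\ell}$. If $R$ contained three internal vertices then $u$ would have degree at least three, so at most two internal vertices of $B_{ij\ell}$ lie in $R$; and if exactly two internal vertices $w_1,w_2$ are present, a second external neighbour in $R$ would give the $4$-cycle $w_1\,u\,w_2\,u'$, so there is then at most one external neighbour. Combining these degree and $4$-cycle bounds with the fact that each endpoint $p_{ij\ell},q_{ij\ell}$ has a single internal neighbour on the $P_6$, I would rule out $|R\cap B_{ij\ell}|\in\{2,3,4\}$, contradicting the assumption and forcing $|R\cap B_{ij\ell}|\le 1$.

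I expect the main obstacle to be the asymmetry created by the two endpoints: type-(ii) neighbours miss $p_{ij\ell}$ and $q_{ij\ell}$, so the counting of common neighbours and the resulting triangles and $4$-cycles must be split according to how many endpoints of $B_{ij\ell}$ lie in $R$ and whether the relevant external vertex is of type (i) or (ii). The boundary levels $j=1$ and $j=8$ contribute the additional joins to $I$, $L_i$, the neighbouring blocks $B_{k1}$, and the edge between $B_{i11}$ and $B_{i14}$, each of which has to be verified against the dichotomy; this verification, though routine, is where the argument becomes a somewhat tedious case analysis.
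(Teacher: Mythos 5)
Your proposal is correct, and its load-bearing claims all check out against Construction~\ref{con:c7}: the three-type neighbourhood dichotomy does hold at every level (for $2\le j\le 7$ via $B'_{i(j-1)\ell}$ being complete to $B_{ij\ell}$ and $B_{i(j+1)\ell}$ seeing exactly $B'_{ij\ell}$; at $j=1$ via the joins to $I$, to $B_{k1}$ for $k\ne i$, the $B_{i11}$--$B_{i14}$ join, and the $L$-adjacencies, which switching by $A\subseteq L$ flips wholesale and therefore keeps homogeneous; at $j=8$ via $B'_{i8\ell}$ complete to $I_i$), and your counting then genuinely closes. Note that your ``at most two internal vertices'' step already kills $|R\cap B_{ij\ell}|\in\{5,6\}$, so listing only $\{2,3,4\}$ is exhaustive, and each residual composition case (two internal vertices; internal plus endpoint; two endpoints; $2{+}1$; $1{+}2$; $2{+}2$) falls to triangle-freeness, the absence of any $4$-cycle in $C_7$, or a vertex of degree at most one --- but your final paragraph only promises this half-dozen-case check, so you should still write it out. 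Your route is organized differently from the paper's. The paper splits on whether $R\cap B_{ij\ell}$ contains an edge: if it does, it takes a longest path $P$ with ends $a,b$ inside the block and observes that either the seventh vertex of $R$ would have to be adjacent to exactly the two ends of a six-vertex $P$ (impossible, for exactly the homogeneity reason you isolate: no external vertex attaches to an endpoint without also attaching to $B'_{ij\ell}$) or an external neighbour of one end closes a cycle of length at most six; if $R\cap B_{ij\ell}$ is independent, the two $P_3$'s of the $C_7$ centred at $a$ and $b$ must induce a $P_5$ or $P_6$, and homogeneity forces a chord. What the paper's version buys is brevity --- two top-level cases and no explicit classification; what yours buys is that the construction-dependent facts the paper compresses into ``by the construction'' are stated and verified once as a standalone dichotomy (including the boundary levels $j\in\{1,8\}$ and the switched $L$-joins), after which the endgame is pure $C_7$ combinatorics that would transfer unchanged to longer cycles.
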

\begin{proof}
    Assume that $R\cap B_{ij\ell}$ induces a graph with at least one edge. Let $a,b$ be the end points of a longest path $P$ in the graph induced by $R\cap B_{ij\ell}$. If $P$ has 6 vertices, then the only remaining vertex in $R$ must be complete to exactly $\{a,b\}$ in the path $P$. This is not possible by the construction. Therefore, $P$ has only at most 5 vertices. The vertex $a$ has a neighbor $a'$
    and $b$ has a neighbor $b'$ in the $C_7$ induced 
    by $R$ such that $a',b'\notin B_{ij\ell}$. Then by
    the construction, either $a'$ is adjacent to $b$
    or $b'$ is adjacent to $a$. Therefore, the graph
    induced by $R$ has a cycle of length at most 6, which is a contradiction.

    Assume that $R\cap B_{ij\ell}$ induces an edgeless graph with at least two vertices, say $a, b$.
    There are exactly two neighbors $a',a''$ of $a$ and two neighbors $b',b''$ of $b$ such that $a',a'',b',b''\in R\setminus B_{ij\ell}$ and $|\{a',a''\}\cap \{b',b''\}|\leq 1$ and $\{a,b,a',a'',b',b''\}$ either induce a $P_5$ or induces a $P_6$. 
    But, by the construction, $b$ is adjacent to both $a'$ and $a''$, or $a$ is adjacent to both $b'$ and $b''$. Therefore, we get a contradiction.
\end{proof}

\begin{lemma}
    \label{lem:c7:2-8}
    $R\cap B_{ij\ell} = \emptyset$, for every
    $1\leq i\leq m, 2\leq j\leq 8, 1\leq \ell\leq 4$.
\end{lemma}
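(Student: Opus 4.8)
The plan is to fix a chain and exploit its near-isolation from the rest of the graph. For fixed $i$ and $\ell$, write $D_{i\ell}=\bigcup_{j=2}^{8}B_{ij\ell}$. Since the switching set $A$ is contained in $L$ and no vertex of level at least two is adjacent to $L$, the graph $G'$ restricted to $D_{i\ell}$ agrees with $G_\Phi$. By Construction~\ref{con:c7}, the only edges leaving $D_{i\ell}$ run from $B_{i2\ell}$ to $B_{i1\ell}$ (through $B'_{i1\ell}$) and from $B_{i8\ell}$ to $I_i$ (through $B'_{i8\ell}$); every vertex of $B_{ij\ell}$ with $3\le j\le 7$ has all of its neighbours inside $B_{i(j-1)\ell}\cup B_{ij\ell}\cup B_{i(j+1)\ell}\subseteq D_{i\ell}$. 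Moreover, two vertices of $R$ lying in $D_{i\ell}$ are adjacent only if they belong to consecutive levels, because distinct blobs of $D_{i\ell}$ are nonadjacent unless their levels differ by one. Combined with Lemma~\ref{lem:c7:bij-atmost1} (at most one vertex of $R$ per blob), this shows that $G'[R\cap D_{i\ell}]$ is an induced subgraph of a path, hence acyclic; in particular $R$ is not contained in $D_{i\ell}$.

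I would then assume, for contradiction, that $R\cap B_{ij\ell}\neq\emptyset$ for some $2\le j\le 8$, and examine the maximal sub-paths (arcs) into which the cycle $C_7$ is cut by $D_{i\ell}$. Each such arc occupies a contiguous block of levels, one vertex per level, and each of its two end vertices has a cycle-neighbour outside $D_{i\ell}$; by the locality observation above, such an end vertex must lie in level $2$ or level $8$. Counting the available boundary vertices---at most one in $B_{i2\ell}$ and one in $B_{i8\ell}$---rules out all but a few configurations. An arc with endpoints in levels $2$ and $8$ would occupy all seven levels $2,\dots,8$, exhausting $V(C_7)$ while leaving the cycle unable to close (a level-$2$ and a level-$8$ vertex are nonadjacent); and a single-vertex arc at level $2$ is impossible, since its two cycle-neighbours would both lie in $B_{i1\ell}$, contradicting Lemma~\ref{lem:c7:bij-atmost1}.

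The remaining, and genuinely delicate, case is a single-vertex arc at level $8$: a vertex $v\in B'_{i8\ell}$ whose two cycle-neighbours $a,b$ lie in $I_i$ (so $R\cap I_i=\{a,b\}$ with $a\not\sim b$, because $v$ is complete to $I_i$). The hard part will be ruling this out, and the plan is to trace the complementary induced $P_6$, namely $a\,x_1\,x_2\,x_3\,x_4\,b$, and to force all four internal vertices into $L$. Indeed, every neighbour of $a$ other than $v$ and $b$ lies either in $L$ or in a level-$1$ blob, an $I$-set, or some $B'_{i8\ell'}$; each vertex of the latter types is complete to $\{a,b\}$ (via the joins $B_{\cdot 1}$--$I$ and $I_i$--$I_j$) or to $v$ (for the other vertices of $I_i$), and would create a chord. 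Hence $x_1\in L$, and symmetrically $x_4\in L$. Repeating the argument, every non-$L$ neighbour of an $L$-vertex sits in a level-$1$ blob or an $I$-set, all complete to $\{a,b\}$, so $x_2,x_3\in L$ as well. But $G'[L]$ is a complete bipartite graph with parts $A$ and $L\setminus A$, so the induced path $x_1x_2x_3x_4$ forces $x_1$ and $x_4$ into opposite parts, making them adjacent---a chord of $C_7$. This last contradiction, hinging on the completeness of $B'_{i8\ell}$ and $B_{i1}$ to the $I$-sets together with the complete-bipartite structure of $G'[L]$, is where I expect the main obstacle to lie; the remaining cases are short counting arguments.
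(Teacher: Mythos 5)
There is a genuine gap, and it lies at the foundation of your argument: the ``near-isolation'' of $D_{i\ell}=\bigcup_{j=2}^{8}B_{ij\ell}$ holds in $G_\Phi$ but not in $G'$. Switching the set $A\subseteq L$ \emph{reverses} adjacencies between $A$ and $V(G_\Phi)\setminus A$; since in $G_\Phi$ no vertex of level $2$ through $8$ has any neighbour in $L$, in $G'=S(G_\Phi,A)$ \emph{every} vertex of $\beta_2\cup\cdots\cup\beta_8$ is complete to $L\cap A$. (Your observation that $G'[D_{i\ell}]=G_\Phi[D_{i\ell}]$ is correct, because both endpoints of such edges avoid $A$; but the edges \emph{leaving} $D_{i\ell}$ change drastically.) Consequently your locality claim --- that a vertex of $R\cap D_{i\ell}$ whose cycle-neighbour lies outside $D_{i\ell}$ must sit in level $2$ or level $8$ --- is false: a vertex $z\in B_{ij\ell}$ at any level $2\le j\le 8$ can have both of its cycle-neighbours in $L\cap A$. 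This configuration is not exotic; it is precisely case~(1) in the paper's proof ($u,v\in L\cap A$), which must be treated for every level $j$ and requires a further nontrivial analysis: one first shows $R\cap(\beta_2\cup\cdots\cup\beta_8)=\{z\}$ and $R\cap(L\setminus A)=\emptyset$ (else a $C_4$ through $z,u,v$ and the extra vertex), and then rules out both remaining placements of the next cycle-vertices $u',v'$ (inside $\beta_1$, or inside $I$) using Lemma~\ref{lem:c7:bij-atmost1} and the fact that four vertices meeting each $B_{i'1\ell'}$ once, or four vertices of one $I_{i'}$, induce $K_2+2K_1$ rather than the $P_4$ the seven-cycle would need. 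Your arc-endpoint counting never sees any of this, and your dismissal of a single-vertex arc at level $2$ (``its two cycle-neighbours would both lie in $B_{i1\ell}$'') fails for the same reason: those neighbours may instead lie in $L\cap A$, or one in $B'_{i1\ell}$ and one in $L\cap A$, which are the cases the paper labours over when $j=2$.

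The same error leaks into your level-$8$ analysis. When $v\in B'_{i8\ell}$ has its two cycle-neighbours $a,b\in I_i$, you argue the complementary arc $a\,x_1x_2x_3x_4\,b$ is forced entirely into $L$ because ``every non-$L$ neighbour of an $L$-vertex sits in a level-$1$ blob or an $I$-set''; in $G'$ an $L$-vertex in $A$ is also adjacent to all of levels $2$--$8$, so this needs the extra (true, but unstated) observation that any such neighbour would be adjacent to $v\in R$ and hence create a chord, and you must still separately exclude arcs that pass through $\beta_1$ or $I$ rather than through $L$ --- the paper does this by showing $R\cap\beta_1=\emptyset$ here and then deriving contradictions for $u',v'\in L$ (forcing $u'',v''\in I_i$, both adjacent to $v$) and for $u'\in L,\ v'\in I$. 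Your closing observation that $G'[L]$ is complete bipartite and hence $P_4$-free is sound and is indeed the kind of tool the paper uses; but as written the proposal's case structure is built on an incorrect adjacency model of $G'$, so the proof does not go through without being essentially rebuilt along the lines of the paper's exhaustive membership analysis of $z$'s neighbours.
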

\begin{proof}
    For a contradiction, assume that $z\in R\cap B_{ij\ell}$. By Lemma~\ref{lem:c7:bij-atmost1},
    $B_{ij\ell}$ has no other vertices in $R$.
    Let $R=\{z,u,v,u',v',u'',v''\}$, and let the edges
    of the $C_7$ induced by $R$ be $\{zu, uu', u'u'', u''v'', v''v', v'v, vz\}$. 
    
    Assume that $j\geq 3$.
    By Lemma~\ref{lem:c7:bij-atmost1}, it is not possible that $u,v\in B'_{i(j-1)\ell}$.
    Similarly, it is not possible that $u,v\in B'_{i(j+1)\ell}$ (for $2\leq j\leq 7$).
    We also note that, due to the adjacency between
    $L\cap A$ and $B_{i(j-1)\ell}$, it is not possible that $u\in B_{i(j-1)\ell}$ and $v\in L\cap A$ (or vice versa). Similarly, it is not possible that
    $u\in B_{i(j+1)\ell}$ and $v\in L\cap A$ (or vice versa), for $2\leq j\leq 7$.
    Therefore, we need to consider only the following cases based on the membership of $u$ and $v$: (1) $u,v\in L\cap A$, 
    (2) $u,v\in I_i$,
    (3) $u\in I_i, v\in B'_{i(j-1)\ell}$, (4) $u\in B'_{i(j-1)\ell}, v\in B'_{i(j+1)\ell}$. 
    We note that (4) is not applicable when $j=8$
    and (2) and (3) are applicable only for $j=8$.
    
    Assume that (1) holds, i.e., $u,v\in L\cap A$. If there is a vertex, say $w \in R\cap (\beta_2\cup \beta_3\cup\ldots\cup \beta_8)$ such that $w\neq z$, then there is a $C_4$ formed by
    $\{z,u,v,w\}$. Therefore, $R\cap (\beta_2\cup \beta_3\cup\ldots\cup \beta_8) = \{z\}$.
    Similarly, if there is a vertex $w \in R\cap (L\setminus A)$ then also there is a $C_4$
    formed by $\{z,u,v,w\}$. Therefore, $R\cap (L\setminus A) = \emptyset$. 
    Since $I$ is complete to $\beta_1$, it is not possible that $u'\in \beta_1$ and $v'\in I$ (or vice versa). 
    Now, there are only two cases to consider based on the membership of $u'$ and $v'$: (a) $u',v'\in \beta_1$, or (b) $u',v'\in I$. We obtain contradictions below in each case.

    Assume that $u',v'\in \beta_1$. Recall that $u'$ and $v'$ are not adjacent. Since $B_{i'1}$ is complete to
    $B_{i''1}$, for $i'\neq i''$,
    we obtain that $u',v'\in B_{i'1}$, for some $1\leq i'\leq m$. Assume that a vertex from $I$ is in $R$.
    Then that vertex, along with $\{z,u,v,u',v'\}$ form a $C_6$. Therefore 
    $R\cap I=\emptyset$. Since both $u'$ and $v'$ are 
    adjacent to every vertex in $B_{i''1}$ (for $i''\neq i'$), we obtain that $u'',v''\in B_{i'1}$. 
    It implies that $R$ has exactly 4 vertices from 
    $B_{i'1}$ and has one vertex each from each set $B_{i'1\ell'}$ (for $1\leq \ell'\leq 4$), due to Lemma~\ref{lem:c7:bij-atmost1}. This is a contradiction as they induce a $K_2+2K_1$
    instead of a $P_4$.
    Assume the next case, i.e., $u',v'\in I$. Clearly $u'$ and $v'$ are nonadjacent and therefore must be from a set $I_{i'}$ (recall that $I_{i'}$ is complete to $I_{i''}$, for $i'\neq i''$). If $R$ has a vertex
    from $\beta_1$, then it is adjacent to both $u'$
    and $v'$ and then the graph induced by $R$
    contains a $C_6$. Therefore $R\cap \beta_1=\emptyset$. Then $u'',v''\in I_{i'}$. Since $I_{i'}$ induces
    $K_2+2K_1$ (instead of a $P_4$), we get a 
    contradiction. 

    Assume that (2) holds, i.e., $u,v\in I_i$.
    Recall that this case is applicable only when $j=8$. If $R$ has some vertex $w$ from $\beta_1$,
    then $\{u,v,w,z\}$ form a $C_4$. 
    Therefore $R\cap \beta_1=\emptyset$. 
    Therefore, either (a) $u',v'\in L$, or (b) $u'\in L, v'\in I$. If $R\cap L$ induces a graph with at least one edge, then $R\cap A$ is nonempty and a vertex in $R\cap A$ is adjacent to $z$, which is a contradiction. Therefore, $R\cap L$ induces an edgeless graph. Further, $R\cap L\cap A = \emptyset$.

    Assume that $u',v'\in L$.  Then $u'',v''\in I_i$ (if $u'',v''\in I\setminus I_i$, then $u$ is adjacent to $u''$, which is a contradiction).
    Then $z$ is adjacent to both $u''$ and $v''$, as $B_{ij\ell}$ is complete to $I_i$, which is a contradiction. Assume the next case, i.e., $u'\in L, v'\in I$. Then $v'\in I_i$ (otherwise $u$ will be adjacent to $v'$). Therefore, $z$ is adjacent to $v'$, which is a contradiction. 

    Assume that (3) holds, i.e., $u\in I_i, v\in B'_{i(j-1)\ell}$. Recall that this case is applicable only for $j=8$. This implies that a vertex in $B'_{i7\ell}$ is in $R$, a case handled by the other cases. 

    Now, consider the last case, i.e., $u\in B'_{i(j-1)\ell}, v\in B'_{i(j+1)\ell}$. This implies that
    $R$ has nonempty intersection with each level from 2 to 8, producing a path on 7 vertices, which is a contradiction.

    What is left to prove is the case when $j=2$.
    Let $j=2$.
    Further, by the first part of the proof, we can safely ignore the vertices from $\beta_3\cup\beta_4\cup\ldots\cup\beta_8$ from the rest of the proof. 

    Based on the membership of $u$ and $v$, we obtain the following cases: (1) $u,v\in L\cap A$, (2) $u\in B_{i1\ell}$, $v\in L\cap A$. By Lemma~\ref{lem:c7:bij-atmost1}, it is not possible that $u,v\in B'_{i1\ell}$.

    Assume that (1) holds, i.e., $u,v\in L\cap A$.
    Here, by following the proof corresponding to the similar case when $j\geq 3$, we obtain a contradiction. 

    Now, assume that (2) holds, i.e., $u\in B'_{i1\ell}$ and $v\in L\cap A$.
    If both $u'$ and $v'$ are in $L$, then there is an induced $K_2+K_1$ in the graph induced by $L$, which is a contradiction - recall that $L$ 
    induces a complete bipartite graph in $G'$ and a graph is complete bipartite if and only if it does not have an induced $K_2+K_1$. Since $L\cap A$ 
    is complete to $\beta_2$, we obtain that $u'\notin \beta_2$. Since $u$ must be nonadjacent to $v'$,
    we obtain that $v'\notin I$. Since $u'$ must be 
    nonadjacent to both $v$ and $z$,
    we obtain that $u'\notin L$. 
    Since $\beta_1$ is complete to $I$, and $u'$ and
    $v'$ are nonadjacent, we cannot have $u'\in I$
    and $v'\in \beta_1$.
    Therefore, the cases to be considered are: (a) $u', v'\in \beta_1$,
    (b) $u'\in \beta_1$, $v'\in L$, (c) $u'\in I$, $v'\in L$, (d) $u'\in I$, $v'\in \beta_2$, (e) $u'\in \beta_1$, $v'\in \beta_2$. The rest of the 
    proof obtains contradictions in each of these cases.

    Assume that $u',v'\in \beta_1$. Since $I$ is 
    complete to $\beta_1$, we obtain that $R\cap I=\emptyset$. 
    If $R\cap \beta_2$ has a vertex other than $z$,
    then the vertex $v$ gets a degree 3 in the graph
    induced by $R$, which is a contradiction.
    Therefore, neither $u''$ nor $v''$ is from
    $\beta_2$. If either $u''$ or $v''$
    is from $\beta_1$, then $R\cap \beta_1$
    induces a graph having $P_3+K_1$ or $2K_2$
    as an induced subgraph, which is not possible due to the construction. Therefore, neither $u''$ nor $v''$ is from $\beta_1$. Then
    both $u''$ and $v''$ must be
    from $L$. 
    Then $R\cap L$ induces a $K_2+K_1$, 
    which is a contradiction.
    
    Assume that $u'\in \beta_1$, $v'\in L$. Then $R\cap I=\emptyset$ (otherwise there is a triangle formed by $\{u,u'\}$ and a vertex in $R\cap I$). If $v''\in L$, then $v''\in A$ and is adjacent to $z$, and then there is a $C_4$ formed by
    $\{z,v,v',v''\}$, which is a contradiction.
    If $u''\in L$, then 
    the graph induced by $L$ has a $K_2+K_1$, which is a contradiction, as $L$ induces a complete bipartite graph. Therefore, both $u''$ and $v''$ are in $\beta_1$
    and the graph induced by $\{u,u',u'',v''\}$
    forms a $P_4$, which is a contradiction due to Lemma~\ref{lem:c7:bij-atmost1}.
    
    Assume that $u'\in I$, $v'\in L$. 
    If $v''\in L$, then there is a $C_4$ formed by
    $\{z,v,v',v''\}$. If $v''\in \beta_1$,
    then there is a $C_6$ formed by $\{z,u,u',v'',v',v\}$.
    It is not possible that $v''\in \beta_2$, as then $v'\in A$, which is a contradiction. If $v''\in I$,
    then there is a $C_5$ formed by $\{z,v,v',v'',u\}$.
    Therefore, $v''$
    cannot be from $L\cup\beta_1\cup\beta_2\cup I$, which is a contradiction.
    
    Assume that $u'\in I$, $v'\in \beta_2$. 
    If $v''\in L$, then $\{z,v,v',v''\}$
    forms a $C_4$. If $v''\in \beta_1$, then
    $\{z,u,u',v'',v',v\}$ forms a $C_6$.
    If $v''\in \beta_2$, then $\{v,v',v''\}$
    forms a triangle. 
    If it not possible that $v''\in I$ as then $v'\in \beta_2$
    is not adjacent to $v''$.
    Therefore, $v''\notin L\cup \beta_1\cup\beta_2\cup I$, which is a contradiction.
    
    Assume that $u'\in \beta_1$, $v'\in \beta_2$.
    If $v''\in L$, then $v''\in A$ and there is a 
    $C_4$ formed by $\{z,v,v',v''\}$. If $v''\in \beta_2$, then there is a triangle $\{v,v',v''\}$.
    If $v''\in I$, then there is a triangle, $\{u,u',v''\}$. Therefore, $v''\in \beta_1$.
    If $u''\in L$, then there is a $C_4$ formed by
    $\{z,v,v',u''\}$ (if $u''\in A$), 
or there is a $C_5$ formed by $\{z,u,u',u'',v\}$ (if $u''\notin A$). If $u''\in \beta_1$, then there is a $P_4$ induced by $R\cap \beta_1$, which is not possible due to the construction and Lemma~\ref{lem:c7:bij-atmost1}. If $u''\in \beta_2$,
    then the vertex $v$ gets a degree 3 in the 
    graph induced by $R$. If $u''\in I$,
    then there is a triangle formed by $\{u,u',u''\}$.
    Therefore, $u''\notin L\cup\beta_1\cup\beta_2\cup I$, which is a contradiction.
\end{proof}

 Now, we are ready to prove the forward direction of the reduction.
 
\begin{lemma}
\label{lem:c7:forward}
   Let $\Phi$ be a yes-instance of \TSATM, and $\psi$ be a truth assignment satisfying $\Phi$. 
   Let $A$ be the set of variable vertices whose corresponding variables were assigned \TRUE\ by $\psi$. Let $G'$ be $S(G_{\Phi}, A)$.
   Then $G'$ is $C_7$-free.
\end{lemma}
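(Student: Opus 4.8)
The plan is to argue by contradiction. Assuming some $R\subseteq V(G')$ induces a $C_7$, I would first strip away structure using the three facts already established under the standing hypothesis (valid here since $\psi$ is a NAE assignment, so $A\cap L_i$ is neither empty nor all of $L_i$ for every clause $i$). By Observation~\ref{obs:c7:l} the cycle is not contained in $L$; by Lemma~\ref{lem:c7:2-8} it avoids every level $j\geq 2$, so $R\setminus L\subseteq\beta_1\cup I$; and by Lemma~\ref{lem:c7:bij-atmost1} we have $|R\cap B_{i1\ell}|\leq 1$. The two global properties of $C_7$ that drive everything are that it is triangle-free with no induced $C_4$, and that it has maximum degree $2$ and independence number $3$. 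A first consequence: $R\cap L$ is an induced subgraph of $C_7$ (a disjoint union of paths) and of the complete bipartite graph $G'[L]$ (hence itself complete bipartite), so it is one of $kK_1,K_2,P_3$; since $C_7$ has independence number $3$ this gives $|R\cap L|\leq 3$ and therefore $|R\cap(\beta_1\cup I)|\geq 4$.

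Next I would confine the non-$L$ part of $R$ to a single clause. Because $B_{i1}$ is complete to $B_{j1}$ and $I_i$ is complete to $I_j$ for $i\neq j$, and $\beta_1$ is complete to $I$, every vertex of clause $i$ lying in $\beta_1\cup I$ is adjacent to every such vertex of clause $j$. Hence, if $R\cap(\beta_1\cup I)$ met two distinct clauses, then either both contribute at least two vertices---producing, via the complete join, a triangle or an induced $C_4$---or one clause contributes a single vertex adjacent to the at least three vertices of the other, giving degree at least $3$. Since $|R\cap(\beta_1\cup I)|\geq 4$, both are impossible, so $R\cap(\beta_1\cup I)\subseteq B_{i1}\cup I_i$ for one clause $i$. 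A variable vertex of $R\cap L$ outside $L_i$ is then either in $A$, hence complete to the whole gadget $B_{i1}\cup I_i$ (degree at least $4$), or in $L\setminus A$, hence nonadjacent to it; the latter can attach to $R$ only through switched variables of $L_i$, and the degree-$2$ and connectivity constraints of $C_7$ eliminate these strays, leaving $R\subseteq L_i\cup B_{i1}\cup I_i$ with $R\cap L_i\neq\emptyset$.

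It remains to rule out a $C_7$ inside one clause. Here I would use that $R\cap B_{i1}$ spans at most one edge (only $B_{i11}$ is complete to $B_{i14}$, all other pairs $B_{i1\ell},B_{i1\ell'}$ being nonadjacent), that $I_i$ induces $K_2+2K_1$, and that $B_{i1}$ is complete to $I_i$; consequently the two sides $R\cap B_{i1}$ and $R\cap I_i$ cannot both contain a nonadjacent pair, or an induced $C_4$ appears. This pins the shape of $R\cap(B_{i1}\cup I_i)$ down to a short list, and the only ways a seven-cycle can close are the two configurations built into Construction~\ref{con:c7}: one vertex from each $B_{i1\ell}$ together with $L_i$ (using $x_{i\ell}$ complete to $B_{i1\ell}\cup B_{i1(\ell+1)}$ and $B_{i11}$ complete to $B_{i14}$), and $L_i\cup I_i$ with the adjacency of Figure~\ref{fig:c7:liii}. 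For each I would check that switching the nonempty proper subset $A\cap L_i$ destroys the cycle: in the first configuration some $x_{i\ell}$ is switched and some is not, so the seven-cycle loses two of its edges at the switched variable while gaining chords; in the second, the property that $L_i\cup I_i$ induces a $C_7$ \emph{only} when exactly one of $L_i,I_i$ lies in $A$ fails precisely because $A\cap L_i$ is neither empty nor all of $L_i$.

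The main obstacle is this final within-clause analysis. Everything preceding it is bookkeeping with the complete joins and the elementary forbidden subgraphs of $C_7$, but eliminating every residual configuration of $R$ among $L_i$, $B_{i1}$, and $I_i$ requires carefully tracking how the switch of $A\cap L_i$ interacts with the two bespoke adjacency patterns---the $L_i$--$B_{i1}$ path-join and the $L_i$--$I_i$ pattern of Figure~\ref{fig:c7:liii}. The crux is to verify that the NAE condition on $A\cap L_i$ simultaneously breaks both intended seven-cycles and creates no accidental one, which is exactly the point where the specific design of the construction is indispensable.
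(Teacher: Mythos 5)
Your opening is sound and, up to reorganization, is the paper's own argument: discard levels $2$ through $8$ via Lemma~\ref{lem:c7:2-8}, use the complete-bipartiteness of $G'[L]$ together with the triangle-freeness, $C_4$-freeness, maximum degree $2$, and independence number $3$ of $C_7$ to get $|R\cap L|\le 3$ (hence $|R\cap(\beta_1\cup I)|\ge 4$), and exploit the complete joins to localize $R\setminus L$. The paper does this slightly more cleanly: since $\beta_1$ is complete to \emph{all} of $I$, the join argument applied once to $\beta_1$ versus $I$ yields the dichotomy $R\cap I=\emptyset$ or $R\cap\beta_1=\emptyset$ outright, whereas your clause-by-clause confinement leaves the mixed set $R\cap(B_{i1}\cup I_i)$ alive and forces you to re-fight that battle inside the clause.

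The genuine gap is your endgame. The assertion that ``the only ways a seven-cycle can close are the two configurations built into Construction~\ref{con:c7}'' is exactly what has to be proved, and you do not prove it; your stated tool (the two sides cannot both contain a nonadjacent pair) is too weak even to empty one of $R\cap B_{i1}$, $R\cap I_i$ --- you also need that an adjacent pair on either side plus any vertex of the other yields a triangle, and you never extract the forced cardinalities ($|R\cap L|=3$, $|R\cap\beta_1|=4$ with, by Lemma~\ref{lem:c7:bij-atmost1}, exactly one vertex in each $B_{i1\ell}$). Worse, your plan to ``check that switching $A\cap L_i$ destroys the cycle'' inverts the logic: the contradiction framework must exclude \emph{every} $C_7$ of $G'$, not merely confirm that the cycles planted in $G_\Phi$ do not survive, and since the switch rewires the $L_i$--$B_{i1}$ and $L_i$--$I_i$ adjacencies (Figure~\ref{fig:c7:liii}), one cannot enumerate ``intended configurations'' of $G_\Phi$ at all; you concede this (``creates no accidental one'') but supply no argument. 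The paper instead finishes by direct counting in $G'$: assuming $R\cap I=\emptyset$, the equalities above hold and $R\cap L$ induces either $3K_1$ or $P_3$; in the $3K_1$ case $R\cap L$ must lie in $L_i$, hence equal $L_i$, yet an independent triple in the complete bipartite graph $G'[L]$ lies entirely within $A$ or within $L\setminus A$, contradicting the NAE property of $\psi$ --- the sole point where the hypothesis enters; in the $P_3$ case the seven edges of the cycle cannot be accommodated, since at most one edge lies inside $R\cap\beta_1$ and two inside the $P_3$, so at least four crossing edges are required while the $P_3$ has only two vertices of residual degree one. Without this (or an equivalent) finishing analysis, your proposal is a plan rather than a proof.
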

\begin{proof}
    For a contradiction, assume that $R$ induces a $C_7$ in $G'$.
    By Lemma~\ref{lem:c7:2-8}, we can ignore all vertices from levels 2 to 8. What remains in the graph is $L\cup \beta_1\cup I$. 
    Clearly, $\beta_1$ induces a $C_7$-free graph.
    Similarly, $I$ induces a $C_7$-free graph.
    Then by Observation~\ref{obs:module:prime}, we 
    obtain that $R\setminus (\beta_1\cup I)\neq \emptyset$, and therefore, $R\cap L\neq \emptyset$. By Observation~\ref{obs:c7:l}, we have that $R\cap (\beta_1\cup I)\neq \emptyset$. 

    We recall that $R\cap L$ induces a complete bipartite graph. Therefore, if $R\cap L$ has at least 4 vertices, then it forms either an
    independent set of 4 vertices, or a claw, or a $C_4$. Therefore, $|R\cap L|\leq 3$.
    Since $\beta_1$ is complete to $I$, with 
    similar arguments we obtain that, if both $R\cap I$ and $R\cap \beta_1$
    are nonempty, then $|R\cap (I\cup \beta_1)|\leq 3$. In that case, $|R\cap L|\geq 4$, which is a contradiction. 
    Therefore, either $R\cap I = \emptyset$ or $R\cap \beta_1=\emptyset$. Assume that $R\cap I=\emptyset$. 
    By Lemma~\ref{lem:c7:bij-atmost1}, $|R\cap B_{i1}|\leq 4$. This, along with the fact that  $B_{i1}$ is complete to $B_{i'1}$ (for $i'\neq i$), implies that $|R\cap \beta_1|\leq 4$. Further, if $|R\cap \beta_1|=4$,
    then $R\cap \beta_1\subseteq B_{i1}$, for some $1\leq i\leq m$.
    Therefore, $|R\cap L|=3$, $|R\cap \beta_1|=4$,
    $R\cap \beta_1\subseteq B_{i1}$, and $R$ has exactly one vertex each
    from $B_{i1\ell}$, for $1\leq \ell\leq 4$. Then, 
    either $R\cap L$ induces an edgeless graph of
    3 vertices, or a $P_3$. If it induces an edgeless graph of 3 vertices, we obtain that $R\cap L\subseteq L_i$ (a vertex in $L\setminus L_i$ is either complete to $B_{i1}$ or nonadjacent to $B_{i1}$ based on whether it belongs to $A$ or not). This gives us a contradiction as 
    at least one vertex of $L_i$ is in $A$ and at least one vertex of $L_i$ is not in $A$. If $R\cap L$ induces a $P_3$, then by a simple degree counting, $R$ cannot induce a $C_7$. The case when 
    $R\cap \beta_1=\emptyset$ can be handled in a similar way.
    This completes the proof. 
\end{proof}

Lemmas~\ref{lem:c7:backward}, \ref{lem:c7:forward}, 
and Proposition~\ref{pro:ksatm}, and the fact that the number of vertices in $G_{\Phi}$ is linear in the number of variables and clauses in $\Phi$ 
imply Theorem~\ref{thm:c7}.

\begin{theorem}
\label{thm:c7}
\SWTF{C_7} is \NPC\ and cannot be solved in 
$2^{o(n)}$-time, assuming ETH, where $n$ is the number of vertices in the input graph.
\end{theorem}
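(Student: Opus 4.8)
The plan is to assemble Theorem~\ref{thm:c7} from the machinery already developed: Construction~\ref{con:c7} as a polynomial-time reduction from \TSATM, the two directional Lemmas~\ref{lem:c7:backward} and~\ref{lem:c7:forward} for correctness, and Proposition~\ref{pro:ksatm} for the hardness source. First I would dispatch membership in NP. A subset $A\subseteq V(G)$ is a polynomially checkable certificate, since one can test whether $S(G,A)$ is $C_7$-free in $O(|V(G)|^{7})$ time by inspecting every seven-vertex subset; hence \SWTF{C_7} lies in NP.

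Next I would argue correctness of the reduction. Given a \TSATM\ instance $\Phi$ with $n$ variables and $m$ clauses, Construction~\ref{con:c7} outputs $G_{\Phi}$ in polynomial time. The forward implication is exactly Lemma~\ref{lem:c7:forward}: if $\psi$ satisfies $\Phi$ and $A$ is the set of variable vertices assigned \TRUE, then $S(G_{\Phi},A)$ is $C_7$-free, so $G_{\Phi}$ is a yes-instance. The backward implication is Lemma~\ref{lem:c7:backward}: from any $A$ with $S(G_{\Phi},A)$ being $C_7$-free, the gadgets force $A\cap L_i\neq\emptyset$ and $L_i\setminus A\neq\emptyset$ for every clause $Y_i$, so reading off $L\cap A$ as the \TRUE\ variables yields a truth assignment satisfying $\Phi$. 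Together these give $\Phi\in\TSATM$ if and only if $G_{\Phi}\in\UP{\mathcal{F}(C_7)}$; since \TSATM\ is NP-hard by Proposition~\ref{pro:ksatm}, \SWTF{C_7} is \NPC.

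For the ETH lower bound I would verify that the reduction is \emph{linear} in the relevant measure. Writing $N:=|V(G_{\Phi})|$ (which is the ``$n$'' of the theorem statement), each variable contributes a single vertex to $L$, while each clause $Y_i$ contributes the constant-size set $I_i$ of four vertices together with $B_i$, which comprises $8$ levels, each consisting of $4$ copies of $P_6$; thus $|B_i| = 8\cdot 4\cdot 6 = 192$ and $|Y_i| = 196$. Hence $N = n + 196\,m = O(n+m)$, and conversely $n+m = O(N)$. By Proposition~\ref{pro:ksatm}, \TSATM\ admits no $2^{o(n+m)}$-time algorithm under ETH, so a hypothetical $2^{o(N)}$-time algorithm for \SWTF{C_7} would, composed with this linear reduction, solve \TSATM\ in $2^{o(n+m)}$ time, a contradiction. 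This establishes Theorem~\ref{thm:c7}.

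The main obstacle is not this wrap-up but the supporting lemmas, above all Lemma~\ref{lem:c7:2-8} feeding into Lemma~\ref{lem:c7:forward}: the delicate point is ruling out a stray induced $C_7$ that threads simultaneously through the eight $P_6$-levels, the sets $I_i$, and the variable vertices $L$ after switching. The $8$-level design is calibrated so that any induced $C_7$ touching a vertex above level~$1$ is forced either to traverse all levels (producing a subpath too long to close into a $7$-cycle) or to create a short chord yielding a triangle, $C_4$, $C_5$, or $C_6$. Confirming this through the exhaustive case analysis on the membership of the cycle-neighbors of a chosen level vertex $z$ is the technically heaviest step, and it is what forces the gadget to have precisely eight levels rather than fewer.
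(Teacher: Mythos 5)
Your proposal is correct and follows essentially the same route as the paper, whose proof of Theorem~\ref{thm:c7} is exactly this assembly of Lemmas~\ref{lem:c7:backward} and~\ref{lem:c7:forward} with Proposition~\ref{pro:ksatm} and the observation that $|V(G_{\Phi})|$ is linear in $n+m$. Your additions---the explicit NP-membership check via the $O(|V(G)|^{7})$ certificate test and the exact count $|Y_i|=196$---are details the paper leaves implicit, but they do not change the argument.
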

 
We are unable to use these reductions to prove that \SWTF{P_\ell}, for some $\ell<10$ or \SWTF{C_\ell}, for some $\ell<7$ is \NPC\ - the forward direction of the reduction fails in such cases. 
We defer generalizations of these results to $P_t$-free graph (for every $t\geq 10$) and to $C_t$-free graphs (for every $t\geq 7$), to a future version of this paper. 

\section{Concluding remarks}

There are many interesting questions one can ask about the characterization and computation of lower and upper switching classes of various graph classes. Here we list a few of them. 

Since recognizing \UP{\mathcal{F}(P_{10})} 
and recognizing \UP{\mathcal{F}(C_7)} are \NPC, by Proposition~\ref{pro:switching classes}(\ref{pro:switching classes:item:c}), we obtain that
recognizing \LW{\mathcal{G}} is \NPC, where $\mathcal{G}$ is the class of graphs containing an induced $P_{10}$ or the class of graphs containing an induced $C_7$. Note that these classes are non-hereditary. 
For a hereditary graph class $\mathcal{G}$, is it true that whenever $\mathcal{G}$ is recognizable in polynomial-time, lower $\mathcal{G}$ switching class is also recognizable in polynomial-time? 
We know by Proposition~\ref{pro:lower-h-free} that this is true whenever $\mathcal{G}$
is characterized by a finite set of forbidden induced
subgraphs. 

Is it true that recognizing upper $H$-free switching class is polynomially equivalent to recognizing the upper $H'$-free switching class, where $H$ and $H'$ are switching equivalent? We know that the answer to the corresponding question for lower switching class is trivial, as both lower $H$-free and lower $H'$-free switching classes can be recognized in polynomial-time.
In particular, can we recognize the upper $H$-free switching class in polynomial time when $H$ is $C_{4}$, $K_{4}$, or diamond?  For each of them, we know a switching equivalent $H'$ such that the upper $H'$-free switching class can be recognized in polynomial time.

Let $\mathcal{G}$ be a graph class. Assume that, for any graph $G$, there are only polynomial number of ways to switch $G$ to a graph in $\mathcal{G}$. Then every large enough graph $G$ can be switched to a graph not in $\mathcal{G}$. Therefore, \LW{\mathcal{G}} is finite. Is it true that whenever \LW{\mathcal{G}} is finite, then \UP{\mathcal{G}} can be recognized in polynomial-time?

What is the smallest integer $\ell$ such that the recognition of \UP{\mathcal{F}(P_\ell)} is \NPC? We know that $5\leq\ell\leq 10$.
Similarly, what is the smallest integer $\ell$ such that the recognition of \UP{\mathcal{F}(C_\ell)} is \NPC? We know that $4\leq\ell\leq 7$.

\appendix
\section*{Appendix: Omitted table and figures}

\begin{table}[h!] 
  \caption{The switching equivalents of some simple graphs.
  Vertices in $G$ are numbered consecutively.  Each column is the set $A$, while the omitted sets are either not applicable or symmetric to one of the given ones.}
  \label{tbl:switching-equivalents}
  \begin{center}
    \small
    \begin{tabular}{ l | c c | c c  c | c  c c} 
      \toprule
      $G$ & \{1\} & \{2\} & \{1,2 \} & \{1, 3\} & \{1, 4\}&       \{1, 2, 3\}&       \{1, 3, 5\}&       \{1, 2, 4\}
      \\ \midrule
      $P_{4}$ & paw & $P_{3} + K_{1}$ & diamond & $P_{2} + 2 K_{1}$ & $P_{4}$
      \\
      \midrule
      $C_{4}$ & claw & & $C_{4}$ & $I_{4}$ &
      \\
      \midrule
      $C_{5}$ & bull & & gem & $P_{4} + K_{1}$ & $P_{4}$
      \\
      \midrule
      $C_{6}$ & (1,1,2,1,1) & & (1,2,2,1) & (2,1,2,0,1) & $C_{6}$ & (2,2,2) & $3 K_{2}$ & (1,1,2,1,1)
      \\
      \bottomrule
    \end{tabular}
  \end{center}
\end{table}

\begin{figure}[h]
\tikzstyle{filled vertex}  = [{circle,draw=blue,fill=black!50,inner sep=1pt}]  \tikzstyle{empty vertex}  = [{circle, draw, fill = white, inner sep=1.pt}]
  \centering \small
  
\begin{tikzpicture}[scale=.5] \def\n{7}
      \def\radius{2}      
      \coordinate (v0) at ({90 + 180 / \n}:\radius) {};
      \foreach \i in {1,..., \n} {
        \pgfmathsetmacro{\angle}{90 - (\i - .5) * (360 / \n)}
        \node[empty vertex] (v\i) at (\angle:\radius) {};

   }

      \foreach \i in {1,..., \n}
        \draw let \n1 = {int(\i - 1)} in (v\n1) -- (v\i);

    \end{tikzpicture}
    \;
\begin{tikzpicture}[scale=.5] \def\n{6}
      \def\radius{2}      
      \coordinate (v0) at ({90 + 180 / \n}:\radius) {};
      \foreach \i in {1,..., \n} {
        \pgfmathsetmacro{\angle}{90 - (\i - .5) * (360 / \n)}
        \node[empty vertex] (v\i) at (\angle:\radius) {};

   }
\node[filled vertex] (c) at (0, 2.5) {};

        \foreach \i in {3,4} 
        \draw (c) -- (v\i);
      \foreach \i in {2,..., \n}
        \draw let \n1 = {int(\i - 1)} in (v\n1) -- (v\i);
     \foreach \i in {2}
        \draw let \n1 = {int(\i - 1)} in (v\n1) -- (v\i);
\foreach \i in {2, 5} 
        \node[filled vertex] at (v\i) {};

    \end{tikzpicture}
    \;
\begin{tikzpicture}[scale=.5] \def\n{6}
      \def\radius{2}      
      \coordinate (v0) at ({90 + 180 / \n}:\radius) {};
      \foreach \i in {1,..., \n} {
        \pgfmathsetmacro{\angle}{90 - (\i - .5) * (360 / \n)}
        \node[empty vertex] (v\i) at (\angle:\radius) {};

   }
 \node[empty vertex] (c) at (0, 2.5) {};

      \foreach \i in {1,..., \n}
        \draw let \n1 = {int(\i - 1)} in (v\n1) -- (v\i);

        \foreach \i in {5}
        \draw let \n1 = {int(\i - 3)} in (v\n1) -- (v\i);

         \foreach \i in {1}
        \draw let \n1 = {int(\i + 4)} in (v\n1) -- (v\i);
        \foreach \i in {2}
        \draw let \n1 = {int(\i + 2)} in (v\n1) -- (v\i);
\foreach \i in {1, 4} 
        \node[filled vertex] at (v\i) {};
        
    \end{tikzpicture}
    \;
\begin{tikzpicture}[scale=.5] \def\n{5}
      \def\radius{2}      
      \coordinate (v0) at ({90 + 180 / \n}:\radius) {};
      \foreach \i in {1,..., \n} {
        \pgfmathsetmacro{\angle}{90 - (\i - .5) * (360 / \n)}
        \node[empty vertex] (v\i) at (\angle:-\radius) {};

   }

   \node[empty vertex] (c) at (1.5,2) {};
   \node[empty vertex] (d) at (0,1.5) {};

     \foreach \i in {5,4,3,2}
        \draw let \n1 = {int(\i -1)} in (v\n1) -- (v\i);
    \foreach \i in {5}
        \draw let \n1 = {int(\i - 4)} in (v\n1) -- (v\i);

        \draw  (d) -- (v3);
        \draw  (d) -- (v4);
        \draw  (d) -- (v2);
        \draw  (d) -- (v2);
        \draw  (c) -- (v3);
        \draw  (v2) -- (v4);
        \draw  (v4) -- (v1);
        
      \foreach \i in {3, 5,2} 
        \node[filled vertex] at (v\i) {};
        
    \end{tikzpicture}\;

\begin{tikzpicture}[scale=.5] \def\n{6}
      \def\radius{2}      
      \coordinate (v0) at ({90 + 180 / \n}:\radius) {};
      \foreach \i in {1,..., \n} {
        \pgfmathsetmacro{\angle}{90 - (\i - .5) * (360 / \n)}
        \node[empty vertex] (v\i) at (\angle:\radius) {};

   }
      
      \node[empty vertex] (c) at (0, 0) {};
      \foreach \i in {1, 6} 
        \draw (c) -- (v\i);
      \foreach \i in {1,..., \n}
        \draw let \n1 = {int(\i - 1)} in (v\n1) -- (v\i);
       \foreach \i in {1}
        \draw let \n1 = {int(\i + 2)} in (v\n1) -- (v\i);
      \foreach \i in {6, 3} 
        \node[filled vertex] at (v\i) {};
        
    \end{tikzpicture}
    \;
  \vspace{0.25cm}  
\begin{tikzpicture}[scale=.5] \def\n{6}
      \def\radius{2}      
      \coordinate (v0) at ({90 + 180 / \n}:\radius) {};
      \foreach \i in {1,..., \n} {
        \pgfmathsetmacro{\angle}{90 - (\i - .5) * (360 / \n)}
        \node[empty vertex] (v\i) at (\angle:\radius) {};

   }
 \node[filled vertex] (c) at (0, 2.5) {};
      \foreach \i in {2,3,4,5} 
        \draw (c) -- (v\i);
 
      \foreach \i in {2,..., \n}
        \draw let \n1 = {int(\i - 1)} in (v\n1) -- (v\i);

        \foreach \i in {2}
        \draw let \n1 = {int(\i - 1)} in (v\n1) -- (v\i);

         \foreach \i in {3}
        \draw let \n1 = {int(\i + 1)} in (v\n1) -- (v\i);

    \end{tikzpicture}
    \;
\begin{tikzpicture}[scale=.5] \def\n{3}
      \def\radius{2.5}      
      \coordinate (v0) at ({90 + 180 / \n}:\radius) {};
      \foreach \i in {1,..., \n} {
        \pgfmathsetmacro{\angle}{90 - (\i - .5) * (360 / \n)}
        \node[empty vertex] (v\i) at (\angle:-\radius) {};

   }
      \node[filled vertex] (c) at  (0,-0.75){};
      
      \node[empty vertex] (d) at (0.5, 0.5) {};
       \node[filled vertex] (e) at (-0.5, 0.5) {};
        \node[empty vertex] (f) at (-0.7, -0.25) {};
   
         \draw (c) -- (d);
      \foreach \i in {1,2,3} 
        \draw (c) -- (v\i);
      \foreach \i in {3,2}
        \draw let \n1 = {int(\i - 1)} in (v\n1) -- (v\i);
       \foreach \i in {3}
        \draw let \n1 = {int(\i - 2)} in (v\n1) -- (v\i);

       \draw (c) -- (d);
       \draw (c) -- (e);
       \draw (c) -- (f);
       \draw (d) -- (v2);
       \draw (d) -- (v3);
       \draw (e) -- (v1);
       \draw (e) -- (v2);
       \draw (e) -- (f);
       \draw (f) -- (v1);
        
      \foreach \i in {3} 
        \node[filled vertex] at (v\i) {};
        
    \end{tikzpicture}
    \;
\begin{tikzpicture}[scale=.5] \def\n{5}
      \def\radius{2}      
      \coordinate (v0) at ({90 + 180 / \n}:\radius) {};
      \foreach \i in {1,..., \n} {
        \pgfmathsetmacro{\angle}{90 - (\i - .5) * (360 / \n)}
        \node[empty vertex] (v\i) at (\angle:-\radius) {};

   }
      
      \node[empty vertex] (c) at (1,0) {};
      \node[filled vertex] (d) at (-1,0) {};

      \foreach \i in {2,3,4,5}
        \draw let \n1 = {int(\i - 1)} in (v\n1) -- (v\i);

         \foreach \i in {5}
        \draw let \n1 = {int(\i - 4)} in (v\n1) -- (v\i);
       \draw (c) -- (d);
       \draw (c) -- (v4);
       \draw (c) -- (v5);
       \draw (d) -- (v2);
       \draw (d) -- (v1);
       \draw (d) -- (v5);

\foreach \i in {5, 3} 
        \node[filled vertex] at (v\i) {};
        
    \end{tikzpicture}
    \;
\begin{tikzpicture}[scale=.5] \def\n{3}
      \def\radius{2.5}      
      \coordinate (v0) at ({90 + 180 / \n}:\radius) {};
      \foreach \i in {1,..., \n} {
        \pgfmathsetmacro{\angle}{90 - (\i - .5) * (360 / \n)}
        \node[empty vertex] (v\i) at (\angle:-\radius) {};

   }
      \node[filled vertex] (c) at  (0,-0.75){};
      
      \node[empty vertex] (d) at (0.5, 0.5) {};
       \node[empty vertex] (e) at (-0.5, 0.5) {};
        \node[empty vertex] (f) at (-0.7, -0.25) {};
   
         \draw (c) -- (d);
      \foreach \i in {1,2,3} 
        \draw (c) -- (v\i);
      \foreach \i in {3,2}
        \draw let \n1 = {int(\i - 1)} in (v\n1) -- (v\i);
       \foreach \i in {3}
        \draw let \n1 = {int(\i - 2)} in (v\n1) -- (v\i);

       \draw (c) -- (d);
       \draw (c) -- (e);
\draw (d) -- (v3);
       \draw (e) -- (v1);
       \draw (e) -- (v2);
       \draw (e) -- (f);
       \draw (f) -- (v1);
        
      \foreach \i in {1} 
        \node[filled vertex] at (v\i) {};
        
    \end{tikzpicture}
    \;

  \caption{Switching equivalent graphs of $C_7$.  The set $A$ consists of the solid nodes. These are also graphs part of lower planar switching class having maximum number of vertices.}
  \label{fig:switching equivalent C7}
\end{figure}

%%%%%%%%%%%%%%%%%%%%%%%%%%%%%%%%%%%%%%%%%%%%
%%%%%%%%%%%%%%%%%%%%%%%%%%%%%%%%%%%%%%%%%%%%
 
\bibliographystyle{plain}
\bibliography{main}
\appendix
\end{document}